%% file: main.tex
\documentclass[a4paper,UKenglish,cleveref, autoref, thm-restate]{lipics-v2021}

\title{Hardness of Burning Number Problem on Regular Graphs} %TODO Please add

\author{Dhanyamol Antony}{School of Data Science, Indian Institute of Science Education and Research Thiruvananthapuram, India}{dhanyamolantony@iisertvm.ac.in}{https://orcid.org/0000-0001-7875-3457}{ANRF/ECRG/2024/001049/ENS}{}
\author{L. Sunil Chandran}{Department of Computer Science and Automation, Indian Institute of Science, Bengaluru, India}{sunil@iisc.ac.in}{https://orcid.org/0000-0001-5451-6975}{}
\author{Anita Das}{Department of Mathematics, Manipal Institute of Technology Bengaluru, Manipal Academy of Higher Education, Manipal, India}{anita.das@manipal.edu}{https://orcid.org/0000-0002-0126-4716}{}
\author{Shirish Gosavi}{Department of Computer Science and Automation, Indian Institute of Science, Bengaluru, India}{shirishgp@iisc.ac.in}{}{}
\author{Dalu Jacob}{Department of Mathematics, Indian Institute of Technology Delhi, India}{dalujacob@maths.iitd.ac.in}{}{}
\author{Shashanka Kulamarva}{Graduate School of Informatics, Kyoto University, Kyoto, Japan}{kulamarva.shashanka.3k@kyoto-u.ac.jp}{https://orcid.org/0009-0002-2982-6044}{JST ASPIRE grant JPMJAP2302}{}

\authorrunning{D. Antony et al.}
\Copyright{Dhanyamol Antony, L. Sunil Chandran, Anita Das, Shirish Gosavi, Dalu Jacob, and Shashanka Kulamarva}

\ccsdesc[100]{ }

\keywords{Burning number, Graph burning, Cubic graphs, Regular graphs} 

\EventEditors{John Q. Open and Joan R. Access}
\EventNoEds{2}
\EventLongTitle{42nd Conference on Very Important Topics (CVIT 2016)}
\EventShortTitle{CVIT 2016}
\EventAcronym{CVIT}
\EventYear{2016}
\EventDate{December 24--27, 2016}
\EventLocation{Little Whinging, United Kingdom}
\EventLogo{}
\SeriesVolume{42}
\ArticleNo{23}
\input{preamble}
\nolinenumbers
\begin{document}

\maketitle

\begin{abstract}
The Burning Number Problem (BNP) models the spread of information or contagion in a network through a discrete-time process on a graph. At each step, one new vertex is selected as a burning source, while fire simultaneously spreads from previously burned vertices to their neighbors. The \textit{burning number} of a graph is the minimum number of steps required to burn all vertices. The decision version asks whether the burning number is at most a given integer $k$. BNP is known to be NP-complete even on restricted graph classes such as path forests.
We study BNP on connected regular graphs, a natural and previously unexplored graph class. We prove that BNP is NP-complete on connected cubic graphs, and moreover APX-hard under this restriction. We further show that BNP remains APX-hard on connected $d$-regular graphs for every fixed $d \geq 4$.
\vspace{-1cm}
\end{abstract}

\label{sec:introduction}
\input{introduction}

\section{ Preliminaries}
\label{sec:prelim}
\input{preliminaries}

\section{ Cubic Graphs}
\label{sec:cubic}
\input{cubic}

\input{dregular}

\bibliography{main}

\end{document}

%% file: preamble.tex
\usepackage{amsmath,amssymb,amsthm}
\usepackage{algorithmic}
\usepackage{algorithm2e, setspace}

\usepackage[symbol]{footmisc}
\usepackage{mathtools}
\usepackage{rotating}
\usepackage{tikz}
\usepackage{parskip}
\usepackage{lipsum}
\usepackage[inline]{enumitem}

\usepackage[normalem]{ulem}
\usepackage[numbers]{natbib}
\usepackage{graphicx} 
\usepackage[skipabove=8pt,skipbelow=4pt,innertopmargin=6pt,innerbottommargin=6pt]{mdframed}
\usepackage{titlesec}
\usepackage{orcidlink}
\usepackage{comment}
\usepackage{caption}
\usepackage{subcaption}
\usepackage{color,soul}
\usepackage{tcolorbox}
\usepackage{makecell}

\usetikzlibrary{decorations.pathreplacing,calligraphy}
\usetikzlibrary{positioning, calc,chains,fit,shapes}
\usetikzlibrary{quotes}
\usepackage[colorinlistoftodos]{todonotes} % For Comments
\titleformat{\subsection}
{\bfseries}{\thesubsection}{1em}{}
\renewcommand{\qed}{\qedsymbol}
\renewcommand\qedsymbol{$\hfill\blacksquare$}
\renewcommand\proofname{\textbf{Proof}}

\newcommand{\APH}{APX-hard}
\newcommand{\NPC}{NP-complete}

\newtheorem{construction}{Construction}

\tikzstyle{filled vertex}  = [{circle,draw=blue,fill=black!50,inner sep=1pt}]  % regular vertex
\tikzstyle{empty vertex}  = [{circle, draw, fill = white, inner sep=1.5pt, minimum width=1.5pt}]
\tikzset{
  corner/.style  = {fill=gray!30, draw= gray, thick, inner sep=2pt},
  b-vertex/.style = {fill=RubineRed, diamond, draw=blue, inner sep=1.5pt},
  a-vertex/.style = {draw=blue, fill=cyan, inner sep=2pt}
}

\newcommand{\BNP}{\textit{Burning Number Problem}}

\newcommand{\MVC}{\textit{Minimum Vertex Cover Problem}}

\newcommand {\tgadgetleftfull} [8] {

\begin{scope}[rotate around={#8:#1}]
\coordinate  (P1) at #1;
\end{scope}

\begin{scope}[rotate around={#8:(P1)}]
\coordinate  (P2) at ($(P1) + (#2, -#3)$);
\end{scope}

\begin{scope}[rotate around={#8:(P2)}]
\coordinate (P3) at ($(P2) + (0, -#4)$);
\end{scope}

\begin{scope}[rotate around={#8:(P3)}]
\coordinate  (P4) at ($(P3) + (-#2, -#3)$);
\end{scope}

\begin{scope}[rotate around={#8:(P4)}]
\coordinate  (P5) at ($(P4) + (-#2, #3)$);
\end{scope}

\begin{scope}[rotate around={#8:(P5)}]
\coordinate (P6) at ($(P5) + (0, #5)$);
\end{scope}

\begin{scope}[rotate around={#8:(P6)}]
\coordinate  (P7) at ($(P6) + (-#6, 0)$);
\end{scope}

\begin{scope}[rotate around={#8:(P7)}]
\coordinate  (P8) at ($(P7) + (0, #7)$);
\end{scope}

\begin{scope}[rotate around={#8:(P8)}]
\coordinate  (P9) at ($(P8) + (#6, 0)$);
\end{scope}

\begin{scope}[rotate around={#8:(P1)}]
\coordinate  (P10) at ($(P1) + (-#2, -#3)$);
\end{scope}

\fill [gray] (P1) -- (P2) -- (P3) -- (P4) -- (P5) -- (P6) -- (P7) -- (P8) -- (P9) -- (P10) -- cycle;
}

\newcommand {\tgadgetrightfull} [8] {
\begin{scope}[rotate around={#8:#1}]
\coordinate  (P1) at #1;
\end{scope}

\begin{scope}[rotate around={#8:(P1)}]
\coordinate  (P2) at ($(P1) + (-#2, -#3)$);
\end{scope}

\begin{scope}[rotate around={#8:(P2)}]
\coordinate (P3) at ($(P2) + (0, -#4)$);
\end{scope}

\begin{scope}[rotate around={#8:(P3)}]
\coordinate  (P4) at ($(P3) + (#2, -#3)$);
\end{scope}

\begin{scope}[rotate around={#8:(P4)}]
\coordinate  (P5) at ($(P4) + (#2, #3)$);
\end{scope}

\begin{scope}[rotate around={#8:(P5)}]
\coordinate (P6) at ($(P5) + (0, #5)$);
\end{scope}

\begin{scope}[rotate around={#8:(P6)}]
\coordinate  (P7) at ($(P6) + (#6, 0)$);
\end{scope}

\begin{scope}[rotate around={#8:(P7)}]
\coordinate  (P8) at ($(P7) + (0, #7)$);
\end{scope}

\begin{scope}[rotate around={#8:(P8)}]
\coordinate  (P9) at ($(P8) + (-#6, 0)$);
\end{scope}

\begin{scope}[rotate around={#8:(P1)}]
\coordinate  (P10) at ($(P1) + (#2, -#3)$);
\end{scope}

\filldraw[color=black!60, fill=black!5, very thick] (P1) -- (P2) -- (P3) -- (P4) -- (P5) -- (P6) -- (P7) -- (P8) -- (P9) -- (P10) -- cycle;
}

\newcommand {\emptytgadgetrightfull} [8] {
\begin{scope}[rotate around={#8:#1}]
\coordinate  (P1) at #1;
\end{scope}

\begin{scope}[rotate around={#8:(P1)}]
\coordinate  (P2) at ($(P1) + (-#2, -#3)$);
\end{scope}

\begin{scope}[rotate around={#8:(P2)}]
\coordinate (P3) at ($(P2) + (0, -#4)$);
\end{scope}

\begin{scope}[rotate around={#8:(P3)}]
\coordinate  (P4) at ($(P3) + (#2, -#3)$);
\end{scope}

\begin{scope}[rotate around={#8:(P4)}]
\coordinate  (P5) at ($(P4) + (#2, #3)$);
\end{scope}

\begin{scope}[rotate around={#8:(P5)}]
\coordinate (P6) at ($(P5) + (0, #5)$);
\end{scope}

\begin{scope}[rotate around={#8:(P6)}]
\coordinate  (P7) at ($(P6) + (#6, 0)$);
\end{scope}

\begin{scope}[rotate around={#8:(P7)}]
\coordinate  (P8) at ($(P7) + (0, #7)$);
\end{scope}

\begin{scope}[rotate around={#8:(P8)}]
\coordinate  (P9) at ($(P8) + (-#6, 0)$);
\end{scope}

\begin{scope}[rotate around={#8:(P1)}]
\coordinate  (P10) at ($(P1) + (#2, -#3)$);
\end{scope}

\draw [line width=0.5mm] (P1) -- (P2) -- (P3) -- (P4) -- (P5) -- (P6) -- (P7) -- (P8) -- (P9) -- (P10) -- cycle;
\draw [line width=1mm] (P1) -- (P2) -- (P3) -- (P4);
}

\newcommand {\tgadgetlefthalf} [9] {
\begin{scope}[rotate around={#8:#1}]
\coordinate  (P1) at #1;
\end{scope}

\begin{scope}[rotate around={#8:(P1)}]
\coordinate  (P2) at ($(P1) + (#2, -#3)$);
\end{scope}

\begin{scope}[rotate around={#8:(P2)}]
\coordinate (P3) at ($(P2) + (0, -#4-#7/2)$);
\end{scope}

\begin{scope}[rotate around={#8:(P3)}]
\coordinate (P4) at ($(P3) + (-#6-#2-#2, 0)$);
\end{scope}

\begin{scope}[rotate around={#8:(P4)}]
\coordinate (P5) at ($(P4) + (0, #7/2)$);
\end{scope}

\begin{scope}[rotate around={#8:(P5)}]
\coordinate (P6) at ($(P5) + (#6, 0)$);
\end{scope}

\begin{scope}[rotate around={#8:(P1)}]
\coordinate (P7) at ($(P1) + (-#2, -#3)$);
\end{scope}

\begin{scope}[rotate around={#8:(P3)}]
\coordinate (P8) at ($(P3) + (0, -#9)$);
\end{scope}

\begin{scope}[rotate around={#8:(P8)}]
\coordinate (P9) at ($(P8) + (0, -#4-#7/2)$);
\end{scope}

\begin{scope}[rotate around={#8:(P9)}]
\coordinate (P10) at ($(P9) + (-#2, -#3)$);
\end{scope}

\begin{scope}[rotate around={#8:(P10)}]
\coordinate (P11) at ($(P10) + (-#2, #3)$);
\end{scope}

\begin{scope}[rotate around={#8:(P11)}]
\coordinate (P12) at ($(P11) + (0, #4)$);
\end{scope}

\begin{scope}[rotate around={#8:(P12)}]
\coordinate (P13) at ($(P12) + (-#6, 0)$);
\end{scope}

\begin{scope}[rotate around={#8:(P13)}]
\coordinate (P14) at ($(P13) + (0, #7/2)$);
\end{scope}

\filldraw[color=black!60, fill=black!5, very thick] (P1) -- (P2) -- (P3) -- (P4) -- (P5) -- (P6) -- (P7) -- cycle;
\filldraw[color=black!60, fill=black!5, very thick] (P8) -- (P9) -- (P10) -- (P11) -- (P12) -- (P13) -- (P14) -- cycle;
}

\newcommand {\tgadgetrighthalf} [9] {
\begin{scope}[rotate around={#8:#1}]
\coordinate  (P1) at #1;
\end{scope}

\begin{scope}[rotate around={#8:(P1)}]
\coordinate  (P2) at ($(P1) + (#2, -#3)$);
\end{scope}

\begin{scope}[rotate around={#8:(P2)}]
\coordinate (P3) at ($(P2) + (0, -#4)$);
\end{scope}

\begin{scope}[rotate around={#8:(P3)}]
\coordinate (P4) at ($(P3) + (#6, 0)$);
\end{scope}

\begin{scope}[rotate around={#8:(P4)}]
\coordinate (P5) at ($(P4) + (0, -#7/2)$);
\end{scope}

\begin{scope}[rotate around={#8:(P5)}]
\coordinate (P6) at ($(P5) + (-#6-#2-#2, 0)$);
\end{scope}

\begin{scope}[rotate around={#8:(P1)}]
\coordinate (P7) at  ($(P1) + (-#2, -#3)$);
\end{scope}

\begin{scope}[rotate around={#8:(P6)}]
\coordinate (P8) at ($(P6) + (0, -#9)$);
\end{scope}

\begin{scope}[rotate around={#8:(P8)}]
\coordinate (P9) at ($(P8) + (0, -#4-#7/2)$);
\end{scope}

\begin{scope}[rotate around={#8:(P9)}]
\coordinate (P10) at ($(P9) + (#2, -#3)$);
\end{scope}

\begin{scope}[rotate around={#8:(P10)}]
\coordinate (P11) at ($(P10) + (#2, #3)$);
\end{scope}

\begin{scope}[rotate around={#8:(P11)}]
\coordinate (P12) at ($(P11) + (0, #4)$);
\end{scope}

\begin{scope}[rotate around={#8:(P12)}]
\coordinate (P13) at ($(P12) + (#6, 0)$);
\end{scope}

\begin{scope}[rotate around={#8:(P13)}]
\coordinate (P14) at ($(P13) + (0, #7/2)$);
\end{scope}

\filldraw[color=black!60, fill=black!5, very thick] (P1) -- (P2) -- (P3) -- (P4) -- (P5) -- (P6) -- (P7) -- cycle;
\filldraw[color=black!60, fill=black!5, very thick] (P8) -- (P9) -- (P10) -- (P11) -- (P12) -- (P13) -- (P14) -- cycle;
}

\newcommand {\pgadget} [5] {
\begin{scope}[rotate around={#5:#1}]
\coordinate  (P1) at #1;
\end{scope}

\begin{scope}[rotate around={#5:(P1)}]
\coordinate  (P2) at ($(P1) + (#2, 0)$);
\end{scope}

\begin{scope}[rotate around={#5:(P2)}]
\coordinate (P3) at ($(P2) + (#4, 0)$);
\end{scope}

\begin{scope}[rotate around={#5:(P3)}]
\coordinate  (P4) at ($(P3) + (#2, 0)$);
\end{scope}

\begin{scope}[rotate around={#5:(P4)}]
\coordinate  (P5) at ($(P4) + (-#2, -#3)$);
\end{scope}

\begin{scope}[rotate around={#5:(P5)}]
\coordinate (P6) at ($(P5) + (-#4, 0)$);
\end{scope}

\filldraw[color=black!60, fill=black!5, very thick] (P1) -- (P2) -- (P3) -- (P4) -- (P5) -- (P6) -- cycle;
}

\newcommand {\cyclegadget}[5] {
\begin{scope}[rotate around={#5:#1}]
\coordinate  (P1) at #1;
\end{scope}

\begin{scope}[rotate around={#5:#1}]
\coordinate  (P2) at #2;
\end{scope}

\begin{scope}[rotate around={#5:#2}]
\coordinate  (P3) at #3;
\end{scope}

\begin{scope}[rotate around={#5:#3}]
\coordinate  (P4) at #4;
\end{scope}

\filldraw[color=black!60, fill=black!5, very thick] (P1) .. controls (P2) and (P3) .. (P4);
}

\newcommand {\tailgadget} [3] {
\coordinate (P1) at #1;
\coordinate (P2) at ($(P1) + (#2, 0)$);
\coordinate (P3) at ($(P2) + (#2*4, #3)$);
\coordinate (P4) at ($(P3) + (#2*3, -#3)$);
\coordinate (P5) at ($(P4) + (-#2, 0)$);
\coordinate (P6) at ($(P5) + (-#2, -#3)$);
\coordinate (P7) at ($(P6) + (-#2, #3)$);
\coordinate (P8) at ($(P7) + (-#2, 0)$);
\coordinate (P9) at ($(P8) + (-#2*2, -#3)$);

\filldraw[color=black!60, fill=black!5, very thick] (P1) -- (P2) -- (P3) -- (P4) -- (P5) -- (P6) -- (P7) -- (P8) -- (P9) -- cycle;
}

\usepackage{thm-restate}
\usepackage{orcidlink}

%% file: introduction.tex
\section{ Introduction}
%\vspace{-0.25cm}
The \textit{Burning Number Problem} (BNP), introduced by Bonato et al.~\cite{Bonato2016Burning}, models the spread of influence, information, or contagion in a network through a discrete-time process on a graph. Starting with all vertices unburned, at each step $t\ge 1$, one new unburned vertex is selected as a \emph{burning source}, while simultaneously the fire spreads from every previously burned vertex to all of its unburned neighbors. Burned vertices remain burned thereafter. The \emph{burning number} of a graph $G$, denoted $b(G)$, is the minimum number of steps required to burn all vertices of $G$.
The corresponding decision problem is the following.

\textbf{\BNP}\\
\textbf{Input:} An undirected graph $G$ and a positive integer $k$.\\
\textbf{Question:} Is $b(G)\leq k$?

Graph burning is closely related to distance-domination and covering processes, but differs from static covering problems in one important aspect: it is inherently \emph{dynamic}. A burning source chosen early acquires larger effective reach in later steps, so both the \emph{location} and the \emph{activation time} of burning sources determine whether the graph can be completely burned within a given number of steps. This interaction between spatial placement and temporal propagation has led to a growing literature on structural, algorithmic, and complexity aspects of the problem.

Bonato et al.~\cite{Bonato2016Burning} proved that every connected graph of order $n$ satisfies
\(
b(G)\le 2\lceil \sqrt{n}\rceil-1,
\)
and conjectured the stronger bound $b(G)\le \lceil \sqrt{n}\rceil$, which is tight for paths and cycles. Since then, burning {of graphs} has been studied extensively on trees and structured graph classes including caterpillars, spiders, grids, interval graphs, proper interval graphs, hypercubes~\cite{Antony2025BurningBoundsHardness,Bessy2018BoundsBurning,Bonato2020BurningSurvey,Bonato2021BurningFence,Bonato2019BurningSpiderPathForest,Das2018BurningSpiders,Mitsche2018Burning}. Polynomial-time algorithms are known for several highly structured classes such as paths, cycles, cographs, split graphs, and complete binary trees~\cite{Bonato2016Burning,DBLP:journals/corr/abs-2308-02825,Kare2019ParamBurningAlgo}.
From the complexity perspective, \BNP\ is NP-complete on general graphs~\cite{Bessy2017BurningIsHard}, and remains NP-complete on several restricted {graph} classes including path forests~\cite{Bessy2017BurningIsHard}, trees of maximum degree three~\cite{Liu2020BurningCaterpillar}, caterpillars of maximum degree three~\cite{Liu2020BurningCaterpillar}, and connected proper interval graphs~\cite{Antony2025BurningBoundsHardness}. Parameterized and approximation aspects have also received considerable attention~\cite{Bonato2019ApprxAlgoBurning,Janssen2020Burning,Kamali2020BurningTwoWorlds,Kobayashi2022ParaComplexBurning,Mondal2022KBurningHardApprx}.

Despite this progress, the complexity of burning under strong degree constraints has remained unclear. In particular, regular graphs form a natural graph class with highly uniform local structure. Such symmetry often simplifies algorithmic problems, but in hardness reductions it removes many local asymmetries---such as leaves or high-degree vertices---that are often used to encode combinatorial choices. This raises the question of whether the \textit{Burning Number Problem} remains hard on regular graphs. We answer this question positively.

\begin{restatable}{theorem}{cubicthm}
\label{thm:cubicNPC}
\BNP\ is NP-complete on connected cubic graphs.
\end{restatable}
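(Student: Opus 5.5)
Membership in NP is immediate: a certificate is the sequence of at most $k$ burning sources, and we can simulate the spread step by step in polynomial time to check that every vertex is burned by step $k$. The substance is NP-hardness, which I plan to obtain by a reduction from \BNP\ on path forests, shown NP-complete by Bessy et al.~\cite{Bessy2017BurningIsHard}. That hardness result comes from a reduction from 3-Partition and is strongly NP-hard, so I may assume the path lengths are polynomially bounded and closely tied to the burning budget $k$. In fact I will use the specific instances produced there: a forest of paths $P_{2k-1}, P_{2k-3},\dots$ together with the 3-Partition paths, whose total number of vertices equals exactly $k^2$. In these instances a burning in $k$ steps must cover every path exactly, with the balls of radii $k-1,k-2,\dots,0$ partitioning the vertex set.

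The construction replaces each path by a cubic ``thickened path.'' Each path vertex becomes a small cubic-compatible gadget, such as a ladder rung, or a pair of vertices with a connecting structure, so that distances along the gadget chain mimic path distances, possibly scaled by a constant factor. To get degree exactly $3$, I attach pendant gadgets to the remaining stubs. These are small subcubic-completing blocks (for example $K_4$ minus an edge, suitably attached), and they lie within distance $1$ of a spine vertex. To make the whole graph connected, I join all thickened paths to a central \emph{hub} structure through long connector chains. These chains are long enough that fire crossing them cannot help, yet their vertices can be burned cheaply.

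The key difficulty is the connectivity. In a forest, fire cannot leak between components, but in a connected graph it can, and the vertices of the hub and connectors must themselves be burned. I would handle this in two ways. First, I add a few extra sources with very large radius placed at the hub, increasing $k$ to $k'=k+c$ and scaling all path lengths by a large factor. Second, I make the connectors long, so that a large ball at the hub covers all connectors and some fixed prefix of every thickened path, and I lengthen each path by exactly that prefix. The exactness property is then the essential tool for soundness. I would prove that the total vertex count equals the maximum total coverage of $k'$ balls of radii $k'-1,\dots,0$ in a cubic graph of this ``path-like'' shape. From that equality, any valid burning must use disjoint, fully efficient balls. This forces the large balls onto the hub and the small balls onto single thickened paths, which recovers a burning of the original path forest.

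I expect the main obstacle to be this soundness counting. In a cubic graph, a ball of radius $r$ can be much larger than the $2r+1$ vertices it covers on a path, so an adversarial source might exploit branching at the hub or at the gadget junctions. To control this, I would make every region except the hub have bounded ``width.'' Concretely, I would prove a lemma stating that in a thickened path of width $w$, a ball of radius $r$ contains at most $w(2r+1)$ vertices, with equality only when the ball lies entirely inside one path. I would also design the hub so that the balls centered there are the unique way to cover it. Balancing all the constants so the count is tight is where I anticipate most of the work.
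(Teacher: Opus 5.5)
Your NP-membership argument is fine, and the Bessy et al.\ path-forest instances (total order exactly $k^2$, so the balls of radii $k-1,\dots,0$ must partition the forest) are a legitimate starting point. The gap is the soundness counting you postpone, and the tools you propose for it cannot work as stated.

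\textbf{The width lemma cannot give rigidity.} For $w\ge 2$ the bound $w(2r+1)$ is not attained already at $r=0$, since a radius-$0$ ball is one vertex. So the total $\sum_r w(2r+1)$ is never reached. In the ladder you suggest, the true maximum is $4r$ for every $r\ge 1$: a diamond with $2r+1$ vertices on the center's rail and $2r-1$ on the other. Two consequences follow:
\begin{itemize}
\item A thickened copy of $P_{2r+1}$, which has $4r+2$ vertices, is not covered by one radius-$r$ ball, so completeness already fails as designed.
\item The identity ``vertex count $=$ maximum total coverage'', on which your rigidity rests, can never hold with the bound $w(2r+1)$.
\end{itemize}
Even with the true maximum $f(r)$ of a concrete gadget, three things would remain to be done. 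You would have to prove that balls tile a segment only rigidly; in a ladder, consecutive diamonds must alternate rails and a radius-$0$ ball cannot occur in the interior. You would have to cap segment ends to degree $3$ without adding vertices beyond the tip of the ball covering that end; a pendant hanging off an end vertex reached with residual radius $0$ is such a vertex. And you would have to redo the 3-Partition arithmetic for $f$.

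\textbf{Branch points break the counting.} In a connected cubic graph, the maximum of $|N_r[v]|$ over $v$ is attained at branch points, not inside segments. This happens at the junctions where connectors meet segments and, above all, at the hub, where any ball whose radius exceeds the (logarithmic) hub depth reaches into every connector. Hence $\sum_i \max_v |N_{r_i}[v]|$ greatly exceeds $|V|$. No counting identity can then force the balls to be disjoint and confined to single segments.

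Your extra sources also cannot have ``very large'' radius. The radii are exactly $k'-1,\dots,0$, so the $c$ added sources have radii $k+c-1,\dots,k$, barely above the largest forest radius $k-1$. Scaling the path lengths scales $k$ with them, so this ratio does not improve. As a result, a radius-$(k-1)$ ball at the hub covers nearly as much as the intended hub ball.

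What is missing is a structural argument that forces the hub region to be covered in one prescribed way, and for $c\ge 2$ without overlap among the large balls. Only after that is the rest a disjoint union of segments. Saying ``design the hub so that balls centered there are the unique way to cover it'' restates this goal rather than supplying it.

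For comparison, the paper avoids global tiling altogether. It reduces from Vertex Cover on cubic graphs. It uses the exponential capacity of the binary-tree gadgets (at most $2^m-1$ tips for $m$ middle sources) to force an early source near an endpoint of every edge. It confines the only near-tight count to the single cycle of length $m^2$ in the $C$-gadget, where the slack is a fixed constant.
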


We further strengthen this in the approximation setting.

\begin{restatable}{theorem}{apxcubicthm}
\label{thm:cubicAPX}
\BNP\ is APX-hard on connected cubic graphs.
\end{restatable}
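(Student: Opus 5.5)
The plan is an L-reduction from a problem that is APX-hard on bounded-degree graphs. The natural source is Minimum Vertex Cover on cubic graphs, which is APX-hard (Alimonti--Kann). The abstract and preamble (the \MVC\ macro and the gadget drawings) suggest the paper's NP-completeness proof of Theorem~\ref{thm:cubicNPC} already goes through such a reduction. I would reuse that construction and show it is approximation preserving, not just correct at a single threshold.

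Given a cubic graph $H$ with $n$ vertices and $m=3n/2$ edges, build a connected cubic graph $G_H$. The target is a burning number of the form
\[
b(G_H)=f(n)+\tau(H),
\]
where $\tau(H)$ is the vertex cover number and $f$ is a known function, or at least an affine relation up to bounded additive slack.

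The construction would use three kinds of gadgets, all cubic, with long path-like segments realized as cubic ladders.
\begin{itemize}
\item \textbf{Vertex gadgets.} Each has a designated cheap source slot of some fixed radius.
\item \textbf{Edge gadgets.} Each edge gadget sits at distance $r$ from its two endpoint vertex gadgets. It is burned for free exactly when one endpoint gadget holds a source; otherwise it needs an extra source of small radius.
\item \textbf{Backbone.} A long structure forces the large-radius sources (activated early) into fixed positions, so that only the last few rounds carry combinatorial choices.
\end{itemize}

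Everything is glued with degree-preserving connectors, for example by replacing an edge $uv$ by a small cubic block.

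The two L-reduction conditions then follow. First, since $H$ is cubic, $\tau(H)\ge m/3=n/2$, so $f(n)=O(n)\le c\,\tau(H)$ and $b(G_H)\le (c+1)\tau(H)$. Second, from any burning sequence of length $f(n)+t$ we must extract in polynomial time a vertex cover of size at most $t$. This works by normalizing: shift each source to a canonical position in its gadget without increasing the length. Then the vertex gadgets holding sources, together with one endpoint of each edge whose gadget received its own source, form a cover.

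This normalization is the main obstacle, and cubic graphs make it hard because there are no leaves to pin sources. I would first prove a canonical-form lemma: any burning sequence can be converted to one where every source of radius at least the backbone threshold lies on its designated backbone slot. Each backbone segment is long enough that only a source of that size, placed centrally, covers it. This is the standard path-forest argument of \cite{Bessy2017BurningIsHard}, transferred to ladders via the observation that the balls of a ladder behave like balls on a path, scaled by the width. A charging argument then shows that a source placed elsewhere in a vertex or edge gadget can be moved to the nearest canonical slot while covering at least the same set.

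One detail matters for the second condition: the exact distances in the construction must make any ``wasted'' source cost at least one unit. With that in place, the cover extracted from a sequence of length $f(n)+t$ has size at most $t$, which gives the second L-reduction condition with constant $1$.
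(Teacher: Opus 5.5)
There is a genuine gap. The argument rests on a construction and a canonical-form lemma that you never specify, and the architecture you do sketch is inverted. You let the backbone absorb the early, large-radius sources and leave the cover choices to the last rounds. But a source's radius is fixed by its position in the sequence, not by the slot it occupies. Since $\tau(H)\ge n/2$, there are $\Theta(n)$ cover sources. If they come last, their radii run through $0,1,\dots,\tau-1$, so most of them cannot reach edge gadgets at any fixed distance $r$, and ``a slot of fixed radius'' does not exist.

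The missing idea is the reverse arrangement the paper uses:
\begin{itemize}
\item The \emph{first} $k'=\beta(G')$ sources encode the cover. Each still has at least $cn'+3$ steps, which suffices to burn every $H_{uv}$ at whose endpoint it sits (Observation~\ref{obs:Burning-Huv}).
\item The gadget scale $cn'$ with $c\ge 4$ makes the spread of at most $k'<cn'/2$ among their radii harmless: no source outside $Dom_u\cup Dom_v$ can reach $Tips_{uv}$ in time (Lemma~\ref{lem:src-out-of-dom-cannot-burn}).
\item The \emph{last} $m=cn'+3$ sources are absorbed by the $C$-gadget, a cubic realization of a cycle of length $m^2$. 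Because $\sum_{i=1}^{m}(2i-1)=m^2$ is tight, it needs exactly one source of every radius $m-1,\dots,0$.
\end{itemize}

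Your normalization step does not transfer from path forests either. The path-forest argument is a tight counting argument that relies on components being separated. In a connected cubic graph, fire leaks between gadgets, and a segment can be covered jointly by several sources or from its ends. You give no reason why moving a source to a ``canonical slot'' covers a superset of what it covered before. Ladders also do not behave like paths scaled by the width: a radius-$r$ ball centred on one rail meets the other rail in only $2r-1$ vertices, which breaks the tightness above. The paper's $P$-gadget shortens the minor branch precisely to restore it.

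Instead of normalizing, the paper does three things. It inserts the long buffer $P_z$ so that core fire cannot enter the $C$-gadget early (Observation~\ref{obs:fire-from-h-to-c}). It gives each half-gadget $2^h=4cn'$ tips, so that unrepresented edges cost many middle-block sources. It then runs a cascade that contradicts Lemma~\ref{lem:endblock-atmost-2-btp}: one unrepresented edge forces a start-block source outside the domains, hence a second unrepresented edge, hence four middle-block sources in the core, hence two start-block sources outside, hence a third unrepresented edge.

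With $b(H)=\beta(G)+cn'+4$ (Lemma~\ref{lem:b-equals-threshold}) in hand, the theorem is just arithmetic. The bounds $\beta(G)\ge n/2$ (Observation~\ref{obs:vc-bounds-cubic-graph}) and $c<8$ give $k^*=b^*-(cn'+4)\le(1+27\epsilon)\beta(G)$ for any $(1+\epsilon)$-approximate value $b^*$. This uses only the optimum value. Your second L-reduction condition asks for more: extracting a cover of size at most $t$ from every sequence of length $f(n)+t$. That would require redoing the block analysis, which the paper states only for $0\le s\le k'$, for all $t<n$.
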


Finally, we extend the hardness to all regular graphs of fixed degrees.

\begin{restatable}{theorem}{dregularthm}
\label{thm:regular}
\BNP\ is APX-hard on connected $d$-regular graphs for every fixed integer $d\ge 4$.
\end{restatable}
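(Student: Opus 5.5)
The statement to prove is the last one displayed, Theorem~\ref{thm:regular}: BNP is APX-hard on connected $d$-regular graphs for every fixed $d\ge 4$. I would derive it from Theorem~\ref{thm:cubicAPX} by an approximation-preserving transformation that turns a connected cubic graph into a connected $d$-regular graph. The key requirement is that the burning number changes only by a controlled additive amount, or ideally not at all.

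Each vertex $v$ of a cubic graph $G$ is replaced by a gadget $H_v$. The gadget is a small dense graph in which every vertex has degree $d$, except for three designated port vertices $p_v^1,p_v^2,p_v^3$, each of which has degree $d-1$. Each edge $uv$ of $G$ becomes an edge between a port of $H_u$ and a port of $H_v$. A natural choice is to take a clique-like core on $d+1$ or $d+2$ vertices and remove a matching to free up degree at the ports; parity issues can be handled by gluing two copies where needed. The gadget should have diameter at most $2$ and contain a hub vertex $h_v$ within distance $1$ of every vertex of $H_v$. Then fire that enters $H_v$ through any port, or a source placed at $h_v$, covers the whole gadget within one or two extra steps. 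The resulting graph $G'$ is connected and $d$-regular, and its size is linear in $|V(G)|$.

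The main work is to relate $b(G')$ and $b(G)$, and there are two directions.

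\emph{From $G$ to $G'$.} Take a burning sequence $x_1,\dots,x_k$ of $G$ and place the corresponding sources at the hubs $h_{x_i}$. Distances in $G'$ between gadgets are inflated by a bounded factor, at most about $3$ per edge of $G$. So this direction only gives $b(G')\le c\,b(G)+O(1)$, not an additive bound.

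\emph{From $G'$ to $G$.} Project each source in $G'$ to the vertex of $G$ owning its gadget. This gives a burning of $G$ whose radii are shrunk by the same factor.

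The multiplicative distortion is the step I expect to be the main obstacle, because it breaks an exact correspondence. Burning is not scale-invariant: radii $0,1,\dots,k-1$ do not rescale nicely. My first attempt would be to avoid the distortion altogether by exploiting the specific structure of the cubic hard instances produced in the proof of Theorem~\ref{thm:cubicAPX}. If that reduction uses long paths or cycles whose internal vertices all have degree $3$ only because of local padding, the padding can instead be done $d$-regularly. The idea is to keep the path-like skeleton and attach a $d$-regular pendant gadget at each internal vertex that is burned automatically within one step. This raises the burning number by at most an additive constant. If that fails, I would fall back on an L-reduction from the same source problem, vertex cover on bounded-degree graphs, re-running the cubic construction with $d$-regular gadgets. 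For this I would verify two bounds: $b(G')\le \alpha\,\mathrm{OPT}$ via the instance sizes, and that any burning of $G'$ with $k$ steps yields a cover within $\beta(k-\mathrm{OPT}_{b})$ of optimal. Together these give APX-hardness for each fixed $d\ge 4$.
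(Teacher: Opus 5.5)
Your proposal does not contain a reduction that works. What is missing is exactly the ingredient you flag as the obstacle: a way to raise the degree from $3$ to $d$ that changes the burning number by only an additive constant.

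\emph{The vertex-replacement gadget.} As you observe, it is unusable. It stretches every edge, so it relates the two burning numbers only up to a constant factor.

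\emph{The pendant-gadget fallback.} This fails as stated, for two reasons.
First, a gadget hanging off a single vertex $v$ cannot be burned within one step of $v$. All its vertices would have to lie among the $d-3$ new neighbours of $v$, and each of these then has at most $1+(d-4)<d$ neighbours.
Second, for even $d$ no such gadget exists at all. In a $d$-regular graph every edge cut $(S,\bar S)$ has size congruent to $d|S|$ modulo $2$, so it is even when $d$ is even. The gadget, however, would be cut off from the rest by the $d-3$ edges at $v$, an odd number.
For odd $d$ the idea can be repaired: hang $K_{d+1}$ minus a matching of size $(d-3)/2$ on each vertex, which gives additive distortion $2$. But you give no gadget, and the even case, including $d=4$, is genuinely blocked.

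\emph{The third option.} Redoing the whole cubic reduction with $d$-regular gadgets restates the goal rather than giving a plan.

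\emph{The paper's route.} The paper uses a product construction (Construction~\ref{cons:dreg}). Take $d-2$ copies of the cubic hard instance $H$ and make the $d-2$ copies of each vertex a clique, so $H_d\cong H\,\square\,K_{d-2}$. No original edge is stretched: $d(z_{(i,k)},z_{(j,m)})=d_H(u_i,u_j)+1$ for $k\neq m$, where $u_i,u_j$ are the corresponding vertices of $H$.

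Two consequences follow.
\begin{itemize}
\item The first copy $G_1$ is isometric, and projecting any vertex to its copy in $G_1$ never increases distances into $G_1$. This gives $b(H)\le b(H_d)$, for instance via Theorem~\ref{thm:isometric}.
\item Running an optimal burning sequence of $H$ inside $G_1$ burns every other copy one step later. So $b(H_d)\le b(H)+1$ (Lemma~\ref{lem:burning-number-limits}).
\end{itemize}

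Since $b(H)=k+cn'+4$ with $k=\beta(G)\ge n/2$, this additive slack of one is absorbed by the calculation in the proof of Theorem~\ref{thm:cubicAPX}. It costs only a factor $1+O(1/n)$, and bounded-size instances can be solved exactly.

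The paper goes further and shows $b(H_d)=b(H)+1$ exactly. It uses the fact that every optimal burning sequence of $H$ burns some vertex uniquely in its last step (Lemmas~\ref{lem:b-h4-one-more} and~\ref{lem:vertex-burned-uniquely-in-last-step}). That gives an exact correspondence with $\beta(G')$, but the APX-hardness transfer itself needs only the two-sided additive bound.
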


\textbf{Technical overview. }
Establishing hardness on regular graphs is more delicate than for previously studied sparse graph classes. Regular graphs are highly symmetric and do not admit many local degree-irregular structures often used in reductions, such as pendant vertices, high-degree vertices, or small separators. The restriction to connected graphs is also natural for graph burning. On disconnected graphs, different components may require separate burning sources, and the problem admits partial decomposition across components. This distinction already appears in simple classes: BNP is polynomial-time solvable on paths, but NP-complete on path forests. Thus, hardness on connected regular graphs isolates the intrinsic complexity of the problem from effects caused by disconnectedness.
A further difficulty arises from the temporal aspect of graph burning. The effect of a source depends on both its location and activation time, since a source chosen earlier spreads for more rounds. Consequently, an early source may eventually burn several distant regions, interfering with the intended choices of a reduction. This effect is more pronounced on cubic graphs, where uniform degree provides several alternative propagation paths.

To address these obstacles, our reduction from \MVC\ on connected cubic graphs partitions the constructed graph into two parts. The first is a \emph{core} consisting of edge gadgets corresponding to edges of the input graph. These gadgets enforce vertex-cover constraints: for each edge $(u,v)$ in the input graph, any valid burning sequence must place an early source near one of the vertices representing $u$ or $v$ in the target graph. The second is an \emph{annex} part that regulates the allocation of burning sources. Its special structure delays propagation from the \textit{core} and forces later sources to be used almost entirely within a designated subgraph of the \textit{annex}. Consequently, every feasible (or near-optimal) burning sequence induces a vertex cover whose size equals the number of early sources in the burning.

Building on the cubic hardness construction, we then develop a regularity-lifting framework that transforms a connected cubic graph $G$ into a connected $d$-regular graph $H_d$ for any fixed $d\ge 4$, while preserving the burning number up to an additive constant. This transfers hardness from cubic graphs to all regular graphs of fixed degrees.

\textbf{Relation to prior approximation hardness.}
Mondal et al.~\cite{Mondal2022KBurningHardApprx} established APX-hardness for the \textit{generalized $k$-burning number} problem, which in particular implies APX-hardness of the standard burning number problem on unrestricted graphs (for $k=1$). Our result is substantially stronger: If $G_d$ stands for the class of connected $d$-regular graphs and \textit{Regular Graphs} is a family of graph classes $G_d, \forall d \ge 3$, we show that the standard \textit{Burning Number Problem} remains APX-hard for every single member of the \textit{Regular Graphs} family individually, and this in turn implies the hardness for the entire class of \textit{Regular Graphs}.

%% file: preliminaries.tex
All graphs considered in this paper are finite, simple, and undirected. For a graph $G$, the vertex set and edge set are denoted by $V(G)$ and $E(G)$, respectively. For basic definitions, we refer to \cite{West2001IGT}.  
Let  $u,v \in V(G)$. The vertices $u$ and $v$ are called \emph{neighbors} if there exists an edge $(u,v) \in E(G)$. Any path in $G$ with endpoints $u$ and $v$ is called a \emph{$u$--$v$ path}. The \emph{distance} between vertices $u,v \in V(G)$, denoted by $d_G(u,v)$, is
the number of edges in a shortest $u$--$v$ path in $G$. A \emph{vertex cover} of a graph $G$ is a set of vertices such that every edge of $G$ has at least one endpoint in the set. The \emph{vertex covering number} of $G$, denoted by $\beta(G)$, is the minimum integer $k$ for which $G$ admits a vertex cover of size $k$.

A \emph{binary tree} is a rooted tree in which each node has at most two children. The \emph{level} of a node $v$ in a binary tree, denoted by $l(v)$, is the number of edges on the unique path from the node to the root; the root is at level $0$. The \emph{height} of a node is the number of edges on a longest path from the node to a leaf in the subtree
rooted at that node. The height of a tree is the height of its root. A \emph{perfect binary tree} is a binary tree in which every internal node has exactly two children and all leaves are at the same level. A perfect binary tree with $n$ leaves has $\log_2 n+1$ levels.
A \emph{$d$-regular graph} is a graph in which every vertex has degree $d$. A \emph{cubic graph} is a $3$-regular graph. 

A sequence of vertices $B=(b_1,\dots,b_k)$ of a graph $G$ is a \textit{burning sequence} of $G$ if {for every vertex $v \in V(G)$}, there exists an index $i$ with $1 \le i \le k$ such that $d(v,b_i)\le k-i$, where the vertex $b_i$ is unburned at the end of step $i-1$ and is selected as a new burning source at step $i$. At each step of the process, exactly one new burning source is selected. The \emph{burning number} of $G$, denoted by $b(G)$, is the minimum integer $k$ for which $G$ admits a burning sequence of length $k$. When convenient, we treat the sequence $B$ as a set. If $|B| = b(G)$, then $B$ is called an \emph{optimal burning sequence} of $G$. For $1 \le i \le k$, we denote by $B[i]$ the burning source $b_i$, and for $1 \le i \le j \le k$, by $B[i \dotsc j] = (b_i, b_{i+1}, \dotsc, b_j)$ we denote
the subsequence of $B$ from index $i$ to $j$.

Let  $B = (b_1, b_2, \dots, b_k)$ be a burning sequence of $G$ and $v \in V(G)$ be a vertex burned at step $t$. Define $S_v = \{\, b_i \in B \mid d(b_i, v) = t - i \,\}$, where $d(b_i, v)$ denotes the distance between $b_i$ and $v$ in $G$. Thus, $S_v$ is the set of burning sources from which the fire reaches $v$ precisely at step $t$. A source $s \in B$ is said to \emph{burn} $v$, or to be \emph{responsible for burning} $v$, if $s \in S_v$. 
If $|S_v| = 1$, then $v$ is said to be \emph{burned uniquely}.

For vertices $u, v \in V(G)$, let $Steps(u,v]$ denote the minimum number of steps required for the fire originating at $u$ to burn $v$, excluding the step in which $u$ is burned but including the step in which $v$ is burned. In particular, $Steps(u,v] = d(u,v)$, where $d(u,v)$ denotes the distance between $u$ and $v$ in $G$. Similarly, $Steps[u,v]$ denotes the minimum number of steps required for the fire originating at $u$ to burn $v$, including the step in which $u$ is burned. Hence, $Steps[u,v] = d(u,v) + 1$. Throughout this paper, unless otherwise specified, $\log{(x)}$ denotes the \emph{logarithm} to the base 2, i.e., $\log_2{(x)}$.

%% file: cubic.tex
We obtain the hardness result for the \BNP\ on connected cubic graphs by a polynomial-time reduction from the \MVC\ on connected cubic graphs, which is known to be \NPC~\cite{Garey1974SimNPCProblems}.

\textbf{\MVC\ on connected cubic graphs}\\
\textbf{Input:} An undirected connected cubic graph $G$ and an integer $k$.\\
\textbf{Question:} Does there exist a vertex cover $Q \subseteq V(G)$ of size at most $k$?

We first establish bounds on the vertex covering number of connected cubic graphs.

\begin{observation}
\label{obs:vc-bounds-cubic-graph}
Let $G$ be a connected cubic graph on $n$ vertices. Then
$\frac{n}{2} \leq \beta(G) \leq \left\lceil \frac{2n}{3} \right\rceil$.
\end{observation}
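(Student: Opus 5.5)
For the lower bound we count edges. A connected cubic graph on $n$ vertices has $|E(G)| = 3n/2$ edges. Each vertex of a vertex cover $Q$ is incident to exactly $3$ edges, and every edge has at least one endpoint in $Q$. Double counting gives $3|Q| \geq |E(G)| = 3n/2$, so $|Q| \geq n/2$. Hence $\beta(G)\geq n/2$.

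For the upper bound we use the standard complementarity between vertex covers and independent sets. A set $Q$ is a vertex cover if and only if $V(G)\setminus Q$ is independent, so $\beta(G) = n - \alpha(G)$, where $\alpha(G)$ denotes the independence number. It therefore suffices to show $\alpha(G) \geq \lfloor n/3 \rfloor$, since then $\beta(G) \leq n - \lfloor n/3\rfloor = \lceil 2n/3\rceil$.

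To bound $\alpha(G)$ I would invoke Brooks' theorem. A connected cubic graph that is not $K_4$ has chromatic number at most $3$. Some colour class then has size at least $n/3$, and since sizes are integers, at least $\lceil n/3\rceil \ge \lfloor n/3\rfloor$. The exceptional case $G = K_4$ is checked directly: $n=4$, $\beta(K_4)=3$ and $\lceil 8/3\rceil = 3$, so the bound holds there as well.

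Alternatively, one can avoid Brooks' theorem and build an independent set greedily. Repeatedly pick a vertex and delete it together with its at most $3$ neighbours; this yields an independent set of size at least $\lceil n/4\rceil$. That estimate is too weak, however, so Brooks' theorem (or the Caro--Wei bound, which gives the same $n/4$ and is likewise insufficient) is the intended tool. The only real subtlety is the $K_4$ exception, which is handled by the direct check above. Both bounds are otherwise routine.
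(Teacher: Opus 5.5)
Your proof is correct and matches the paper's argument: the same edge-counting lower bound $3|Q|\ge 3n/2$, and the upper bound via $\beta(G)=n-\alpha(G)$, using Brooks' theorem for $G\neq K_4$ and a direct check for $K_4$. The paragraph on greedy and Caro--Wei bounds is not needed, but it does no harm.
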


\begin{proof}
Since $G$ is cubic, $|E(G)| = \frac{3n}{2}$. Each vertex can cover at most three edges, so any vertex cover $Q$ satisfies $3|Q| \ge |E(G)| = \frac{3n}{2}$, implying $|Q| \ge \frac{n}{2}$.

By Brooks' Theorem~\cite{West2001IGT}, every connected cubic graph other than $K_4$ is $3$-colorable. Hence, such graphs contain an independent set of size at least $\lceil n/3\rceil$, and therefore $\beta(G) = n - \alpha(G) \le \left\lfloor \frac{2n}{3} \right\rfloor$.
For $K_4$, $\beta(K_4)=3=\lceil 2n/3\rceil$. Hence, the result follows.
\end{proof}

We now outline our approach for proving the hardness of \BNP\ on connected cubic graphs.

\begin{itemize}
\item Starting from a connected cubic graph $G$, we construct an intermediate graph $G'$ by subdividing an arbitrary edge $(p,q)$ of $G$ twice, that is, by inserting two new vertices $x$, $y$ between $p$ and $q$, resulting  in  a path $p-x-y-q$.  Then we show that $\beta(G')=\beta(G)+1$.
\item We then replace each edge of $G'$ with a gadget called the \textit{BTP-gadget}, and add an auxiliary structure between the subdivision vertices $x$ and $y$, resulting in a connected cubic graph $H$.
\item We prove that $b(H)=\beta(G') + c n' + 3$, where $n'=|V(G')|$ and $4 \le c < 8$ is a constant depending on $n'$.
\item This establishes that \BNP\ is \APH\ on connected cubic graphs.
\end{itemize}

We now describe the construction of the intermediate graph $G'$.

\begin{construction}
\label{cons:G'}
Let $G$ be a connected cubic graph on $n$ vertices, and let $(p,q)$ be an arbitrary edge of $G$. Subdivide $(p,q)$ twice, {i.e.}, by inserting two new vertices $x$, $y$ between $p$ and $q$, resulting in a path $p-x-y-q$. The resulting graph is denoted by $G'$ on $n'=n+2$ vertices. See Figure~\ref{fig:k4subdivision} for an example.
\end{construction}

\begin{figure}
    \begin{subfigure}[b]{0.5\linewidth}
    \centering
    \input{figs/k4}
    \caption{$G\cong K_4$}
    \label{fig:k4}
  \end{subfigure}
  \begin{subfigure}[b]{0.454\linewidth}
    \centering
   \input{figs/k4_}
    \caption{$G'$}
    \label{fig:k4'}
  \end{subfigure}
\caption{ An example of construction~\ref{cons:G'} where a  graph $G'$ is obtained from a graph $G\cong K_4$ by subdividing an edge $(p,q)$ in $G$ twice. }
\label{fig:k4subdivision}
\end{figure}

The next lemma relates the vertex covering numbers of $G$ and $G'$.
\begin{lemma}
\label{lem:G'vertexcover}
$\beta(G') = \beta(G) + 1 $.
\end{lemma}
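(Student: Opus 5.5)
We prove the two inequalities $\beta(G')\le\beta(G)+1$ and $\beta(G')\ge\beta(G)+1$ separately. Throughout, $G'$ is obtained from $G$ by deleting the edge $(p,q)$ and adding the path $p-x-y-q$.

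\textbf{Upper bound.} Let $Q$ be a minimum vertex cover of $G$. Since $(p,q)\in E(G)$, $Q$ contains $p$ or $q$; by symmetry say $p\in Q$. Set $Q'=Q\cup\{y\}$. Every edge of $G'$ other than those on the new path is an edge of $G$ and is still covered by $Q$. On the new path, $(p,x)$ is covered by $p$, and both $(x,y)$ and $(y,q)$ are covered by $y$. Hence $Q'$ is a vertex cover of $G'$ of size $\beta(G)+1$. (If instead $q\in Q$, add $x$.)

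\textbf{Lower bound.} Let $Q'$ be a minimum vertex cover of $G'$. Because $(x,y)\in E(G')$, the set $Q'$ contains $x$ or $y$. We build a vertex cover of $G$ of size at most $|Q'|-1$, splitting into cases.
\begin{itemize}
\item If $Q'$ contains both $x$ and $y$, let $Q=(Q'\setminus\{x,y\})\cup\{p\}$. This has size at most $|Q'|-1$. Every edge of $G$ other than $(p,q)$ lies in $G'$ and avoids $x,y$, so it remains covered; $(p,q)$ is covered by $p$.
\item If $Q'$ contains exactly one of them, say $x\in Q'$ and $y\notin Q'$, then the edge $(y,q)$ forces $q\in Q'$. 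Let $Q=Q'\setminus\{x\}$, which has size $|Q'|-1$. All edges of $G$ other than $(p,q)$ remain covered, and $(p,q)$ is covered by $q$. The case $y\in Q'$, $x\notin Q'$ is symmetric, with $p\in Q'$ forced by the edge $(p,x)$.
\end{itemize}
In every case $\beta(G)\le|Q'|-1=\beta(G')-1$.

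Combining the two bounds gives $\beta(G')=\beta(G)+1$. The only delicate step is the case analysis in the lower bound, namely ensuring that the edge $(p,q)$, which is absent from $G'$, is still covered after removing subdivision vertices. This is handled by the forced membership of $p$ or $q$ coming from the path edges $(p,x)$ and $(y,q)$.
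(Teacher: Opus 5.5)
Your proof is correct and follows the same route as the paper: the same cover $Q\cup\{y\}$ or $Q\cup\{x\}$ for the upper bound, and for the lower bound deleting or replacing the subdivision vertices to get a cover of $G$ of size at most $|Q'|-1$. Your case split on $Q'\cap\{x,y\}$ is actually cleaner than the paper's, which claims that minimality leaves only the path configurations $\{x,y\}$, $\{p,y\}$, $\{x,q\}$; that claim misses, for example, minimum covers containing $p$, $x$ and $q$, whereas your cases are genuinely exhaustive.
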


\begin{proof}
The graph $G'$ is obtained from $G$ by deleting the edge $(p,q)$ and adding the edges $(p,x)$, $(x,y)$, and $(y,q)$.

\smallskip
\noindent\textbf{($\implies$)} 
Let $Q$ be a minimum vertex cover of $G$. Since $(p,q)\in E(G)$, at least one of the vertices $p$ or $q$ belongs to $Q$.
Define
\[
Q' :=
\begin{cases}
Q \cup \{y\}, & \text{if } p \in Q,\\
Q \cup \{x\}, & \text{if } p \notin Q.
\end{cases}
\]
One can verify that $Q'$ covers all edges of $G'$.
Hence, $
\beta(G') \le |Q'| = |Q| + 1 = \beta(G) + 1$.

\noindent\textbf{($\impliedby$)} 
Let $Q'$ be a minimum vertex cover of $G'$. Consider the path $p-x-y-q$. Since this path contains three edges, at least two vertices from the set $\{p,x,y,q\}$ must belong to $Q'$. Moreover, minimality of $Q'$ implies that the only possible vertex cover configurations on this path are
$\{x,y\}, \{p,y\}, \text{or},  \{x,q\}.$

In the first case, replacing $\{x,y\}$ by $\{p\}$ or $\{q\}$ yields a vertex cover of $G$. In the second case, removing $y$ yields a vertex cover of $G$. In the third case, removing $x$ yields a vertex cover of $G$. In all cases, we obtain a vertex cover of $G$ of size $|Q'|-1$. Therefore, $\beta(G) \le |Q'| - 1,$
which implies $\beta(G') \ge \beta(G) + 1$.

Combining both inequalities, we obtain $\beta(G') = \beta(G) + 1.$
\end{proof}
Lemma~\ref{lem:G'vertexcover} allows us to translate the bounds of Observation~\ref{obs:vc-bounds-cubic-graph} to the intermediate graph $G'$, yielding the following upper bound on $\beta(G')$.
\begin{observation}
\label{obs:upperbound-on-k'}
The vertex covering number of $G'$, $\beta(G') \le \left\lceil \frac{2n'}{3} \right\rceil$ .
\end{observation}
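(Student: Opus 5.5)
The plan is to combine Lemma~\ref{lem:G'vertexcover} with the upper bound of Observation~\ref{obs:vc-bounds-cubic-graph} and then do a short ceiling calculation. Construction~\ref{cons:G'} gives $n'=n+2$, so it suffices to bound $\beta(G)$ in terms of $n$ and rewrite the result in terms of $n'$.

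First, $G$ is a connected cubic graph on $n$ vertices. Observation~\ref{obs:vc-bounds-cubic-graph} therefore gives $\beta(G)\le \lceil 2n/3\rceil$. That observation already covers the exceptional case $G\cong K_4$ through the ceiling, so no separate case analysis is needed here. By Lemma~\ref{lem:G'vertexcover},
\[
\beta(G')=\beta(G)+1\le \left\lceil \tfrac{2n}{3}\right\rceil+1=\left\lceil \tfrac{2n+3}{3}\right\rceil .
\]

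Next, substitute $n=n'-2$. Then $2n+3=2n'-1$, and monotonicity of the ceiling gives
\[
\left\lceil \tfrac{2n'-1}{3}\right\rceil\le \left\lceil \tfrac{2n'}{3}\right\rceil,
\]
which is the claimed bound.

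There is no real obstacle in this argument. The only point needing care is the arithmetic that absorbs the additive $+1$ into the ceiling. This works because moving from $n$ to $n'$ adds $4/3$ to $2n/3$, which exceeds the $1$ that Lemma~\ref{lem:G'vertexcover} adds to the vertex covering number.
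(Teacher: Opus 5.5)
Your proof is correct and follows the paper's argument: you combine $\beta(G)\le\lceil 2n/3\rceil$ from Observation~\ref{obs:vc-bounds-cubic-graph} with $\beta(G')=\beta(G)+1$, then absorb the $+1$ into the ceiling using $n'=n+2$. Your identity $\lceil 2n/3\rceil+1=\lceil (2n'-1)/3\rceil$ simply spells out the paper's step $\lceil 2n/3\rceil+1\le\lceil 2(n+2)/3\rceil$.
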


\begin{proof}
By Observation~\ref{obs:vc-bounds-cubic-graph}, $\beta(G)\le \lceil 2n/3\rceil$. Since $n'=n+2$ and $\beta(G')=\beta(G)+1$ (Lemma~\ref{lem:G'vertexcover}), we obtain
\[
\beta(G') \le \left\lceil \frac{2n}{3} \right\rceil + 1
\le \left\lceil \frac{2(n+2)}{3} \right\rceil = \left\lceil \frac{2n'}{3} \right\rceil.
\]
\end{proof}

\subsection{Gadgets}
\label{subsec:gadgets}

We now describe the gadgets used to construct the connected cubic graph $H$ from $G'$. Each gadget contains one or more designated \emph{end vertices}, through which it is connected to the rest of the graph. In every gadget, each end vertex has degree at most two, whereas all other vertices have degree exactly three.

\begin{figure}[t]
\input{figs/t-gadget-vertical}
\caption{A T-gadget, $T_{pq}(l_1, l_2)$-gadget with two end vertices $p$ and $q$ where filled vertices form fixed arm and hollow vertices form floating arm. }
\label{fig:t-gadget}
\end {figure}

\textbf{1. \textit{T}-gadget}

A \emph{$T$-gadget} consists of two arms: a \emph{fixed arm} formed by the filled vertices and a \emph{floating arm} formed by the hollow vertices, as shown in Figure~\ref{fig:t-gadget}.  
The gadget has two external vertices $p$ and $q$, called the \emph{hook vertices} and four end vertices $f^1_{pq}, f^2_{pq}, f^1_{qp}$ and $f^2_{qp}$. The vertices $\mathit{tip}_{pq}$ and $\mathit{tip}_{qp}$ are referred to as the \emph{tips} of the gadget. Let $T_{{half}\text{-}pq}$ (respectively, $T_{{half}\text{-}qp}$) denote the subgraph induced by the vertices on the fixed arm starting from $p$ (respectively $q$) to $\mathit{jn}_{pq}$ (respectively $\mathit{jn}_{qp}$), together with the $l_2$ vertices on the floating arm from $\mathit{jn}_{pq}$ (respectively $\mathit{jn}_{qp}$) to $\mathit{tip}_{pq}$ (respectively $\mathit{tip}_{qp}$), as shown in Figure~\ref{fig:t-gadget}.  A $T$-gadget fixed between hook vertices $p$ and $q$ with parameters $l_1$ and $l_2$ is denoted by $T_{pq}(l_1,l_2)$.

\begin{observation}
\label{obs:T-gadget-properties}
Let $T_{pq}(l_1,l_2)$ be a $T$-gadget. Then the following are true:

\begin{enumerate}[label=\roman*.]
    \item 
    $
    {Steps}[p,q] = 2l_1+2.
    $
    \item 
    $
    {Steps}[p, tip_{pq}] = {Steps}[q, tip_{qp}] = l_1+l_2+1.
    $
    \item The total number of vertices in $T_{pq}(l_1,l_2)$ is $4l_1+2l_2$.
\end{enumerate}
\end{observation}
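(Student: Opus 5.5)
The plan is to read off the structure of $T_{pq}(l_1,l_2)$ from Figure~\ref{fig:t-gadget} and organise its vertices into distance layers from the hook vertex $p$. All three items then reduce to statements about these layers and to a direct count. Since ${Steps}[u,v]=d(u,v)+1$, items (i) and (ii) amount to showing $d(p,q)=2l_1+1$ and $d(p,\mathit{tip}_{pq})=l_1+l_2$. By the left–right symmetry of the gadget (swapping $p\leftrightarrow q$ and $\mathit{jn}_{pq}\leftrightarrow\mathit{jn}_{qp}$), it suffices to treat the $p$-side.

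\textbf{Layering the fixed arm.} First I would describe the fixed arm of $T_{{half}\text{-}pq}$ as a sequence of layers $L_1,\dots,L_{l_1}$, each with two vertices. Layer $L_1$ consists of the neighbours of $p$ in the gadget, and $\mathit{jn}_{pq}$ lies at the end of the last layer, so that $d(p,\mathit{jn}_{pq})=l_1$. Every edge of the fixed arm joins two vertices in the same layer or in consecutive layers, and this holds by inspection of the figure. From this property, a simple induction along the arm shows that each vertex of $L_i$ is at distance exactly $i$ from $p$. The upper bound comes from an explicit path that moves one layer per step. The lower bound holds because any path from $p$ must pass through every layer in turn, and no edge skips a layer.

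\textbf{Distance from $p$ to $q$, item (i).} The two halves are joined by the single edge between $\mathit{jn}_{pq}$ and $\mathit{jn}_{qp}$, possibly together with the floating arm. So any $p$–$q$ path must cross the $l_1$ layers on the $p$-side and the $l_1$ layers on the $q$-side, and it must also use at least one edge between the halves. This gives $d(p,q)\ge 2l_1+1$, and the path through the junction edge attains this bound, giving item (i). The step I expect to be the main obstacle is ruling out a shortcut through the floating arm. For this, I would check that the floating arm attaches to the rest of the gadget only at $\mathit{jn}_{pq}$ and $\mathit{jn}_{qp}$, apart from the end vertices $f^{\ast}_{\ast}$, which have no internal neighbours elsewhere. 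Then any route through the floating arm enters and leaves at the junctions, and it costs at least as much as the junction edge.

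\textbf{Distance to the tip, item (ii), and the vertex count, item (iii).} The tip $\mathit{tip}_{pq}$ is reached from $\mathit{jn}_{pq}$ by the $l_2$ floating-arm vertices listed in the definition of $T_{{half}\text{-}pq}$. The floating arm meets the rest of the gadget only at the junctions, so $d(\mathit{jn}_{pq},\mathit{tip}_{pq})=l_2$. Combined with the layering, this yields $d(p,\mathit{tip}_{pq})=l_1+l_2$, which is item (ii). For item (iii), I count each half separately: the fixed arm contributes $2l_1$ vertices (two per layer) and the floating arm contributes $l_2$. Each half therefore has $2l_1+l_2$ vertices, and the two halves together give $4l_1+2l_2$, with the hook vertices $p,q$ excluded as external.
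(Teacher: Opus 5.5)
Your layering argument is correct and is essentially the reasoning the paper relies on: the paper states this as an observation read off Figure~\ref{fig:t-gadget} and gives no separate proof.

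There is one slip to fix. In the figure the two halves of the fixed arm are joined by the edge $j_{pq}j_{qp}$, not by an edge $jn_{pq}jn_{qp}$. The junction vertices $jn_{pq}$ and $jn_{qp}$ are not adjacent. Each already has degree three: its rung, its rail, and its floating-arm neighbour ($w_{pq}$ and $w_{qp}$ respectively). This does not hurt the argument, since $j_{pq}$ and $jn_{pq}$ lie in the same layer $L_{l_1}$, and likewise on the $q$-side. So your lower bound $d(p,q)\ge 2l_1+1$ still holds, and the attaining path still has length $2l_1+1$ once rerouted through $j_{pq}j_{qp}$. The floating-arm crossing $jn_{pq}w_{pq}w_{qp}jn_{qp}$ has length $3$, so it is never a shortcut.

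For item (ii), attachment of the floating arm only at the junctions does not by itself give $d(jn_{pq},tip_{pq})=l_2$. You should also state that the floating arm is layered by its $l_2$ columns. That is, every edge joins vertices in the same or consecutive columns, including the crossed edges from the penultimate column into both tips. This is the same argument you used for the fixed arm.
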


\textbf{2. \textit{BT}-gadget}

A \emph{$BT$-gadget} is a perfect binary tree of height $h$, denoted by $BT(h)$, as shown in Figure~\ref{fig:bt-gadget}. The root and all the leaves are end vertices in this gadget.  

\begin{figure}[h]
\input{figs/binary-tree-gadget}
\caption{$BT$-gadget which which contains the vertices inside the dashed box.}
\label{fig:bt-gadget}
\end {figure}
\begin{observation}
\label{obs:BT-gadget-properties}
Let $BT(h)$ be a $BT$-gadget.
\begin{enumerate}[label=\alph*.]
    \item Fire initiated at the root $r$ completely burns $BT(h)$ in exactly $h+1$ steps.
    \item The gadget $BT(h)$ has $2^{h+1}-1$ vertices,  including $2^h$ leaves.
\end{enumerate}
\end{observation}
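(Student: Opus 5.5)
We would prove both parts of Observation~\ref{obs:BT-gadget-properties} directly from the definition of a perfect binary tree of height $h$. The one property we rely on is that every leaf lies at level exactly $h$. Since every internal node has exactly two children, level $i$ contains exactly $2^i$ vertices for $0 \le i \le h$.

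For part (a), let $r$ be the root and assume the fire starts at $r$, with $r$ burned in the first step. By the definition of $Steps[\cdot,\cdot]$, a vertex $v$ is burned by the fire from $r$ at step $Steps[r,v] = d(r,v)+1 = l(v)+1$. Since the tree is perfect, the deepest vertices are the leaves at level $h$. Hence every vertex is burned by step $h+1$, and no leaf is burned before step $h+1$.

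There is one subtlety to address. Inside the final graph $H$, fire could enter $BT(h)$ through its end vertices, that is, through the leaves. Part (a) concerns only the gadget's own propagation from $r$, so the claim is about distances within $BT(h)$. The subtree is an isometric part of the gadget, so these distances are exactly the levels, and the count $h+1$ is exact.

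For part (b), the number of vertices is
\[
\sum_{i=0}^{h} 2^i = 2^{h+1}-1,
\]
and the leaves are exactly the $2^h$ vertices at level $h$.

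No real obstacle is expected; the only care needed is the off-by-one convention that $Steps[r,v]$ counts the step in which $r$ itself is burned.
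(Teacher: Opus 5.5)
Your proof is correct and is the immediate argument the paper intends; the paper states this as an observation without proof. In a perfect binary tree of height $h$, level $i$ has $2^i$ vertices, and a vertex $v$ at level $i$ burns at step $d(r,v)+1=i+1$, which gives exactly $h+1$ steps and $\sum_{i=0}^{h}2^i=2^{h+1}-1$ vertices with $2^h$ leaves. Your remark about fire entering through end vertices in $H$ is harmless but unnecessary, since the observation concerns the gadget on its own, where $d(r,v)$ is simply the level of $v$.
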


\textbf{3. \textit{BTP}-gadget}

For each edge $(a,b)\in E(G')$, we plan to introduce a $BTP$-gadget denoted by $BTP_{ab}(h,l_1,l_2)$-gadget as shown in Figure~\ref{fig:bt-pair-gadget} for the construction of $H$. It consists of two $BT(h)$-gadgets, denoted by $BT_{ab}$ and $BT_{ba}$, with roots  $r_{ab}$ and $r_{ba}$, respectively, that are attached to $a$ and $b$ respectively. For each $1\le i\le 2^h$, a $T$-gadget $T^i_{ab}(l_1,l_2)$ connects the $i^{th}$ leaf of $BT_{ab}$ to the $i^{th}$ leaf of $BT_{ba}$. The roots $r_{ab}$ and $r_{ba}$ of $BT_{ab}$ and $BT_{ba}$, respectively, are the end vertices of the gadget. Let $Tips_{ab}$ (respectively $Tips_{ba}$) denote the set of all tips of $T^i_{half\text{-}ab}$s (respectively $T^i_{half\text{-}ba}$s)  attached to the
leaves of $BT_{ab}$ (respectively $BT_{ba}$). Note that $BTP_{ab}$ and $BTP_{ba}$ are the same gadgets.

The \emph{$a_{half}$} of the gadget, denoted $a_{{half}}(BTP_{ab})$, is an induced subgraph containing the tree $BT_{ab}$, and all $T^i_{half-ab}$s attached to the  leaves of $BT_{ab}$. The $b$-half is defined
symmetrically. The $a_{{exthalf}}(BTP_{ab})$ is an induced subgraph with the vertex $a$ and vertices of $a_{{half}}(BTP_{ab})$.  All notations associated with leaves, tips, halves, and extended halves follow Figure~\ref{fig:bt-pair-gadget}.

\begin{figure}[h]
\input{figs/BT-pair-gadget}
\caption{$BTP$-gadget corresponding to an  edge $(a,b)\in E(G')$ with two $BT$-gadgets $BT_{ab}$ and $BT_{ba}$ included in the upper and lower black dashed rectangle.}
\label{fig:bt-pair-gadget}
\end {figure}

Throughout the construction, the following inequalities are satisfied:
\begin{align}
l_1 + l_2 &< 2^{h-2} \label{eq:param-1}\\
l_2 &> l_1 + h + 1. \label{eq:param-2}
\end{align}

Let $BTP_{ab}(h, l_1, l_2)$ be a $BTP$-gadget with end vertices $r_{ab}$ and $r_{ba}$, as shown in Figure~\ref{fig:bt-pair-gadget}.

\begin{observation}
\label{obs:btp-gadget-num-vertices}
The number of vertices in a $BTP_{ab}(h, l_1, l_2)$-gadget is $2(2^{h+1}-1) + 2^h(4l_1+2l_2)$.
\end{observation}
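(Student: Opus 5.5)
The count follows from partitioning the vertex set of $BTP_{ab}(h,l_1,l_2)$ into its constituent gadgets and applying Observations~\ref{obs:BT-gadget-properties} and~\ref{obs:T-gadget-properties}. By construction, the gadget consists of the two binary trees $BT_{ab}$ and $BT_{ba}$ together with the $2^h$ $T$-gadgets $T^i_{ab}(l_1,l_2)$, $1\le i\le 2^h$. The external vertices $a$ and $b$ are not part of the gadget, since its end vertices are the roots $r_{ab}$ and $r_{ba}$. So it suffices to check that these $2+2^h$ pieces are pairwise vertex-disjoint and to add up their sizes.

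\textbf{Step 1: the trees.} By Observation~\ref{obs:BT-gadget-properties}(b), each of $BT_{ab}$ and $BT_{ba}$ has $2^{h+1}-1$ vertices. The two trees are disjoint, since they are separate copies attached to different endpoints $a$ and $b$. Together they contribute $2(2^{h+1}-1)$ vertices.

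\textbf{Step 2: the $T$-gadgets.} Each $T^i_{ab}(l_1,l_2)$ joins the $i$-th leaf of $BT_{ab}$ to the $i$-th leaf of $BT_{ba}$, and these leaves play the role of its hook vertices $p$ and $q$. Observation~\ref{obs:T-gadget-properties}(iii) gives $4l_1+2l_2$ vertices for $T_{pq}(l_1,l_2)$. The key point to verify is that this count excludes the hook vertices $p,q$. The gadget definition calls $p$ and $q$ \emph{external} hook vertices, and the end vertices $f^1_{pq},f^2_{pq},f^1_{qp},f^2_{qp}$ are internal vertices of degree at most two. A consistency check supports this: each half $T_{half\text{-}pq}$ consists of a fixed-arm segment plus $l_2$ floating-arm vertices, and the two halves together give the stated total $4l_1+2l_2$ without $p$ and $q$. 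Hence the leaves are not double counted, and distinct $T$-gadgets share no vertices.

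\textbf{Step 3: summing.} The total is $2(2^{h+1}-1)+2^h(4l_1+2l_2)$, as claimed. The only step that needs care is the bookkeeping in Step~2, namely confirming from Figure~\ref{fig:t-gadget} that the hook vertices lie outside the $4l_1+2l_2$ count. Everything else is direct addition.
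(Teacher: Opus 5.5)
Your proof is correct and follows the paper's argument: count the two $BT(h)$ trees via Observation~\ref{obs:BT-gadget-properties}, count the $2^h$ $T$-gadgets via Observation~\ref{obs:T-gadget-properties}, and add. Your additional check that the hook vertices (the tree leaves) lie outside the $4l_1+2l_2$ count, so nothing is double counted, is left implicit in the paper; it is consistent with Figure~\ref{fig:t-gadget} and with $Steps[p,q]=2l_1+2$.
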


\begin{proof}
By Observation~\ref{obs:BT-gadget-properties}, each $BT(h)$ contains $2^{h+1}-1$ vertices. Hence, the two binary trees together contribute $2(2^{h+1}-1)$ vertices. By Observation~\ref{obs:T-gadget-properties}, each $T(l_1,l_2)$-gadget contains $4l_1+2l_2$ vertices, and there are exactly $2^h$ such gadgets. Hence, the observation follows.
\end{proof}

We now analyze the propagation of fire inside a $BTP$-gadget.

\begin{observation}
\label{obs:btp-distances}
If a fire is initiated at $r_{ab}$, then
\begin{align}
Steps[r_{ab}, r_{ba}] &= 2h + 2l_1 + 2,\\
Steps[r_{ab}, Tips_{ab}] &= h+1+l_1+l_2.
\end{align}
\end{observation}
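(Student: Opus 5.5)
The approach is to decompose every route from $r_{ab}$ to $r_{ba}$, and to the tips, into three pieces: the climb down $BT_{ab}$, the passage through a single $T$-gadget, and the climb up $BT_{ba}$. We then add the pieces, using $Steps[u,v]=d(u,v)+1$ throughout, so it suffices to compute the graph distances $d(r_{ab},r_{ba})=2h+2l_1+1$ and $d(r_{ab},Tips_{ab})=h+l_1+l_2$. Here $d(r_{ab},Tips_{ab})$ means the distance at which the last tip of $Tips_{ab}$ is reached, so that the whole set is burned.

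\textbf{First distance.} Every $r_{ab}$--$r_{ba}$ path inside the gadget must leave $BT_{ab}$ through some leaf $\ell^i_{ab}$. Since $BT_{ab}$ and $BT_{ba}$ are connected only through the $T$-gadgets $T^i_{ab}$, the path must then cross some $T^i_{ab}$ from its hook $\ell^i_{ab}$ to its hook $\ell^i_{ba}$, and finally climb $BT_{ba}$ to $r_{ba}$. The pieces contribute as follows:
\begin{itemize}
\item Each root-to-leaf distance in a perfect binary tree of height $h$ equals $h$ (Observation~\ref{obs:BT-gadget-properties}).
\item By Observation~\ref{obs:T-gadget-properties}(i), the hook-to-hook distance in a $T$-gadget is $Steps[p,q]-1=2l_1+1$.
\end{itemize}
Hence $d(r_{ab},r_{ba})=h+(2l_1+1)+h=2h+2l_1+1$, giving $Steps[r_{ab},r_{ba}]=2h+2l_1+2$. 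For the lower bound I will argue that any path is at least this long. A path could in principle weave back and forth through several $T$-gadgets, but each crossing costs at least $2l_1+1$ and each tree traversal at least $h$, so such a path is only longer.

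\textbf{Second distance.} A tip $tip^i_{ab}$ lies in $T^i_{half\text{-}ab}$. It is reached from $r_{ab}$ by descending $h$ edges to the leaf $\ell^i_{ab}=p$ and then travelling $Steps[p,tip_{pq}]-1=l_1+l_2$ edges, by Observation~\ref{obs:T-gadget-properties}(ii). This gives distance $h+l_1+l_2$, the same for every $i$, so $Steps[r_{ab},Tips_{ab}]=h+1+l_1+l_2$.

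\textbf{Main obstacle.} The point needing care is ruling out a shortcut to $tip^i_{ab}$ from the other side, that is, entering $T^i_{ab}$ through $\ell^i_{ba}$. Such a route would be at least $d(r_{ab},\ell^i_{ba})$ plus a further walk to the tip. I will use inequality~\eqref{eq:param-2}, $l_2>l_1+h+1$, together with the structure of the fixed and floating arms, to show that any route reaching $tip_{pq}$ via the $q$-side is not shorter. This also requires a careful reading of the figure, specifically that $tip_{pq}$ is reached from $q$ only through $jn_{pq}$ along the floating arm.
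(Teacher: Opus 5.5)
Your proposal is correct and is essentially the paper's proof: both values come from concatenating the depth $h$ of $BT_{ab}$ (and of $BT_{ba}$) with the hook-to-hook and hook-to-tip distances of Observation~\ref{obs:T-gadget-properties}, and your lower-bound remark only spells out what the paper leaves implicit. Your ``main obstacle'' needs neither inequality~\eqref{eq:param-2} nor the claim that $tip_{pq}$ is reached from $q$ only through $jn_{pq}$. By Figure~\ref{fig:t-gadget} that claim is in fact false: the penultimate vertex of the $qp$-strand of the floating arm is adjacent to $tip_{pq}$, so $d(q,tip_{pq})=l_1+l_2$ via $jn_{qp}$. It suffices to note that any route through the hook $\ell^i_{ba}$ has length at least $d(r_{ab},\ell^i_{ba})+l_1+l_2 = h+3l_1+l_2+1 > h+l_1+l_2$.
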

\begin{proof}
The fire initiated at $r_{ab}$ must traverse the binary tree $BT_{ab}$ to a leaf in $h+1$ steps and then pass through a $T(l_1,l_2)$-gadget to reach the vertex $r_{ba}$ as well as $Tips_{ab}\cup Tips_{ba}$. Hence, the stated bounds follow from Observation~\ref{obs:T-gadget-properties} and Observation~\ref{obs:BT-gadget-properties}.
\end{proof}

The next observation combines the consequences of Observation~\ref{obs:btp-distances} with the inequality~\ref{eq:param-2}.

\begin{observation}
\label{obs:btp-complete-burning}
Let $BTP_{ab}(h, l_1, l_2)$ be a $BTP$-gadget with end vertices $r_{ab}$ and $r_{ba}$. If a fire is initiated at vertex $a$ (or $b$), then, without using any additional burning sources, the number of steps required to completely burn $\{a,b\} \cup BTP_{ab}(h,l_1,l_2)$ (including $a$ (or $b$)) is exactly $h + l_1 + l_2 + 2$.
\end{observation}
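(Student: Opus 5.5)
We take the fire initiated at $a$; the case of $b$ is symmetric, since $BTP_{ab}$ and $BTP_{ba}$ are the same gadget. Because the fire spreads synchronously from a single source, the number of steps needed to burn a vertex set is one plus the largest distance from $a$ to a vertex of that set. So it suffices to show that the eccentricity of $a$ within $\{a,b\}\cup BTP_{ab}(h,l_1,l_2)$ is exactly $h+l_1+l_2+1$, and that this maximum is attained at $Tips_{ab}$.

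\textbf{Step 1 ($a$-half and tips).} The vertex $a$ is adjacent to $r_{ab}$. By Observation~\ref{obs:btp-distances}, $Steps[r_{ab},Tips_{ab}]=h+1+l_1+l_2$, so $Steps[a,Tips_{ab}]=h+l_1+l_2+2$. Moreover, Observation~\ref{obs:BT-gadget-properties} and Observation~\ref{obs:T-gadget-properties} show that every other vertex of $a_{half}(BTP_{ab})$ lies on a root-to-leaf path of $BT_{ab}$ followed by a path inside $T^i_{half\text{-}ab}$ toward the tip. Hence every such vertex is burned no later than the tips.

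We must also check that no path through the $b$ side reaches a tip of $Tips_{ab}$ sooner. Such a path would have to leave $T^i_{half\text{-}ab}$, pass through $b$ or through another $T$-gadget, and come back. It would therefore be at least as long as the path going directly through $jn_{ab}$, so the lower bound on the tip distance holds.

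\textbf{Step 2 ($b$ and the $b$-half).} By Observation~\ref{obs:btp-distances}, $r_{ba}$ is burned at step $1+(2h+2l_1+2)=2h+2l_1+3$, counting from $a$. Then $b$ is burned one step later, at step $2h+2l_1+4$. A vertex of $b_{half}$ is reached from the nearest leaf of $BT_{ba}$, which is burned at roughly step $h+2l_1+3$. Within the $T$-gadget, the tips of $Tips_{ba}$ are at distance about $l_2$ from the junction $jn_{ba}$, giving step at most about $h+2l_1+l_2+3$.

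This is where care is needed. Fire from $a$ crosses the fixed arm of each $T^i$ and reaches $jn_{ba}$. I expect the intended reading, using the symmetric structure in Figure~\ref{fig:t-gadget}, to be this: the floating arm from $jn_{ab}$ and the floating arm from $jn_{ba}$ are the two $l_2$-halves of one common arm between the junctions. Every floating-arm vertex is then within $l_1+l_2+1$ of the leaf of one of the two trees, and the tips, at the middle of the floating arm, are the farthest points. With that reading the $b$-side is burned at the same time as or earlier than the $a$-tips. The leaves of $BT_{ba}$ receive fire at step about $h+2l_1+3$ and spread upward, so $r_{ba}$ is burned by step $2h+2l_1+3$.

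\textbf{Step 3 (comparison).} It remains to compare $2h+2l_1+4$, the time $b$ burns, with $h+l_1+l_2+2$. Inequality~\eqref{eq:param-2}, $l_2>l_1+h+1$, gives $h+l_1+l_2+2>2h+2l_1+3$, that is, $h+l_1+l_2+2\ge 2h+2l_1+4$. So $b$, $BT_{ba}$ and the $b$-half are all burned by step $h+l_1+l_2+2$, while the tips $Tips_{ab}$ need exactly that many steps. The total is therefore exactly $h+l_1+l_2+2$.

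The main obstacle is Step 2. It requires pinning down the exact geometry of the floating arm, so that its farthest vertices from $a$ are exactly the tips at distance $l_1+l_2$ beyond a leaf, and not some vertex on the $b$ side. I would first fix the distance formula along the floating arm from Figure~\ref{fig:t-gadget} and then take the maximum using \eqref{eq:param-2}.
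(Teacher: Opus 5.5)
Your outline is the paper's. Its proof only notes that $Steps[r_{ab},r_{ba}]=2h+2l_1+2<h+1+l_1+l_2=Steps[r_{ab},Tips_{ab}]$ by \eqref{eq:param-2}, and your Steps~1 and~3 reproduce this; your explicit check that $b$ itself (step $2h+2l_1+4$) still fits because $l_2\ge l_1+h+2$ is more careful than the paper. The gap is the point you flag in Step~2: why no vertex of $T^i_{half\text{-}ba}$, in particular $tip^i_{ba}$ and the floating-arm vertices next to it, is farther from $a$ than $Tips_{ab}$. The paper leaves this implicit, but your sketched justification does not supply it. Being within $l_1+l_2+1$ of a leaf $q$ of $BT_{ba}$ does not help, since $q$ is already at distance $h+2l_1+2$ from $a$. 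What you need is that all of $T^i\cup\{q\}$ lies within distance $l_1+l_2$ of the $BT_{ab}$-leaf $p$. Your ``one common arm'' reading would in fact falsify the statement. If the floating arm were just a path $jn_{ab}\cdots tip_{ab}\,tip_{ba}\cdots jn_{ba}$, then $d(p,tip_{ba})=\min\{l_1+l_2+1,\,(l_1+2)+l_2\}=l_1+l_2+1$. So $tip_{ba}$ would burn one step after $Tips_{ab}$, and the count would become $h+l_1+l_2+3$ (such an arm would not even be cubic).

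The missing ingredient is the ladder structure of Figure~\ref{fig:t-gadget}, with $p$ the leaf of $BT_{ab}$ and $q$ that of $BT_{ba}$. Write the two rows of the floating arm as $w_{pq}=s'_1,\dots,s'_{l_2}=tip_{pq}$ and $w_{qp}=s_1,\dots,s_{l_2}=tip_{qp}$. There are rungs $s_ks'_k$ for $k\le l_2-2$, and $s_{l_2-1},s'_{l_2-1},s_{l_2},s'_{l_2}$ induce $K_4$ minus the edge $s_{l_2-1}s'_{l_2-1}$. Since $d(p,jn_{pq})=l_1$, this gives $d(p,s'_k)=l_1+k$, and $d(p,s_k)\le l_1+k+1$ for $k\le l_2-2$. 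It also gives $d(p,s_{l_2-1}),\,d(p,tip_{qp})\le l_1+l_2=d(p,tip_{pq})$, because the lower row is reached across the rungs and through that final $K_4$ minus an edge, not from $q$. The lower half of the fixed arm and $q$ are within $2l_1+1<l_1+l_2$ of $p$. So $p$ has eccentricity exactly $l_1+l_2$ in $T^i\cup\{p,q\}$, and every vertex of every $T^i$ is burned from $a$ by step $h+l_1+l_2+2$. Your Step~3 then completes the proof.
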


\begin{proof}
Suppose a fire is initiated at $r_{ab}$. Due to the inequality~\ref{eq:param-2} and by Observation~\ref{obs:btp-distances}, we get $Steps[r_{ab},r_{ba}]<h+1+l_1+l_2$  and $Steps[r_{ab}, Tips_{ab}] = h+1+l_1+l_2$. Then it follows that if a fire is initiated at vertex $a$ (or $b$), then, without using any additional burning sources, the number of steps required to completely burn 
$\{a,b\} \cup BTP_{ab}(h,l_1,l_2)$ (including $a$ (or $b$)) is exactly $h + l_1 + l_2 + 2$.
\end{proof}

Based on the above observations, we have a lower bound on the number of burning sources required to completely burn a $BTP$-gadget.

\begin{lemma}
\label{lem:min-len-seq}
Let $B$ be a sequence of burning sources that is used to burn $BTP_{ab}(h,l_1,l_2)$. If $B$ completely burns $BTP_{ab}(h,l_1,l_2)$, then $|B|>l_1+l_2.$
\end{lemma}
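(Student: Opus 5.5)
The plan is to argue by contradiction. Suppose $|B| = k \le l_1+l_2$. Then every source $b_i$ burns only vertices within distance $k-i \le k-1 \le l_1+l_2-1$ of itself. Hence every vertex of the gadget lies within distance $l_1+l_2-1$ of some source. We will exhibit more pairwise far-apart vertices than there are sources. The natural witnesses are the tips. By Observation~\ref{obs:T-gadget-properties}(ii), each tip $\mathit{tip}_{pq}$ of a $T$-gadget sits at the end of a floating arm of length $l_2$ hanging from the junction $\mathit{jn}_{pq}$.

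\textbf{Step 1: locating a source near each tip.} Fix a tip $t$ in some $T^i_{ab}$. A source burning $t$ must lie within distance $l_1+l_2-1$ of $t$. By Observation~\ref{obs:btp-distances}, any path from $t$ to the binary trees has length about $l_1+l_2$. The floating arm together with the fixed arm back to the leaf therefore has length at least $l_1+l_2$. So any source burning $t$ must lie inside the same $T$-gadget $T^i_{ab}$. More precisely, it lies inside $T^i_{half\text{-}ab}$ or very near the junction region, and in particular not in $BT_{ab}$, $BT_{ba}$, or any other $T$-gadget. I would verify this carefully against the exact indexing in Figure~\ref{fig:t-gadget}; the inequality to check is $d(t, \text{leaf}) \ge l_1+l_2$. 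If the distance turns out to be exactly $l_1+l_2$, the bound $k-i\le l_1+l_2-1$ still excludes the leaf and everything beyond it.

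\textbf{Step 2: counting.} By Step 1, each of the $2^h$ $T$-gadgets must contain a source. Therefore $k \ge 2^h$. By inequality~\eqref{eq:param-1}, $2^h > 4(l_1+l_2) > l_1+l_2 \ge k$, which is a contradiction. Hence $|B|>l_1+l_2$.

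The main obstacle is Step 1: making sure the distance from a tip to anything outside its own $T$-gadget genuinely exceeds $l_1+l_2-1$. The tip could in principle be reached through the other hook of the same $T$-gadget, but that hook is also a leaf of a binary tree, so the same argument applies. If the geometry gives only a slightly weaker distance bound, say $l_1+l_2-1$, I would instead use the coarser observation that each source ball of radius less than $l_1+l_2 < 2^{h-2}$ can meet at most a bounded number of $T$-gadgets. The reason is that passing between two different $T$-gadgets requires traversing part of a binary tree. Beyond a common ancestor this gives only a few extra reachable gadgets, and the slack of factor $4$ in \eqref{eq:param-1} then suffices for the count.
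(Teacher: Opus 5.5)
Your proposal is correct and is essentially the paper's argument. Both rest on one fact: every tip of $T^i_{ab}$ is at distance exactly $l_1+l_2$ from each of its two hook vertices (the leaves). With at most $l_1+l_2-1$ rounds of spread, no source outside the interior of $T^i_{ab}$ can reach that tip, which forces a distinct source in each of the $2^h$ $T$-gadgets and hence $|B|\ge 2^h>l_1+l_2$.

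The fallback in your last paragraph is unnecessary. The distance from a tip to either hook is exactly $l_1+l_2$ (to the far hook via the opposite side of the floating arm), and your bound $k-i\le l_1+l_2-1$ already excludes both.
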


\begin{proof}
Assume for  contradiction that $|B|\le l_1+l_2$. Let $Tips_{ab}$ denote the set of all tips in the $T^i_{half\text{-}ab}$s attached to the $2^h$ leaves of $BT_{ab}$. Any source placed in $V(BT_{ab})\cup V(BT_{ba})$ cannot reach any vertex of $Tips_{ab}$ within $l_1+l_2$ steps. Moreover, for distinct indices $i\neq j$, a source placed in $T^i_{ab}$ cannot reach $tip^j_{ab}$ within $l_1+l_2$ steps. Thus, each burning source can burn at most one vertex of $Tips_{ab}$. By Inequality~\ref{eq:param-1}, $l_1+l_2<2^{h-2}$, while $|Tips_{ab}|=2^h$, so not all tips can be burned, a contradiction. Hence, $|B|>l_1+l_2$.
\end{proof}

Let $B$ be a sequence that completely burns $BTP_{ab}(h, l_1, l_2)$. Let $\textit{EndBlock}$ be the subsequence of the last $l_1 + l_2$ sources of $B$. By Lemma~\ref{lem:min-len-seq}, $|B| > l_1 + l_2$. So the remaining $|B|-(l_1+l_2)$ sources are partitioned (as required) into two subsequences $\textit{StartBlock}$ and $\textit{MiddleBlock}$ with sizes $s,m \ge 0$, respectively. Thus $B$ is a concatenation of sources as follows:
\[
B=\textit{StartBlock}\circ \textit{MiddleBlock}\circ \textit{EndBlock}.
\] and $|B| = s + m + (l_1 + l_2)$, where
$\textit{StartBlock} = B[1,\ldots,s]$, $\textit{MiddleBlock} = B[s+1,\ldots,
s+m]$, and $\textit{EndBlock} = B[s+m+1,\ldots,s+m+l_1+l_2]$. Note that, $StartBlock$ ($MiddleBlock$) is empty if $s=0$ ($m=0$, respectively). The following lemma is due to Lemma~\ref{lem:min-len-seq}.

\begin{lemma}
\label{lem:max-leaves-in-endblock}
The maximum number of vertices $tip_{ab}\in Tips_{ab}$ that can be burned using the sources in
$\textit{EndBlock}$ of a burning sequence $B$ is at most $l_1+l_2$.
\end{lemma}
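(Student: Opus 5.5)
The plan is to repeat the counting argument from the proof of Lemma~\ref{lem:min-len-seq}, this time restricted to the sources in $\textit{EndBlock}$. Let $k=|B|=s+m+l_1+l_2$. The source $B[s+m+j]$, for $1\le j\le l_1+l_2$, is activated at step $s+m+j$. It can therefore only reach vertices at distance at most $k-(s+m+j)=l_1+l_2-j\le l_1+l_2-1$ from itself. So every $\textit{EndBlock}$ source has radius strictly less than $l_1+l_2$. The claim follows once we show that each such source burns at most one vertex of $Tips_{ab}$: there are only $l_1+l_2$ such sources.

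The key step is that any set of radius less than $l_1+l_2$ contains at most one tip of $Tips_{ab}$, i.e.\ any two distinct tips $tip^i_{ab},tip^j_{ab}$ are far apart. I would argue with the structure of the gadget, splitting into cases by where the source lies, as in Lemma~\ref{lem:min-len-seq}.

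\emph{Source in $V(BT_{ab})\cup V(BT_{ba})$.} Every path from such a source to a tip must exit the tree through a leaf and then run along the arm of a $T$-gadget up to the tip. By Observation~\ref{obs:T-gadget-properties}(ii), the leaf-to-tip distance is $l_1+l_2$. Hence the source cannot reach any tip within $l_1+l_2-1$ steps.

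\emph{Source inside $T^i_{ab}$.} Any path to $tip^j_{ab}$ with $j\neq i$ must leave $T^i_{ab}$ through one of its two attachment leaves, cross a binary tree, and then traverse $T^j$ up to $tip^j_{ab}$. That last traversal alone costs at least $l_1+l_2$ edges, again by Observation~\ref{obs:T-gadget-properties}(ii). So such a source can reach at most the single tip $tip^i_{ab}$ of its own gadget. (The tip $tip^i_{ba}$ is not in $Tips_{ab}$.)

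\emph{Sources outside the gadget.} If $B$ is a sequence in a larger graph containing the gadget, a source outside the gadget enters only through $r_{ab}$ or $r_{ba}$, so it is dominated by the first case.

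The main obstacle is the second case: checking carefully that the $T$-gadget's topology gives no shortcut from one tip to another, e.g.\ through the floating arm, shorter than the $l_1+l_2$ path bound. I would verify this directly from Figure~\ref{fig:t-gadget}, using that $tip^j_{ab}$ lies at the end of the floating arm reachable from the rest of the graph only via $jn_{ab}$ and the fixed arm. Combining the three cases, each $\textit{EndBlock}$ source contributes at most one tip, giving at most $l_1+l_2$ tips in total.
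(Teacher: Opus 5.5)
Your proposal is correct and takes essentially the same approach as the paper. The paper derives this lemma directly from the argument of Lemma~\ref{lem:min-len-seq}: each $\textit{EndBlock}$ source has at most $l_1+l_2-1$ steps of spread, so it can burn at most one vertex of $Tips_{ab}$ (sources in the trees reach none, and a source in $T^i_{ab}$ reaches only $tip^i_{ab}$), and there are exactly $l_1+l_2$ such sources.
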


\begin{lemma}
\label{lem:generalized-max-leaves-in-endblock}
Let $BTP_{ab}(h,l_1,l_2)$ and $BTP_{cd}(h,l_1,l_2)$ be two vertex-disjoint BTP gadgets. The maximum number of vertices in $Tips_{ab}\cup Tips_{cd}$ that can be burned using sources in $\textit{EndBlock}$ is at most $l_1+l_2$. 
\end{lemma}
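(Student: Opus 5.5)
The plan is to reuse the counting argument from the proof of Lemma~\ref{lem:min-len-seq}, now applied to the union $Tips_{ab}\cup Tips_{cd}$. Since $\textit{EndBlock}$ contains exactly $l_1+l_2$ sources, it suffices to show that each source in $\textit{EndBlock}$ can be responsible for burning at most one vertex of $Tips_{ab}\cup Tips_{cd}$. The bound then follows by summing over the sources of $\textit{EndBlock}$.

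First I fix the timing. A source $b_i\in\textit{EndBlock}$ has index $i\ge s+m+1$, so its fire spreads for at most $|B|-i\le l_1+l_2-1$ rounds. It therefore burns exactly the vertices within distance at most $l_1+l_2-1<l_1+l_2$ of $b_i$. The remaining work is to show that no vertex of the graph lies within distance $l_1+l_2-1$ of two distinct tips in $Tips_{ab}\cup Tips_{cd}$. Equivalently, any two distinct such tips are at distance at least $2(l_1+l_2)-1$.

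To bound the distance between two distinct tips, I use Observation~\ref{obs:T-gadget-properties}. Each tip $tip^i_{ab}$ hangs at the end of a floating arm, and any path leaving it must reach the junction $jn$ and then the fixed arm. In particular, any path from $tip^i_{ab}$ to anything outside $T^i_{ab}$ must pass through a hook vertex (a leaf of $BT_{ab}$ or $BT_{ba}$), which costs $l_1+l_2$ edges. Two tips in the same $T$-gadget, namely $tip^i_{ab}$ and $tip^i_{ba}$, are $2l_2$ apart through the two junctions joined along the fixed arm. By Inequality~\ref{eq:param-2}, $2l_2>l_1+l_2+h+1$, so these two tips are also far apart.

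Tips in different $T$-gadgets, whether in the same or in different BTP-gadgets, are separated by a path that exits one $T$-gadget through a leaf and enters the other through a leaf. Such a path has length at least $2(l_1+l_2)$. Hence no ball of radius $l_1+l_2-1$ contains two tips.

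The step needing care is the same-gadget case, where $tip^i_{ab}$ and $tip^i_{ba}$ lie in a single $T^i$. Here I would confirm from the structure of Figure~\ref{fig:t-gadget} that their distance is at least $2l_2\ge 2(l_1+l_2)-1$. If this inequality were off by a constant, it still does no harm: the lemma only counts tips in $Tips_{ab}\cup Tips_{cd}$, and those two tips lie on different halves. I could then refine the argument by pairing each tip with its own half, as in Lemma~\ref{lem:max-leaves-in-endblock}. With the injection from burned tips to $\textit{EndBlock}$ sources established, the count is at most $|\textit{EndBlock}|=l_1+l_2$.
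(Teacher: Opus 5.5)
Your proof is correct and is essentially the paper's argument: the paper also bounds the count by showing that each $\textit{EndBlock}$ source burns at most one tip. It gets this within a gadget from Lemma~\ref{lem:max-leaves-in-endblock} and across gadgets from $d(tip_{ab},tip_{cd})>2(l_1+l_2)$, whereas you handle both cases at once with the exit-through-a-hook distance bound.

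One correction: your aside about two tips in the same $T$-gadget is false and should be dropped rather than ``confirmed''. In Figure~\ref{fig:t-gadget}, $tip^i_{ab}$ and $tip^i_{ba}$ are adjacent, and $2l_2\ge 2(l_1+l_2)-1$ fails whenever $l_1\ge 1$. The argument needs only the fact you already note: each $T$-gadget contributes exactly one tip to $Tips_{ab}\cup Tips_{cd}$, so that case never arises.
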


\begin{proof}
By Lemma~\ref{lem:max-leaves-in-endblock}, at most $l_1+l_2$ vertices of $Tips_{ab}$ (and similarly of $Tips_{cd}$) can be burned by $\textit{EndBlock}$. Moreover, for any $tip_{ab}\in Tips_{ab}$ and $tip_{cd}\in Tips_{cd}$, we have $d(tip_{ab},tip_{cd})>2(l_1+l_2)$. Hence, a single source in $\textit{EndBlock}$ cannot burn both. Therefore, the total number of burned tips in $Tips_{ab}\cup Tips_{cd}$
is at most $l_1+l_2$.
\end{proof}
We next bound the contribution of each source in $\textit{MiddleBlock}$ toward burning tip vertices.
\begin{lemma}
\label{lem:max-tips-from-source-in-middleblock}
Let $0\le m\le h$. For each $1\le j\le m$, a source $\textit{MiddleBlock}[j]$ can burn at most $2^{m-j}$
vertices in $Tips_{ab}$. This bound is tight and is achieved only when the source is placed at a vertex
at level $h-m+j$ in $BT_{ab}$ or $BT_{ba}$.
\end{lemma}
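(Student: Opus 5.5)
The plan is to compute, for the source $\textit{MiddleBlock}[j]=B[s+j]$, its remaining radius of spread, and then to count which tips of $Tips_{ab}$ lie inside that ball.

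\textbf{Step 1: the radius.} The total length is $|B|=s+m+l_1+l_2$. A source placed at step $s+j$ therefore burns every vertex within distance
\[
R_j=|B|-(s+j)=(m-j)+l_1+l_2 .
\]

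\textbf{Step 2: where the source can usefully sit.} We split into cases by location.
\begin{itemize}
\item Suppose the source lies in $BT_{ab}$ at level $\ell$. Every tip of $Tips_{ab}$ sits at distance exactly $(h-\ell)+l_1+l_2+1$ below some leaf descendant. More precisely, tips below the $i$-th leaf are reached from a leaf in $l_1+l_2+1$ steps, so from level $\ell$ the distance is $h-\ell+l_1+l_2$ (leaf at distance $h-\ell$, then one step and $l_1+l_2$ more). We calibrate this constant against Observation~\ref{obs:T-gadget-properties}(ii) and Observation~\ref{obs:btp-distances}.
\item Reaching a tip not below the source in the tree requires going up and down the tree, which costs at least $2$ more per level. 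It could alternatively go around through the other $BT$ via a $T$-gadget, costing at least $2l_1+2$ extra. By inequality~\eqref{eq:param-2}, such detours exceed $R_j$ whenever $m\le h$.
\item Hence the tips reachable are contained in the subtree below the source. They are all reached iff $h-\ell\le m-j$, i.e.\ $\ell\ge h-m+j$.
\item The number of such tips is $2^{h-\ell}\le 2^{m-j}$, with equality iff $\ell=h-m+j$.
\end{itemize}
Sources at a higher level (smaller $\ell$) reach no tip at all, since even the nearest tip is at distance $h-\ell+l_1+l_2>R_j$.

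\textbf{Step 3: the remaining locations.} A source in $BT_{ba}$ reaches $Tips_{ab}$ only via a leaf of $BT_{ba}$ and the fixed arm of a $T$-gadget. That route has length at least $2l_1+2$ plus the path up to the relevant tip. Since the fixed arm of $T^i$ leads to a $Tips_{ab}$ tip only through $jn^i_{pq}$, at most a subtree's worth (again $\le 2^{m-j}$) is attainable. Equality is achieved symmetrically at level $h-m+j$, matching the statement's ``or $BT_{ba}$''.

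A source inside a $T$-gadget, or at a hook vertex/outside, reaches at most one tip of $Tips_{ab}$ via that gadget. Reaching a second tip requires traveling into a binary tree and back down, at distance exceeding $R_j$ (as in the proof of Lemma~\ref{lem:min-len-seq}). So it burns at most $1=2^{0}$ tips, and this equals $2^{m-j}$ only when $j=m$. In that case level $h$, a leaf, is also optimal. The tightness claim then needs care: a tip-adjacent position achieves $1$ too. I would handle this by noting that the statement's ``only'' is interpreted within the BT placements, or by checking that for $j=m$ the leaf is the intended placement.

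\textbf{Main obstacle.} The main obstacle is Step 3, bounding the reach of sources in the opposite tree $BT_{ba}$ and in $T$-gadgets. For this I would first write explicit distances from $r_{ba}$-side vertices to $tip^i_{ab}$ using Figure~\ref{fig:t-gadget} and show they exceed $R_j$ using $l_2>l_1+h+1$.
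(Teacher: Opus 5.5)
Your outline (compute the radius $R_j=(m-j)+l_1+l_2$, analyse a source by its level, then handle the remaining placements) follows the paper's, but Step 2 rests on a false claim. Take a source at level $\ell>h-m+j$. It reaches the tips below it with slack $\sigma=(m-j)-(h-\ell)>0$. Climbing $x$ levels and descending again costs only $2x$ extra, so every tip below the $x$-th ancestor with $2x\le\sigma$ is also burned. The condition $l_2>l_1+h+1$ says nothing about this in-tree detour. For example, with $m=h$ and $j=1$, a source at a leaf of $BT_{ab}$ burns $2^{\lfloor (h-1)/2\rfloor}$ tips, not one. So ``the tips reachable are contained in the subtree below the source'' fails, and your count $2^{h-\ell}$ for deeper sources is wrong.

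The missing step is the paper's ancestor argument. The reachable tips lie below the $\lfloor\sigma/2\rfloor$-th ancestor of the source, a set of $2^{h-\ell+\lfloor\sigma/2\rfloor}$ tips, and $h-\ell+\lfloor\sigma/2\rfloor<h-\ell+\sigma=m-j$ whenever $\sigma\ge 1$. This gives the bound and also shows the optimal level is unique. (The paper writes $m-j=(h-i)+2p$, tacitly assuming even parity; $p=\lfloor\sigma/2\rfloor$ is what is meant.) The same oversight recurs in Step 3. A source inside $T^i$ at distance $\delta\ge 1$ from the nearer hook is not limited to one tip: once $m-j\ge\delta+2$, its fire climbs into the binary tree. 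The correct reduction is that it behaves like a source at that leaf with slack reduced by $\delta$, so it burns strictly fewer than $2^{m-j}$ tips when $j<m$.

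Your treatment of $BT_{ba}$ also uses a wrong distance. In a $T$-gadget the floating ladder is attached at both $jn_{pq}$ and $jn_{qp}$, and its last rung makes $tip_{pq}$ adjacent to the penultimate vertex of the $qp$-rail. Hence $tip^i_{ab}$ is at distance exactly $l_1+l_2$ from both $l^i_{ab}$ and $l^i_{ba}$. There is no $2l_1+2$ surcharge and no need to pass through $jn^i_{pq}$. This symmetry is exactly why sources in $BT_{ba}$ behave like sources in $BT_{ab}$. With the route length you state, a source at level $h-m+j$ of $BT_{ba}$ would reach no tip of $Tips_{ab}$ at all, contradicting your own claim that equality is attained there.

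Finally, your concern about $j=m$ is justified, and it points to a defect in the statement, not in your argument. For $j=m$, a source at $tip^i_{ab}$ itself, or anywhere on the floating arm of $T^i$, burns exactly one tip of $Tips_{ab}$ and so attains the bound. The ``only'' clause therefore holds only for $j<m$. The paper's proof passes over this by simply declaring $T$-gadget placements suboptimal.
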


\begin{proof}
A source in \textit{MiddleBlock} at position $j$ has $(m-j)+(l_1+l_2)$ steps for fire propagation. Suppose the source is placed at a vertex $u$ at level $i$ in $BT_{ab}$. Any tip in the subtree rooted at $u$ is at distance $(h-i)+(l_1+l_2)$ from $u$. If $h-i > m-j$, then $(h-i)+(l_1+l_2) > (m-j)+(l_1+l_2)$, and hence, no tip can be burned. If $h-i = m-j$, then all tips in the subtree rooted at $u$ are burned. Since a node at level $i$ has $2^{h-i}$ descendant leaves, the number of burned tips is $2^{h-i} = 2^{m-j}$. Now consider the case $h-i < m-j$. Let $(m-j) = (h-i) + 2p$ for some integer $p > 0$. Then every tip in the subtree rooted at $u$ is at distance $(h-i)+(l_1+l_2) = (m-j)-2p+(l_1+l_2) < (m-j)+(l_1+l_2)$, and hence all such tips are burned. Next, consider tips that are not in the subtree rooted at $u$. Any such tip must be reached via a path that goes from $u$ to an ancestor at distance $x$ and then to the tip. The length of such a path is $2x + (h-i) + (l_1+l_2)$. This is at most $(m-j)+(l_1+l_2)$ only if $2x + (h-i) \le (m-j)$, which implies $x \le p$. Thus, the set of reachable tips is contained in the subtree rooted at the $p$-th ancestor of $u$. Let $w$ denote this ancestor. Let $v$ be the vertex at level $h-m+j$. Then $w$ is a descendant of $v$, and hence the subtree rooted at $w$ is a proper subset of the subtree rooted at $v$. Therefore, placing the source at $v$ maximizes the number of reachable tips. Since a vertex at level $h-m+j$ has exactly $2^{m-j}$ descendant leaves, the maximum number of tips that can be burned is $2^{m-j}$, and this bound is tight.
Finally, placing a source inside any $T$-gadget is suboptimal. The fire from a source located in a $T$-gadget reaches the corresponding $tip_{ab}$ strictly before the last step, and thereafter any further spread toward other tips must proceed through the attached $BT$-gadget. However, to burn other tips, the fire will have to spread in the $BT$-gadget, and by the reasoning given above, we would be better off to have a burning source in the $BT$-gadget itself. This completes the proof.
\end{proof}

\begin{lemma}
\label{lem:max-leaves-in-middleblock}
If $0\le m\le h$, then the maximum number of vertices in $Tips_{ab}$ that can be burned using all sources in $\textit{MiddleBlock}$ is at most $2^m-1$.
\end{lemma}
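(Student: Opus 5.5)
The plan is to sum the per-source bound of Lemma~\ref{lem:max-tips-from-source-in-middleblock} over all sources of \textit{MiddleBlock}. Since $0\le m\le h$, that lemma applies to every index $1\le j\le m$: the source $\textit{MiddleBlock}[j]$ burns at most $2^{m-j}$ vertices of $Tips_{ab}$. A crude union bound then gives
\[
\sum_{j=1}^{m} 2^{m-j} \;=\; 2^{m-1}+2^{m-2}+\cdots+2^0 \;=\; 2^m-1 ,
\]
which is exactly the claimed bound. When $m=0$ the block is empty, so it burns no tips, matching $2^0-1=0$.

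The main point to check is that Lemma~\ref{lem:max-tips-from-source-in-middleblock} covers every possible placement of a middle source, since the summation itself is only subadditivity. I would spell out the three kinds of placement.
\begin{itemize}
\item A source inside $BT_{ab}$ is handled directly by the level analysis in that lemma.
\item A source in $BT_{ba}$ is at distance at least $h+2l_1+2$ (or so) from every tip in $Tips_{ab}$. This uses the $T$-gadget geometry, reaching $Tips_{ab}$ via $jn_{ab}$. By Inequality~\ref{eq:param-2} it reaches strictly fewer such tips than a source at the corresponding level of $BT_{ab}$, so the bound $2^{m-j}$ still holds. The lemma already asserts this for $BT_{ba}$, so it can simply be cited.
\item A source in a $T$-gadget or outside the gadget (for example at $a$ or in another gadget) burns no more than a source moved to the nearest vertex of $BT_{ab}$ on its path to the tips. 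Hence it is also dominated by $2^{m-j}$; this follows the final paragraph of the proof of Lemma~\ref{lem:max-tips-from-source-in-middleblock}.
\end{itemize}

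Overlaps between the tip sets burned by different sources can only reduce the total, so the union bound suffices and no disjointness argument is needed. I expect the only delicate step to be confirming that sources outside $BT_{ab}\cup BT_{ba}$ obey the same $2^{m-j}$ cap; I would cite the lemma's concluding argument rather than reprove it.
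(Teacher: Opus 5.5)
Your proposal is correct and is essentially the paper's proof: apply Lemma~\ref{lem:max-tips-from-source-in-middleblock} to each $\textit{MiddleBlock}[j]$ and sum $\sum_{j=1}^{m}2^{m-j}=2^m-1$, with overlaps only lowering the count. One aside is inaccurate but harmless: in a $T$-gadget each tip is at distance $l_1+l_2$ from both hook vertices, so a source at level $i$ of $BT_{ba}$ reaches the tips of $Tips_{ab}$ below it at distance $(h-i)+l_1+l_2$, exactly as a source at level $i$ of $BT_{ab}$ does (not ``at least $h+2l_1+2$'', and not strictly fewer tips), which is why the cited lemma names $BT_{ba}$ as an equally tight placement; since you rely on that lemma for every placement anyway, your argument is unaffected.
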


\begin{proof}
By Lemma~\ref{lem:max-tips-from-source-in-middleblock}, $\textit{MiddleBlock}[j]$ burns at most $2^{m-j}$ tips. Summing over all $j$,
\[
\sum_{j=1}^m 2^{m-j}=2^m-1.
\]
\end{proof}

The following lemma shows that, under the stated conditions, a burning sequence cannot completely burn three $BTP$-gadgets simultaneously.

\begin{lemma}
\label{lem:possible-to-burn-two-not-three}
Let $BTP_{ab}(h, l_1, l_2)$, $BTP_{cd}(h, l_1, l_2)$, and $BTP_{ef}(h, l_1, l_2)$ be three pairwise disjoint BTP-gadgets. Suppose $StartBlock \cap (V(BTP_{ab}) \cup V(BTP_{cd}) \cup V(BTP_{ef})) = \emptyset$, and the fire from any source in \textit{StartBlock} does not reach any vertex $v \in \{r_{ab}, r_{ba}, r_{cd}, r_{dc}, r_{ef}, r_{fe}\}$ by the end of the $(s+1)^{{th}}$ step. If $m = h+1$, then the burning sequence $B$ cannot completely burn $a_{{half}}(BTP_{ab}) \cup c_{{half}}(BTP_{cd}) \cup e_{{half}}(BTP_{ef})$.
\end{lemma}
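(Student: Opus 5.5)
We argue by counting tips. The union $a_{half}(BTP_{ab})\cup c_{half}(BTP_{cd})\cup e_{half}(BTP_{ef})$ contains the three tip sets $Tips_{ab}$, $Tips_{cd}$, $Tips_{ef}$, each of size $2^h$, so it contains $3\cdot 2^h$ tips. We will show that, when $m=h+1$, the sources of $B$ can burn strictly fewer than $3\cdot 2^h$ of these tips. We split $B$ into \textit{StartBlock}, \textit{MiddleBlock} and \textit{EndBlock} and bound the contribution of each block separately.

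\textbf{StartBlock.} By hypothesis, no source of \textit{StartBlock} lies in the three gadgets. Moreover, its fire has not reached any root $r_{ab},r_{ba},\dots$ by the end of step $s+1$. Fire entering $a_{half}(BTP_{ab})$ from outside must pass through $r_{ab}$ or $r_{ba}$; the route through $r_{ba}$ also crosses $BT_{ba}$ and a $T$-gadget. So the earliest such fire reaches a root at step $s+2$ or later. From $r_{ab}$, Observation~\ref{obs:btp-distances} shows that a further $h+l_1+l_2$ steps are needed to reach $Tips_{ab}$. The fire therefore arrives no earlier than step $s+2+h+l_1+l_2$. The process ends at step $|B|=s+(h+1)+l_1+l_2$, which is one step earlier. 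Hence \textit{StartBlock} burns no tip in the three halves. The main obstacle is making the ``entry only through roots'' and timing claims precise, for example by ruling out fire that enters via $r_{ba}$ and reaches $Tips_{ab}$ through the floating arm. We would treat this by noting that every path from outside to a tip passes through one of the two roots, and that the root-to-tip distance is at least $h+l_1+l_2$ (via Observation~\ref{obs:btp-distances} and inequality~\eqref{eq:param-2}).

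\textbf{MiddleBlock.} This block now has $m=h+1$ sources, so Lemma~\ref{lem:max-tips-from-source-in-middleblock} (stated for $m\le h$) does not apply directly. We rerun its argument. Source \textit{MiddleBlock}$[j]$ has $(h+1-j)+l_1+l_2$ steps. For $j\ge 2$ this budget is at most $h-1+l_1+l_2$, and the argument of Lemma~\ref{lem:max-tips-from-source-in-middleblock} gives at most $2^{h+1-j}$ tips. For $j=1$ the budget is $h+l_1+l_2$: a source at the root of $BT_{ab}$ burns all $2^h$ tips of $Tips_{ab}$, and a source at $a$ would need $h+1+l_1+l_2$ steps. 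We must also check that one source cannot reach tips of two different gadgets. The halves are disjoint, and any path between them leaves through roots, so the distance between tips of different halves exceeds $2(l_1+l_2)$, as in Lemma~\ref{lem:generalized-max-leaves-in-endblock}. Consequently each middle source serves a single tip set, and
\[
\sum_{j=1}^{h+1}2^{h+1-j}=2^{h+1}-1
\]
bounds the total contribution of \textit{MiddleBlock} to the three tip sets combined.

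\textbf{EndBlock and conclusion.} By Lemma~\ref{lem:generalized-max-leaves-in-endblock}, extended verbatim to three gadgets, \textit{EndBlock} burns at most $l_1+l_2$ tips in total. By inequality~\eqref{eq:param-1}, $l_1+l_2<2^{h-2}$. The total number of tips burned is therefore at most
\[
2^{h+1}-1+2^{h-2}<3\cdot 2^h .
\]
Some tip remains unburned, so the union of the three halves is not completely burned. This proves the lemma.
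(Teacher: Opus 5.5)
Your proof is correct, but it counts the \textit{MiddleBlock} differently from the paper.

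\textbf{How the two routes differ.} The paper lets $t_1,t_2,t_3$ be the numbers of middle sources in the three gadgets and asserts $t_1+t_2+t_3=h+1$. It bounds each gadget's share by $2^{t_i}-1$ via Lemma~\ref{lem:max-leaves-in-middleblock}, then maximizes $2^{t_1}+2^{t_2}+2^{t_3}$ by convexity to reach $2^{h+1}-1$. You instead charge each source by its absolute position, $2^{h+1-j}$ for \textit{MiddleBlock}$[j]$, and sum.

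\textbf{Why your accounting is sounder.} The bound in Lemma~\ref{lem:max-tips-from-source-in-middleblock} depends on the position $j$ within the whole \textit{MiddleBlock}. It does not depend on a source's rank among the $t_i$ sources sharing its gadget, so the per-gadget bound $2^{t_i}-1$ does not actually follow. For example, a lone source at $r_{ab}$ in position $1$ has $t_1=1$, yet it burns all $2^h$ tips of $Tips_{ab}$. Your summation reaches the same total $2^{h+1}-1$ without that step, and without needing every middle source to lie in the three gadgets.

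\textbf{StartBlock timing.} You make explicit what the paper leaves as ``by assumption'': a root is reached no earlier than step $s+2$, so a tip is reached no earlier than step $s+2+h+l_1+l_2>|B|$. The obstacle you flag about fire entering through $r_{ba}$ is harmless. The last floating rung is joined to both tips, so each tip of a $T$-gadget is at distance $l_1+l_2$ from both hook vertices. Hence $d(r_{ba},Tips_{ab})=h+l_1+l_2$, exactly as for $r_{ab}$, which is consistent with Observation~\ref{obs:btp-distances}.

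\textbf{One small repair.} To conclude that each middle source serves a single tip set, you cite a separation exceeding $2(l_1+l_2)$ between tips of different gadgets. That suffices for \textit{EndBlock} sources, whose budget is below $l_1+l_2$. A middle source, however, has budget up to $h+l_1+l_2$, so you need a separation greater than $2(h+l_1+l_2)$. This does hold. Any path between tips of two different gadgets goes from a tip to a root of its own gadget (at least $h+l_1+l_2$ steps), leaves through an endpoint vertex, and enters the other gadget through one of its roots. It therefore has length at least $2(h+l_1+l_2)+2$.
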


\begin{proof}
Each gadget $BTP_{xy}$ contains exactly $2^h$ tip vertices in $x_{\mathrm{half}}(BTP_{xy})$. Since the three gadgets are pairwise disjoint, these tip sets are disjoint. Hence the total number of tips in
$a_{\mathrm{half}}(BTP_{ab})
\cup
c_{\mathrm{half}}(BTP_{cd})
\cup
e_{\mathrm{half}}(BTP_{ef})$
is $3\cdot 2^h$.
By assumption, no fire from $StartBlock$ reaches any of the six root vertices by time $s+1$. Therefore all tips in $a_{\mathrm{half}}(BTP_{ab})
\cup
c_{\mathrm{half}}(BTP_{cd})
\cup
e_{\mathrm{half}}(BTP_{ef})$ must be burned by sources in $MiddleBlock \cup EndBlock$.

Let $t_1,t_2,t_3$ denote the number of sources from $MiddleBlock$ placed in $BTP_{ab}$, $BTP_{cd}$, and $BTP_{ef}$, respectively. Since $m=h+1$, we have $t_1+t_2+t_3=h+1$. By Lemma~\ref{lem:max-leaves-in-middleblock}, the number of tips that can be burned in $BTP_{ab}$ by $t_1$ sources is at most $2^{t_1}-1$, and similarly for the other two gadgets. Hence the total number of tips burned by $MiddleBlock$ is at most
$(2^{t_1}-1)+(2^{t_2}-1)+(2^{t_3}-1) = 2^{t_1}+2^{t_2}+2^{t_3}-3$. Subject to $t_1+t_2+t_3=h+1$ and $t_i\ge 0$, the expression $2^{t_1}+2^{t_2}+2^{t_3}$ is maximized when one variable equals $h+1$ and the others equal $0$. Thus, $2^{t_1}+2^{t_2}+2^{t_3} \le 2^{h+1}+2$, and therefore $2^{t_1}+2^{t_2}+2^{t_3}-3 \le
2^{h+1}-1$.

One source in $MiddleBlock \cup EndBlock$ cannot burn $tips$ in two disjoint $BTP$-gadgets simultaneously as $d(tip_{ab}, tip_{cd}) > 2(h+1+l_1+l_2)$. By Lemma~\ref{lem:generalized-max-leaves-in-endblock}, the sources in $EndBlock$ can burn at most $l_1+l_2$ additional tips across the three gadgets.

By Inequality~\ref{eq:param-1}, $l_1+l_2<2^{h-2}.$ Consequently, the total number of tips that can be burned by $MiddleBlock \cup EndBlock$ is strictly less than $(2^{h+1}-1)+2^{h-2} = 2^{h+1}+2^{h-2}-1.$
Since $2^{h+1}+2^{h-2} = \frac{9}{4}\,2^h < 3\cdot 2^h$, it follows that $2^{h+1}+2^{h-2}-1 < 3\cdot 2^h$.
Hence, strictly fewer than $3\cdot 2^h$ tips can be burned, whereas $a_{\mathrm{half}}(BTP_{ab}) \cup c_{\mathrm{half}}(BTP_{cd}) \cup e_{\mathrm{half}}(BTP_{ef})$  together contain exactly $3\cdot 2^h$ tips. This is a contradiction.
\end{proof}

\textbf{4. \textit{P}-gadget}
\label{sec:p-gadget}

A \emph{$P$-gadget} consists of two parallel branches, each branch being a path: The \emph{major branch} has $d$ vertices $a_1, \dots, a_d$, where $a_1$ and $a_d$ are the \emph{end vertices}. The \emph{minor branch} has $d-2$ vertices denoted by $b_2, \dots, b_{d-1}$, with each $b_i$ adjacent to the corresponding vertex $a_i$ of the major branch ($2 \le i \le d-1$). We denote a $P$-gadget with $d$ vertices on its major branch as $P(d)$. Let ${Major}(P) = \{a_1, \dots, a_d\}, \quad \text{Minor}(P) = \{b_2, \dots, b_{d-1}\}$. Let $d = 2l-1$. Then $a_l$ is the \emph{middle vertex} of the major branch. 
\begin{figure}[h]
\centering
\input{figs/path-gadget}
\caption{$P$-gadget with major and minor branches. The filled vertices form major branch and hollow vertices form the minor branch.}
\label{fig:p-gadget}
\end{figure}

The following observations are immediate from the structure of a $P$-gadget.

\begin{observation}
\label{obs:pgadget-burntime}
If fire is initiated at $a_l$ or $b_l$, then all vertices of $P(d)$ are burned in exactly $l$ steps.
\end{observation}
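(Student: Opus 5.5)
Since $d=2l-1$, the middle vertex $a_l$ sits at distance $l-1$ from both ends $a_1$ and $a_d$ along the major branch. The plan is to compute the eccentricity of $a_l$ and of $b_l$ inside $P(d)$ and show that each equals $l-1$. A fire started at step $1$ at a vertex of eccentricity $l-1$ burns every vertex by the end of step $(l-1)+1=l$. It does not finish earlier, because some vertex lies at distance exactly $l-1$.

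For the source $a_l$, I use the major branch. Vertex $a_i$ is at distance $|i-l|$, which is at most $l-1$, with equality at $a_1$ and $a_d$. For $2\le i\le d-1$, the minor-branch vertex $b_i$ is adjacent to $a_i$, so it is at distance at most $|i-l|+1$. Since $|i-l|\le l-2$ for such $i$, this is at most $l-1$. The minor branch is itself a path $b_2\cdots b_{d-1}$, but that only shortens distances, so these upper bounds remain valid. For the matching lower bound, $a_1$ has degree one in $P(d)$ (adjacent only to $a_2$), so every path from $a_l$ to $a_1$ must reach $a_2$ first. The distance from $a_l$ to $a_2$ is at least $l-2$, since each edge changes the index by at most one. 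Hence $d(a_l,a_1)=l-1$.

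For the source $b_l$, the argument is symmetric. The minor-branch vertex $b_i$ is at distance at most $|i-l|\le l-2$ along the minor path. The major-branch vertex $a_i$ with $2\le i\le d-1$ is reached via $b_i$ in at most $|i-l|+1\le l-1$ steps. The endpoint $a_1$ is reached through $b_2$ and then $a_2$, giving at most $(l-2)+2=l$ steps, or along $b_l,a_l,\dots,a_1$, giving $1+(l-1)=l$ steps. So the eccentricity of $b_l$ may in fact be $l$, not $l-1$, and this endpoint computation is the main obstacle.

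To resolve it, I will first check the paper's figure and intended convention: whether $a_1,a_d$ are also adjacent to minor vertices, or whether "$l$ steps" counts from the step after ignition, i.e.\ uses $Steps(u,v]$. If neither convention fixes it, I will state the corrected claim: burning takes exactly $l$ steps from $a_l$, and exactly $l+1$ steps from $b_l$, because $d(b_l,a_1)=l$. The lower bound $d(b_l,a_1)\ge l$ follows since every path must leave the minor branch through some rung $b_i a_i$ and then travel to $a_1$, giving length at least $|l-i|+1+(i-1)\ge l$. The $a_l$ case stands as argued.
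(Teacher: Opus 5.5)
Your approach, computing the eccentricities of $a_l$ and $b_l$ and adding one for the ignition step as in the paper's $Steps[\cdot,\cdot]$ convention, is what the paper intends; it gives no proof beyond calling the observation ``immediate from the structure''. However, your proposal leaves the $b_l$ case unresolved and even suggests the statement may be false. That obstacle comes from reading only the prose definition. Figure~\ref{fig:p-gadget}, and likewise the $P$-gadgets drawn in Figure~\ref{fig:Y-gadget}, also contains the edges $a_1b_2$ and $a_db_{d-1}$. These edges are forced by the convention in Section~\ref{subsec:gadgets} that every non-end vertex of a gadget has degree exactly three; without them $b_2$ and $b_{d-1}$ would have degree two. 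With these edges, $d(b_l,a_1)\le d(b_l,b_2)+1=(l-2)+1=l-1$, and symmetrically $d(b_l,a_d)\le l-1$. So $b_l$ also has eccentricity $l-1$, and the observation holds exactly as stated. Your fallback ``corrected claim'' of $l+1$ steps from $b_l$ is correct only for the literal prose graph, which is not the gadget used in the construction. Your upper bounds for the remaining vertices are fine in both cases.

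Once those edges are present, your lower-bound argument for $a_l$ fails as written, because $a_1$ is no longer a degree-one vertex: its neighbours are $a_2$ and $b_2$. Replace it with the index argument you already use, applied directly. Set $\iota(a_i)=\iota(b_i)=i$. Every edge of $P(d)$, including $a_1b_2$ and $a_db_{d-1}$, changes $\iota$ by at most one. Hence $d(a_l,a_1)\ge l-1$ and $d(b_l,a_1)\ge l-1$. Combined with the upper bounds, both sources have eccentricity exactly $l-1$, so $P(d)$ is burned in exactly $l$ steps from either one.
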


\begin{observation}
\label{obs:pgadget-distances}
For any $2 \le i \le d-1$ and $1 \le j \le d$, let $a_i, a_j \in \text{Major}(P)$ and let $b_i \in \text{Minor}(P)$ be the vertex corresponding to $a_i$. Then
$ d(a_i, a_j) \le d(b_i, a_j)$.
\end{observation}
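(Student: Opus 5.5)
The plan is to compare the two candidate paths from $b_i$ to $a_j$ directly. In $P(d)$, the minor branch is the path $b_2,\dots,b_{d-1}$, and each $b_k$ has exactly one neighbor on the major branch, namely $a_k$.

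First consider a shortest $b_i$--$a_j$ path $R$. Since $a_j$ lies on the major branch, $R$ must leave the minor branch. Let the first major-branch vertex on $R$ be $a_k$. Then $R$ enters the major branch through the edge $(b_k,a_k)$, and the prefix of $R$ from $b_i$ to $b_k$ lies inside the minor path. That prefix therefore has length at least $|i-k|$. Hence
\[
d(b_i,a_j) \;\ge\; |i-k| + 1 + d(a_k,a_j).
\]

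Next bound $d(a_i,a_j)$. By the triangle inequality, and since $a_i$ and $a_k$ are joined along the major path by a path of length $|i-k|$,
\[
d(a_i,a_j) \le |i-k| + d(a_k,a_j).
\]
Combining the two displays gives $d(a_i,a_j) \le d(b_i,a_j) - 1 < d(b_i,a_j)$, which proves the observation (in strict form).

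As a sanity check on the degenerate case $k=i$: the bound reads $d(b_i,a_j) \ge 1 + d(a_i,a_j)$, which agrees with the above.

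The only step needing care is that the initial segment of $R$ stays on the minor path until it first hits the major branch. This holds because the only edges leaving a minor vertex $b_k$ go to $b_{k\pm1}$ or to $a_k$. In $P(d)$ itself, I expect no real obstacle.

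If the observation is meant for $P(d)$ embedded in the larger graph $H$, I would use the same argument. There the relevant paths still cannot reach the major branch from $b_i$ except through some rung $(b_k,a_k)$. This is because the minor vertices have no neighbors outside the gadget: the end vertices are only $a_1$ and $a_d$.
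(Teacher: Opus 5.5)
Your proof is correct for the graph described literally in the prose, but that graph is not the gadget the paper uses. The step that fails is the structural claim that the only major-branch neighbor of $b_k$ is $a_k$, i.e.\ that the only edges leaving $b_k$ go to $b_{k\pm1}$ and $a_k$.

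The prose definition of $P(d)$ lists only the rungs $a_kb_k$. The actual gadget also has the edges $a_1b_2$ and $a_db_{d-1}$, for three reasons:
\begin{enumerate}
\item They are drawn in Figure~\ref{fig:p-gadget}.
\item They are forced by the convention that every non-end vertex of a gadget has degree exactly three; without them $b_2$ and $b_{d-1}$ would have degree two.
\item Observation~\ref{obs:pgadget-burntime} for a fire started at $b_l$ needs them; otherwise $d(b_l,a_1)=l$.
\end{enumerate}
With these edges your strict conclusion is false. Already $d(b_2,a_1)=1=d(a_2,a_1)$. More generally, the path $b_i,b_{i-1},\dots,b_2,a_1$ gives $d(b_i,a_1)=i-1=d(a_i,a_1)$, and symmetrically $d(b_i,a_d)=d(a_i,a_d)$, for every $i$. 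So the observation is tight for $j\in\{1,d\}$. Your bound $d(b_i,a_j)\ge |i-k|+1+d(a_k,a_j)$ breaks exactly when $R$ leaves the minor branch through $b_2a_1$ or $b_{d-1}a_d$, which is also where your ``sanity check'' paragraph goes wrong.

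The repair is short and gives precisely the non-strict statement. Let $a_k$ be the first major vertex on a shortest $b_i$--$a_j$ path $R$, entered through the edge $b_{k'}a_k$. Now $k'=k$ or $(k',k)\in\{(2,1),(d-1,d)\}$, so $|k-k'|\le 1$. Since the prefix of $R$ up to $b_{k'}$ stays on the minor path,
\[
d(b_i,a_j)\;\ge\;|i-k'|+1+d(a_k,a_j)\;\ge\;|i-k|+d(a_k,a_j)\;\ge\;d(a_i,a_j).
\]
The middle step uses $|i-k|\le|i-k'|+1$, and the last uses $d(a_i,a_k)\le|i-k|$ along the major path. The equality cases above show that no strict version can be recovered. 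Your remark about $P(d)$ sitting inside $H$ then goes through unchanged, since minor vertices have no neighbors outside the gadget.
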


\begin{observation}
\label{obs:pgadget-num-vertices}
The total number of vertices in $P(d)$ is $2d-2$.
\end{observation}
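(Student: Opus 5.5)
Since $P(d)$ consists of the path $a_1,\dots,a_d$ (major branch) and the path $b_2,\dots,b_{d-1}$ (minor branch), with $b_i$ joined to $a_i$ for $2\le i\le d-1$, I will simply count the vertices of each branch and add. No edges are shared in a way that affects the vertex count, so the whole argument is a direct tally.

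\textbf{Step 1: the major branch.} By definition, $\text{Major}(P)=\{a_1,\dots,a_d\}$, which gives $d$ vertices.

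\textbf{Step 2: the minor branch.} By definition, $\text{Minor}(P)=\{b_2,\dots,b_{d-1}\}$. The indices run from $2$ to $d-1$, so this branch has $d-2$ vertices.

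\textbf{Step 3: disjointness and the total.} The two branches are disjoint by construction: each $b_i$ is a new vertex attached to $a_i$, not identified with any $a_j$. Also, $V(P(d))$ is exactly $\text{Major}(P)\cup\text{Minor}(P)$. Therefore
\[
|V(P(d))| = d + (d-2) = 2d-2 .
\]

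A quick sanity check against the degree conventions of Subsection~\ref{subsec:gadgets} confirms that no vertices are missing or extra. The internal vertices $a_i$ and $b_i$ with $2<i<d-1$ all have degree three. The vertex $a_i$ has neighbours $a_{i-1}$, $a_{i+1}$ and $b_i$, and $b_i$ has neighbours $b_{i-1}$, $b_{i+1}$ and $a_i$. The end vertices $a_1$ and $a_d$ have degree one. The vertices $b_2$ and $b_{d-1}$ have degree two, a point I return to below.

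There is no real obstacle here. The only point needing care is reading the minor branch's index range correctly as $2,\dots,d-1$, which gives $d-2$ vertices rather than $d$. Note that $b_2$ and $b_{d-1}$ have degree two even though they are not designated end vertices. This bears on the gadget's later use in a cubic construction, but it does not affect the count.
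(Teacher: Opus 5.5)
Your count $d+(d-2)=2d-2$ is correct, and it is the immediate tally the paper has in mind: the paper states this observation without proof.

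One correction to your degree ``sanity check''. Figure~\ref{fig:p-gadget} also contains the edges $a_1b_2$ and $a_db_{d-1}$, so $a_1$ and $a_d$ have degree two and $b_2$ and $b_{d-1}$ have degree three. This matches the gadget convention exactly, so the degree anomaly you flag does not exist, and it would not have affected the vertex count anyway.
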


\textbf{5. \textit{Y}-gadget}
\label{sec:y-gadget}

A \emph{$Y$-gadget} consists of three $P$-gadgets: $P_x(d_1), P_y(d_1)$, and $P_z(d_2)$, connected to a central vertex $z$, forming a \textbf{Y}-shape as shown in Figure~\ref{fig:Y-gadget}. The end vertices of a $Y$-gadget are $x_a, y_b$ and $z_b$. The gadget is inserted between the subdivision vertices $x$ and $y$ in $G'$ as follows:  
\begin{itemize}
    \item $x$ connects to $x_a$, one end of $P_x$; $y$ connects to $y_a$, one end of $P_y$.  
    \item The other ends, $x_b$ and $y_b$ of $P_x$ and $P_y$ respectively, are connected to $z$.  
    \item $z$ is connected to $z_a$, one end of $P_z$, whose other end $z_b$ serves as the end vertex of the $Y$-gadget.  
\end{itemize}

Here, $d_1 = | \text{Major}(P_x) | = | \text{Major}(P_y) |$ and $d_2 = | \text{Major}(P_z) |$. We denote this gadget as $Y(d_1, d_2)$.  
\begin{figure}[h]
\centering
\input{figs/Y-gadget}
\caption{$Y$-gadget composed of three $P$-gadgets connected to a central vertex $z$.}
\label{fig:Y-gadget}
\end{figure}
\begin{observation}
\label{obs:y-gadget-num-vertices} 
The total number of vertices in $Y(d_1, d_2)$ is 
\[
2(2d_1-2) + (2d_2-2) + 1 = 4d_1 + 2d_2 - 5.
\]
\end{observation}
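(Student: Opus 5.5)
The count will be obtained by adding the vertex counts of the pieces of $Y(d_1,d_2)$, which by construction are exactly the three $P$-gadgets $P_x(d_1)$, $P_y(d_1)$, $P_z(d_2)$ and the central vertex $z$. Since the three $P$-gadgets are attached to each other only through edges incident to $z$, they are pairwise vertex-disjoint and none of them contains $z$. Therefore $|V(Y(d_1,d_2))| = |V(P_x)| + |V(P_y)| + |V(P_z)| + 1$. The hook vertices $x$ and $y$ of $G'$ are not counted, because they belong to $G'$ and not to the gadget.

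For each term I will use Observation~\ref{obs:pgadget-num-vertices}. It gives $|V(P(d))| = 2d-2$, since the major branch has $d$ vertices and the minor branch has $d-2$ vertices. Substituting gives $|V(P_x)| = |V(P_y)| = 2d_1-2$ and $|V(P_z)| = 2d_2-2$. The total is then
\[
2(2d_1-2) + (2d_2-2) + 1 = 4d_1 + 2d_2 - 5 .
\]

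The only point that needs care is confirming that nothing is double-counted. I will check that $z$ is a separate vertex, as Figure~\ref{fig:Y-gadget} shows, and not identified with $x_b$, $y_b$ or $z_a$. I will also check that no vertices other than those of the three $P$-gadgets and $z$ are introduced. Both facts follow directly from the definition of the $Y$-gadget, so I expect no real obstacle.
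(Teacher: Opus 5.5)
Your proof is correct and follows the paper's own reasoning. The displayed formula in the statement is exactly the sum of the three $P$-gadget counts from Observation~\ref{obs:pgadget-num-vertices}, namely $2(2d_1-2)+(2d_2-2)$, plus one for the separate central vertex $z$, with $x$ and $y$ excluded since they belong to $G'$.
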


\begin{observation}
\label{obs:y-gadget-burning}
If the fire starts at $s \in \{x_a, y_a\}$, the $Y(d_1, d_2)$-gadget is completely burned in 
$\max\{ 2d_1 + 1, \; d_1 + d_2 + 1 \} \text{ steps.}$
\end{observation}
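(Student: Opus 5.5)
The plan is to argue by symmetry and then compute the eccentricity of the source inside the gadget. The $Y(d_1,d_2)$-gadget is symmetric under swapping $P_x$ with $P_y$ (and $x_a$ with $y_a$), so I assume the fire starts at $x_a$ at step $1$. With no other sources, a vertex $v$ of the gadget burns at step $d(x_a,v)+1$. Hence the number of steps to burn the whole gadget equals $1+\max_{v} d(x_a,v)$, the maximum taken over the vertices of $Y(d_1,d_2)$. I will show this maximum is $\max\{2d_1,\,d_1+d_2\}$.

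First I fix the geometry of a single $P$-gadget $P(d)$ with major branch $a_1,\dots,a_d$. Every path between two major vertices either stays on the major branch or detours through the minor branch via rungs $a_i b_i$. The minor branch $b_2\cdots b_{d-1}$ runs parallel and is no shorter, and each detour costs two extra rung edges. So $d(a_i,a_j)=|i-j|$ inside the gadget, and in particular the two end vertices are at distance $d-1$. Since the $P$-gadgets in the $Y$-gadget meet only through end vertices and the central vertex $z$, these distances are also the distances in $Y(d_1,d_2)$. Concatenating along the $Y$-shape then gives
\[
d(x_a,x_b)=d_1-1,\quad d(x_a,z)=d_1,\quad d(x_a,y_b)=d_1+1,\quad d(x_a,y_a)=2d_1,
\]
\[
d(x_a,z_a)=d_1+1,\quad d(x_a,z_b)=d_1+d_2.
\]
These are exact, because every $x_a$--$v$ path for $v$ outside $P_x$ must pass through $x_b$ and then $z$, and then run along the relevant $P$-gadget.

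Next I check that no minor-branch vertex is farther from $x_a$ than the far end vertex of its $P$-gadget. A minor vertex $b_i$ with $2\le i\le d-1$ is adjacent to the interior major vertex $a_i$. Any shortest path to $b_i$ can be rerouted through $a_i$, so $d(x_a,b_i)\le d(x_a,a_i)+1$. Here $a_i$ is strictly closer to $x_a$ than the far end of its gadget, so $d(x_a,a_i)+1$ is at most the distance to that far end, which is one of $x_b$, $y_a$ or $z_b$. This is the content of Observation~\ref{obs:pgadget-distances}, applied once in each of $P_x$, $P_y$ and $P_z$.

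Combining the two steps, the farthest vertices from $x_a$ are $y_a$ and $z_b$, at distances $2d_1$ and $d_1+d_2$. This gives exactly $\max\{2d_1+1,\,d_1+d_2+1\}$ steps, and the case $s=y_a$ follows by symmetry. I expect the main obstacle to be the rigorous justification that the parallel minor branches create no shortcut, so that the major-branch distances really are geodesic. I would handle it with a short parity and rung-counting argument: any path that leaves the major branch must use at least two rung edges while advancing no faster along the index.
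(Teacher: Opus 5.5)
Your proof is correct; the paper states this observation without proof, and your eccentricity computation ($d(x_a,y_a)=2d_1$ and $d(x_a,z_b)=d_1+d_2$ are the largest distances from $x_a$, giving $\max\{2d_1+1,\,d_1+d_2+1\}$ steps) is exactly the justification it leaves implicit. Two small repairs: first, the drawn $P$-gadget also has the edges $a_1b_2$ and $a_db_{d-1}$, so the end-to-end detour through the minor branch is equally long rather than longer, and the clean way to get $d(a_i,a_j)=|i-j|$ is to give $a_i$ and $b_i$ the common index $i$ and note that every edge changes the index by at most $1$; second, Observation~\ref{obs:pgadget-distances} bounds distances to $b_i$ from below, not above, so rely only on your own triangle inequality $d(x_a,b_i)\le d(x_a,a_i)+1$ there.
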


\textbf{6. \textit{Tail}-gadget}
\label{sec:tail-gadget}

The \emph{$Tail$-gadget} has a \emph{spine}, which is a path on 9 vertices $(v_1, \dots, v_9)$, partitioned as:  
\begin{itemize*}
    \item $PT_1$: single vertex $v_1$  
    \item $PT_2$: path $v_2, v_3, v_4$  
    \item $PT_3$: path $v_5, \dots, v_9$.   
\end{itemize*}  Additional vertices $p,q,r$
are attached to selected spine vertices as shown in Figure~\ref{fig:tail-gadget}, without modifying the spine structure. The vertices $v_1$ and $v_9$ are the \emph{end vertices} of the gadget.  
\begin{figure}[h]
\centering
\input{figs/tail-gadget}
\caption{$Tail$-gadget consisting of a spine and three additional vertices $p,q,$ and $r$.}
\label{fig:tail-gadget}
\end{figure}

\begin{observation}
A $Tail$-gadget contains exactly 12 vertices.
\end{observation}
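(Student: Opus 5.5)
The statement to prove is that a $Tail$-gadget has exactly 12 vertices. I will count vertices directly from the definition of the gadget, summing over its two disjoint constituent vertex sets.

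\textbf{The spine.} The spine is a path $(v_1,\dots,v_9)$ on $9$ vertices. The partition into $PT_1=\{v_1\}$, $PT_2=\{v_2,v_3,v_4\}$ and $PT_3=\{v_5,\dots,v_9\}$ accounts for all nine spine vertices exactly once, since $1+3+5=9$.

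\textbf{The attached vertices.} The construction attaches the three additional vertices $p$, $q$, $r$ to selected spine vertices without modifying the spine. These are new vertices, so they are disjoint from $\{v_1,\dots,v_9\}$. They must also be pairwise distinct, as they are named separately in Figure~\ref{fig:tail-gadget}.

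\textbf{The count.} Since the gadget's vertex set is exactly the union of the spine and $\{p,q,r\}$, and these two sets are disjoint, the total is $9+3=12$ vertices.

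The only step needing any care is confirming that no other vertices appear in the gadget. One might worry that attaching $p,q,r$ needs extra connector vertices to keep degrees at most three. By the description, however, the gadget consists solely of the spine and these three vertices, with the ends $v_1,v_9$ reserved for external connections. So the count is complete.
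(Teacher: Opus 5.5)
Your proof is correct and takes the same approach as the paper. The paper states this observation without proof because it follows directly from the definition: the nine spine vertices $v_1,\dots,v_9$ plus the three attached vertices $p,q,r$ give $9+3=12$.
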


\textbf{7. \textit{C}-gadget}
\label{sec:c-gadget}

A \emph{$C$-gadget}, denoted by $C(m)$, is constructed by concatenating $(m-3)$ $P$-gadgets $P_m, P_{m-1}, \dots, P_4$ followed by a $Tail$-gadget, as shown in Figure~\ref{fig:C-gadget-2}. For $4 \le i \le m$, let $p_{i_1}$ and $p_{i_2}$ denote the two end vertices of $P_i$. For $4 \le i < m$, the vertex $p_{i_1}$ is adjacent to $p_{(i+1)_2}$. The vertex $p_{4_2}$ is adjacent to the vertex $v_9$ of the $Tail$-gadget.

For $4 \le i < m$, the parameter of $P_i$ is $d_i = 2i-1$, while $P_m$ has $d_m = 2m-2$. A new vertex $v_{m^2}$ is added and made adjacent to both $p_{m_1}$ and $v_1$ (the right end of the $Tail$-gadget). The vertex $v_{m^2}$ serves as the unique end vertex of $C(m)$ and is used to attach external structures.

Consider the major branches of all $P$-gadgets, the three paths $PT_1, PT_2, PT_3$ of the $Tail$-gadget, and the vertex $v_{m^2}$. The total number of vertices in these subgraphs is
\[
1 + (2m-2) + (2m-3) + (2m-5) + \cdots + 3 + 1 = m^2.
\]
These vertices induce a cycle of length $m^2$,  and is denoted by ${trunk}(C)$. All vertices on ${trunk}(C)$, except $v_{m^2}$, have degree $3$.
\begin{figure}[h]
\centering
\input{figs/cycle-gadget-new-2}
\caption{Structure of the $C$-gadget on $m^2$ vertices. }
\label{fig:C-gadget-2}
\end{figure}
\begin{observation}
\label{obs:C-size}
A $C(m)$-gadget has $2m^2 - 2m - 1$ vertices.
\end{observation}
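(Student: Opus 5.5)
The count splits into the vertices on ${trunk}(C)$ and those off it. By the paragraph preceding Observation~\ref{obs:C-size}, the trunk has exactly $m^2$ vertices: $v_{m^2}$, the major branches of $P_m,\dots,P_4$, and the spine $PT_1\cup PT_2\cup PT_3$ of the $Tail$-gadget. So it only remains to count the off-trunk vertices, which are the minor branches of the $P$-gadgets together with the three extra vertices $p,q,r$ of the $Tail$-gadget.

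Each off-trunk vertex belongs to exactly one gadget. By the definition of a $P$-gadget, and consistently with Observation~\ref{obs:pgadget-num-vertices}, $P(d)$ has $d$ major vertices and $d-2$ minor ones. The minor branches therefore contribute
\[
\sum_{i=4}^{m}(d_i-2),
\]
where $d_i=2i-1$ for $4\le i<m$ and $d_m=2m-2$. This gives $(2m-4)+\sum_{i=4}^{m-1}(2i-3)$. The inner sum is the sum of the odd numbers $5,7,\dots,2m-5$, which equals $(m-2)^2-4=m^2-4m$. Hence the minor branches contribute $m^2-4m+2m-4=m^2-2m-4$ vertices. Adding the $3$ vertices $p,q,r$ gives $m^2-2m-1$ off-trunk vertices.

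The total is then $m^2+(m^2-2m-1)=2m^2-2m-1$, as claimed.

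I expect no real obstacle beyond getting the bookkeeping right. One point to handle is the special parameter $d_m=2m-2$ of $P_m$, which is not of the form $2i-1$. The other is making sure $p,q,r$ are counted once and are not already among the $9$ spine vertices. The latter is true because the $Tail$-gadget has $12=9+3$ vertices.

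As a sanity check, I would redo the trunk sum $1+(2m-2)+(2m-3)+\dots+3+1=m^2$ for $m=4$: the terms are $1+6+(1+3+5)$, which is $16$. For $m=4$ the off-trunk count is then minor$(P_4)=4$ plus $3$, giving $7=16-8-1$. The total is $23=2\cdot16-8-1$, as expected.
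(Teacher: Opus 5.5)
Your count is correct. It reaches the total by a different decomposition of $V(C(m))$ than the paper uses.

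The paper sums whole gadgets. By Observation~\ref{obs:pgadget-num-vertices}, each $P_i$ with $4\le i<m$ has $4i-4$ vertices and $P_m$ has $4m-6$. Together these give $2m^2-2m-14$, and the paper then adds the $12$ vertices of the $Tail$-gadget and the vertex $v_{m^2}$.

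You instead split the vertex set into ${trunk}(C)$ and its complement. You reuse the already-stated trunk size $m^2$ and count only the off-trunk vertices (the minor branches plus $p,q,r$), obtaining $m^2-2m-1$.

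What each route buys:
\begin{itemize}
\item The paper's route is self-contained and needs one arithmetic series. It does not depend on the trunk sum $1+(2m-2)+(2m-3)+\cdots+3+1=m^2$, which the paper states tersely, and so it can serve as a cross-check on that sum.
\item Your route inherits its correctness from that trunk sum. In exchange, it shows directly that $C(m)$ is an $m^2$-cycle plus exactly $m^2-2m-1$ extra vertices.
\end{itemize}

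Your bookkeeping details are all sound. The special parameter $d_m=2m-2$ is handled correctly. Your $p,q,r$ are disjoint from the spine because $12=9+3$. Your formula $(m-2)^2-4$ correctly returns $0$ in the boundary case $m=4$, where the range $4\le i\le m-1$ is empty.
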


\begin{proof}
For $4 \le i < m$, each $P_i$ has $4i-4$ vertices, while $P_m$ has $4m-6$ vertices. Thus, all $P$-gadgets contribute $2m^2 - 2m - 14$ vertices. Adding $12$ vertices from the $Tail$-gadget and the vertex $v_{m^2}$ yields $2m^2 - 2m - 1$ vertices.
\end{proof}

Let ${trunk}'(C)$ denote the shortest cycle in $C(m)$ containing both $v_1$ and $v_{m^2}$.

\begin{observation}
\label{obs:C-short-cycle}
${trunk}'(C)$ is the shortest induced cycle in $C(m)$ containing $v_1$ and $v_{m^2}$ and has length $m^2 - 5$.
\end{observation}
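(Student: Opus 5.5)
The plan is to reduce the claim to a shortest-path computation, exploiting the chain-like structure of $C(m)$.

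\textbf{Step 1: reduce to a shortest path.} The vertex $v_{m^2}$ has exactly two neighbours in $C(m)$, namely $p_{m_1}$ and $v_1$. Hence every cycle through $v_{m^2}$ and $v_1$ consists of the path $p_{m_1}-v_{m^2}-v_1$ together with a $v_1$--$p_{m_1}$ path in $C(m)-v_{m^2}$. So a shortest such cycle has length $2+d'$, where $d'$ is the distance from $v_1$ to $p_{m_1}$ in $C(m)-v_{m^2}$.

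Such a shortest cycle is automatically induced. Indeed, a chord would split it into two shorter cycles, and one of them would still contain both $v_{m^2}$ and $v_1$, because the edge $v_{m^2}v_1$ lies on the cycle and cannot itself be a chord.

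\textbf{Step 2: decompose along the chain.} The graph $C(m)-v_{m^2}$ is a chain:
\[
Tail,\; P_4,\; P_5,\; \dots,\; P_m,
\]
where consecutive pieces are joined by a single edge ($v_9p_{4_2}$ and $p_{i_1}p_{(i+1)_2}$). These connecting edges are bridges of $C(m)-v_{m^2}$, so every $v_1$--$p_{m_1}$ path crosses each of them. Consequently $d'$ is the sum of
\begin{itemize*}
\item the internal distance from $v_1$ to $v_9$ inside the $Tail$-gadget,
\item the internal distances between the two end vertices of each $P_i$,
\item the $m-3$ connecting edges.
\end{itemize*}

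\textbf{Step 3: the $P$-gadgets give no shortcut.} Inside $P(d)$, any path from $a_1$ to $a_d$ that leaves the major branch must enter the minor branch through some rung $a_ib_i$ and later return through another rung. By Observation~\ref{obs:pgadget-distances} (or directly, since the minor branch runs parallel to the major branch), such a detour is never shorter. Thus the distance between the ends of $P(d)$ is $d-1$, realised by the major branch, and each $P_i$ contributes exactly its $d_i$ trunk vertices.

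\textbf{Step 4: the $Tail$-gadget gives the saving.} This is the only place where the shortest cycle departs from ${trunk}(C)$. Reading off Figure~\ref{fig:tail-gadget}, the extra vertices $p,q,r$ (attached to spine vertices in $PT_2$ and $PT_3$) create a $v_1$--$v_9$ path of length $3$, i.e.\ on $4$ vertices, instead of the $9$-vertex spine. I must verify by a short case check over the few possible routes in this 12-vertex gadget that no route is shorter.

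Granting this, the shortest cycle uses every trunk vertex except $9-4=5$ spine vertices, which gives length $m^2-5$.

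\textbf{Main obstacle.} I expect Step 4 to be the main difficulty. It depends on the exact attachments of $p,q,r$ in the figure, so the proof will enumerate the $v_1$--$v_9$ paths in the $Tail$-gadget explicitly, confirm the minimum is exactly $3$ edges, and confirm that this shortest path is unique up to the listed vertices, so that ${trunk}'(C)$ is well defined.
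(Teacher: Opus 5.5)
Your proposal is correct and rests on the same idea as the paper's proof. Both replace the $9$-vertex spine segment $v_9,\dots,v_1$ by the unique shortest $v_9$--$v_1$ path $v_9$-$v_8$-$r$-$v_1$ in the $Tail$-gadget, which gives $m^2-9+4=m^2-5$. Note that this shortcut uses $r$'s attachment to $v_1\in PT_1$, not only to vertices of $PT_2$ and $PT_3$ as your parenthetical says. You also supply what the paper only asserts: a minimality argument (the degree-$2$ vertex $v_{m^2}$, the bridges of the chain, and crossing distance $d-1$ in each $P(d)$) and the chord argument for inducedness.

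One caution concerns your plan to show that $trunk'(C)$ is well defined. The edges $a_1b_2$ and $b_{d-1}a_d$ drawn in Figure~\ref{fig:p-gadget} let each $P$-gadget also be crossed along its minor branch in exactly $d-1$ steps. So the shortest cycle is not unique, and uniqueness of the $Tail$ shortcut does not make ${trunk}'(C)$ a single well-defined cycle. This affects neither the length $m^2-5$ nor inducedness, so you should simply drop the uniqueness conclusion.
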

Proof: Let ${trunk}'(C)$ denote the shortest cycle in $C(m)$ containing both $v_1$ and $v_{m^2}$. It is obtained by taking the path $v_9-v_8-r-v_1$ instead of $v_9-v_8-v_7\ldots$  to $v_1$. Thus, the path $v_9-v_8 \ldots v_1$ of length 9 is shortened by the path $v_9-v_8-r-v_1$ of length 4. Therefore, the length of $trunk'(C)$ is $m^2-5$.

\begin{theorem}[\cite{Bonato2016Burning}]
\label{thm:path-cycle-burning}
If $G = P_n$ or $G = C_n$, then $b(G) = \lceil \sqrt{n}~ \rceil$.
\end{theorem}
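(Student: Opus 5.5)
We need to prove $b(P_n)=b(C_n)=\lceil\sqrt n\rceil$, the classical result cited from Bonato et al. I will prove matching lower and upper bounds, using only the definition of a burning sequence from the preliminaries: every vertex $v$ has some $i$ with $d(v,b_i)\le k-i$.

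\textbf{Lower bound.} Let $B=(b_1,\dots,b_k)$ be a burning sequence of $G\in\{P_n,C_n\}$.

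\begin{itemize}
\item The source $b_i$ burns only vertices within distance $k-i$ of it.
\item In a path or cycle, the ball of radius $r$ around a vertex has at most $2r+1$ vertices.
\item Hence the total number of burned vertices is at most $\sum_{i=1}^{k}\bigl(2(k-i)+1\bigr)=k^2$.
\end{itemize}

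Since every vertex must be burned, $n\le k^2$, so $k\ge\lceil\sqrt n\rceil$.

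\textbf{Upper bound for paths.} Set $k=\lceil\sqrt n\rceil$. Since $P_n$ is a subgraph of $P_{k^2}$, I first handle $n=k^2$. Partition the path $v_1,\dots,v_{k^2}$ from left to right into consecutive segments of lengths $2k-1,2k-3,\dots,3,1$; these lengths sum to $k^2$. Take $b_i$ to be the centre of the segment of length $2(k-i)+1$. Then $b_i$ reaches every vertex of its segment within $k-i$ steps.

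The sources are pairwise distinct, so each $b_i$ is unburned when chosen whenever it was not already burned by an earlier source. If some $b_i$ was already burned, pick any unburned vertex instead, or an arbitrary vertex if none remains. This replacement only burns more, so the sequence still works.

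For $n<k^2$, I truncate the construction. The segments still cover $P_n$ once they are clipped at $v_n$. A source whose segment lies beyond $v_n$ is moved to $v_n$, or replaced by an arbitrary remaining vertex; extra sources never hurt. This step is the only bookkeeping subtlety: the formal definition asks for $k$ distinct, unburned-at-time-of-choice sources. A standard remark handles it: if fewer than $k$ steps are needed, choose arbitrary unburned vertices, or observe that $b(G)\le k$ is monotone in this sense.

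\textbf{Upper bound for cycles.} Let $v_1$ and $v_n$ be the two path endpoints adjacent on the cycle. For any vertices $u,w$ of $P_n$ we have $d_{C_n}(u,w)\le d_{P_n}(u,w)$. Therefore any burning sequence of $P_n$ is also a burning sequence of $C_n$, which gives $b(C_n)\le b(P_n)=\lceil\sqrt n\rceil$. Combined with the lower bound, this yields equality for both $P_n$ and $C_n$.

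The main (mild) obstacle is the distinctness/unburned requirement in the upper bound construction when $n$ is not a perfect square; I would resolve it by the clipping-and-replacement argument above.
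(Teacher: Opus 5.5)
The paper gives no proof of this statement; it quotes it from Bonato et al.\ and uses it as a black box. Your argument is the standard one and is correct in substance: ball counting $\sum_{i=1}^{k}(2(k-i)+1)=k^2$ for the lower bound, consecutive segments of lengths $2k-1,2k-3,\dots,1$ for the upper bound, and spanning-subgraph monotonicity for the cycle. Two steps need one more line of justification, because the paper's definition requires $b_i$ to be unburned at the end of step $i-1$.

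\textbf{The replacement step.} You assert that ``this replacement only burns more'' without giving a reason. The reason is a triangle-inequality argument:
\begin{itemize}
\item If $b_i$ is already burned at the end of step $i-1$ by fire from some $b_j$ with $j<i$, then $d(b_j,b_i)\le i-1-j$.
\item Hence every $v$ with $d(v,b_i)\le k-i$ satisfies $d(v,b_j)\le k-1-j$, so the ball you discard is already covered by $b_j$.
\item If no unburned vertex is left, the prefix $(b_1,\dots,b_{i-1})$ is itself a shorter burning sequence.
\end{itemize}
In the perfect-square case you do not need this at all. By the end of step $i-1$ the fire from $b_j$ has travelled only $i-1-j<k-j$, so it is still inside $b_j$'s own segment and $b_i$ is unburned.

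In the clipping step, the source that must be moved to $v_n$ is the one whose \emph{centre} lies beyond $v_n$ while its segment still meets $\{v_1,\dots,v_n\}$. This works because that segment's leftmost vertex $v_a$ satisfies $d(v_n,v_a)<k-i$. Also, $v_n$ is still unburned, since all earlier fires remain inside their own segments to the left of $v_a$.

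\textbf{The cycle step.} The sentence ``any burning sequence of $P_n$ is also a burning sequence of $C_n$'' is false as stated. The extra edge $v_1v_n$ can burn a source before it is chosen. For example, $(v_1,v_4,v_5)$ is a burning sequence of $P_5$, but in $C_5$ the vertex $v_5$ is already burned by $v_1$ at the end of step $2$. The distance condition does transfer, since $d_{C_n}\le d_{P_n}$. The same replacement argument then repairs the unburned condition, so $b(C_n)\le b(P_n)$ still follows. With these two remarks your proof is complete.
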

Burning a cycle of size $m^2$ using $m$ sources burns each vertex uniquely due to Theorem~\ref{thm:path-cycle-burning}. Since ${trunk}'(C)$ has $m^2 - 5$ vertices, at most $5$ vertices can be burned by more than one source.

\begin{definition}
\label{def:fixed-overlap}
The maximum number of vertices in ${trunk}'(C)$ burned by more than one source in a burning sequence of length $m$ is called {FixedOverlap}. In particular, $\textit{FixedOverlap} = 5$.
\end{definition}

\begin{remark}
\label{rem:C-trunk-suffices}
Since $b(C(m)) = m$ and ${trunk}'(C)$ is a cycle of size $m^2 - \textit{FixedOverlap}$, it suffices to analyze ${trunk}'(C)$ when ruling out candidate burning sequences for graphs containing $C(m)$.
\end{remark}

We now show that the burning number of $C(m)$ is exactly $m$.

\begin{lemma}
\label{lem:C-gadget-burning-upperbound}
$b(C(m)) \le m$.
\end{lemma}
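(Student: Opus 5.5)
We give an explicit burning sequence of length $m$ whose sources are the middle vertices of the $P$-gadgets along ${trunk}(C)$, with the $Tail$-gadget and $v_{m^2}$ handled by the last few sources. The design mirrors the standard proof that $b(C_{m^2})\le m$ (Theorem~\ref{thm:path-cycle-burning}): the trunk is cut into consecutive arcs of lengths $2m-1,2m-3,\dots,3,1$, and the source placed at step $i$ sits at the middle of the arc of length $2(m-i)+1$. The gadget sizes were chosen to realise exactly this partition. $P_m$ has $2m-2$ major vertices, which together with the end vertex $v_{m^2}$ gives an arc of $2m-1$ trunk vertices. For $4\le i<m$, $P_i$ has $2i-1$ major vertices. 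The $Tail$-gadget supplies the arcs $PT_3$, $PT_2$, $PT_1$ of sizes $5,3,1$.

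The sources are chosen as follows.
\begin{itemize}
\item At step $1$, place the source at the middle vertex of the arc formed by $v_{m^2}$ and ${Major}(P_m)$. This is the vertex of ${Major}(P_m)$ at distance $m-1$ from both ends of the arc.
\item For $4\le i<m$, at step $m-i+1$, place the source at the middle vertex $a_i$ of ${Major}(P_i)$. It then has exactly $i-1$ further steps to spread.
\item At steps $m-2$, $m-1$ and $m$, place the sources at the middle vertices of $PT_3$, $PT_2$ and $PT_1$, namely $v_7$, $v_3$ and $v_1$.
\end{itemize}

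First we check coverage of the trunk and the $P$-gadgets. For $4\le i<m$, Observation~\ref{obs:pgadget-burntime} shows that the fire from the middle of ${Major}(P_i)$ burns all of $P_i(2i-1)$, including its minor branch, within $i$ steps counting the source step, which is exactly the time available. The only non-standard piece is $P_m$, whose major branch has even length $2m-2$. The source at step $1$ has $m-1$ further steps. Its distance to the far end of the arc (either $v_{m^2}$ or the end $a_{2m-2}$) is $m-1$. Minor vertices $b_j$ are at distance one more than $a_j$, but they exist only for $2\le j\le 2m-3$, so each one is still within distance $m-1$. This short distance count is the one place where Observation~\ref{obs:pgadget-burntime} does not apply verbatim.

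The main obstacle is the $Tail$-gadget. Its extra vertices $p,q,r$ hang off spine vertices in a way fixed only by Figure~\ref{fig:tail-gadget}, so we must check by hand that each of them is within the remaining radius of some source. The radii are $2$ for $v_7$ (covering $v_5,\dots,v_9$), $1$ for $v_3$ (covering $v_2,v_3,v_4$), and $0$ for $v_1$. A pendant-type vertex attached to $v_6$, $v_7$, $v_8$ or $v_3$ is within reach. In particular $r$, which is adjacent to $v_8$ and $v_1$ by Observation~\ref{obs:C-short-cycle}, is at distance $2$ from $v_7$. If some attachment falls outside these radii, we would fall back on overlap from neighbouring sources. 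For example, the source at step $m-3$ (the middle of $P_4$) has radius $3$ and reaches into $PT_3$ through the edge $p_{4_2}v_9$. Alternatively we would shift the last three sources slightly, using the slack of FixedOverlap$=5$ that the shortcut through $r$ creates. Finally, each source must be unburned when it is chosen. This holds because the arcs are disjoint and each fire stays inside its own arc until the final step.
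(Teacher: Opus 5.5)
Your proposal is correct and uses the paper's own construction, the sequence $(a_m,a_{m-1},\dots,a_4,v_7,v_3,v_1)$ of middle vertices. Your handling of $P_m$ is in fact more careful than the paper's: the paper speaks of the middle vertex of the even-length branch ${Major}(P_m)$ and claims the $P$-gadgets and $Tail$-gadget cover all of $C(m)$ while overlooking $v_{m^2}$, whereas taking the middle of the $(2m-1)$-vertex arc $\{v_{m^2}\}\cup{Major}(P_m)$ settles both points. The $Tail$-gadget check closes directly from Figure~\ref{fig:tail-gadget}: $q$ is adjacent to $v_5,v_7,v_9$, $r$ to $v_1,v_6,v_8$, and $p$ to $v_2,v_3,v_4$, so $q$ and $r$ lie within distance $1$ and $2$ of $v_7$, and $p$ within distance $1$ of $v_3$. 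You should therefore drop the fallback paragraph, whose claim that the $P_4$ source reaches into $PT_3$ is false in any case, since that fire reaches $p_{4_2}$ only at step $m$.
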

\begin{proof}
Let
$B = (a_m, a_{m-1}, \dots, a_4, v_7, v_3, v_1)$, where $a_i$ is the middle vertex of $Major(P_i)$ for $4 \le i \le m$, and $v_7, v_3, v_1$ are the middle vertices of the paths $PT_3, PT_2$, and $PT_1$ of the $Tail$-gadget, respectively. Clearly, $|B| = m$.

Each source $a_i$ is placed at time step $(m-i+1)$ and hence has $(i-1)$ additional steps to propagate fire, which is sufficient to completely burn the corresponding $P_i$-gadget (Observation~\ref{obs:pgadget-burntime}) by the end of the $m^{\text{th}}$ step. Similarly, the sources $v_7, v_3$, and $v_1$ are placed so that the entire $Tail$-gadget is burned by the end of the $m^{\text{th}}$ step.

Since the $P$-gadgets and the $Tail$-gadget together cover all vertices of $C(m)$, the sequence $B$ burns $C(m)$ completely in $m$ steps. Therefore, $b(C(m)) \le m$.
\end{proof}

Now we show that $b(C(m)) \ge m$.

\begin{lemma}
\label{lem:C-gadget-burning-lowerbound}
$b(C(m)) \ge m$.
\end{lemma}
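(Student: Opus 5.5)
We argue by a counting argument on a long cycle inside $C(m)$, combined with the fact that distances in $C(m)$ between trunk vertices cannot be shorter than distances measured along $trunk'(C)$ by enough to matter. Suppose, for contradiction, that $C(m)$ admits a burning sequence $B=(b_1,\dots,b_k)$ with $k\le m-1$. For each source $b_i$, let $N_i$ denote the set of vertices of $trunk'(C)$ at distance at most $k-i$ from $b_i$ in $C(m)$. Every vertex of $trunk'(C)$ must lie in some $N_i$, so
\[
m^2-5=|V(trunk'(C))| \le \sum_{i=1}^{k} |N_i| .
\]

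The key step is to bound $|N_i|\le 2(k-i)+1$. There are two ingredients.

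First, $trunk'(C)$ should be an isometric cycle of $C(m)$, or at least distance-nonincreasing in the right direction. Shortcuts in $C(m)$ go through minor branches of $P$-gadgets, or through $p,q$ in the Tail. By Observation~\ref{obs:pgadget-distances}, any path that uses a minor branch $b_i\dots b_j$ can be replaced by the parallel major segment $a_i\dots a_j$ without increasing length. For the Tail-gadget, the shortcut through $r$ is already used in $trunk'(C)$, and I would verify from Figure~\ref{fig:tail-gadget} that $p$ and $q$ give no further shortcuts. Consequently, for trunk vertices $u,v$, $d_{C(m)}(u,v)$ equals their distance along $trunk'(C)$.

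Second, a source off the trunk is no better than the trunk vertex it projects to. A minor-branch vertex $b_j$ is dominated by $a_j$, again by Observation~\ref{obs:pgadget-distances}. Tail vertices such as $p,q,v_2,\dots$ are within distance $1$ of the cycle, so they are dominated similarly, and I would check this case by case. Hence each $N_i$ is contained in a ball of radius $k-i$ around a vertex of a cycle in which distances are cycle distances, giving $|N_i|\le 2(k-i)+1$.

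Summing,
\[
\sum_{i=1}^{k}\bigl(2(k-i)+1\bigr)=k^2\le (m-1)^2=m^2-2m+1 .
\]
This is less than $m^2-5$ whenever $m\ge 4$, which always holds since $C(m)$ requires $m\ge4$. This contradiction shows that $b(C(m))\ge m$.

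The main obstacle is the isometry and domination check around the Tail-gadget and the attachment vertex $v_{m^2}$, where the shortcut vertices $p,q,r$ could in principle let a ball on the cycle be wider than $2r+1$. If the exact isometry fails by a bounded amount, the plan still works: a slack of $2m-6$ in the counting absorbs any constant overcount, for instance a few extra vertices per source near the Tail. This is essentially what Definition~\ref{def:fixed-overlap} and Remark~\ref{rem:C-trunk-suffices} anticipate.
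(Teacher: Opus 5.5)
Your proof is correct and uses the same reduction as the paper. You restrict to the cycle $trunk'(C)$ of length $m^2-5$ and note that a source chosen at step $i$ covers at most $2(k-i)+1$ of its vertices, so $k^2\ge m^2-5>(m-1)^2$ for $m\ge 4$.

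The difference is in how carefully the step "burning $C(m)$ is no easier than burning the cycle" is justified. The paper moves minor-branch sources onto major branches and then cites Theorem~\ref{thm:path-cycle-burning} for the cycle. That tacitly assumes two things: that $trunk'(C)$ is isometric in $C(m)$, and that every source acts like a source on the cycle. Its normalization leaves sources at $p$, $q$, $v_2,\dots,v_7$ untouched, although none of these lie on $trunk'(C)$. Your direct ball-counting makes exactly these two facts explicit. It also avoids having to turn $B$ into a genuine burning sequence of the cycle with distinct sources.

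The Tail check you defer does go through, but not for the reason you give. The vertices $v_3$, $v_5$ and $p$ are at distance $2$ from $trunk'(C)$, and $v_4$ is at distance $3$. In any case, proximity to the cycle is not the relevant property; a vertex adjacent to two far-apart cycle vertices would widen balls. What you need is domination. Every path leaving the Tail passes through $v_9$ or $v_1$, so it suffices to compare distances to $v_9,v_8,r,v_1$. One then checks that $d(w,x)\ge d(t,x)+1$ for all $x\in V(trunk'(C))$, with the following witnesses:
\begin{itemize}
\item $t=v_1$ for $w\in\{v_2,v_3,p\}$;
\item $t=r$ for $w\in\{v_4,v_6\}$;
\item $t=v_8$ for $w\in\{v_5,v_7\}$;
\item $t=v_9$ for $w=q$.
\end{itemize}
For example, $v_4$ has distances $(3,4,3,3)$ to $(v_9,v_8,r,v_1)$, while $r$ has $(2,1,0,1)$.

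So every ball meets $trunk'(C)$ in at most $2(k-i)+1$ vertices with no overcount, and your fallback paragraph is unnecessary. As phrased, that fallback would not work in general: a constant overcount per source would total order $m$ and need not fit into the $2m-6$ slack.
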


\begin{proof}
Suppose, for the sake of contradiction, that $C(m)$ admits a burning sequence $B = (u_1, u_2, \dots, u_k)$ with $k < m$. Each source in $B$ may lie either on a major or minor branch of some $P$-gadget,
or in the $Tail$-gadget. We transform $B$ into a sequence $B'$ which contains vertices from major branches of $P$-gadgets and the spine of $Tail$-gadget with length at most $k$ as follows: whenever a source lies on a vertex $b_i$ on the minor branch of a $P$-gadget, we replace it by its corresponding vertex $a_i$ on the major branch, unless that vertex already appears earlier in the sequence. All sources originally on major branches are kept unchanged.

By Observation~\ref{obs:pgadget-distances}, replacing a source on a minor branch by its corresponding major-branch vertex cannot delay the burning of any vertex on the major branch. Hence, under $B'$, all vertices on the major branches of the $P$-gadgets and the spine of the $Tail$-gadget are burned no later than under $B$-- note that the vertices other than spine vertices in $Tail$-gadgets will increase the induced cycle length.
If $\lvert B' \rvert < k$, we may add one arbitrary vertex on a major branch to obtain a sequence of length exactly $k$, without affecting the argument below.

Consider the cycle formed by the shortest path $v_9 - v_8 - r - v_1$ in the $Tail$-gadget together with the path from $v_{m^2}$ to $v_9$ along the major branches and the edge $v_1v_{m^2}$. This cycle has length $m^2 - 5$. Any burning sequence that burns $C(m)$ must also burn this cycle.

By Theorem~\ref{thm:path-cycle-burning}, burning a cycle of length $m^2 - 5$ requires at least
$\left\lceil \sqrt{m^2 - 5} \right\rceil = m$ steps for all $m \ge 4$. Since $k < m$, we obtain a contradiction. Therefore, $b(C(m)) \ge m$.
\end{proof}

Combining Lemmas~\ref{lem:C-gadget-burning-upperbound} and ~\ref{lem:C-gadget-burning-lowerbound}, we obtain the following result.

\begin{lemma}
\label{lem:C-gadget-burning}
$b(C(m)) = m$.
\end{lemma}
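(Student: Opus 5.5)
The statement $b(C(m)) = m$ splits into the two inequalities already proved. I would derive it by simply combining Lemma~\ref{lem:C-gadget-burning-upperbound} and Lemma~\ref{lem:C-gadget-burning-lowerbound}.

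For the upper bound, I would recall the explicit sequence $B=(a_m,\dots,a_4,v_7,v_3,v_1)$. Source $a_i$ is placed at step $m-i+1$, so it has $i-1$ further rounds. By Observation~\ref{obs:pgadget-burntime}, this is enough to burn $P_i$, which has radius $i-1$ from its middle vertex; for $P_m$ with $d_m=2m-2$ the radius is still at most $m-1$. The last three sources have $2,1,0$ remaining rounds. These suffice for $PT_3$ (five vertices), $PT_2$ (three vertices) and $PT_1$ (one vertex), together with the pendant vertices $p,q,r$, which are attached adjacent to these centres. The vertex $v_{m^2}$ is adjacent to $v_1$ and to $p_{m_1}$, so it is reached in time from $a_m$. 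The one point to double-check is that the attachment positions of $p,q,r$ in Figure~\ref{fig:tail-gadget} lie within the remaining radius.

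For the lower bound, I would argue that any burning sequence of $C(m)$ of length $k$ burns the cycle ${trunk}'(C)$ within $k$ steps. The subtle step is that distances in $C(m)$ between vertices of ${trunk}'(C)$ may be shorter than along the cycle, and sources may sit off the cycle. I would handle this as in the earlier proof. First, project minor-branch sources to the major branch using Observation~\ref{obs:pgadget-distances}. Second, observe that ${trunk}'(C)$ is an isometric cycle of $C(m)$, since the minor branches never offer shortcuts between major-branch vertices. Third, project any off-cycle source to a nearest cycle vertex, which does not delay burning of cycle vertices. This yields a burning of $C_{m^2-5}$ in $k$ steps. Theorem~\ref{thm:path-cycle-burning} then gives $k\ge\lceil\sqrt{m^2-5}\,\rceil=m$, since $(m-1)^2<m^2-5$ for $m\ge4$.

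The main obstacle is justifying that the projection of off-cycle sources onto ${trunk}'(C)$ does not increase their distances to cycle vertices. The isometry claim needs care around the $Tail$-gadget, where the vertex $r$ creates the shortcut defining ${trunk}'(C)$. Once that is settled, the equality follows immediately from the two lemmas.
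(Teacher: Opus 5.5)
Your proposal is correct and matches the paper, which proves this lemma simply by combining Lemma~\ref{lem:C-gadget-burning-upperbound} and Lemma~\ref{lem:C-gadget-burning-lowerbound}. Your re-derivation of the lower bound, via isometry of ${trunk}'(C)$ and domination of off-cycle sources, fills in a step the paper's proof leaves implicit, and the obstacle you flag does go through here. The one requirement is that each $b_i$ be sent to its own $a_i$, not to an arbitrary nearest trunk vertex: $b_2\mapsto a_1$ fails, because $d(b_2,a_d)=d-2<d-1=d(a_1,a_d)$. The remaining tail vertices are each dominated by one of $v_8$, $r$, $v_9$, $v_1$.
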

For convenience, we summarize the notations of the gadgets and associated graphs in the following Table~\ref{tab:lists}.
\input{summary-table}

\subsection{Construction of $H$ from $G'$}
\label{cons:H}

We first show that an appropriate constant factor we plan to use in the construction always exists.
\begin{lemma}
\label{lem:choose-c}
For every positive integer $n'$, there exists a constant $c$ with $4 \le c < 8$ such that $cn'$ is a power of $2$. Consequently, $\log(cn')$ is an integer.
\end{lemma}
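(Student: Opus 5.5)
Let $k$ be the unique integer with $2^{k-1} \le n' < 2^k$; it exists because $n' \ge 1$ and the powers of $2$ partition the positive reals into half-open intervals $[2^{k-1},2^k)$. We then take the target power of two to be $2^{k+2}$ and set
\[
c := \frac{2^{k+2}}{n'} .
\]
By construction $cn' = 2^{k+2}$, which is a power of $2$, so $\log(cn') = k+2$ is an integer. Note that $c$ need not be an integer; the statement only requires $c$ to be a real constant in the stated range, so this is allowed.

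It remains to check that $4 \le c < 8$. Dividing $2^{k+2}$ by the bounds on $n'$ gives
\[
\frac{2^{k+2}}{2^k} < c \le \frac{2^{k+2}}{2^{k-1}},
\]
that is, $4 < c \le 8$. This interval is half-open on the wrong side: equality $c = 8$ occurs exactly when $n' = 2^{k-1}$, i.e. when $n'$ is itself a power of two.

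To fix this, we choose $k$ with the other convention, $2^{k-1} < n' \le 2^k$ (equivalently $k = \lceil \log n' \rceil$, so $n' = 1$ gives $k = 0$). Keeping $c = 2^{k+2}/n'$, the same division now yields
\[
4 = \frac{2^{k+2}}{2^k} \le c < \frac{2^{k+2}}{2^{k-1}} = 8,
\]
which is exactly the required range. The only subtle point in the whole argument is this choice of inequality convention, which must exclude $c = 8$ while still permitting $c = 4$. The fact that $c$ depends on $n'$ is harmless, since all that is needed later is that $c$ is bounded between two absolute constants.
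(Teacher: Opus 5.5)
Your final argument is correct and is essentially the paper's proof. Both take $cn'$ to be the smallest power of two that is at least $4n'$; you get it as $2^{\lceil \log n' \rceil + 2}$, while the paper takes the largest $2^k \le 4n'$ and treats the case where $4n'$ is itself a power of two (giving $c=4$) separately. In a write-up you can drop the discarded first convention and go straight to $k=\lceil \log n'\rceil$.
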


\begin{proof}
Let $k$ be the largest integer such that $2^k \le 4n'$. If $4n' = 2^k$, then choosing $c = 4$ satisfies the claim. Otherwise, we have $2^k < 4n' < 2^{k+1}$. Dividing throughout by $n'$ gives $4 < \frac{2^{k+1}}{n'} < 8$. Let $c = \frac{2^{k+1}}{n'}$. Then $4 < c < 8$ and $cn' = 2^{k+1}$, which is a power of $2$. Hence, $\log(cn')$ is an integer.
\end{proof}

Let $G'$ be a graph obtained from a connected cubic graph $G$ by subdividing an edge $(p,q)\in E(G)$ twice,
replacing it with the path $p-x-y-q$ using Construction~\ref{cons:G'}. Let $n$ and $n'$ denote the numbers of vertices of $G$ and $G'$, and let $k$ and $k'$ denote the sizes of minimum vertex covers of $G$ and $G'$,
respectively. By Construction~\ref{cons:G'}, $n'=n+2$, and by Lemma~\ref{lem:G'vertexcover}, $k'=k+1$.
Fix a constant $c$ with $4\le c<8$ such that  {$\log(cn')$} is an integer. By lemma~\ref{lem:choose-c}, the existence of such a $c$ is guaranteed. Set $h=\log(cn')+2 $. Except for the subdivided edge $(p,q)$, there is a one-to-one correspondence between the edges of $G$ and those of $G'$.

The graph $H$ is built from $G'$ by replacing the edges with $BTP$-gadgets and attaching one additional structure. Initially, for each $v\in V(G')$ we introduce a vertex (also named $v$) in $H$. The graph $H$ consists of two parts:  $H_{{core}}$ and  $H_{{annexe}}$.

%------------------------------------------------------------
\begin{construction}[$H_{{core}}$]
\label{cons:H-core}
For every edge $(u,v)\in E(G')$, introduce a $BTP$-gadget between $u$ and $v$ with
parameters
\[
h=\log(cn')+2,\qquad
l_1=\frac{cn'-2h}{2},\qquad
l_2=\frac{cn'}{2}+2 .
\]
Since $cn'$ is a power of $2$, $l_1$ and $l_2$ are integers. We denote this gadget by
$BTP_{uv}\bigl(h,(cn'-2h)/2,\, cn'/2+2\bigr)$. Each of the two binary trees in the gadget has height $h$, hence $2^h=4cn'$ leaves, and each root--leaf path has $h+1=\log(cn')+3$ vertices. Let $H_{uv}$ denote the subgraph induced by $\{u,v\}\cup V(BTP_{uv})$.
\end{construction}

With these parameters, $l_2>l_1+h+1
\quad\text{and}\quad
l_1+l_2=cn'-h+2 < 2^{h-2}.$
Hence, all assumptions required for the properties of $BTP$-gadgets are satisfied.
%------------------------------------------------------------
\begin{figure}[h]
\input{figs/uv-subgraph-new}
\caption{The subgraph $BTP_{uv}$ in $H_{core}$ corresponding to an edge $(u,v)\in E(G')$. }
\label{fig:uv-subgraph}
\end {figure}

\begin{construction}[$H_{{annexe}}$]
\label{cons:H-annexe}
The $H_{{annexe}}$ consists of a $Y$-gadget and a $C$-gadget. The $Y$-gadget has parameters
\[
d_1=\left\lfloor\frac{n'}{4}\right\rfloor+\frac{cn'}{2},
\qquad
d_2=\left\lceil\frac{n'}{4}\right\rceil+\frac{cn'}{2}+1 .
\]
A $C$-gadget with parameter $m=cn'+3$ is attached to the vertex $z_b$ of the
$Y$-gadget.
\end{construction}

%There is exactly one $Y$-gadget, one $C$-gadget, and one Tail-gadget (as part of the $C$-gadget) in $H_{{annexe}}$.

\begin{figure}[h]
\input{figs/xy-subgraph}
\caption{$H_{annexe}$}
\label{fig:xy-subgraph}
\end {figure}
%------------------------------------------------------------
\begin{construction}[$H$]
\label{cons:H-combined}
In $G'$, the vertices $x$ and $y$ have degree~$2$. Hence, in $H_{{core}}$, exactly two $BTP$-gadgets are incident with each of $x$ and $y$. Attach $H_{{annexe}}$ to $H_{{core}}$ by adding the edges $(x,x_a)$ and $(y,y_a)$. After this attachment, both $x$ and $y$ have degree~$3$, and the resulting graph $H$ is a connected cubic graph. This completes the construction of $H$ from $G'$.
\end{construction}
From the construction, the distance from either $x$ or $y$ to the vertex $z_b$ is at most $n'/2+cn'+3$. Since $G$ is cubic, $n$ is even (and hence $n'=n+2$ is also even), so $n'/2$ is an integer.

\begin{remark}
Each vertex of $G'$ has a unique corresponding vertex in $H$, denoted by the same label. Whether a vertex or edge is considered in $G'$ or in $H$ is always clear from the context.
\end{remark}

\begin{lemma}
\label{lem:num-vertices-in-H}
The graph $H$ has $O(n^3)$ vertices.
\end{lemma}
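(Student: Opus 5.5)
We count the vertices of $H$ part by part, as $|V(H)| = |V(G')| + |V(H_{core})\setminus V(G')| + |V(H_{annexe})|$, and bound each term by $O(n^3)$ using the vertex-count observations for the gadgets together with $n'=n+2$ and $4\le c<8$.

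\textbf{The core.} $G'$ has $n'$ vertices and $|E(G')| = \frac{3n}{2}+2 = O(n)$ edges. Each edge contributes one $BTP_{uv}(h,l_1,l_2)$-gadget. By Observation~\ref{obs:btp-gadget-num-vertices}, such a gadget has $2(2^{h+1}-1)+2^h(4l_1+2l_2)$ vertices. With $h=\log(cn')+2$ we get $2^h = 4cn' = O(n)$. With $l_1=(cn'-2h)/2$ and $l_2=cn'/2+2$ we get $4l_1+2l_2 = 3cn' - 4h + 4 = O(n)$. So each $BTP$-gadget has $O(n)+O(n)\cdot O(n)=O(n^2)$ vertices, and summing over $O(n)$ edges gives $|V(H_{core})| = O(n^3)$. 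This is the dominant term, and the computation is a direct substitution.

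\textbf{The annex.} By Observation~\ref{obs:y-gadget-num-vertices}, the $Y$-gadget has $4d_1+2d_2-5$ vertices. Since $d_1,d_2 \le n'/4 + cn'/2 + 2 = O(n)$, this is $O(n)$. By Observation~\ref{obs:C-size}, the $C$-gadget has $2m^2-2m-1$ vertices with $m=cn'+3=O(n)$, so it contributes $O(n^2)$.

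\textbf{Total.} Adding the three parts gives $|V(H)| = O(n)+O(n^3)+O(n)+O(n^2)=O(n^3)$. One could also state the polynomial bound explicitly, e.g.\ $|V(H)|\le \left(\frac{3n'}{2}\right)\left(16cn' + 4cn'(3cn')\right) + 2(cn'+3)^2 + O(n') $, but the asymptotic statement is all that is claimed.

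There is no real obstacle. The only point needing care is that $2^h$ is linear in $n'$, not exponential, because $h$ is logarithmic. This follows directly from the choice $h=\log(cn')+2$ in Construction~\ref{cons:H-core}, and it makes the reduction polynomial.
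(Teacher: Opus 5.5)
Your proposal is correct and follows the paper's proof: both bound each $BTP$-gadget by $O(n^2)$ using Observation~\ref{obs:btp-gadget-num-vertices}, multiply by the $O(n)$ edges of $G'$, and add $O(n)$ for the $Y$-gadget and $O(n^2)$ for the $C$-gadget. The only difference is that you compute $2^h = 4cn'$ and $4l_1+2l_2 = 3cn'-4h+4$ explicitly, which makes concrete the paper's bare claim that each gadget has $O(n^2)$ vertices.
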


\begin{proof}
From Construction~\ref{cons:H-core}, we have $n' = n+2$, $h=\log(cn')+2$, $l_1=(cn'-2h)/2$, and $l_2=cn'/2+2$. By Observation~\ref{obs:btp-gadget-num-vertices}, a $BTP(h,l_1,l_2)$-gadget contains
$2(2^{h+1}-1)+2^h(4l_1+2l_2) = O(n^2)$ vertices. Since $G$ is cubic, $|E(G')|=3n/2+2$, and hence $H$ contains
$3n/2+2$ such $BTP$-gadgets. Therefore, the total number of vertices contributed by all $BTP$-gadgets
is $O(n^3)$. By Observation~\ref{obs:y-gadget-num-vertices}, the $Y$-gadget contains $O(n)$ vertices, and by Observation~\ref{obs:C-size}, the $C$-gadget contains $O(n^2)$ vertices.
Thus, $|V(H)| = O(n^3)+O(n^2)+O(n)=O(n^3)$.
\end{proof}

%------------------------------------------------------------

\begin{definition}
\label{def:domain}
\textbf{Domain of $u$.}
For a vertex $u\in V(G')$, the \emph{domain} of  $u \in V(H)$, denoted as $Dom_u$, is defined as follows.

\begin{itemize}
\item If $u\in V(G')\setminus\{x,y\}$ and $(u,a),(u,b),(u,c)\in E(G')$, then
$Dom_u$ is the subgraph of $H$ induced by
$u_{exthalf}(BTP_{ua}) \cup u_{exthalf}(BTP_{ub}) \cup u_{exthalf}(BTP_{uc}).$

\item If $u=x$, then
$Dom_x = x_{exthalf}(BTP_{xp}) \cup x_{exthalf}(BTP_{xy}) \cup P_x.$

\item If $u=y$, then
$Dom_y = y_{exthalf}(BTP_{yq}) \cup y_{exthalf}(BTP_{yx}) \cup P_y.$

\end{itemize}
\end{definition}

Note that, the domains of distinct vertices are disjoint. If a burning source $s$ lies in $Dom_u$, then $u$ is called the \emph{owner} of $s$.

Now we define
$InsideDomains = V(\bigcup_{u\in V(G')} Dom_u),
OutsideDomains = V(H)\setminus InsideDomains = \{z\}\cup V(P_z)\cup V(C).$

%------------------------------------------------------------

\begin{lemma}
\label{lem:burning_u_from_out_of_Dom_u}
Let $u\in V(H)$ correspond to a vertex $u\in V(G')$  and let $s\in V(H)\setminus Dom_u$. If the fire from $s$ burns $u\in V(H)$, then $Steps[s,u]\ge \frac{cn'}{2}+3.$
\end{lemma}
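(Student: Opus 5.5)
The plan is to prove the equivalent distance statement. Since $Steps[s,u]=d(s,u)+1$, it suffices to show $d(s,u)\ge \frac{cn'}{2}+2$ for every $s\notin Dom_u$. Every $s$–$u$ path starts outside $Dom_u$ and ends at $u$, so it must enter $Dom_u$ through an edge joining a vertex outside $Dom_u$ to a vertex inside it. I will therefore list the edges of $H$ leaving $Dom_u$ and show that each of their inside endpoints is at distance at least $\frac{cn'}{2}+1$ from $u$. The lemma then follows.

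\textbf{Step 1: the boundary of $Dom_u$.} By Definition~\ref{def:domain}, $Dom_u$ is made of the extended halves $u_{exthalf}(BTP_{ua})$ for the edges $(u,a)$ incident to $u$, plus $P_x$ (respectively $P_y$) when $u=x$ (respectively $u=y$). Consider a $BTP_{ua}$-gadget. Its vertices outside $u_{half}$ all lie in $a_{half}(BTP_{ua})$. Inside each $T^i_{ua}$, the floating-arm vertices of $T^i_{half\text{-}ua}$ up to $tip_{ua}$ are adjacent only to vertices of the same half, so the tips are dead ends. The only edges crossing between the two halves are therefore the middle edges of the fixed arms. The root $r_{ua}$ touches only $u$ and its own tree. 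Hence the only exits from $Dom_u$ through BTP-gadgets are these middle fixed-arm edges. For $u\in\{x,y\}$ there is one extra exit: the edge $x_b z$ (respectively $y_b z$). The edge $x x_a$ stays inside $Dom_x$.

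\textbf{Step 2: distances to the BTP exits.} From $u$, reaching a vertex of $a_{half}$ requires the edge $u r_{ua}$ (length $1$), then a root–leaf path in $BT_{ua}$ (length $h$), then a walk along the fixed arm. By Observation~\ref{obs:T-gadget-properties}(i), $Steps[p,q]=2l_1+2$ along the fixed arm between the two leaves. So the $2l_1$ internal fixed-arm vertices split into $l_1$ per half, and the first vertex of the other half is at distance $l_1+1$ from the leaf. Any detour onto the floating arm only lengthens the path, because the tip is a dead end. The first outside vertex reached this way is therefore at distance exactly
\[
1+h+l_1+1=h+\frac{cn'-2h}{2}+2=\frac{cn'}{2}+2
\]
from $u$, using $l_1=(cn'-2h)/2$ from Construction~\ref{cons:H-core}. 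Consequently every inside endpoint of such an exit edge is at distance at least $\frac{cn'}{2}+1$ from $u$.

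\textbf{Step 3: the exit through the $Y$-gadget.} For $u=x$, the path from $x$ to $z$ uses the edge $x x_a$, then the major branch of $P_x$ from $x_a$ to $x_b$ (length $d_1-1$), then the edge $x_b z$. By Observation~\ref{obs:pgadget-distances}, the minor branch gives no shortcut. So $d(x,z)=d_1+1=\lfloor n'/4\rfloor+\frac{cn'}{2}+1\ge \frac{cn'}{2}+2$, since $n'=n+2\ge 6$. The case $u=y$ is symmetric. Combining Steps 2 and 3, every $s\notin Dom_u$ satisfies $d(s,u)\ge \frac{cn'}{2}+2$, which gives $Steps[s,u]\ge\frac{cn'}{2}+3$.

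\textbf{Main obstacle.} The delicate point is Step 1 together with the exact half-split in Step 2. I need the precise structure of the $T$-gadget from Figure~\ref{fig:t-gadget} to confirm two things. First, $T_{half\text{-}pq}$ contains exactly $l_1$ internal fixed-arm vertices on each side. Second, the floating arm has no edge crossing into the other half other than through its junction $jn_{pq}$. If the junction split were off by one, the bound would shift by one. In that case I would recompute using $Steps[p,tip_{pq}]=l_1+l_2+1$ together with $l_2>l_1+h+1$ to pin down where the boundary lies.
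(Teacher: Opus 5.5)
Your argument is the same as the paper's, and your numbers are right. The paper also takes the junction vertex $j^i_{du}$ of the neighbouring half as the closest vertex outside $Dom_u$, giving $Steps[s,u]\ge 2+l_1+(h+1)=\frac{cn'}{2}+3$. It treats $u\in\{x,y\}$ through $z$, with $Steps[z,x]=\lfloor n'/4\rfloor+\frac{cn'}{2}+2$. Your first-exit-edge formulation is a cleaner justification of the paper's ``the fire reaches $u$ fastest when \dots''.

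However, Step~1 rests on a false description of the $T$-gadget, which is exactly the point you flagged. In Figure~\ref{fig:t-gadget} the floating arm is a ladder, not two disjoint paths:
\begin{itemize}
\item $w_{pq}$ is adjacent to $w_{qp}$, and corresponding floating-arm vertices of the two halves are joined by rungs;
\item $tip_{pq}$ is adjacent to $tip_{qp}$ and to the last non-tip floating vertex of \emph{both} halves.
\end{itemize}
It has to be this way, since otherwise the floating vertices would have degree $2$ in a gadget whose non-end vertices are required to have degree $3$. So the tips are not dead ends, and the edges between $T^i_{half\text{-}ua}$ and $T^i_{half\text{-}au}$ are not only the fixed-arm edge $j_{ua}j_{au}$. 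Your list of exit edges from $Dom_u$ is therefore incomplete. The sentence ``any detour onto the floating arm only lengthens the path, because the tip is a dead end'' is also unjustified as written.

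The repair is short. Every additional exit edge has its inside endpoint on the floating arm of some $T^i_{half\text{-}ua}$. Within $Dom_u$, that floating arm attaches to the rest of the domain only at $jn_{ua}$. The distance from $u$ to $jn_{ua}$ is $1+h+l_1$: the edge $ur_{ua}$, a root--leaf path of length $h$, then $l_1$ steps from the leaf to the junction row. Your worry about the fixed-arm split is unfounded: each half has exactly $l_1$ rows of two vertices.

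Now take a shortest $u$--$s$ path and its first edge leaving $Dom_u$; the prefix lies inside $Dom_u$. If that edge is a floating-arm exit, its inside endpoint is at distance at least $h+l_1+2=\frac{cn'}{2}+2$ from $u$, which is more than you need. So the nearest outside vertex is still $j^i_{au}$, at distance $h+l_1+2$. With this correction, and the unchanged $Y$-gadget case, your proof goes through.
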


\begin{proof}
First assume $u\in V(G')\setminus\{x,y\}$. Let $(u,d),(u,e),(u,f)\in E(G')$. The fire reaches $u$ fastest when $s\in Dom_d\cup Dom_e\cup Dom_f$; without loss of generality, let $s\in Dom_d$. The shortest path from $Dom_d$ to $u$ passes through the $T$-gadgets in $BTP_{ud}$. The minimum distance is achieved when
$s=j^i_{du}$ of the $T$-gadget, yielding 
$Steps[s,u] \geq 2 + l_1 + (h+1) = l_1+h+3 = \frac{cn'}{2}+3$.

Now suppose $u=x$ (the case $u=y$ is symmetric). If $s\in Dom_p\cup Dom_y$, the above argument applies.
Otherwise, $s$ lies in the $Y$-gadget. The closest such vertex to $x$ is $z$, and
\[
Steps[z,x] = 1 + \lfloor n'/4\rfloor + \frac{cn'}{2} + 1 \ge \frac{cn'}{2}+3.
\]
\end{proof}

%------------------------------------------------------------

\begin{observation}
\label{obs:Burning-Huv}
For any $(u,v)\in E(G')$, if fire originates from $u$ or $v$ in $H$ and enters $BTP_{uv}$, then $H_{uv}$ is completely burned in $cn'+4$ steps.
\end{observation}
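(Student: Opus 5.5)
The plan is to reduce the statement to Observation~\ref{obs:btp-complete-burning} and then substitute the parameters fixed in Construction~\ref{cons:H-core}. Here $H_{uv}$ is exactly the subgraph induced by $\{u,v\}\cup V(BTP_{uv}(h,l_1,l_2))$. So Observation~\ref{obs:btp-complete-burning} (with $a=u$, $b=v$) says that a fire started at $u$ (or $v$), with no further sources, completely burns $H_{uv}$ in exactly $h+l_1+l_2+2$ steps, counting the step in which $u$ burns. That observation was derived under Inequalities~\eqref{eq:param-1} and~\eqref{eq:param-2}, and the remark after Construction~\ref{cons:H-core} checks both for $h=\log(cn')+2$, $l_1=(cn'-2h)/2$ and $l_2=cn'/2+2$. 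Hence it applies verbatim.

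The remaining step is arithmetic:
\[
h+l_1+l_2+2 \;=\; h+\frac{cn'-2h}{2}+\frac{cn'}{2}+2+2 \;=\; cn'+4 .
\]
As a sanity check, I would also verify directly the two extreme distances behind Observation~\ref{obs:btp-complete-burning}. From $u$, the far endpoint $v$ is reached after $1+Steps[r_{uv},r_{vu}]+1 = 2h+2l_1+4 = cn'+4$ steps, using Observation~\ref{obs:btp-distances}. The tips are reached after $1+(h+1+l_1+l_2) = cn'+4$ steps. So both the far hook vertex and the tips burn exactly at step $cn'+4$, and no vertex of $H_{uv}$ burns later. The symmetric argument handles a fire originating at $v$.

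The only point needing care is the interpretation of ``enters $BTP_{uv}$''. The fire must be counted from the step in which $u$ (resp.\ $v$) burns, and it must propagate into $BTP_{uv}$ via the edge $u r_{uv}$ (resp.\ $v r_{vu}$). This matches the inclusive counting convention $Steps[\cdot,\cdot]$ of Observation~\ref{obs:btp-complete-burning}. I expect no real obstacle beyond keeping this step count consistent (inclusive of the starting vertex). The upper bound ``in $cn'+4$ steps'' is all that is needed, and it follows immediately.
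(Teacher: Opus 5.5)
Your proposal is correct and follows the paper's proof. Both apply Observation~\ref{obs:btp-complete-burning} to $H_{uv}=\{u,v\}\cup V(BTP_{uv})$ and substitute the parameters of Construction~\ref{cons:H-core} to get $h+l_1+l_2+2=cn'+4$. Your extra check that the far hook vertex and the tips both burn exactly at step $cn'+4$ is consistent with $l_2=l_1+h+2$, but it is not needed for the argument.
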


\begin{proof}
By Construction~\ref{cons:H-core}, $V(H_{uv})=\{u,v\}\cup V(BTP_{uv})$. Using Observation~\ref{obs:btp-complete-burning}, the required number of steps is
$h+1+l_1+l_2+1 = cn'+4$.
\end{proof}

%------------------------------------------------------------
Let $H$ be the graph obtained from $G'$ by Construction~\ref{cons:H-combined}. Let $k=\beta(G)$, $k' = k+1$, and $n' = |V(G')|$.  We are ready to prove the forward direction of the reduction.

\begin{lemma}
\label{lem:b-leq-threshold}
$b(H) \le k' + cn' + 3$
\end{lemma}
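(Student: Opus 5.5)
The plan is to exhibit an explicit burning sequence of length $K := k' + cn' + 3$. It consists of an \emph{early} block of $k'$ sources placed on a minimum vertex cover of $G'$, followed by a \emph{late} block of $m = cn'+3$ sources that burns the $C$-gadget exactly as in Lemma~\ref{lem:C-gadget-burning-upperbound}.

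\textbf{The early block.} Let $Q'$ be a minimum vertex cover of $G'$, so $|Q'| = k'$. Since $(x,y)\in E(G')$, at least one of $x,y$ lies in $Q'$; by symmetry say $x\in Q'$. Order $Q'$ so that $x$ comes first, and let the first $k'$ sources be the vertices of $Q'$ in this order.

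\textbf{The late block.} The last $m$ sources are $a_m, a_{m-1},\dots,a_4, v_7, v_3, v_1$ inside $C(m)$, placed at steps $k'+1,\dots,k'+m$. By the argument of Lemma~\ref{lem:C-gadget-burning-upperbound}, the source at position $k'+j$ of this block has exactly the same remaining time as the $j$-th source in the sequence of that lemma. Hence all of $C(m)$ is burned by step $K$.

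\textbf{Burning $H_{core}$.} Every source $u\in Q'$ is chosen at some step $i\le k'$, so it has at least $K-i \ge cn'+3$ further steps of propagation, that is, at least $cn'+4$ steps counting its own. For every edge $(u,v)\in E(G')$ some endpoint lies in $Q'$. By Observation~\ref{obs:Burning-Huv}, fire from that endpoint burns all of $H_{uv}$ within $cn'+4$ steps. Since the subgraphs $H_{uv}$ cover $H_{core}$, all of $H_{core}$ burns by step $K$.

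\textbf{Burning $H_{annexe}$ other than $C(m)$.} The source $x$ is placed at step $1$, so its fire spreads for $K-1 = k'+cn'+2$ steps. By Observation~\ref{obs:vc-bounds-cubic-graph} and Lemma~\ref{lem:G'vertexcover}, $k' \ge n/2+1 = n'/2$, so the spreading budget is at least $n'/2 + cn' + 2$. I then compare this with the distances from $x$:
\begin{itemize}
\item By Observation~\ref{obs:y-gadget-burning}, fire entering at $x_a$ burns the $Y$-gadget in $\max\{2d_1+1,\, d_1+d_2+1\}$ steps.
\item With the chosen $d_1, d_2$ we have $d_1+d_2+1 = n'/2 + cn' + 2$, and $2d_1+1$ is no larger.
\item So the farthest vertex of $Y$, including $z_b$ and the minor-branch vertices of $P_z$, is at distance at most $n'/2+cn'+2$ from $x$, counting the edge $x x_a$.
\end{itemize}
This matches the budget, and it is consistent with the remark after Construction~\ref{cons:H-combined} bounding $d(x,z_b)$. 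The step I expect to be the main obstacle is exactly this off-by-one bookkeeping. I must verify carefully, from the parities of $\lfloor n'/4\rfloor$ and $\lceil n'/4\rceil$ and the structure of $P(d)$, that every vertex of $P_x$, $P_y$, $z$ and $P_z$ lies within distance $K-1$ of $x$. If it is tight by one, I would use that $y$, when it also lies in $Q'$, covers $P_y$. I would also use that $C(m)$ is handled by the late block, so only $P_z$ up to $z_b$ needs the fire from $x$.

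\textbf{Legality of the sequence.} Each chosen source must still be unburned when selected. Distances from $Q'$ into $C(m)$ exceed the elapsed times, because fire from the core needs at least $cn'/2$ steps just to leave a domain, by Lemma~\ref{lem:burning_u_from_out_of_Dom_u}. Also, distinct vertices of $Q'$ are at distance more than $k'$ apart. If a designated vertex is nonetheless already burned, I replace it by an arbitrary unburned vertex. This preserves coverage, because the vertices it would have burned are at distance at most its remaining time from an already-burning region, and that region has already spread at least as far. Combining the three parts gives a burning sequence of length $K$, hence $b(H)\le k'+cn'+3$.
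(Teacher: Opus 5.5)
Your proposal is correct and follows essentially the same route as the paper. It uses the same sequence, namely a minimum vertex cover of $G'$ with $x$ placed first, followed by the optimal $C(m)$-sequence. It also uses the same three-part check via Observation~\ref{obs:Burning-Huv}, Observation~\ref{obs:y-gadget-burning} with $n'/2\le k'$, and Lemma~\ref{lem:C-gadget-burning}. Your count $d(x,z_b)=d_1+d_2+1=n'/2+cn'+2$ is exact, so the budget is met exactly and the fallback via $y$ is never needed.

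One small correction to your legality paragraph, which is a point the paper ignores entirely. Fire from $x$ reaches $v_{m^2}$ by step $n'/2+cn'+4$. So once $k'\ge n'/2+3$, late designated sources such as $v_1$ (adjacent to $v_{m^2}$) are already burned when they are to be chosen. Your claim that distances into $C(m)$ exceed the elapsed times therefore fails here; it is your replacement argument that actually makes the sequence legal.
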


\begin{proof}
Let $Q$ and $Q'$ be minimum vertex covers of $G$ and $G'$, respectively.  By the proof of Lemma~\ref{lem:G'vertexcover}, $Q' = Q \cup \{x\}$ (or $Q \cup \{y\}$) and $|Q'|=k'$. We construct a burning sequence $B'$ of length $k' + cn' + 3$ that burns $H$.

Place the first source, $ B' [1]$, at the vertex $x\in V(H)$ if $x \in Q'$ (or at the vertex $y \in V(H)$ if $y \in Q'$). Then place the next $k'-1$ sources on the vertices of $H$ corresponding to the vertices of $Q'$ that have not yet been selected (in arbitrary order); these form the \textit{StartBlock}. The remaining $cn'+3$ sources are placed in the $C$-gadget. Recall that $H$ consists of 
\begin{enumerate*}[label=(\roman*)]
\item $\{H_{uv} \mid (u,v)\in E(G')\}$,
\item the $Y$-gadget in $H_{\mathrm{annexe}}$, and
\item the $C$-gadget in $H_{\mathrm{annexe}}$.
\end{enumerate*}

\begin{enumerate}
\item $H_{uv}, \forall (u,v) \in E(G')$: 
Since $Q'$ is a vertex cover of $G'$, for every $(u,v)\in E(G')$ at least one of $u$ or $v$ is a source in the \textit{StartBlock}. Each such source has at least $cn'+3$ remaining steps for propagation.  By Observation~\ref{obs:Burning-Huv}, this suffices to burn $H_{uv}$ completely.

\item  The $Y$-gadget. The first source is placed at $x$ (or $y$).  By Observation~\ref{obs:y-gadget-burning}, the $Y$-gadget is burned within $n'/2 + cn' + 3$ steps.  Since $n/2 \le k$ (Observation~\ref{obs:vc-bounds-cubic-graph}), $n'=n+2$, and $k'=k+1$, we obtain $n'/2 \le k'$.  Hence, the $Y$-gadget is burned in time $k' + cn' + 3$.

\item The $C$-gadget.
By Lemma~\ref{lem:C-gadget-burning}, the $C$-gadget can be burned in $cn'+3$ steps using $cn'+3$ sources.  
These are precisely the final $cn'+3$ sources of $B'$.
\end{enumerate}

Thus, $H$ is completely burned within $k' + cn' + 3$ steps, and therefore $b(H) \le k' + cn' + 3$.
\end{proof}
Now we prove the backward direction of the reduction using the following lemmas and observations.

A subgraph $H$ of a graph $G$ is called an \emph{isometric subgraph} if, for every pair of vertices $u,v \in V(H)$, we have $d_H(u,v)=d_G(u,v)$. The burning number is not guaranteed to be monotonic even on the isometric subgraphs of a graph. The following theorem shows that the burning number is monotonic on the isometric subgraphs in certain cases.

\begin{theorem}[\cite{Bonato2016Burning}]
\label{thm:isometric}
Let $H$ be an isometric subgraph of a graph $G$. Suppose that for every vertex $x \in V(G)\setminus V(H)$ and every positive integer $r$, there exists a vertex $f_r(x)\in V(H)$ such that $N_r^G[x]\cap V(H)\subseteq N_r^H[f_r(x)]$. Then $b(H)\le b(G)$.
\end{theorem}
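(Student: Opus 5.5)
The plan is to turn an optimal burning sequence of $G$ into a covering of $H$ by balls of the right radii, and then to convert that covering back into a genuine burning sequence of $H$ of no greater length.

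\textbf{Step 1: the covering of $V(H)$.} Let $k=b(G)$ and let $(x_1,\dots,x_k)$ be an optimal burning sequence of $G$. By the definition of a burning sequence, $V(G)=\bigcup_{i=1}^k N^G_{k-i}[x_i]$. For each $i$ set $r_i=k-i$ and choose
\[
y_i=\begin{cases} x_i, & x_i\in V(H),\\ f_{r_i}(x_i), & x_i\notin V(H),\ r_i\ge 1,\\ \text{any vertex of } H, & x_i\notin V(H),\ r_i=0.\end{cases}
\]
We check that $N^G_{r_i}[x_i]\cap V(H)\subseteq N^H_{r_i}[y_i]$ in each case.
\begin{itemize}
\item If $x_i\in V(H)$, this holds because $H$ is isometric, so $d_H(x_i,v)=d_G(x_i,v)$ for $v\in V(H)$.
\item If $x_i\notin V(H)$ and $r_i\ge1$, it is exactly the hypothesis on $f_{r_i}$.
\item If $r_i=0$ and $x_i\notin V(H)$, the left side is empty.
\end{itemize}
Intersecting the covering of $V(G)$ with $V(H)$ then gives $V(H)=\bigcup_i N^H_{k-i}[y_i]$.

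\textbf{Step 2: from a covering to a burning sequence.} The $y_i$ may repeat, or $y_i$ may already be burned at step $i$, so $(y_1,\dots,y_k)$ need not satisfy the definition. I would process the indices in order and keep a modified sequence $(z_1,\dots,z_k)$, always with $\bigcup_{j\le i}N^H_{k-j}[z_j]\supseteq\bigcup_{j\le i}N^H_{k-j}[y_j]$.

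At step $i$, if $y_i$ is still unburned under the sources already chosen, set $z_i=y_i$. Otherwise $y_i$ was burned by some earlier $z_j$ with $j<i$, so $d_H(z_j,y_i)\le i-j$ in the fire-spread timing; the same reasoning covers $y_i=z_j$ itself. The triangle inequality then gives $N^H_{k-i}[y_i]\subseteq N^H_{k-j}[z_j]$, so $y_i$ is redundant. In that case choose $z_i$ to be any unburned vertex of $H$. If no unburned vertex exists, $H$ is already fully burned and the process stops before step $k$.

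Either way we obtain a valid burning sequence of $H$ of length at most $k$. Since a burning sequence that finishes early shows $b(H)$ is no larger, $b(H)\le k=b(G)$.

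\textbf{Main obstacle.} The only delicate point is Step 2: matching the paper's sequence-based definition, which requires each source to be unburned at the end of step $i-1$, against the ball-covering formulation. I would handle it with the redundancy argument above, carefully checking the off-by-one in the step at which $y_i$ counts as burned, so that the ball containment $N^H_{k-i}[y_i]\subseteq N^H_{k-j}[z_j]$ is really justified.
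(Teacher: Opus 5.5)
Your proof is correct. The paper gives no proof of this theorem and only quotes it from Bonato et al.~\cite{Bonato2016Burning}; your argument is essentially the standard one from that source. You replace each source $x_i\notin V(H)$ by $f_{k-i}(x_i)$, obtaining a cover $V(H)=\bigcup_i N^H_{k-i}[y_i]$, and then use the fact that any cover by balls of radii $k-1,\dots,0$ yields a burning sequence of length at most $k$, even when centres repeat or are already burned. Your Step~2 proves that fact directly and handles the timing correctly: a $y_i$ already burned before step $i$ lies within distance $i-j$ of some earlier $z_j$, so its ball is contained in $N^H_{k-j}[z_j]$. You also correctly treat $r_i=0$ separately, since the hypothesis on $f_r$ does not cover that case.
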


\begin{lemma}
$b(H) \ge cn'+3$.
\end{lemma}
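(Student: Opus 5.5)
Since $b(C(m)) = m$ with $m = cn'+3$ by Lemma~\ref{lem:C-gadget-burning}, the plan is to show that $C(m)$ sits inside $H$ in such a way that burning $H$ is at least as hard as burning $C(m)$. The cleanest route is Theorem~\ref{thm:isometric}. However, it is easier to apply it not to $C(m)$ itself but to a cycle: ${trunk}'(C)$, of length $m^2-5$. Theorem~\ref{thm:path-cycle-burning} gives $b(C_{m^2-5}) = \lceil\sqrt{m^2-5}\rceil = m$ for $m\ge 4$, and $m = cn'+3 \ge 4$ certainly holds. So the goal reduces to showing $b(H) \ge b({trunk}'(C))$.

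\textbf{Step 1: ${trunk}'(C)$ is isometric in $H$.} The $C$-gadget is attached to the rest of $H$ only through the single vertex $v_{m^2}$, via the edge to $z_b$. Hence every shortest path in $H$ between two vertices of $C(m)$ stays inside $C(m)$. Within $C(m)$, the $P$-gadgets' minor branches run parallel to major-branch segments and are never shorter. Likewise, the extra vertices $p,q$ of the $Tail$-gadget give no shortcut beyond the one through $r$ that is already used in ${trunk}'(C)$. Since ${trunk}'(C)$ is the shortest cycle through $v_1$ and $v_{m^2}$, it is an isometric cycle of $H$.

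\textbf{Step 2: the projection hypothesis of Theorem~\ref{thm:isometric}.} For $w \notin V({trunk}'(C))$ and each $r$, we need $f_r(w)$ on the cycle with $N_r^H[w]\cap V({trunk}'(C)) \subseteq N_r^{{trunk}'(C)}[f_r(w)]$. There are three cases.

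\emph{Case $w \notin C(m)$.} Every path from $w$ into the cycle enters through $v_{m^2}$, so $N_r[w] \cap {trunk}'(C) = N^{{trunk}'}_{r-d(w,v_{m^2})}[v_{m^2}]$. Taking $f_r(w)=v_{m^2}$ suffices.

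\emph{Case $w$ is a minor-branch vertex $b_i$.} By Observation~\ref{obs:pgadget-distances}, $d(a_i,\cdot)\le d(b_i,\cdot)$ on the major branch, so $f_r(w)=a_i$ works. This needs a short check that $a_i$ lies on ${trunk}'(C)$, which holds since major branches are on the trunk.

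\emph{Case $w \in \{v_2,\dots,v_7,p,q\}$.} These are the $Tail$-gadget vertices off ${trunk}'(C)$. Because the ball of $w$ meets the cycle in an arc, I would choose $f_r(w)$ to be the midpoint of that arc, or a neighbour of $w$ on the cycle (for instance $v_1$ or $v_8$).

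\textbf{Main obstacle.} The finite case analysis on the $Tail$-gadget, where the exact adjacencies of $p,q,r$ from the figure matter, is the step I expect to be hardest. If it becomes messy, I would fall back on a direct argument. Restrict any burning sequence of $H$ of length $k < m$ to ${trunk}'(C)$, replacing each source by its projection as above: outside sources go to $v_{m^2}$, and minor sources go to their major partners. This yields at most $k$ effective sources covering the cycle, with each ball on the cycle contained in an arc of length $2(k-i)+1$. That contradicts $\sum_{i=1}^{k}(2(k-i)+1)=k^2<m^2-5$ when $k\le m-1$, which holds since $(m-1)^2 < m^2-5$ for $m\ge 4$. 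This is essentially the argument of Lemma~\ref{lem:C-gadget-burning-lowerbound}, with outside sources handled through the cut vertex $v_{m^2}$. Either way, we conclude $b(H)\ge m = cn'+3$.
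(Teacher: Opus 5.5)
Your argument is correct, but it takes a longer route than the paper. The paper applies Theorem~\ref{thm:isometric} to the whole gadget $C(m)$ rather than to ${trunk}'(C)$. Since $C(m)$ is attached to the rest of $H$ only through the bridge $z_bv_{m^2}$, it is isometric in $H$, and $f_r(w)=v_{m^2}$ works for every $w\notin V(C)$; this is exactly your first case. Lemma~\ref{lem:C-gadget-burning} then gives $b(C)=cn'+3$. Given that lemma, working with $C(m)$ is the easier choice, because no vertex inside the gadget has to be projected.

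What your version buys is independence from Lemma~\ref{lem:C-gadget-burning}: you only need Theorem~\ref{thm:path-cycle-burning}. Your projection of the internal vertices of $C(m)$ onto the cycle is exactly what makes the passage from $C(m)$ to its shortest cycle rigorous. The paper's proof of Lemma~\ref{lem:C-gadget-burning-lowerbound} does this only informally: sources at $p$, $q$ or $v_2,\dots,v_7$ are never handled there.

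The step you flagged does close, but not via arcs. Your claim that each ball meets the cycle in an arc is false. For example, $N_3[v_4]\cap V({trunk}'(C))=\{v_9,r,v_1\}$ omits $v_8$, which is at distance $4$ from $v_4$.

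Use a fixed image per vertex instead. Let $\phi$ fix ${trunk}'(C)$ and send:
\begin{itemize}
\item $b_i\mapsto a_i$ on every $P$-gadget;
\item $q\mapsto v_9$;
\item $v_7,v_5\mapsto v_8$;
\item $v_6,v_4\mapsto r$;
\item $v_3,v_2,p\mapsto v_1$;
\item every vertex outside $C(m)$ to $v_{m^2}$.
\end{itemize}
One checks directly that every edge of $H$ is sent to an edge or a single vertex of ${trunk}'(C)$. Hence $d_{{trunk}'(C)}(\phi(w),u)\le d_H(w,u)$ for all $w\in V(H)$ and $u\in V({trunk}'(C))$.

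This gives both the isometry of ${trunk}'(C)$ in $H$ and the hypothesis of Theorem~\ref{thm:isometric} with $f_r=\phi$ for every $r$. It is also exactly the projection your fallback counting argument needs. That argument does not avoid the tail analysis, since it too must say where sources at the off-cycle tail vertices are sent.
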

\begin{proof}
Consider the subgraph induced by the cycle gadget $C$ in $H$. For every $x \in V(H) \setminus V(C)$, and for any positive integer $r$, let $f_r{(x)} = v_{m^2}$. It is easy to see that $N_{r}^H[x] \cap V(C) \subseteq N^C_{r}[f_r{(x)}]$. By Theorem~\ref{thm:isometric}, $b(C) \le b(H)$ and by Lemma~\ref{lem:C-gadget-burning}, $b(C) = cn'+3$. Therefore, $b(H) \ge cn'+3$.
\end{proof}
Let $B'$ be a burning sequence of $H$ with
$|B'|\le k'+cn'+3$. Then Lemmas~\ref{lem:src-out-of-dom-cannot-burn} and~\ref{lem:Huv} show that corresponding to every edge in $G'$ there must be a burning source in the domain of one of its endpoints.
\begin{lemma}
\label{lem:src-out-of-dom-cannot-burn}
Let $B'$ be a burning sequence of $H$ with $|B'|\le k'+cn'+3$. For $(e,f)\in E(G')$, if a source $v\in B'$ satisfies $v\notin Dom_e\cup Dom_f$, then the fire from $v$ cannot burn $Tips_{ef}$ within $k'+cn'+3$ steps.
\end{lemma}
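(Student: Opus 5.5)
Write $K=k'+cn'+3$. A source placed at step $i\ge 1$ burns exactly the vertices within distance $K-i\le K-1=k'+cn'+2$ of it by the end of step $K$. So it suffices to show
\[
d_H(v,t)\ \ge\ k'+cn'+3 \qquad \text{for every } v\notin Dom_e\cup Dom_f \text{ and every } t\in Tips_{ef}.
\]
The plan is a distance argument that splits on how a shortest $v$--$t$ path enters the gadget. First I use the structure of the domains. By Definition~\ref{def:domain}, $e_{exthalf}(BTP_{ef})\subseteq Dom_e$ and $f_{exthalf}(BTP_{ef})\subseteq Dom_f$. Hence all of $H_{ef}=\{e,f\}\cup V(BTP_{ef})$ lies in $Dom_e\cup Dom_f$. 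The gadget $BTP_{ef}$ is attached to the rest of $H$ only through $e$ and $f$. Therefore any $v$--$t$ path must pass through $e$ or through $f$, and $d(v,t)\ge\min\{d(v,e)+d(e,t),\ d(v,f)+d(f,t)\}$.

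\textbf{Step 1: distance from $v$ to the hook vertex.} I bound $d(v,e)$ and $d(v,f)$ using Lemma~\ref{lem:burning_u_from_out_of_Dom_u}. Since $v\notin Dom_e$, that lemma gives $Steps[v,e]\ge cn'/2+3$, i.e.\ $d(v,e)\ge cn'/2+2$. The same holds for $f$. This uniformly covers sources in other domains and sources in $OutsideDomains$; the case $e\in\{x,y\}$ is handled inside that lemma via the vertex $z$.

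\textbf{Step 2: distance from a hook vertex to a tip.} Here $t$ is the tip of some $T^i_{half\text{-}ef}$.

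For the path through $e$, the route is $e\to r_{ef}$ (1 edge), then down $BT_{ef}$ to a leaf ($h$ edges), then along the $T$-gadget to the tip ($l_1+l_2$ edges, by Observation~\ref{obs:T-gadget-properties}(ii)). This gives $d(e,t)=h+1+l_1+l_2=cn'+3$, since $l_1+l_2=cn'-h+2$.

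For the path through $f$, the route is $f\to r_{fe}$, then down $BT_{fe}$ ($h$ edges), then through the $T$-gadget from the $f$-side hook to $tip_{ef}$. The floating arm leading to $tip_{ef}$ hangs at $jn_{ef}$, which lies on the $e$-side of the fixed arm; the fixed arm has length $2l_1+1$ by Observation~\ref{obs:T-gadget-properties}(i). So the last part has length at least $l_1+l_2+1$, and $d(f,t)\ge h+2+l_1+l_2=cn'+4$.

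\textbf{Step 3: combine.} In either case,
\[
d(v,t)\ \ge\ \frac{cn'}{2}+2+cn'+3\ =\ \frac{3cn'}{2}+5 .
\]
It remains to check that this exceeds $k'+cn'+2$, i.e.\ that $k'<cn'/2+3$. By Observation~\ref{obs:upperbound-on-k'}, $k'\le\lceil 2n'/3\rceil$. Since $c\ge 4$, we have $cn'/2\ge 2n'$, which is larger than $\lceil 2n'/3\rceil$. Hence no vertex of $Tips_{ef}$ is within reach of $v$.

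\textbf{Main obstacle.} The delicate point is the exact distance inside the $T$-gadget from the far hook to $tip_{ef}$. That is, I must confirm from Figure~\ref{fig:t-gadget} that $jn_{ef}$ sits about $l_1$ from $e$'s side, so that crossing from the $f$-side costs at least $l_1+l_2+1$. If the position of $jn_{ef}$ turned out differently, I would fall back on the crude bound $d(f,t)\ge h+1+l_2$. That bound is not enough on its own, so the exact junction position must be read off the gadget definition carefully.
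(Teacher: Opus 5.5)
Your argument is essentially the paper's: you bound the distance from $v$ to the hook vertex via Lemma~\ref{lem:burning_u_from_out_of_Dom_u}, add the hook-to-tip distance $cn'+3$ (Observation~\ref{obs:Burning-Huv}), and use $cn'/2>k'$. The paper simply assumes without loss of generality that the fire enters through $e$.

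One intermediate claim in your Step~2 is false, although harmless: $d(f,t)\ge cn'+4$ does not hold. The floating arm is a ladder whose two rows end in a $K_4$-like block. In that block the penultimate vertex of \emph{each} row is adjacent to both tips. So from the $f$-side leaf you reach $jn_{fe}$ in $l_1$ edges, and then $tip_{ef}$ in $l_2$ further edges along the $f$-side row. This gives $d(f,t)=h+1+l_1+l_2=cn'+3$, exactly as Observation~\ref{obs:btp-distances} records: fire from either root reaches $Tips_{ef}\cup Tips_{fe}$ at the same time.

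Step~3 only needs $\min\{d(e,t),d(f,t)\}\ge cn'+3$, so your conclusion stands. The concern in your last paragraph about the junction position is therefore unnecessary. This symmetry is precisely what justifies the paper's ``without loss of generality''.
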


\begin{proof}
Without loss of generality, assume the fire from $v$ reaches $e$ before $f$. By Lemma~\ref{lem:burning_u_from_out_of_Dom_u}, reaching the fire at $e$ requires at least $cn'/2+3$ steps.
Burning $Tips_{ef}$ by a fire from $e$ then needs additional $cn'+3$ steps (Observation~\ref{obs:Burning-Huv}). Since $c\ge4$ implies $cn'/2>k'$, the total exceeds $k'+cn'+3$ and hence fire from $v$ cannot burn
$Tips_{ef}$ within $k'+cn'+3$ steps.
\end{proof}

%------------------------------------------------------------

\begin{lemma}
\label{lem:Huv}
Let $B'$ be a burning sequence with $|B'|\le k'+cn'+3$. For every $(u,v)\in E(G')$, there exists a burning source in $Dom_u\cup Dom_v$.
\end{lemma}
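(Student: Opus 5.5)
We argue by contradiction, building directly on Lemma~\ref{lem:src-out-of-dom-cannot-burn}. Fix an edge $(u,v)\in E(G')$ and suppose no source of $B'$ lies in $Dom_u\cup Dom_v$. Since $B'$ is a burning sequence of $H$, every vertex of $H$ is burned within $|B'|\le k'+cn'+3$ steps. In particular, every vertex of $Tips_{uv}$ (the $2^h$ tips in $u_{half}(BTP_{uv})$) must be burned by some source of $B'$.

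First we check that every such tip lies inside $Dom_u$. By Definition~\ref{def:domain}, $u_{exthalf}(BTP_{uv})\subseteq Dom_u$ for every neighbour $v$ of $u$ in $G'$, including the special cases $u\in\{x,y\}$, where $BTP_{xy}$, $BTP_{xp}$ and $BTP_{yq}$ are explicitly listed. The half $u_{half}(BTP_{uv})$ contains $BT_{uv}$ together with all $T^i_{half\text{-}uv}$, and hence all of $Tips_{uv}$.

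Now let $t$ be any tip in $Tips_{uv}$ and let $s\in B'$ be a source responsible for burning $t$. By our assumption, $s\notin Dom_u\cup Dom_v$. Lemma~\ref{lem:src-out-of-dom-cannot-burn}, applied with $(e,f)=(u,v)$, then says the fire from $s$ cannot reach $Tips_{uv}$ within $k'+cn'+3$ steps. This contradicts $t$ being burned by the end of the process. Hence some source lies in $Dom_u\cup Dom_v$.

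The main subtlety is making sure Lemma~\ref{lem:src-out-of-dom-cannot-burn} really covers every source outside the two domains. The worry is a source whose fire enters $BTP_{uv}$ through a $T$-gadget tip region rather than through $u$ or $v$. This cannot happen: $BTP_{uv}$ attaches to the rest of $H$ only at $u$ and $v$, because its only end vertices are $r_{uv}$ and $r_{vu}$, adjacent to $u$ and $v$ respectively. Therefore any path from outside $Dom_u\cup Dom_v$ to a tip must pass through $u$ or $v$. Lemma~\ref{lem:burning_u_from_out_of_Dom_u} bounds the cost of reaching that vertex, and Observation~\ref{obs:btp-distances} bounds the remaining distance to the tip. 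I would restate this routing explicitly so that the inequality $cn'/2+3+(h+1+l_1+l_2)>k'+cn'+3$ applies to every such $s$. That inequality follows from $cn'/2 > k'$, which uses $c\ge 4$ and Observation~\ref{obs:upperbound-on-k'}.
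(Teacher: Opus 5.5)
Your proposal is correct and is essentially the paper's proof: if no source lies in $Dom_u\cup Dom_v$, then the tips of $BTP_{uv}$ would have to be burned by sources outside both domains, and Lemma~\ref{lem:src-out-of-dom-cannot-burn} rules this out within $k'+cn'+3$ steps. Your extra paragraph on routing (the gadget is entered only through $u$ or $v$, via $r_{uv}$ and $r_{vu}$) just spells out reasoning already contained in the proof of Lemma~\ref{lem:src-out-of-dom-cannot-burn}, and your final inequality correctly reduces to $cn'/2+3>k'$.
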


\begin{proof}
Suppose not. Then all vertices in $u_{exthalf}(BTP_{uv})$ must be burned from sources outside $Dom_u\cup Dom_v$. By Lemma~\ref{lem:src-out-of-dom-cannot-burn}, this is not possible within $k'+cn'+3$ steps.
\end{proof}

\begin{observation}
\label{obs:inequality-for-two-observation}
Let $c \ge 4$ and $n'$ be sufficiently large. For $0 \le s \le k'$, where $k' \le \left\lceil \frac{2n'}{3} \right\rceil$ and $h = \log(cn') + 2$, the following inequality holds:
\[
3 + \left\lceil \frac{n'}{4} \right\rceil + \frac{cn'}{2}
\;>\;
k' + h + 4 .
\]
\end{observation}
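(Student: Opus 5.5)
We compare the two sides directly. The left side, $3+\lceil n'/4\rceil+\frac{cn'}{2}$, grows linearly in $n'$ with slope at least $c/2+1/4$. The right side, $k'+h+4$, is at most linear with slope about $2/3$ plus a logarithmic term. The plan is to bound the right side from above by an explicit function of $n'$ and then show that the gap is positive for large $n'$.

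\textbf{Bounding the right side.} We use the hypothesis $k'\le\lceil 2n'/3\rceil\le \frac{2n'}{3}+1$ (from Observation~\ref{obs:upperbound-on-k'}) and substitute $h=\log(cn')+2$. Since $4\le c<8$, we have $\log(cn')<\log n'+3$, so
\[
k'+h+4 \le \frac{2n'}{3}+1+\log n'+3+2+4=\frac{2n'}{3}+\log n'+10 .
\]

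\textbf{Bounding the left side.} Using $\lceil n'/4\rceil\ge n'/4$ and $c\ge 4$, we get
\[
3+\left\lceil \frac{n'}{4}\right\rceil+\frac{cn'}{2}\ge 3+\frac{n'}{4}+2n'=\frac{9n'}{4}+3 .
\]

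\textbf{Comparing the two bounds.} It suffices to show $\frac{9n'}{4}+3>\frac{2n'}{3}+\log n'+10$, which is equivalent to
\[
\frac{19n'}{12}>\log n'+7 .
\]
This holds once $n'\ge 6$: at $n'=6$ the left side is $9.5$ and the right side is about $9.58$, which fails, but at $n'=7$ the left side is about $11.08$ and the right side is about $9.81$, which holds. Since the left side grows linearly and the right side logarithmically, the inequality persists for all $n'\ge 7$. Because $n'=n+2\ge 6$ for any cubic graph ($n\ge 4$), "sufficiently large" only has to exclude the trivial case $n'=6$. Even that case would be covered by a slightly sharper estimate such as $\lceil 2n'/3\rceil=4$ there, but the statement permits taking $n'$ large, so no care is needed. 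The constraint $0\le s\le k'$ plays no role beyond $k'$ being bounded as above.

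\textbf{Main obstacle.} There is no real obstacle. The only care needed is to keep the ceiling and the $\log(cn')$ term as upper bounds using $c<8$, and to use $c\ge4$ only as a lower bound on the left side, so that the comparison is uniform in $c$.
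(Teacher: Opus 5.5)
Your proof is correct and follows the paper's argument essentially step for step. Both use the same bounds $k'\le \frac{2n'}{3}+1$, $\lceil n'/4\rceil\ge n'/4$ and $c\ge 4$ to reduce the claim to $\frac{19}{12}n'$ against a logarithmic term. The one difference is that you use $c<8$ to replace $\log(cn')+4$ by $\log n'+7$, which makes the threshold explicit ($n'\ge 7$) and the bound visibly uniform in $c$, whereas the paper only says ``sufficiently large''.
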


\begin{proof}
Since $k' \le \left\lceil \frac{2n'}{3} \right\rceil$ and  $h = \log(cn') + 2$, it suffices to prove
\begin{equation}
3 + \left\lceil \frac{n'}{4} \right\rceil + \frac{c n'}{2}
>
\left\lceil \frac{2n'}{3} \right\rceil + \log(cn') + 6 .
\label{eq:target-ineq}
\end{equation}

Using 
$\left\lceil \frac{n'}{4} \right\rceil \ge \frac{n'}{4}$ 
and 
$\left\lceil \frac{2n'}{3} \right\rceil \le \frac{2n'}{3} + 1$,
it is enough to show
\begin{equation}
3 + \frac{n'}{4} + \frac{c n'}{2} > \frac{2n'}{3} + 1 + \log(cn') + 6.
\end{equation}
After rearranging constants, this reduces to
\begin{equation}
\frac{n'}{4} + \left(\frac{c}{2} - \frac{2}{3} \right)n' > \log(cn') + 4 .
\label{eq:reduced}
\end{equation}

Observe that
$\frac{c}{2} - \frac{2}{3} = \frac{3c - 4}{6}$. Since $c \ge 4$, we have
$\frac{3c - 4}{6} \ge \frac{4}{3}$. Hence the left-hand side of~\eqref{eq:reduced} is at least
\[
\frac{n'}{4} + \frac{4}{3}n' = \frac{19}{12}n'.
\]

Thus it suffices to verify that $\frac{19}{12}n' > \log(cn') + 4$. Because $\log(cn') = O(\log n')$  grows sublinearly while $\frac{19}{12}n'$ grows linearly in $n'$, the inequality holds for all sufficiently large $n'$.

Therefore,~\eqref{eq:target-ineq} holds for sufficiently large $n'$, which completes the proof.
\end{proof}

Recall that, in Construction~\ref{cons:H-core} the parameters are chosen as
$4 \le c < 8$, $h = \log(cn') + 2$, $l_1 = \frac{cn' - 2h}{2}$, and $l_2 = \frac{cn'}{2} + 2$.
Let $B' \subseteq V(H)$ be a burning sequence of $H$ with $|B'| = s + cn' + 3$, where $s \ge 0$.  
We partition $B'$ into three consecutive subsequences as follows:
\begin{itemize*}
\item If $s \ge 1$, then $\textit{StartBlock} = B'[1,\dots,s]$; otherwise $\textit{StartBlock}$ is empty.
\item $\textit{MiddleBlock} = B'[s+1,\dots,s+h+1]$,\\
\item $\textit{EndBlock} = B'[s+h+2,\dots,s+cn'+3]$.
\end{itemize*}
Equivalently, for $1 \le p \le s$, $\textit{StartBlock}[p] = B'[p]$; for $1 \le q \le h+1$, 
$\textit{MiddleBlock}[q] = B'[s+q]$; and for $1 \le r \le cn' - h + 2$, $\textit{EndBlock}[r] = B'[s+h+1+r]$.
By the choice of parameters, $|\textit{MiddleBlock}| = h+1$, 
$|\textit{EndBlock}| = l_1 + l_2 = cn' - h + 2$, and therefore
$|\textit{MiddleBlock}| + |\textit{EndBlock}| = (h+1) + (cn' - h + 2) = cn' + 3$.

\begin{observation}
\label{obs:fire-from-h-to-c}
Let $B'$ be a sequence of burning sources of $H$ with $|B'| = s + cn' + 3$, where $0 \le s \le k'$. The fire originating from any source in $InsideDomains$ can enter the $C$-gadget only after
$|StartBlock| + |MiddleBlock| + 3$ steps.
\end{observation}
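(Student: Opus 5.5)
The claim is a distance statement: every path from $InsideDomains$ into the $C$-gadget passes through the $Y$-gadget, so it suffices to bound from below the distance from $InsideDomains$ to the $C$-gadget. With $|StartBlock|+|MiddleBlock| = s+h+1 \le k'+h+1$, it is enough to show that this distance is at least $k'+h+4$, or at least that the fire cannot cross into $C$ any sooner.

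\textbf{Step 1: localise the cut.} By Constructions~\ref{cons:H-annexe} and~\ref{cons:H-combined}, the only edges between $InsideDomains$ and $OutsideDomains=\{z\}\cup V(P_z)\cup V(C)$ are the edges $x_b z$ and $y_b z$. Here $P_x\subseteq Dom_x$ and $P_y\subseteq Dom_y$ are attached to $z$ through their end vertices $x_b,y_b$. The $C$-gadget is attached only at $v_{m^2}$, which is adjacent to $z_b$. So any fire from a source $w\in InsideDomains$ must travel along $x_b$ (or $y_b$) $\to z\to P_z\to z_b\to v_{m^2}$.

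\textbf{Step 2: count steps.} A source $w\in InsideDomains$ is at distance at least $1$ from $z$, since $x_b$ and $y_b$ are the closest possible positions. The fire then has to traverse the major branch of $P_z$, which has $d_2=\lceil n'/4\rceil + cn'/2 + 1$ vertices from $z_a$ to $z_b$. Observation~\ref{obs:pgadget-distances} shows that detours through the minor branch never shorten this. One more edge reaches $v_{m^2}$. Hence
$d(w,v_{m^2}) \ge 1 + 1 + (d_2-1) + 1 = 3+\lceil n'/4\rceil + cn'/2$.

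\textbf{Step 3: compare with the block lengths.} By Observation~\ref{obs:inequality-for-two-observation}, this quantity strictly exceeds $k'+h+4 \ge s+h+4 = |StartBlock|+|MiddleBlock|+3$. So even a source placed at step $1$ cannot burn any vertex of $C$ before step $|StartBlock|+|MiddleBlock|+3$, which is the claim.

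The main obstacle is the bookkeeping in Step 2: off-by-one issues between $Steps[\cdot,\cdot]$ and $d(\cdot,\cdot)$, whether "enter" refers to reaching $v_{m^2}$, and the exact length of $P_z$'s major branch. The other point to check is that $InsideDomains$ meets $OutsideDomains$ only at $z$. I would check this directly from Definition~\ref{def:domain}, since $P_x$ and $P_y$ belong to the domains of $x$ and $y$. The inequality in Observation~\ref{obs:inequality-for-two-observation} has slack of order $n'$, so any off-by-one discrepancy is absorbed for sufficiently large $n'$.
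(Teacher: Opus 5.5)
Your proposal is correct and follows essentially the same route as the paper. Both compute the distance from $x_b$ (or $y_b$) through $z$ and the major branch of $P_z$ into the $C$-gadget as $3+\lceil n'/4\rceil+cn'/2$, then use Observation~\ref{obs:inequality-for-two-observation} to compare it with $|StartBlock|+|MiddleBlock|+3=s+h+4\le k'+h+4$. Your Step~1 (checking from Definition~\ref{def:domain} that $x_bz$ and $y_bz$ are the only edges leaving $InsideDomains$) simply makes explicit what the paper asserts when it says $x_b,y_b$ are the closest vertices to the $C$-gadget.
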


\begin{proof}
Among all vertices in $InsideDomains$, the ones closest to the $C$-gadget are $x_b$ and $y_b$. Without loss of generality, consider a source placed at $x_b$. To reach $C$, the fire must traverse the entire path gadget $P_z$. Hence,
\[
Steps[x_b, z_b] = Steps[x_b,z] + Steps(z,z_b] = 2 + |Major(P_z)|
= 3 + \left\lceil \frac{n'}{4} \right\rceil + \frac{cn'}{2}.
\]

Since $|StartBlock|\le k'\le \left\lceil \frac{2n'}{3} \right\rceil$ and $|MiddleBlock|=h+1$,
\[
|StartBlock|+|MiddleBlock| \le \left\lceil \frac{2n'}{3} \right\rceil + h + 1.
\]
By Observation~\ref{obs:inequality-for-two-observation} and due to the fact that $s\leq k'\leq \lceil \frac{2n'}{3} \rceil$,
\[
3 + \left\lceil \frac{n'}{4} \right\rceil + \frac{cn'}{2} > |StartBlock| + |MiddleBlock| + 3.
\]
Thus, the fire from any source in $InsideDomains$ can enter the $C$-gadget only after $|StartBlock| + |MiddleBlock| + 3$ steps.
\end{proof}

\begin{observation}
\label{obs:max-vertices-in-C-burned-by-insidedomain}
Let $B'$ be a burning sequence of $H$ with $|B'| = s + cn' + 3$, where $0 \le s \le k'$. Let $S \subseteq InsideDomains$ be the set of sources whose fire enters the $C$-gadget. Then the sources in $S$ can burn at most $2(|EndBlock|-3)$ vertices of $trunk'(C)$.
\end{observation}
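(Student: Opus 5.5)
Fire from $InsideDomains$ enters the $C$-gadget only through the single attachment point: the edge from $z_b$ to $v_{m^2}$. So the plan is to bound the arrival time of the earliest such fire at $v_{m^2}$, and then count how many vertices of the cycle $trunk'(C)$ a fire entering at one vertex can still burn.

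\textbf{Step 1: arrival time.} Let $t_0$ be the first step at which fire from any source in $S$ burns $v_{m^2}$. Combining the distance computation in Observation~\ref{obs:fire-from-h-to-c} with Observation~\ref{obs:inequality-for-two-observation}, every source in $S$ lies at distance at least $3+\lceil n'/4\rceil+cn'/2$ from $v_{m^2}$ (the closest candidates are $x_b,y_b$). The earliest any source can be placed is step $1$. Hence
\[
t_0 \;\ge\; 1 + 3+\left\lceil \tfrac{n'}{4}\right\rceil+\tfrac{cn'}{2} \;>\; |StartBlock|+|MiddleBlock|+3 \;=\; s+h+4 .
\]
Since the process has $s+h+1+|EndBlock|$ steps in total, the fire has at most $|B'|-t_0 \le |EndBlock|-3$ further steps in which to spread after $v_{m^2}$ burns. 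This step needs care: Observation~\ref{obs:fire-from-h-to-c} is phrased as ``only after $|StartBlock|+|MiddleBlock|+3$ steps,'' and I must convert it into an exact count of remaining rounds, fixing the off-by-one between $Steps[\cdot,\cdot]$ and $d(\cdot,\cdot)$. I expect this to be the main obstacle.

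\textbf{Step 2: counting on the cycle.} All fire from $S$ enters $C$ through the single vertex $v_{m^2}$, because $v_{m^2}$ is the unique end vertex of $C(m)$. So the set of vertices of $trunk'(C)$ burned by $S$ is contained in the ball of radius $|EndBlock|-3$ around $v_{m^2}$, with the understanding that a vertex exactly at the boundary is the one burned in the last step. Two points must be checked here.
\begin{itemize}
\item On the cycle $trunk'(C)$, the vertex $v_{m^2}$ has exactly two neighbours. A ball of radius $R$ around it therefore contains at most $2R+1$ cycle vertices.
\item Distances in $C(m)$ from $v_{m^2}$ to trunk vertices are not shorter than along $trunk'(C)$, since the minor branches give no shortcuts (Observation~\ref{obs:pgadget-distances}) and $trunk'(C)$ already uses the shortcut through $r$.
\end{itemize}
Counting $v_{m^2}$ itself together with the two arcs gives at most $2(|EndBlock|-3)+1$ vertices.

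\textbf{Step 3: shaving the extra vertex.} To reach the stated bound $2(|EndBlock|-3)$, I use the strict inequality from Step 1 to gain one more unit of delay. Because $t_0 > s+h+4$ strictly, the remaining radius is at most $|EndBlock|-4$. Then $2(|EndBlock|-4)+1 \le 2(|EndBlock|-3)$, which completes the bound.
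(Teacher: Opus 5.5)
Your proposal is correct and follows essentially the same route as the paper. Fire from $InsideDomains$ can enter $C$ only through $v_{m^2}$, and its arrival is delayed beyond step $|StartBlock|+|MiddleBlock|+3$ by Observation~\ref{obs:fire-from-h-to-c} (via Observation~\ref{obs:inequality-for-two-observation}); after that, at most two new vertices of $trunk'(C)$ burn per step. Your version is more careful than the paper's on two points: you count $v_{m^2}$ itself (the extra $+1$), absorbed by the strict inequality, and you check that neither the minor branches nor the $Tail$-gadget give shortcuts relative to $trunk'(C)$.
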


\begin{proof}
Let $S_1 \subseteq S$ be the subset of sources whose fire reaches $C$ earliest. Any vertex of $trunk'(C)$ burned by a source in $S\setminus S_1$ is also burned by a source in $S_1$ as there is only one entry point $v_{m^2}$ to the $C$-gadget. Hence, it is enough to consider the case in which the burning source is in $S_1$.

Let $v\in S_1$. By Observation~\ref{obs:fire-from-h-to-c}, the fire from $v$ can enter $C$ only after
$|StartBlock|+|MiddleBlock|+3$ steps. Since the total length of the sequence is $|B'|$, the fire from $v$ has at most $|B'|-(|StartBlock|+|MiddleBlock|+3)=|EndBlock|-3$ steps to spread inside $C$.

Within $trunk'(C)$, the fire propagates along a path in two directions. Thus, in each step it can burn at most two new vertices. Hence, the fire from $v$ can burn at most $2(|EndBlock|-3)$ vertices of $trunk'(C)$.
\end{proof}

%------------------------------------------------------------

\begin{lemma}
\label{lem:num-sources-for-C}
Let $B'$ be a burning sequence of $H$ with $|B'| = s + cn' + 3$, where $0 \le s \le k'$.
If $StartBlock \cap OutsideDomains = \emptyset$, then every source in $MiddleBlock \cup EndBlock$ lies in $OutsideDomains$.
\end{lemma}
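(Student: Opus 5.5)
We argue by contradiction, using a counting argument on $trunk'(C)$. Suppose $StartBlock \cap OutsideDomains=\emptyset$ and some source of $MiddleBlock\cup EndBlock$ lies in $InsideDomains$. Then the only sources available to burn $trunk'(C)$ are the following:
\begin{enumerate*}[label=(\roman*)]
\item the sources of $MiddleBlock\cup EndBlock$ lying in $OutsideDomains$, of which there are at most $cn'+2$;
\item sources in $InsideDomains$, whose fire enters $C$ only through $v_{m^2}$.
\end{enumerate*}
The set in (ii) includes all of $StartBlock$. The goal is to show that together they burn fewer than $m^2-5$ vertices of $trunk'(C)$, where $m=cn'+3$.

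\textbf{Step 1: sources outside the domains.} A source at position $t$ of $B'$ has radius $|B'|-t$. On a cycle (or, for sources in $Y\cup P_z$ or on minor branches, projected to the trunk through the single entry point or the corresponding major vertex, as in Lemma~\ref{lem:C-gadget-burning-lowerbound}), it burns at most $2(|B'|-t)+1$ vertices of $trunk'(C)$. The $MiddleBlock\cup EndBlock$ positions have radii $cn'+2,\dots,0$. Deleting one of these positions removes at least the contribution of the radius-$0$ slot, i.e. at least $1$ vertex. A sharper choice is the pessimistic one: the adversary drops the cheapest slot, and the remaining positions give at most $\sum_{r=1}^{m-1}(2r+1)=m^2-1$ vertices.

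\textbf{Step 2: sources inside the domains.} By Observation~\ref{obs:max-vertices-in-C-burned-by-insidedomain}, all such sources together contribute at most $2(|EndBlock|-3)$ vertices of $trunk'(C)$, and this contribution is concentrated in a single arc around $v_{m^2}$. The crucial point is that this is large, roughly $2cn'$, so naive counting does not suffice.

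\textbf{Step 3: the overlap argument, which I expect to be the main obstacle.} The fire from the inside domains enters at $v_{m^2}$ late and burns a contiguous arc of length at most $2(|EndBlock|-3)+1$ centred at $v_{m^2}$. The remaining outside sources must cover the complementary path, of length at least $m^2-5-2(|EndBlock|-3)-1$, using radii drawn from $\{0,\dots,cn'+2\}$ with at least one radius missing.

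I would argue that the missing slot can be taken WLOG to be the EndBlock slot of largest radius among the missing ones. Its absence, say radius $\rho\ge 0$, costs $2\rho+1$ vertices. Covering a path of length $L$ with disjoint balls of radii from a multiset requires that the sum of $(2r+1)$ is at least $L$, and on a cycle the inside arc acts like one extra ball of radius about $|EndBlock|-3$.

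A ball of that radius is already "in use" by the outside sources only if they also cover near $v_{m^2}$. So the comparison to make is between the inside arc, of at most $2(cn'-h-1)+1$ vertices, and the total deficit. I would show that the total burned is at most
\[
(m^2-1)-(2\rho+1)+\text{(overlap forced near } v_{m^2})+2(|EndBlock|-3)+1,
\]
and that the overlap is forced. A source from a MiddleBlock position has radius at least $|EndBlock|$, larger than the radius of the inside arc. Replacing that source with the inside arc is therefore strictly worse, so the sequence loses at least $2(h+1)+\dots$ vertices, which exceeds $\textit{FixedOverlap}=5$ once $h\ge 3$.

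Making this exchange argument precise is the delicate part. It amounts to showing that the multiset of ball sizes, with the inside arc substituted for a genuine source, has total size strictly below $m^2-5$. I would handle it by cases, according to whether the missing source comes from $MiddleBlock$ (a loss of at least $2|EndBlock|+1$ against a gain of at most $2(|EndBlock|-3)+1$) or from $EndBlock$. In the latter case, the inside arc near $v_{m^2}$ overlaps with whichever outside source covers $v_{m^2}$'s neighbourhood, giving more than $5$ doubly-burned vertices and contradicting Definition~\ref{def:fixed-overlap}.
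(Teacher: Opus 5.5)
Your \textit{MiddleBlock} case is correct, and it needs no overlap argument. A \textit{MiddleBlock} slot has radius $\rho\ge cn'-h+2=|\textit{EndBlock}|$. So if such a slot holds a source of \textit{InsideDomains}, then with $m=cn'+3$ a plain union bound shows that all sources together burn at most $m^2-(2\rho+1)+2(|\textit{EndBlock}|-3)+1\le m^2-6<m^2-5$ vertices of $trunk'(C)$.

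The gap is the \textit{EndBlock} case. For a slot of radius $\rho\le|\textit{EndBlock}|-1$, the same bound is no longer below $m^2-5$, so counting gives nothing. Your proposed way out fails for two reasons:
\begin{itemize}
\item Nothing forces an outside ball to cover the neighbourhood of $v_{m^2}$. The inside arc covers it, and the outside balls can simply tile the complementary path.
\item $\textit{FixedOverlap}=5$ is only the slack $m^2-|V(trunk'(C))|$ when the $m$ slot-balls are the only balls. Once the inside arc is an additional ball the slack is larger, so more than five doubly burned vertices contradicts nothing in Definition~\ref{def:fixed-overlap}.
\end{itemize}

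This case cannot be repaired, because the \textit{EndBlock} half of the statement is false in general. Take the sequence of Lemma~\ref{lem:b-leq-threshold}; its \textit{StartBlock} is a minimum vertex cover of $G'$ beginning at $x$, so it lies in \textit{InsideDomains}.
\begin{itemize}
\item By construction $d(x,v_1)=d_1+d_2+3=n'/2+cn'+4$. So the fire from $x$ burns $v_1$ at step $n'/2+cn'+5$, which is strictly before the last step $k'+cn'+3$ whenever $k'\ge n'/2+3$.
\item For $k'=n'/2+3$ this fire reaches no other source of the canonical burning of $C$.
\item The final source (canonically $v_1$, now already burned) is therefore redundant. It can legally be placed on a tip of $BTP_{uv}$, where $u=B'[k']$ and $(u,v)$ is an edge covered only by $u$.
\item Such a tip is at distance $cn'+3$ from $u$, is first reached exactly at step $|B'|$, and lies in $Dom_u\subseteq\textit{InsideDomains}$.
\end{itemize}

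The paper's own proof has the same hole. From ``more than $(cn'+2)^2$ vertices remain'' it infers that all $cn'+3$ slots must be outside. But $cn'+2$ outside sources in slots of radii $cn'+2,\dots,1$ can burn $(cn'+3)^2-1$ vertices. Theorem~\ref{thm:path-cycle-burning} bounds the number of steps, not the number of sources that must land on the path.

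What the counting does establish is your \textit{MiddleBlock} case, and that is all the backward direction actually uses: Lemma~\ref{lem:b-geq-threshold} applies Observation~\ref{obs:cycle-path-source-burns-exclusive-vertex} only to the first $h+1$ sources.
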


\begin{proof}
Since $B'$ burns $H$, it must completely burn the $C$-gadget. If no source in $InsideDomains$ enters $C$, the claim holds trivially. Thus, assume fire from some sources in $InsideDomains$ enter $C$.
By Observation~\ref{obs:max-vertices-in-C-burned-by-insidedomain}, the total number of vertices of $trunk'(C)$ burned by such sources is at most
\[
2(|EndBlock|-3) \le 2(cn'-h-1) < 2cn'-5.
\]
The total size of $trunk'(C)$ is $(cn'+3)^2$, with a FixedOverlap of $5$ vertices. Hence, the number of vertices remaining to be burned is at least
\[
(cn'+3)^2 - 5 - (2cn'-5) > (cn'+2)^2.
\]

By Theorem~\ref{thm:path-cycle-burning}, burning a path of length strictly greater than $(cn'+2)^2$ requires at least $cn'+3$ sources. Since $StartBlock$ contains no sources from $OutsideDomains$ and
$|MiddleBlock|+|EndBlock|=cn'+3$, all sources in $MiddleBlock \cup EndBlock$ must belong to $OutsideDomains$.
\end{proof}

%------------------------------------------------------------
 An edge $(u,v)\in E(G')$ is called \emph{represented} if  $\exists w \in StartBlock$ that lies in $Dom_u\cup Dom_v$, and \emph{unrepresented} otherwise.
%------------------------------------------------------------

\begin{lemma}
\label{lem:endblock-atmost-2-btp}
Let $B'$ be a burning sequence of $H$ with $|B'| = s + cn' + 3$, where $0 \le s \le k'$. Then at most two $BTP$-gadgets corresponding to unrepresented edges can be completely burned by $B'$.
\end{lemma}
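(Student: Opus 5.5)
The plan is to argue by contradiction and reduce the statement to Lemma~\ref{lem:possible-to-burn-two-not-three}. Suppose three $BTP$-gadgets $BTP_{ab}$, $BTP_{cd}$, $BTP_{ef}$ corresponding to unrepresented edges are completely burned by $B'$. Distinct edges of $G'$ give vertex-disjoint $BTP$-gadgets, so these three are pairwise disjoint. With the block decomposition fixed just before the statement, $|\textit{MiddleBlock}| = h+1$ and $|\textit{EndBlock}| = l_1+l_2$, so the parameter $m$ of Lemma~\ref{lem:possible-to-burn-two-not-three} equals $h+1$. It then suffices to verify that lemma's two hypotheses on $\textit{StartBlock}$. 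Once they hold, it says that $a_{half}(BTP_{ab})\cup c_{half}(BTP_{cd})\cup e_{half}(BTP_{ef})$ cannot be completely burned, which contradicts the assumption.

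The first hypothesis is $\textit{StartBlock}\cap (V(BTP_{ab})\cup V(BTP_{cd})\cup V(BTP_{ef}))=\emptyset$. By definition of the extended halves, $V(BTP_{ab})\subseteq Dom_a\cup Dom_b$, and likewise for the other two gadgets. Since each of these edges is unrepresented, no source of $\textit{StartBlock}$ lies in $Dom_a\cup Dom_b$, $Dom_c\cup Dom_d$ or $Dom_e\cup Dom_f$. This gives the first hypothesis directly.

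The second hypothesis is the main obstacle: no fire from $\textit{StartBlock}$ may reach any of the six roots $r_{ab},r_{ba},\dots$ by the end of step $s+1$. Take a source $w=B'[i]$ with $i\le s$. By the first hypothesis $w\notin Dom_a\cup Dom_b$, and its fire has at most $s+1-i\le s\le k'$ steps of propagation. I will show that $d(w,r_{ab})>k'$.

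1. Since $r_{ab}$ is adjacent to $a$ and the neighbourhood of $a$ lies inside $Dom_a$, any path from outside $Dom_a\cup Dom_b$ to $r_{ab}$ either passes through $a$ or enters $BTP_{ab}$ through $b$ or through the interior of the gadget.
2. In the first case, Lemma~\ref{lem:burning_u_from_out_of_Dom_u} gives $Steps[w,a]\ge cn'/2+3$. This exceeds $k'+2$, because $k'\le\lceil 2n'/3\rceil$ and $c\ge4$ give $cn'/2\ge 2n'>k'+2$ for the relevant $n'$.
3. In the second case the same bound holds at $b$, after which the fire must still cross $BT_{ba}$ and a $T$-gadget to reach $r_{ab}$, which costs a further $2h+2l_1+1$ steps by Observation~\ref{obs:btp-distances}.
4. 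The special vertices $x$ and $y$ are handled the same way via the $Y$-gadget case of Lemma~\ref{lem:burning_u_from_out_of_Dom_u}.

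In every case the distance from $w$ to a root exceeds $s+1-i$. This verifies the second hypothesis, and Lemma~\ref{lem:possible-to-burn-two-not-three} finishes the argument.

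The delicate point is making step 1 rigorous. I need to check that every route into $BTP_{ab}$ from outside $Dom_a\cup Dom_b$ really passes through $a$ or $b$. This holds because the only attachments of the gadget are the edges $a r_{ab}$ and $b r_{ba}$. I also need to confirm that the inequality $cn'/2+2>k'$ holds uniformly, including for small $n'$; if necessary I will assume $n'$ is large enough, as in Observation~\ref{obs:inequality-for-two-observation}.
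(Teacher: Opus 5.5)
Your proposal is correct and takes essentially the same route as the paper. Both argue by contradiction, apply Lemma~\ref{lem:possible-to-burn-two-not-three} with $m=h+1$, and verify its hypotheses from two facts: unrepresented edges keep every \textit{StartBlock} source out of the relevant domains, and Lemma~\ref{lem:burning_u_from_out_of_Dom_u} gives the required distance bound. Your extra checks spell out what the paper leaves implicit: that a gadget can be entered only through $a$ or $b$, and that the inequality $cn'/2+3>k'$ holds for all $n'\ge 6$ because $k'\le\lceil 2n'/3\rceil$, so no largeness assumption is needed.
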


\begin{proof}
Assume, for contradiction, that three distinct edges $(a,b),(e,f),(i,j)\in E(G')$ are unrepresented. Let $u\in StartBlock$ be any source from which fire reaches a vertex in $\{a,b,e,f,i,j\}$. Since $u\notin Dom_v$ for all such vertices $v$, Lemma~\ref{lem:burning_u_from_out_of_Dom_u} implies
\[
Steps[u,v] \ge \frac{cn'}{2}+3.
\]
Since $c\ge4$, this exceeds $k'+3\ge s+3$.

Thus, all conditions of Lemma~\ref{lem:possible-to-burn-two-not-three} are satisfied -- {note that it is not necessary that the edges are disjoint, as even if edges share an endpoint, their corresponding $BTP$-gadgets are disjoint} -- implying that three such $BTP$-gadgets cannot be burned within $s+cn'+3$ steps, a contradiction.
\end{proof}

%------------------------------------------------------------

\begin{lemma}
\label{lem:at-least-four-in-middleblock}
Let $B'$ be a burning sequence of $H$ with $|B'| = s + cn' + 3$, where $0 \le s \le k'$. If two $BTP$-gadgets corresponding to unrepresented edges are burned exclusively by sources in $MiddleBlock \cup EndBlock$, then at least four sources in $MiddleBlock$ belong to $InsideDomains$.
\end{lemma}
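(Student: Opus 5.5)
The plan is a tip-counting argument in the style of Lemma~\ref{lem:possible-to-burn-two-not-three}, sharpened to use the positions of the sources. Let the two unrepresented edges be $(a,b)$ and $(e,f)$. I will only count the tips in $Tips_{ab}\cup Tips_{ef}$. These are $2\cdot 2^h=2^{h+1}$ vertices, and by hypothesis all of them must be burned by sources in $MiddleBlock\cup EndBlock$. The proof is by contradiction: assume at most three sources of $MiddleBlock$ lie in $InsideDomains$, and show that fewer than $2^{h+1}$ of these tips get burned.

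\textbf{Step 1: discard irrelevant sources.} By Lemma~\ref{lem:src-out-of-dom-cannot-burn}, a source outside $Dom_a\cup Dom_b$ cannot burn any vertex of $Tips_{ab}$ within $|B'|\le k'+cn'+3$ steps, and likewise for $Tips_{ef}$. So only sources in $InsideDomains$ are relevant. As in Lemma~\ref{lem:possible-to-burn-two-not-three}, the distance between tips of two disjoint $BTP$-gadgets exceeds $2(h+1+l_1+l_2)$. Hence each source burns tips of at most one of the two gadgets.

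\textbf{Step 2: bound $EndBlock$.} By Lemma~\ref{lem:generalized-max-leaves-in-endblock}, $EndBlock$ burns at most $l_1+l_2<2^{h-2}$ tips of the two gadgets in total. Here $|EndBlock|=l_1+l_2$, and the inequality is Inequality~\ref{eq:param-1}.

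\textbf{Step 3: bound each $MiddleBlock$ source by position.} Here $|MiddleBlock|=h+1$, and $MiddleBlock[j]$ has $(h+1-j)+(l_1+l_2)$ steps of propagation. I would rerun the argument of Lemma~\ref{lem:max-tips-from-source-in-middleblock} with $m=h+1$: the source is best placed at level $j-1$ of a binary tree, and it burns at most $\min(2^h,2^{h+1-j})$ tips. In particular:
\begin{itemize}
\item position $1$ gives at most $2^h$ (the root level; a single tree has only $2^h$ leaves);
\item position $2$ gives at most $2^{h-1}$;
\item position $3$ gives at most $2^{h-2}$.
\end{itemize}
So three inside sources burn at most $2^h+2^{h-1}+2^{h-2}=7\cdot 2^{h-2}$ tips, and fewer inside sources burn even fewer.

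\textbf{Step 4: combine.} Adding Steps 2 and 3, the total number of burned tips is strictly less than $7\cdot 2^{h-2}+2^{h-2}=2^{h+1}$. This contradicts the requirement that all $2^{h+1}$ tips be burned, so at least four $MiddleBlock$ sources lie in $InsideDomains$.

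The main obstacle is Step 3. Lemma~\ref{lem:max-tips-from-source-in-middleblock} is stated only for $m\le h$, so I must re-derive the per-position bound for $m=h+1$. The extra case is $j=1$, where $2^{m-j}=2^{h}$ coincides with the whole tree, and I must check that a root-placed source still burns no tip of the other half $Tips_{ba}$. The parameter inequality $l_2>l_1+h+1$ should handle this, since the route to the other half's tips is longer. I also need to confirm that a source inside a $T$-gadget, or at the hook vertex $a$ itself, does no better than the corresponding tree vertex. I would argue this exactly as in the last paragraph of the proof of Lemma~\ref{lem:max-tips-from-source-in-middleblock}: the vertex $a$ is one step farther from every tip than the root $r_{ab}$.
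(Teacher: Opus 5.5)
Your proposal is correct and uses the same counting as the paper. The first three $MiddleBlock$ positions can burn at most $2^h=4cn'$, $2^{h-1}=2cn'$ and $2^{h-2}=cn'$ tips, and $EndBlock$ burns fewer than $cn'$, which falls short of the $2^{h+1}=8cn'$ tips in $Tips_{ab}\cup Tips_{ef}$. Your contradiction framing, which bounds any three inside sources by the position-$1,2,3$ capacities, is a bit more careful than the paper's step assuming without loss of generality that $MiddleBlock[1]$ sits at $r_{ab}$. The extension of Lemma~\ref{lem:max-tips-from-source-in-middleblock} to $m=h+1$ that you flag is routine, since the optimal level $j-1$ is still nonnegative for every $j\ge 1$.
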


\begin{proof}
Let $BTP_{ab}$ and $BTP_{ef}$ be the two such gadgets--  recall that even if two edges share an endpoint, their corresponding $BTP$-gadgets are disjoint. Their tip sets are disjoint and satisfy
$|Tips_{ab}| = |Tips_{ef}| = 4cn'$. Moreover, the distance between any tip of $BTP_{ab}$ and any tip of $BTP_{ef}$ is greater than $2(cn'+3)$. Since $|MiddleBlock| + |EndBlock| = cn' + 3$, no source can burn tips from both gadgets. Thus, the sources in $MiddleBlock \cup EndBlock$ must burn $Tips_{ab}$ and $Tips_{ef}$ independently. Without loss of generality, assume that the earliest source in $MiddleBlock$, namely $MiddleBlock[1]$, is used to burn $BTP_{ab}$. We may assume that it is placed at $r_{ab}$, {then it will burn maximum number of $tip_{ab}$}. Within $h$ steps, all leaves of $BT_{ab}$ are burned, and within an additional $l_1+l_2$ steps, all tips in $Tips_{ab}$ are burned. Thus, all remaining sources in $MiddleBlock[2,\dotsc,h+1]\cup EndBlock$  can be devoted to burning $Tips_{ef}$.
By Lemma~\ref{lem:generalized-max-leaves-in-endblock},
sources in $EndBlock$ can burn at most $l_1 + l_2 = cn' - h + 2 < cn'$ tips. Hence, more than $3cn'$ tips of $Tips_{ef}$ must be burned by sources in $MiddleBlock$. Each source in $MiddleBlock$ can burn the number of $tip_{ab}$ up to a limiting value :
the earliest remaining source, $MiddleBlock[2]$, can burn at most $2cn'$ tips, the next source, $MiddleBlock[3]$, can burn at most $cn'$ tips, and every subsequent source burns fewer than $cn'$ tips
(by Lemma~\ref{lem:max-tips-from-source-in-middleblock}).
Therefore, to burn more than $3cn'$ tips, at least three sources from 
$MiddleBlock \setminus \{MiddleBlock[1]\}$ are required. Including $MiddleBlock[1]$, at least four sources from $MiddleBlock$ are needed. All such sources lie in 
$V(BTP_{ab}) \cup V(BTP_{ef}) \subseteq InsideDomains$.
\end{proof}

\begin{lemma}
\label{lem:middbloc-insidedomain-trunk-startblock}
Suppose $t$ sources from $MiddleBlock$ are placed at vertices in $InsideDomains$ and none of their fires infiltrate the $C$-gadget. Then at least $t^2 + 4t - 2ht + 2tcn' - 5$ vertices of $trunk'(C)$ must be burned by sources in $StartBlock$.
\end{lemma}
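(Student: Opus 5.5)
Throughout, $m = cn'+3$ and $trunk'(C)$ is a cycle on $m^2-5$ vertices. We count how many vertices of $trunk'(C)$ the sources of $MiddleBlock \cup EndBlock$ lying in $OutsideDomains$ can burn. Whatever remains must be burned by $StartBlock$. The count is made exact using the fact that, on a cycle or path, a source with $r$ remaining steps burns at most $2r+1$ vertices.

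\textbf{Step 1: counting capacity.} A source placed at position $j$ of the whole sequence (length $|B'| = s + cn'+3$) has $|B'| - j$ remaining steps. On $trunk'(C)$ its fire covers an arc of at most $2(|B'|-j)+1$ vertices. This holds even if the source sits in $P_z$, $z$, or off-trunk vertices of $C$: the fire enters $trunk'(C)$ only through $v_{m^2}$, or from a nearby off-trunk vertex. In either case the arc is no longer than that of a source placed directly on the trunk.

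\textbf{Step 2: the $OutsideDomains$ sources.} Sources in $InsideDomains$ whose fire never infiltrates $C$ contribute nothing to $trunk'(C)$. Hence only the remaining $(h+1-t)$ sources of $MiddleBlock$, together with all of $EndBlock$, can burn the trunk. Their total capacity is largest when the $t$ lost sources are the \emph{earliest} ones of $MiddleBlock$. Write $M = cn'+3$ for the number of sources in $MiddleBlock \cup EndBlock$; the $i$-th of them has $M-i$ remaining steps. The total capacity is then at most
\[
\sum_{i=t+1}^{M} \bigl(2(M-i)+1\bigr) = (M-t)^2 .
\]
Subtracting this from $|trunk'(C)| = M^2-5$ leaves at least
\[
M^2 - 5 - (M-t)^2 = 2tM - t^2 - 5 = 2tcn' + 6t - t^2 - 5
\]
vertices for $StartBlock$. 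This bound does not match the stated expression $t^2+4t-2ht+2tcn'-5$. The intended accounting is probably that the lost sources are counted at their actual positions inside $MiddleBlock$, which has length $h+1$. The sum over $MiddleBlock$ positions $q \in \{1,\dots,h+1\}$ of $2(cn'+3-q)+1$ then has to be compared carefully.

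\textbf{Step 3: reconciling with the stated bound.} A direct check is available. The stated quantity is $2tcn' + 4t + t^2 - 2ht - 5$, and it is at most $2tcn' + 6t - t^2 - 5$ exactly when $2t^2 \le 2t(h+1)$, i.e.\ when $t \le h+1$. That inequality always holds, since only $h+1$ sources lie in $MiddleBlock$. Therefore the Step 2 bound, which is stronger, implies the stated one, and the proof is complete once Step 2 is rigorous.

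The main obstacle is Step 1's claim that sources in $P_z$ or off-trunk vertices of $C$ get no extra coverage on $trunk'(C)$. For these, the plan is to use the entry-point argument from Observation~\ref{obs:max-vertices-in-C-burned-by-insidedomain}, together with Remark~\ref{rem:C-trunk-suffices}, to project each such source onto a trunk vertex at no loss.
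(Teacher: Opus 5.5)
\textbf{There is a genuine gap in Step 2: the extremal case is backwards.}

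The $t$ lost sources sit at positions $pos_1<\dots<pos_t\le h+1$ of $MiddleBlock\cup EndBlock$. The remaining sources then have total capacity $M^2-\sum_j\bigl(2(M-pos_j)+1\bigr)$ on the trunk. You need an \emph{upper} bound on what these remaining sources can burn, so you must \emph{minimize} the lost capacity. That means putting the lost sources as late as possible, at positions $h+2-t,\dots,h+1$. Placing them at positions $1,\dots,t$, as you do, maximizes the loss and yields the \emph{smallest} remaining capacity, $(M-t)^2$, not the largest.

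Concretely, take $t=1$ with the inside source at position $h+1$. The other $M-1$ sources have per-source capacities summing to $M^2-\bigl(2(M-h-1)+1\bigr)=(M-1)^2+2h$. This exceeds $(M-1)^2$, so your claim that the total capacity is at most $(M-t)^2$ is false. Your ``stronger'' bound $2tcn'+6t-t^2-5$ is therefore unsupported. It exceeds the stated bound by exactly $2t(h+1-t)$, which is precisely the difference between the two placements. Step 3 deduces the lemma from this unsupported bound, so it collapses with it. The mismatch you noticed in Step 2 was the symptom of this error, not a sign that the stated bound was merely weaker.

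\textbf{The repair} is the computation you gestured at, and it is exactly the paper's argument. The lost capacity is at least
\[
\sum_{q=h+2-t}^{h+1}\bigl(2(M-q)+1\bigr)=t(2M-2h-1)+t(t-1)=t^2+4t-2ht+2tcn'.
\]
The outside sources therefore burn at most $M^2$ minus this quantity. Subtracting from $|trunk'(C)|=M^2-5$ gives the stated bound directly. The paper does the same count on $trunk(C)$ and then subtracts $FixedOverlap=5$.

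Your Step 1 concern about sources in $P_z$ or on off-trunk vertices of $C$ is a legitimate rigor point; the paper simply asserts the per-source bound $1+2(cn'+3-i)$. It is not, however, where your argument fails.
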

\begin{proof}
We first consider $trunk(C)$, a cycle on $(cn'+3)^2$ vertices. Recall that 
$|MiddleBlock| + |EndBlock| = cn' + 3$. Let $M_e = MiddleBlock \cup EndBlock$. Since 
$|M_e| = cn'+3 = b(trunk(C))$, the sources in $M_e$ can be viewed as an optimal burning sequence for $trunk(C)$, where the $i^{th}$ source burns at most $1 + 2(|M_e| - i) = 1 + 2(cn' + 3 - i)$ vertices on the cycle.
Let the $t$ sources from $MiddleBlock$ that lie in $InsideDomains$ occur at
positions
\[
1 \le pos_1 < pos_2 < \cdots < pos_t \le h+1.
\]
Since their fires do not reach the $C$-gadget, their contribution to burning $trunk(C)$ is lost. The total number of vertices they would have burned is
\[
\sum_{i=1}^{t} \bigl(1 + 2(cn' + 3 - pos_i)\bigr).
\]
This sum is maximized when the positions $pos_i$ are as small as possible, and minimized when they are as large as possible. To obtain a lower bound on the number of vertices that remain unburned, we take
\[
pos_i = (h+1) - (t - i), \qquad 1 \le i \le t.
\]
Substituting yields that at least
\[
t^2 + 4t - 2ht + 2tcn'
\]
vertices of $trunk(C)$ remain unburned. Now consider $trunk'(C)$. By the definition of $FixedOverlap$, at most $5$ vertices can be burned by more than one source and hence more than once. Hence, at least
\[
t^2 + 4t - 2ht + 2tcn' - 5
\]
vertices of $trunk'(C)$ must be burned by sources in $StartBlock$.
\end{proof}

\begin{lemma}
\label{lem:num-vertices-in-C-burned-by-startblock}
Let $B'$ be a burning sequence of $H$ with $|B'| = s + cn' + 3$, where $0 \le s \le k'$. If at least four sources in $MiddleBlock$ are placed at vertices in $InsideDomains$, then at least $6cn'$ vertices of $trunk'(C)$ are burned by sources in $StartBlock$.
\end{lemma}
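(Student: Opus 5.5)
The plan is to derive this directly from Lemma~\ref{lem:middbloc-insidedomain-trunk-startblock}, together with Observation~\ref{obs:max-vertices-in-C-burned-by-insidedomain}. First I would split the $t\ge 4$ sources of $MiddleBlock$ lying in $InsideDomains$ into two groups: those whose fire never enters $C$, and those whose fire does.

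For the first group I would apply Lemma~\ref{lem:middbloc-insidedomain-trunk-startblock}. For sources whose fire does enter $C$, Observation~\ref{obs:fire-from-h-to-c} shows they enter only after $|StartBlock|+|MiddleBlock|+3$ steps. Observation~\ref{obs:max-vertices-in-C-burned-by-insidedomain} then shows that, because $v_{m^2}$ is the unique entry point, all such sources together burn at most $2(|EndBlock|-3) < 2cn'$ vertices of $trunk'(C)$. So the deficit computed in Lemma~\ref{lem:middbloc-insidedomain-trunk-startblock}, applied to all $t$ inside sources, is reduced by less than $2cn'$ once this infiltration is accounted for. Sources of $EndBlock$ lying in $InsideDomains$ only enlarge the deficit, so I would ignore them, as this is monotone in our favour.

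The core computation is to show that $f(t)=t^2+4t-2ht+2tcn'-5$ satisfies $f(t)\ge 8cn'$ for all $4\le t\le h+1$. The extra $2cn'$ covers the infiltration loss, leaving at least $6cn'$. I would write $f(t)=t(t+4-2h+2cn')-5$. Since $h=\log(cn')+2=O(\log n')$, for $n'$ sufficiently large we have $2cn'-2h+4>0$, so $f$ is increasing in $t$ on the relevant range. Hence $f(t)\ge f(4)=32+8cn'-8h-5+16$. Alternatively, one can compare directly: $f(4)-2cn' = 6cn' + 43 - 8h$. This does not by itself give $6cn'$ exactly. To close the gap, I would use a sharper infiltration bound than $2cn'$: $2(|EndBlock|-3)=2(cn'-h-1)$. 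This yields
\[
f(4)-2(cn'-h-1) = 6cn' - 6h + 45 .
\]
This is still short of $6cn'$ once $h>7$, so I will instead take $pos_i$ over the $t$ sources more carefully.

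The main obstacle is this arithmetic: the naive bounds miss by $O(h)$. I would try to settle it in one of two ways. One is to observe that infiltrating sources are exactly those whose loss is not counted, so their contribution should be subtracted from their own lost budget rather than from the total. A source at $MiddleBlock$ position $pos$ would have burned $1+2(cn'+3-s-pos)$ cycle vertices optimally, but infiltrating it burns at most $2(|EndBlock|-3)$, so each such source still loses at least $2h$-order plus a constant. The other is simply to note that, with $t\ge 4$ and $n'$ sufficiently large (as in Observation~\ref{obs:inequality-for-two-observation}), the dominant term $2tcn'\ge 8cn'$ leaves slack $2cn'$ that absorbs all $O(h)$ and constant losses, including the $FixedOverlap$ of $5$. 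I expect to handle it via the second route, using $n'$ large.
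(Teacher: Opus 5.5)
There is a genuine gap, in the infiltration bound. Your skeleton matches the paper's. You apply Lemma~\ref{lem:middbloc-insidedomain-trunk-startblock} with $t=4$, using that $f$ is increasing in $t$. You then subtract what infiltrating $InsideDomains$ sources can burn, using that all of them enter $C$ through $v_{m^2}$, so their burned sets on $trunk'(C)$ are nested.

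First, an arithmetic slip: $f(4)=16+16-8h+8cn'-5=8cn'-8h+27$, not $8cn'-8h+43$. With the bound $2(|EndBlock|-3)=2(cn'-h-1)$ from Observation~\ref{obs:max-vertices-in-C-burned-by-insidedomain}, you only get $6cn'-6h+29$. This is below $6cn'$ as soon as $h\ge 5$, and here $h=\log(cn')+2\ge 7$.

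Neither of your rescues works:
\begin{itemize}
\item Your second route relies on ``slack $2cn'$'', but that slack is exactly what the $2(cn'-h-1)$ infiltration bound consumes. Nothing is left to absorb the $-6h$ term, and letting $n'$ grow only makes $h$ larger.
\item Your first route uses the same recovery bound. It only guarantees each infiltrating source a net loss of $1+2(cn'+3-pos)-2(cn'-h-1)=1+2(h+4-pos)=O(h)$, far from $\Theta(cn')$.
\end{itemize}

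The missing idea is that Observation~\ref{obs:max-vertices-in-C-burned-by-insidedomain} is calibrated for a source that may be placed at step~$1$. A source of $MiddleBlock$ is placed no earlier than step $s+1$, so it spreads for at most $cn'+2$ steps. Meanwhile every vertex of $InsideDomains$ is at distance at least $\lceil n'/4\rceil+cn'/2+3$ from $v_{m^2}$, attained at $x_b$ and $y_b$.

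Hence such a source has at most $cn'/2-\lceil n'/4\rceil-1$ steps left inside $C$. It burns at most $cn'-2\lceil n'/4\rceil-1\le cn'-n'/2$ vertices of $trunk'(C)$. By nestedness, this also bounds the union over all infiltrating $InsideDomains$ sources of $MiddleBlock\cup EndBlock$. This is roughly half of the bound you used, and it closes the arithmetic:
\[
8cn'-8h+27-\Bigl(cn'-\frac{n'}{2}\Bigr)=6cn'+\Bigl(cn'+\frac{n'}{2}-8h+27\Bigr)\ge 6cn',
\]
since $h=\log(cn')+2$ is only logarithmic in $n'$. This is precisely the paper's argument, which takes $MiddleBlock[1]$ placed at $x_b$ as the worst case for infiltration.
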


\begin{proof}
Suppose that at least four sources from $MiddleBlock$ are placed at vertices in the $InsideDomains$. By applying Lemma~\ref{lem:middbloc-insidedomain-trunk-startblock} with $t=4$, the number of vertices in $trunk'(C)$ will be at least $t^2+4t-2ht+2tcn'-5 = 8cn'-8h+27$. However, in Lemma~\ref{lem:middbloc-insidedomain-trunk-startblock} it is assumed that the fire from no source in $MiddleBlock \cap InsideDomains$ will infiltrate $C$. This need not be the case. The fire from a source in $MiddleBlock \cap InsideDomains$ may infiltrate $C$ and burn some vertices of $trunk(C)$. This infiltration is maximized when $MiddleBlock[1]$ is placed at $x_b$ or $y_b$, the vertices of $InsideDomains$ closest to $C$. Without loss of generality, assume $MiddleBlock[1]$ is placed at $x_b$. Then the number of vertices of $trunk'(C)$ burned by $MiddleBlock[1]$ is $2((cn'+3) - Steps[x_b,z_b]) = 2((cn'+3) - (3 + \lceil n'/4 \rceil + cn'/2)) = cn' - n'/2$. Hence, the number of vertices of $trunk'(C)$ that must be burned by sources in $StartBlock$ is at least $ 8cn'-8h+27 - (cn' - n'/2) = 7cn' + n'/2 - 8h + 27$. 

Finally, note that $7cn' + n'/2 - 8h + 27 = 6cn' + (cn' + n'/2 - 8h + 27)$, and  $cn' + n'/2 - 8h + 27\geq 0$,  for $4 \leq c < 8$ and $n' \ge 6$. Therefore, if $|MiddleBlock \cap InsideDomains| \geq 4$, then at least $6cn'$ vertices of $trunk'(C)$ must be burned by sources in $StartBlock$.
\end{proof}
\begin{lemma}
\label{lem:four-insidedoamins-middleblock-two-outsidedomains-startblock}
Let $B'$ be a burning sequence of $H$ with $|B'| = s + cn' + 3$, where $0 \le s \le k'$.  If four or more sources in $MiddleBlock$ are placed at vertices in $InsideDomains$, then at least two sources in $StartBlock$ belong to $OutsideDomains$.
\end{lemma}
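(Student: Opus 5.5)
The proof starts from Lemma~\ref{lem:num-vertices-in-C-burned-by-startblock}. If at least four sources of $MiddleBlock$ lie in $InsideDomains$, then at least $6cn'$ vertices of $trunk'(C)$ must be burned by sources in $StartBlock$. The plan is to bound how many vertices of $trunk'(C)$ a single $StartBlock$ source can burn, according to where it sits, and conclude that one source in $OutsideDomains$ together with any number of sources in $InsideDomains$ cannot reach $6cn'$.

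First, take a source of $StartBlock$ lying in $InsideDomains$. Its fire can enter $C$ only through $v_{m^2}$, after traversing $P_z$. By the computation in Observation~\ref{obs:fire-from-h-to-c}, the source needs at least $Steps[x_b,z_b]=3+\lceil n'/4\rceil+cn'/2$ steps to get there. It has at most $|B'|=s+cn'+3\le k'+cn'+3$ steps in total. Inside $trunk'(C)$ the fire spreads along a cycle, so it burns at most two new vertices per step. Hence it burns at most $2\bigl(k'+cn'/2-\lceil n'/4\rceil\bigr)$ vertices. As in Observation~\ref{obs:max-vertices-in-C-burned-by-insidedomain}, all fire from $InsideDomains$ enters through the single vertex $v_{m^2}$, so only the earliest such source matters and the $InsideDomains$ sources together contribute at most this amount. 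Using $k'\le\lceil 2n'/3\rceil$, this is at most $cn'+\tfrac{5}{6}n'+O(1)$, which is below $2cn'$ since $c\ge4$.

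Second, a source of $StartBlock$ in $OutsideDomains$ (in $z$, $P_z$ or $C$) at position $i\ge1$ has at most $|B'|-i\le k'+cn'+2$ steps of spread. On the cycle it therefore burns at most $1+2(k'+cn'+2)\le 2cn'+\tfrac{4}{3}n'+O(1)$ vertices, which is below $3cn'$ for $c\ge4$ and $n'$ large.

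Finally, suppose at most one $StartBlock$ source lies in $OutsideDomains$. Then the total number of vertices of $trunk'(C)$ burned by $StartBlock$ is less than $2cn'+3cn'=5cn'<6cn'$, contradicting Lemma~\ref{lem:num-vertices-in-C-burned-by-startblock}. So at least two $StartBlock$ sources lie in $OutsideDomains$. I expect the main obstacle to be the careful accounting of the constants: the $O(1)$ terms, the ceilings, and FixedOverlap must fit under the $6cn'$ margin, which needs $n'$ sufficiently large as in Observation~\ref{obs:inequality-for-two-observation}. A secondary point is justifying that fire from several $InsideDomains$ sources cannot add up, which follows from the single entry point $v_{m^2}$.
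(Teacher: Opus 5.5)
Your proposal is correct and follows essentially the same route as the paper. Both arguments take the $6cn'$ lower bound from Lemma~\ref{lem:num-vertices-in-C-burned-by-startblock}, bound the total contribution of $StartBlock \cap InsideDomains$ by less than $2cn'$ using the single entry point $v_{m^2}$, and bound a single $OutsideDomains$ source by $1+2(k'+cn'+2)$ vertices. The differences are cosmetic: you compare that last bound against $3cn'$ (the threshold the paper uses later in Lemma~\ref{lem:fivecn-implies-two-outside}) rather than $4cn'$, and you make explicit that fires from several $InsideDomains$ sources are nested, which the paper leaves implicit.
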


\begin{proof}
Suppose four sources in $MiddleBlock$ belong to $InsideDomains$. By Lemma~\ref{lem:num-vertices-in-C-burned-by-startblock}, at least $6cn'$ vertices of $trunk'(C)$ must be burned by sources in $StartBlock$.
If a source in $StartBlock \cap InsideDomains$ infiltrates $C$, then by 
Observation~\ref{obs:max-vertices-in-C-burned-by-insidedomain}, it can burn fewer than $2cn'$ vertices of $trunk'(C)$. Thus, at least $4cn'$ vertices must be burned by sources in
$StartBlock \cap OutsideDomains$. Now consider a source $u \in StartBlock \cap OutsideDomains$ placed at time step $t \le s$. The fire from $u$ has at most $k' + cn' + 2$ steps to propagate, and therefore burns
at most
\[
1 + 2(k' + cn' + 2)
\]
vertices of $trunk'(C)$. Since $k' \le \lceil 2n'/3 \rceil$ and $c \ge 4$, we have
\[
1 + 2(k' + cn' + 2) < 4cn'.
\]
Thus, no single source in $StartBlock \cap OutsideDomains$ can burn $4cn'$ vertices of $trunk'(C)$. It follows that at least two sources in $StartBlock$ must belong to $OutsideDomains$. If no source in $StartBlock \cap InsideDomains$ infiltrates $C$, then all $6cn'$ vertices must be burned by sources in $StartBlock \cap OutsideDomains$, and therefore the same lower bound follows immediately.
\end{proof}
\begin{observation}
\label{obs:cycle-path-source-burns-exclusive-vertex}
Let $G_t$ be a cycle or a path with $(t-1)^2 < \vert V(G_t) \vert < t^2$. Let $B = (v_1, v_2, \dotsc, v_t)$ be an optimal burning sequence for $G_t$. Let $t^2 - \vert V(G_t) \vert < 1+2d, 0 \leq d \leq (t-1)$, then $1 \leq j \leq t-d$, $v_j \in B$  burns at least one vertex exclusively. 
\end{observation}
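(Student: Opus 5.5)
We argue by a counting (capacity) argument, using only the fact that a ball of radius $r$ in a path or a cycle contains at most $1+2r$ vertices. First we fix the setting. Since $(t-1)^2<|V(G_t)|<t^2$, Theorem~\ref{thm:path-cycle-burning} gives $b(G_t)=t$, so $B=(v_1,\dots,v_t)$ has length exactly $t$. Source $v_i$ spreads for $t-i$ steps after it is placed. Hence the set of vertices whose burning $v_i$ can be responsible for is contained in the ball $N_{t-i}[v_i]$, which has at most $1+2(t-i)$ vertices in a path or a cycle. Summing these capacities gives
\[
\sum_{i=1}^{t}\bigl(1+2(t-i)\bigr)=t^2 .
\]

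Next we fix an index $j$ with $1\le j\le t-d$ and suppose, for contradiction, that $v_j$ burns no vertex exclusively. We claim that every vertex of $G_t$ is then burned by some source other than $v_j$. Take any vertex $w$. If $v_j\notin S_w$, then some other source is responsible for $w$ by definition. If $v_j\in S_w$, then $|S_w|\ge 2$, since $w$ is not burned uniquely, so again another source is responsible for $w$. Because each source $v_i$ is responsible only for vertices in $N_{t-i}[v_i]$, the balls of the sources $v_i$ with $i\ne j$ cover $V(G_t)$.

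This covering yields the bound
\[
|V(G_t)|\le \sum_{i\ne j}\bigl(1+2(t-i)\bigr)=t^2-\bigl(1+2(t-j)\bigr).
\]
Since $j\le t-d$, we have $1+2(t-j)\ge 1+2d$. The hypothesis states $t^2-|V(G_t)|<1+2d$. Combining these gives
\[
|V(G_t)|>t^2-(1+2d)\ge t^2-\bigl(1+2(t-j)\bigr),
\]
which contradicts the previous display. Hence each $v_j$ with $j\le t-d$ burns at least one vertex exclusively.

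The main obstacle is the subtle step where we pass from ``no exclusive vertex'' to ``$V(G_t)$ is covered by the balls of the other sources.'' The responsibility notion $S_w$ is defined through the exact burning time, so it must be checked that a vertex burned earlier by another source is still accounted for. The case split $v_j\notin S_w$ versus $v_j\in S_w$ in the second paragraph handles this. Truncated balls near the ends of a path only decrease the counts, so they do not affect the upper bound.
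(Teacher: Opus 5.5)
Your proof is correct and uses the same capacity count as the paper: the balls $N_{t-i}[v_i]$ have at most $1+2(t-i)$ vertices each, these bounds sum to $t^2$, and $v_j$'s share $1+2(t-j)\ge 1+2d$ exceeds the slack $t^2-|V(G_t)|$. Your formulation is actually more careful than the paper's: if $v_j$ had no exclusive vertex, the other $t-1$ balls would cover $V(G_t)$, forcing $|V(G_t)|\le t^2-(1+2(t-j))$. The paper instead asserts that $v_j$ burns \emph{at least} $1+2(t-j)$ vertices, which is unjustified since a source can burn fewer (e.g.\ near a path endpoint or under overlap), and your version does not need that claim.
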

\begin{proof}
Since $|V(G_t)| > (t-1)^2$, by Theorem~\ref{thm:path-cycle-burning}, $b(G_t) = t$. For each $1 \le i \le t$, the source $v_i$ can burn at most $1 + 2(t-i)$ vertices in $G_t$. Summing over all sources yields
\(
\sum_{i=1}^{t} \bigl(1 + 2(t-i)\bigr) = t^2.
\)
Hence, at most $t^2 - |V(G_t)|$ vertices are burned more than once. Now consider any source $v_j$ with $1 \le j \le t-d$. It can burn  at least
\(
1 + 2(t-j) \;\ge\; 1 + 2d
\) vertices. Since $t^2 - |V(G_t)| < 1 + 2d$, the total number of vertices that can be burned by multiple sources is strictly smaller than the number of vertices $v_j$ can burn. Therefore, not all vertices burned by $v_j$ can be burned by other sources, and hence $v_j$ burns at least one vertex exclusively.
\end{proof}

Now we are ready to prove the backward direction of the reduction. 
\begin{lemma}
\label{lem:b-geq-threshold}
$b(H) \ge k' + cn' + 3$.
\end{lemma}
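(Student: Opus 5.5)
We argue by contradiction. Suppose $H$ has a burning sequence $B'$ with $|B'| = s + cn' + 3$ and $s < k'$; padding with extra sources does no harm, so we may take $|B'|$ to be exactly this value with $0 \le s \le k'-1$. We then partition $B'$ into \textit{StartBlock}, \textit{MiddleBlock} (of length $h+1$) and \textit{EndBlock} (of length $l_1+l_2$) as above. The aim is to extract from \textit{StartBlock} a vertex cover of $G'$ of size at most $s<k'$, which contradicts $\beta(G')=k'$.

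\textbf{Step 1: \textit{StartBlock} contains no source in $OutsideDomains$.} Suppose first that \textit{StartBlock} contains $r \ge 1$ sources in $OutsideDomains$. Each such source is not in any domain, so it contributes to no edge being represented. By Lemma~\ref{lem:Huv} and Lemma~\ref{lem:src-out-of-dom-cannot-burn}, every edge of $G'$ needs a source in the domain of one of its endpoints, coming from \textit{StartBlock} or from \textit{MiddleBlock}$\cup$\textit{EndBlock}. The sources of \textit{StartBlock} lying in $InsideDomains$ number at most $s-r \le k'-2$, so they cannot cover $G'$, and some edge is unrepresented. By Lemma~\ref{lem:endblock-atmost-2-btp}, at most two edges are unrepresented. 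The plan is a case split on the number of unrepresented edges.

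If there are two, Lemma~\ref{lem:at-least-four-in-middleblock} puts at least four \textit{MiddleBlock} sources in $InsideDomains$, and Lemma~\ref{lem:four-insidedoamins-middleblock-two-outsidedomains-startblock} then gives $r\ge 2$. Now the inside sources of \textit{StartBlock}, at most $s-2 \le k'-3$ of them, together with one endpoint from each of the two unrepresented edges, form a vertex cover of size at most $k'-1$. This is a contradiction.

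If there is exactly one, covering it with one extra vertex gives a cover of size $(s-r)+1 \le k'-1$ whenever $r\ge 1$, again a contradiction. Hence, in all cases, $r=0$.

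\textbf{Step 2: the case $r=0$.} With $r=0$, Lemma~\ref{lem:num-sources-for-C} forces every source in \textit{MiddleBlock}$\cup$\textit{EndBlock} to lie in $OutsideDomains$. By Lemma~\ref{lem:src-out-of-dom-cannot-burn}, no edge can then be burned by these sources, so every edge is represented by \textit{StartBlock}. Mapping each source of \textit{StartBlock} to its owner yields a vertex cover of $G'$ of size at most $s<k'$, which is the desired contradiction.

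\textbf{Main obstacle.} I expect the hardest point to be Step 1 when there are one or two unrepresented edges but $r$ is small, namely $r=1$ with two unrepresented edges. There one must use Lemma~\ref{lem:four-insidedoamins-middleblock-two-outsidedomains-startblock} carefully to rule this out, and also check that an unrepresented gadget really can only be burned by \textit{MiddleBlock}$\cup$\textit{EndBlock}, so that Lemma~\ref{lem:at-least-four-in-middleblock} applies. For a single unrepresented edge we additionally need Observation~\ref{obs:cycle-path-source-burns-exclusive-vertex}. Some \textit{MiddleBlock} source then lies inside a domain, and the observation shows that losing its contribution to $trunk'(C)$ must be compensated by \textit{StartBlock}. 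This compensation forces $r\ge1$, which closes the counting above.
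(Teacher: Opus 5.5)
\textbf{There is a gap in Step~2.}

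Your Step~1 is fine and is essentially the closing part of the paper's proof. Two unrepresented edges force four \textit{MiddleBlock} sources into $InsideDomains$, hence two \textit{StartBlock} sources into $OutsideDomains$, hence a vertex cover of size below $k'$.

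The gap is in Step~2, which is where the real difficulty of the lemma sits: all of \textit{StartBlock} lies inside domains and exactly one edge is unrepresented. You settle this case by citing Lemma~\ref{lem:num-sources-for-C}. The paper's proof of this statement does not use that lemma, and the lemma's argument does not give what you need. Its last step, ``a path with more than $(cn'+2)^2$ vertices needs $cn'+3$ sources'', treats the sources of \textit{MiddleBlock}$\circ$\textit{EndBlock} as interchangeable. They are not, because their radii are fixed:
\begin{itemize}
\item the source at position $i$ covers at most $2(cn'+3-i)+1$ vertices of $trunk'(C)$;
\item the uncovered part satisfies $|V(P)|\ge (cn'+3)^2-\bigl(2(cn'-h)+3\bigr)$, so the total capacity $(cn'+3)^2$ exceeds what $P$ needs by up to $2(cn'-h)+3$.
\end{itemize}
A late \textit{EndBlock} source covers far fewer vertices than this slack, so it can be spent inside a domain with no contradiction.

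In fact, for $s=k'$ the conclusion of Lemma~\ref{lem:num-sources-for-C} is false. Take the sequence of Lemma~\ref{lem:b-leq-threshold} with $k'\ge n'/2+2$. The fire from the first source ($x$ or $y$) reaches $v_1$ by the last step. So the final source can instead be placed on a still-unburned tip of $BTP_{uv}$, where $u$ is the last \textit{StartBlock} vertex and $v\notin Q'$. That proof never uses $s<k'$, so it gives no support in your case either.

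\textbf{How to repair it.} The count does prove a weaker statement, which is what the paper extracts from Observation~\ref{obs:cycle-path-source-burns-exclusive-vertex} with $d=cn'-h+2$. Dropping a source at position $i\le h+1$ loses at least $2(cn'-h)+5$ vertices of capacity, which is more than the slack. So when $r=0$, every \textit{MiddleBlock} source lies in $OutsideDomains$; nothing follows about \textit{EndBlock}.

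This weaker statement is enough to finish Step~2:
\begin{itemize}
\item By Lemma~\ref{lem:src-out-of-dom-cannot-burn}, the $2^h=4cn'$ tips of an unrepresented edge could then be burned only by \textit{EndBlock} sources.
\item Lemma~\ref{lem:max-leaves-in-endblock} caps what \textit{EndBlock} burns at $l_1+l_2<4cn'$ tips.
\item Hence every edge is represented, and the owners form a cover of size at most $s<k'$, a contradiction.
\end{itemize}
The paper runs the same estimate in the other direction. The unrepresented gadget needs a \textit{MiddleBlock} source inside a domain, so some \textit{StartBlock} source must lie outside, which lands you in your Step~1 situation.

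Your ``Main obstacle'' paragraph points at exactly this argument. As written, though, Step~2 rests on the unsupported lemma rather than on it.
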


\begin{proof}
Assume for contradiction that $b(H) < k' + cn' + 3$. Let $B'$ be a burning sequence of $H$ with $|B'| = (k'-1) + cn' + 3$. Then $|StartBlock| = k'-1$, $|MiddleBlock| = h+1$, and $|EndBlock| = cn' - h + 2$.
Since $k'$ is the vertex covering number of $G'$, there exists an edge $(e,f)\in E(G')$ that is \emph{unrepresented} in $StartBlock$. By Lemma~\ref{lem:src-out-of-dom-cannot-burn}, no source in $StartBlock$ can burn any vertex of $Tips_{ef}$ within $|B'|$ steps. Hence, the gadget $BTP_{ef}$ must be burned using sources from $MiddleBlock \cup EndBlock$. By Lemma~\ref{lem:max-leaves-in-endblock}, $EndBlock$ alone cannot burn all vertices of $Tips_{ef}$, so at least one source in $MiddleBlock$ must burn some vertex of $Tips_{ef}$. Such a source must lie in $InsideDomains$, since fire from a source in $OutsideDomains$ cannot reach $Tips_{ef}$ within  number of available steps for burning {(by Observation~\ref{obs:fire-from-h-to-c}, a source in $OutsideDomains$ reaches $e$ or $f$ only after $\lceil 2n'/3\rceil+h+1$ steps, and by Observation~\ref{obs:btp-complete-burning}, additional $h+1+l_1+l_2$ steps are required to reach any $tip_{ef}$ resulting in number of steps greater than the number of available steps)}.
Suppose first that all sources in $StartBlock$ lie in $InsideDomains$. By Observation~\ref{obs:max-vertices-in-C-burned-by-insidedomain}, these sources burn at most
\[
2(|EndBlock|-3) = 2(cn'-h-1)
\]
vertices of $trunk'(C)$. Therefore, at least
\[
(cn'+3)^2 - FixedOverlap - 2(cn'-h-1) > (cn'+2)^2
\]
vertices of $trunk'(C)$ must be burned by $MiddleBlock \cup EndBlock$. Since fire from 
$StartBlock \cap InsideDomains$ can enter the $C$-gadget only through $v_{m^2}$, the vertices of $trunk'(C)$ burned by these sources form a contiguous subpath.

Let $P$ be the complementary subpath of the unburned vertices in $trunk'(C)$ with
\[(cn'+3)^2 - FixedOverlap - 2(cn'-h-1) \leq \vert V(P) \vert.\]
Rearranging the terms, we get\[(cn'+3)^2 - \vert V(P) \vert \leq FixedOverlap + 2(cn'-h-1)\]\[\implies(cn'+3)^2 - \vert V(P) \vert \leq  1 + 2(cn'-h+1).\] 
Let $d=cn'-h+2$. Then\[(cn'+3)^2-|V(P)| < 1+2d .\]

Since fire from no source in $InsideDomains$ can reach the vertices of $P$, the burning of $P$ is entirely done by sources in $MiddleBlock \cup EndBlock$. Hence, by Observation~\ref{obs:cycle-path-source-burns-exclusive-vertex}, the first $(cn'+3)-d = h+1$ sources in $MiddleBlock \cup EndBlock$ each burn at least one vertex of $P$ exclusively. As $|MiddleBlock| = h+1$, all sources in $MiddleBlock$ are among these, and thus each burns at least one vertex of $P$ exclusively. Suppose $u$ is a vertex burned exclusively by the source placed at position $pos$ in $MiddleBlock$. If the source at $pos$ is placed at a vertex $v \in InsideDomains$, then $u$ would remain unburned as no other source in $MiddleBlock \cup EndBlock$ burns $u$, and the fire initiated at $v$ cannot reach $u$ within the available number of steps, by the definition of $P$ and the fact that $v \in InsideDomains$.

Hence, this vertex, $u$, must be burned by a source in $StartBlock \cap OutsideDomains$. Therefore, at least one source in $StartBlock$ lies in $OutsideDomains$, which implies the existence of another unrepresented edge $(a,b)\neq(e,f)$. Consequently, both $BTP_{ef}$ and $BTP_{ab}$  must be burned using $MiddleBlock \cup EndBlock$. By Lemma~\ref{lem:at-least-four-in-middleblock} and Lemma~\ref{lem:four-insidedoamins-middleblock-two-outsidedomains-startblock}, this requires at least two sources in $StartBlock \cap OutsideDomains$, implying at least three unrepresented edges. This contradicts Lemma~\ref{lem:endblock-atmost-2-btp}. Therefore, $b(H) \ge k' + cn' + 3$.
\end{proof}

Now, by Lemma~\ref{lem:b-leq-threshold} and Lemma~\ref{lem:b-geq-threshold}, we have the following lemma.

\begin{lemma}
\label{lem:b-equals-threshold}
$b(H)=k'+cn'+3$.
\end{lemma}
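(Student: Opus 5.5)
The statement $b(H)=k'+cn'+3$ follows by combining the two inequalities already proved, so the plan is essentially bookkeeping.

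\emph{Upper bound.} Lemma~\ref{lem:b-leq-threshold} gives $b(H)\le k'+cn'+3$. It constructs the explicit sequence that places a source first at $x$ or $y$ (whichever lies in the minimum vertex cover $Q'$), then at the remaining vertices of $Q'$, and finally the $cn'+3$ sources of the optimal burning of the $C$-gadget from Lemma~\ref{lem:C-gadget-burning}.

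\emph{Lower bound.} Lemma~\ref{lem:b-geq-threshold} gives $b(H)\ge k'+cn'+3$. Its argument assumes a shorter burning sequence, so that $|StartBlock|=k'-1$ and some edge of $G'$ is unrepresented. This forces sources of $MiddleBlock$ into $InsideDomains$. The counting on $trunk'(C)$ (Observation~\ref{obs:cycle-path-source-burns-exclusive-vertex}) then forces a source of $StartBlock$ into $OutsideDomains$, and hence a second unrepresented edge. Lemmas~\ref{lem:at-least-four-in-middleblock} and~\ref{lem:four-insidedoamins-middleblock-two-outsidedomains-startblock} then escalate this to a third unrepresented edge, contradicting Lemma~\ref{lem:endblock-atmost-2-btp}.

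\emph{Applicability.} Before combining the bounds, I will check that both lemmas apply to the same graph $H$ and the same parameters: $k'=\beta(G')=\beta(G)+1$ by Lemma~\ref{lem:G'vertexcover}, $c$ is chosen via Lemma~\ref{lem:choose-c}, and $h=\log(cn')+2$. The one caveat is that the lower-bound argument uses Observation~\ref{obs:inequality-for-two-observation}, which holds only for $n'$ sufficiently large. I would therefore state the equality for all sufficiently large $n$, noting that the finitely many small instances of Minimum Vertex Cover on cubic graphs can be solved by brute force, so the reduction is unaffected. This size restriction is the only real obstacle, and it is routine.

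Combining the two inequalities gives $b(H)=k'+cn'+3$.
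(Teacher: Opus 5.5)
Your proposal is correct and matches the paper, which obtains this lemma simply by combining Lemma~\ref{lem:b-leq-threshold} and Lemma~\ref{lem:b-geq-threshold}. Your restriction to large $n$ is harmless but unnecessary. Since $c\ge 4$ gives $n'\le cn'/4$, and $cn'\ge 32$ for every admissible $n'\ge 6$, one has $\frac{cn'}{2}-\frac{5n'}{12}\ge \frac{19cn'}{48}>4+\log(cn')$; hence the inequality of Observation~\ref{obs:inequality-for-two-observation}, and with it the equality, holds for all instances.
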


Now we restate the Theorem~\ref{thm:cubicNPC} and prove it.

\cubicthm*

\begin{proof}
By Lemma~\ref{lem:b-equals-threshold}, $b(H)=k'+cn'+3 = k+cn'+4$ if and only if $G$ admits a vertex cover of size $k$. Since \MVC\ is \NPC\ on connected cubic graphs, so is \BNP.
\end{proof}

Now we restate the Theorem~\ref{thm:cubicAPX} here.

\apxcubicthm*

\begin{proof}
It is known that \MVC\ on cubic graphs is APX-hard (Theorem~3.1 of~\cite{AlimontiKann2000}). We give an $L$-reduction from \MVC\ to \BNP.

Let $G$ be a connected cubic graph with $n=|V(G)|$ and $k=\beta(G)$. By Observation~\ref{obs:vc-bounds-cubic-graph}, $k \ge n/2$. Let $H$ be the graph obtained from $G$ by Construction~\ref{cons:H-combined}.
Then $n' = n+2$ and $k' = k+1$. By Lemma~\ref{lem:b-equals-threshold}, $b(H)=k'+cn'+3 = k + cn' + 4$.
Thus $b(H)$ is linear in $k$, yielding an $L$-reduction. Assume there exists a $(1+\epsilon)$-approximation algorithm $A$ for \BNP. Applied to $H$, it returns $b^*(H) \le (1+\epsilon)(k+cn'+4)$.
Define $k^* = b^*(H) - (cn'+4)$. Then
\begin{align*}
k^*
&\le (1+\epsilon)(k+cn'+4) - (cn'+4) \\
&= k + \epsilon(k+cn'+4),
\end{align*}
and therefore
\begin{align*}
\frac{k^*}{k}
&\le 1 + \epsilon + \frac{\epsilon(cn'+4)}{k} \\
&\le 1 + \epsilon + \frac{\epsilon(c(n+2)+4)}{n/2}.
\end{align*}

Since $4 \le c < 8$,
\[
\frac{c(n+2)+4}{n/2} = \frac{2(c(n+2)+4)}{n} = 2c + \frac{4c+8}{n} \le 16 + \frac{40}{n}.
\]
Hence
\[
\frac{k^*}{k} \le 1 + \epsilon\!\left(1 + 16 + \frac{40}{n}\right)
= 1 + \epsilon\!\left(17 + \frac{40}{n}\right).
\]

Since $n\ge 4$ (the smallest cubic graph is $K_4$, therefore, $n \geq 4$), we have
\[
17 + \frac{40}{4} \le 27.
\]
Hence
\[
\frac{k^*}{k}
\le
1 + 27\epsilon.
\]

Thus, setting $\delta = 27\epsilon$, we obtain $k^* \le (1+\delta)k$. Therefore a PTAS for \BNP\ would imply a PTAS for \MVC, which contradicts the APX-hardness of \MVC\ on cubic graphs.
\end{proof}

%% file: figs/k4.tex
\tikzstyle{filled vertex}  = [{circle,draw=blue,fill=black!50,inner sep=1.5pt}]  % regular vertex
\tikzstyle{empty vertex}  = [{circle, draw, fill = white, inner sep=0.5pt}]
  \centering \small
   \resizebox{0.3\linewidth}{!}{
\begin{tikzpicture}[scale=.25] %6
      \def\n{3}
      \def\radius{2}      
      \coordinate (v0) at ({90 + 180 / \n}:\radius) {};
      \foreach \i in {1,..., \n} {
        \pgfmathsetmacro{\angle}{90 - (\i - .5) * (360 / \n)}
        \node[empty vertex] (v\i) at (\angle:-\radius) {};

   }
      \node[empty vertex] (c) at  (0,0){};

      \foreach \i in {1,2,3} 
        \draw  (c) -- (v\i);
      \foreach \i in {3,2}
        \draw  let \n1 = {int(\i - 1)} in (v\n1) -- (v\i);
       \foreach \i in {3}
        \draw  let \n1 = {int(\i - 2)} in (v\n1) -- (v\i);

         \node[empty vertex, label={[xshift=-4pt,yshift=-5pt]:\tiny{$p$}}] at (v1) {};
         \node[empty vertex, label={[xshift=-4pt,yshift=-5pt]:\tiny{$q$}}] at (v2) {};
        \node[empty vertex, label={[xshift=4pt,yshift=-5pt]:\tiny{$a$}}] at (v3) {};
        \node[empty vertex, label={[yshift=-10pt]:\tiny{$b$}}] at (c) {};               
        
    \end{tikzpicture}
    }

%% file: figs/k4_.tex
\tikzstyle{filled vertex}  = [{circle,draw=blue,fill=black!50,inner sep=1.5pt}]  % regular vertex
\tikzstyle{empty vertex}  = [{circle, draw, fill = white, inner sep=0.5pt}]
  \centering \small
  \resizebox{0.3\linewidth}{!}{
\begin{tikzpicture}[scale=.25] %6
      \def\n{3}
      \def\radius{2}      
      \coordinate (v0) at ({90 + 180 / \n}:\radius) {};
      \foreach \i in {1,..., \n} {
        \pgfmathsetmacro{\angle}{90 - (\i - .5) * (360 / \n)}
        \node[empty vertex] (v\i) at (\angle:-\radius) {};

   }
      \node[empty vertex] (c) at  (0,0){};
      \node[empty vertex, label={[xshift=-4pt, yshift=-5pt]:\tiny{$y$}}] (y) at  (-0.65,0.9){};
      \node[empty vertex, label={[xshift=-5pt, yshift=-5pt]:\tiny{$x$}}] (x) at  (-1.15,0){};   
      \foreach \i in {1,2,3} 
        \draw (c) -- (v\i);
      \foreach \i in {3}
        \draw let \n1 = {int(\i - 1)} in (v\n1) -- (v\i);
       \foreach \i in {3}
        \draw let \n1 = {int(\i - 2)} in (v\n1) -- (v\i);
        
        \draw (v1) -- (x)--(y) -- (v2);

        \node[empty vertex, label={[xshift=-4pt,yshift=-5pt]:\tiny{$p$}}] at (v1) {};
        \node[empty vertex, label={[xshift=-4pt,yshift=-5pt]:\tiny{$q$}}] at (v2) {};
        \node[empty vertex, label={[xshift=4pt,yshift=-5pt]:\tiny{$a$}}] at (v3) {};  
        \node[empty vertex, label={[yshift=-10pt]:\tiny{$b$}}] at (c) {};       
        
    \end{tikzpicture}
    }

%% file: figs/t-gadget-vertical.tex
\tikzstyle{filled vertex}  = [{circle,draw=blue,fill=black!50,inner sep=1.5pt}]  % regular vertex
\tikzstyle{empty vertex}  = [{circle, draw, fill = white, inner sep=1.25pt}]
\centering \small
%\resizebox{}{!}
{\begin{tikzpicture}[scale=.5] %6
  {
    \def\n{4}
    \node[empty vertex, label=below:\small{$q$}] (v1) at  (2.5,0){};  
    
	\node[filled vertex, label={[xshift=-5.5mm,yshift=-4mm]:\small{$f^1_{qp}$}}] (v2) at  (2,2){}; 
 	\node[filled vertex, label=right:\small{$f^2_{qp}$}] (v2') at  (3,2){}; 
    \draw [line width=0.5mm] (v1) --(v2);
    \draw [line width=0.5mm] (v1) --(v2');
    \draw [line width=0.5mm] (v2) --(v2');

	\node[filled vertex] (v3) at  (2,3){}; 
 	\node[filled vertex] (v3') at  (3,3){};
    \draw [line width=0.5mm] (v3) --(v3');
    \draw [line width=0.5mm] (v2) --(v3);
    \draw [line width=0.5mm] (v2') --(v3');    

	\node[filled vertex] (v4) at  (2,4){}; 
 	\node[filled vertex] (v4') at  (3,4){};
    \draw [line width=0.5mm] (v4) --(v4');
    \draw [line width=0.5mm] (v3) --(v4);
    \draw [line width=0.5mm] (v3') --(v4'); 

	\node[filled vertex] (v5) at  (2,5){}; 
 	\node[filled vertex] (v5') at  (3,5){};
    \draw[line width=0.5mm] (v5) --(v5');
    \draw [dotted, line width=0.5mm](v4) --(v5);
    \draw [dotted, line width=0.5mm] (v4') --(v5');    
    
	\node[filled vertex, label={[xshift=-5.5mm,yshift=-4mm]:\small{$j_{qp}$}}] (v6) at  (2,6){}; 
    
 	\node[filled vertex] (v6') at  (3,6){};
    \node[ label=below:\small{$jn_{qp}$}] at  (2.90,6.35){};
    \draw [line width=0.5mm] (v6) --(v6');
    \draw [line width=0.5mm] (v5) --(v6);
    \draw [line width=0.5mm] (v5') --(v6');     

	\node[filled vertex, label={[xshift=-5mm,yshift=-3mm]:\small{$j_{pq}$}}] (v7) at  (2,7){}; 
 	\node[filled vertex] (v7') at  (3,7){};
    \node[label=above:\small{$jn_{pq}$}]  at  (2.90,6.7){};
    \draw [line width=0.5mm] (v7) --(v7');
    \draw [line width=0.5mm] (v6) --(v7);

    \node[filled vertex] (v8) at  (2,8){}; 
 	\node[filled vertex] (v8') at  (3,8){};
    \draw [line width=0.5mm] (v8) --(v8');
    \draw [line width=0.5mm] (v7) --(v8);
    \draw [line width=0.5mm] (v7') --(v8');  

	\node[filled vertex] (v9) at  (2,9){}; 
 	\node[filled vertex] (v9') at  (3,9){};
    \draw [line width=0.5mm] (v9) --(v9');
    \draw [dotted, line width=0.5mm] (v8) --(v9);
    \draw [dotted, line width=0.5mm](v8') --(v9');   

	\node[filled vertex] (v10) at  (2,10){}; 
 	\node[filled vertex] (v10') at  (3,10){};
    \draw [line width=0.5mm] (v10) --(v10');
    \draw [line width=0.5mm] (v9) --(v10);
    \draw [line width=0.5mm] (v9') --(v10');   
    
	\node[filled vertex,label={[xshift=-5.5mm,yshift=-4mm]:\small{$f^1_{pq}$}}] (v11) at  (2,11){}; 
 	\node[filled vertex,label=right:\small{$f^2_{pq}$}] (v11') at  (3,11){};
    \draw [line width=0.5mm] (v11) --(v11');
    \draw [line width=0.5mm] (v10) --(v11);
    \draw [line width=0.5mm] (v10') --(v11');       
 
	\node[empty vertex, label=above:\small{$p$}] (v12) at  (2.5,13){}; 
    
    \draw [line width=0.5mm] (v11)--(v12);
    \draw [line width=0.5mm] (v11')--(v12);

	\node[empty vertex] (s1) at  (4,6){};
    \node[label=below:\small{$w_{qp}$}] at  (4.4,6){};
	\node[empty vertex] (s1') at (4,7){}; 
    \node[label=below:\small{$w_{pq}$}]  at (4.3,8.1){};
    \draw  [line width=0.5mm] (v6')--(s1);
    \draw [line width=0.5mm] (v7')--(s1');
    \draw [line width=0.5mm] (s1)--(s1');

	\node[empty vertex] (s2) at  (5,6){};
	\node[empty vertex] (s2') at (5,7){};    
    \draw [line width=0.5mm] (s1)--(s2);
    \draw [line width=0.5mm] (s1')--(s2');
    \draw [line width=0.5mm] (s2)--(s2');    

	\node[empty vertex] (s3) at  (6,6){};
	\node[empty vertex] (s3') at (6,7){};    
    \draw [dotted, line width=0.5mm](s2)--(s3);
    \draw [dotted, line width=0.5mm](s2')--(s3');
    \draw [line width=0.5mm] (s3)--(s3');    

	\node[empty vertex] (s4) at  (7,6){};
	\node[empty vertex] (s4') at (7,7){};    
    \draw [line width=0.5mm] (s3)--(s4);
    \draw [line width=0.5mm] (s3')--(s4');  
          
	\node[empty vertex,label={[xshift=5mm, yshift=-6mm]:\small{$tip_{qp}$}}] (s5) at  (8,6){};
	\node[empty vertex,label={[xshift=5mm, yshift=-2.5mm]:\small{$tip_{pq}$}}] (s5') at (8,7){};    
    
    \draw [line width=0.5mm] (s4)--(s5);
    \draw [line width=0.5mm] (s4)--(s5');
    \draw [line width=0.5mm] (s4')--(s5);
    \draw [line width=0.5mm] (s4')--(s5');    
    \draw [line width=0.5mm] (s5)--(s5');    

% decorations
	\draw[decoration={calligraphic brace, amplitude=8pt},decorate, line width=1pt] (1.80,7) -- (1.80,11);
    \node[label=left:\small{$l_1$}] at (1.50,9) {};

	\draw[decoration={calligraphic brace, amplitude=8pt},decorate, line width=1pt] (1.80,2) -- (1.80,6);
    \node[label=left:\small{$l_1$}] at (1.50,4) {};   

	\draw[decoration={calligraphic brace, amplitude=10pt},decorate, line width=1pt] (8,6) -- (4,6);
    \node[label=below:\small{$l_2$}] at (6,5.6) {}; 

% coveing rectangle
    \node (top-left) at (-0.5, 11.75){};
    \node (bottom-right) at (9.50,1.25) {};
    
    \node[rectangle, draw,dashed,inner sep=0pt, line width = 0.2mm][fit= (top-left) (bottom-right),  inner xsep=5ex, inner ysep=1ex] {}; 

% rectangles to mark half portions
    \node (h1-left) at (0, 11.35){};
    \node (h1-right) at (9, 7.15) {};
    
    \node[rectangle, draw,dashed,inner sep=0pt, line width = 0.2mm, color=red][fit= (h1-left) (h1-right),  inner xsep=5ex, inner ysep=1ex] {}; 

    \node (h2-left) at (0, 5.85){};
    \node (h2-right) at (9, 1.60) {};
    
    \node[rectangle, draw,dashed,inner sep=0pt, line width = 0.2mm, color=red][fit= (h2-left) (h2-right),  inner xsep=5ex, inner ysep=1ex] {};    

    % add labels for Thalf-pq and Thalf-qp

   \node[color=red, label= right:\footnotesize{\textcolor{red}{$T_{half-pq}$}}] (half-1) at  (7.5,11.25){};

    \node[color=red, label= right:\footnotesize{\textcolor{red}{$T_{half-qp}$}}] (half-2) at  (7.5,1.75){};
  }
\end{tikzpicture}}

%% file: figs/binary-tree-gadget.tex
\tikzstyle{filled vertex}  = [{circle,draw=blue,fill=black!50,inner sep=1.5pt}]  % regular vertex
\tikzstyle{empty vertex}  = [{circle, draw, fill = white, inner sep=1.75pt, line width=0.2mm}]
\centering 
%\resizebox{}{!}
{\begin{tikzpicture}[scale=.2] %6
    \node[empty vertex, label= right:\scriptsize{$\ r$}] (u_0_1) at  (18.5,99){};

    \node [empty vertex, label=right:\scriptsize{$a$}] (u_0) at (18.5, 102){};
    \draw [dotted, thick, line width=0.5mm] (u_0_1) -- (u_0);
    %level l+1
    %vertices of $B_{uv}$ starts
    \node[empty vertex] (u_1_1) at  (10.5,96){};  
   
   \draw[line width=0.5mm] (u_1_1) -- (u_0_1);				
   \node[empty vertex] (u_1_2) at  (26.5,96){};
   
    \draw[line width=0.5mm] (u_1_2) -- (u_0_1);
    \node[empty vertex] (u_2_1) at  (6.5,94){};
    \draw[line width=0.5mm] (u_2_1) -- (u_1_1);				
    \node[empty vertex] (u_2_2) at  (14.5,94){};
    \draw[line width=0.5mm] (u_2_2) -- (u_1_1);			
    \node[empty vertex] (u_2_3) at  (22.5,94){};
    \draw[line width=0.5mm] (u_2_3) -- (u_1_2);				
    \node[empty vertex] (u_2_4) at  (30.5,94){};
    \draw[line width=0.5mm] (u_2_4) -- (u_1_2);		

    \node[empty vertex] (u_3_1) at  (4.5,92){};
    \draw[line width=0.5mm] (u_3_1) -- (u_2_1);				
    \node[empty vertex] (u_3_2) at  (8.5,92){};
    \draw[line width=0.5mm] (u_3_2) -- (u_2_1);			
    \node[empty vertex] (u_3_3) at  (12.5,92){};
    \draw[line width=0.5mm] (u_3_3) -- (u_2_2);				
    \node[empty vertex] (u_3_4) at  (16.5,92){};
    \draw[line width=0.5mm] (u_3_4) -- (u_2_2);		
    \node[empty vertex] (u_3_5) at  (20.5,92){};
    \draw[line width=0.5mm] (u_3_5) -- (u_2_3);				
    \node[empty vertex] (u_3_6) at  (24.5,92){};
    \draw[line width=0.5mm] (u_3_6) -- (u_2_3);			
    \node[empty vertex] (u_3_7) at  (28.5,92){};
    \draw[line width=0.5mm] (u_3_7) -- (u_2_4);				
    \node[empty vertex] (u_3_8) at  (32.5,92){};
    \draw[line width=0.5mm] (u_3_8) -- (u_2_4);

   \node[empty vertex] (u_4_2) at  (4,88){};		
   \node[empty vertex] (u_4_3) at  (5,88){};				
    \node[empty vertex] (u_4_4) at  (6,88){};		
    \node[empty vertex] (u_4_5) at  (7,88){};			
    \node[empty vertex] (u_4_6) at  (8,88){};;			

    \node[empty vertex] (u_4_11) at  (13,88){};			
    \node[empty vertex] (u_4_12) at  (14,88){};			
    \node[empty vertex] (u_4_13) at  (15,88){};				
    \node[empty vertex] (u_4_14) at  (16,88){};	
    \node[empty vertex] (u_4_15) at  (17,88){};				
    \node[empty vertex] (u_4_16) at  (18,88){};	
    \draw[dotted,line width=0.5mm] (9.5, 88) -- (11.5, 88);
    
    \draw [dotted,line width=0.5mm] (4.5,89) -- (4.5,91.5);	
    \draw [dotted,line width=0.5mm] (8.5,89) -- (8.5,91.5);
    \draw [dotted,line width=0.5mm] (12.5,89) -- (12.5,91.5);
    \draw [dotted,line width=0.5mm] (16.5,89) -- (16.5,91.5);
    
    \draw[dotted,line width=0.5mm] (10,90) -- (12,90);				

    \node[empty vertex] (u_4_17) at  (19,88){};		
    \node[empty vertex] (u_4_18) at  (20,88){};		
    \node[empty vertex] (u_4_19) at  (21,88){};				
    \node[empty vertex] (u_4_20) at  (22,88){};		
    \node[empty vertex] (u_4_21) at  (23,88){};			
    \node[empty vertex] (u_4_22) at  (24,88){};;			

    \node[empty vertex] (u_4_27) at  (29,88){};			
    \node[empty vertex] (u_4_28) at  (30,88){};			
    \node[empty vertex] (u_4_29) at  (31,88){};				
    \node[empty vertex] (u_4_30) at  (32,88){};	
    \node[empty vertex] (u_4_31) at  (33,88){};				
    \draw[dotted,line width=0.5mm] (24.5, 88) -- (28.5, 88);

    \draw [dotted,line width=0.5mm] (20.5,89) -- (20.5,91.5);
    \draw [dotted,line width=0.5mm] (24.5,89) -- (24.5,91.5);
    \draw [dotted,line width=0.5mm] (28.5,89) -- (28.5,91.5);
    \draw [dotted,line width=0.5mm] (32.5,89) -- (32.5,91.5);

    \draw[dotted,line width=0.5mm] (25,90) -- (28,90);
 
   \draw[decoration={calligraphic brace, amplitude=6pt}, decorate, line width=0.5mm](3.75,88) -- (3.75,98) ;
   \node [label=left:\scriptsize{$h+1$}] (v3) at (3, 93) {};
 
 \draw[decoration={calligraphic brace, amplitude=14pt}, decorate, line width=0.5mm] (33.3,88) -- (3.5,88);     
 \node [label=below:\scriptsize{$2^h$ leaves}] (v1) at (18.5, 86.5) {};

    \node[label=right:\scriptsize{$level=0$}] (u_5_1) at  (35,99){};
   \draw [dashed,line width=0.1mm] (21.5, 99) -- (35.5, 99);    
    \node[label=right:\scriptsize{$level=1$}] (u_5_2) at  (35,96){};
    \draw [dashed,line width=0.1mm] (27.5, 96) -- (35.5, 96);      
    \node[label=right:\scriptsize{$level=l$}] (u_5_3) at  (35,88){};
    \draw [dashed,line width=0.1mm] (34, 88) -- (35.5, 88);  

    \node (top-left) at (-1, 99){};
    \node (bottom-right) at (42, 84) {};
    
   \node[rectangle, draw,dashed,inner sep=0pt, line width = 0.2mm][fit= (top-left) (bottom-right),  inner xsep=1.5ex, inner ysep=1.5ex] {};
    
   \end{tikzpicture}}

%% file: figs/BT-pair-gadget.tex
\tikzstyle{filled vertex}  = [{circle,draw=blue,fill=black!50,inner sep=1.5pt}]  % regular vertex
\tikzstyle{empty vertex}  = [{circle, draw, fill = white, inner sep=1.25pt}]
\centering \small
%\resizebox{}{!}
{\begin{tikzpicture}[scale=.4] %6
    \node[empty vertex, label= above:\footnotesize{$a$}] (u_0) at  (18.5,99.5){};
    \node[empty vertex, label= below:\footnotesize{$r_{ab}$}] (u_0_1) at  (18.5,98){};
    \draw [line width=0.5mm] (u_0) -- (u_0_1);
    %level l+1
    %vertices of $B_{uv}$ starts
    \node[empty vertex] (u_1_1) at  (14.5,97){};
    
    \draw [line width=0.5mm] (u_1_1) -- (u_0_1);				
    \node[empty vertex] (u_1_2) at  (22.5,97){};
    
    \draw [line width=0.5mm] (u_1_2) -- (u_0_1);
    \draw [dotted, line width=0.5mm] (14.5, 97) -- (12.5, 96);
    \draw [dotted, line width=0.5mm] (14.5, 97) -- (16.5, 96);
    \draw [dotted, line width=0.5mm] (22.5, 97) -- (20.5, 96);
    \draw [dotted, line width=0.5mm] (22.5, 97) -- (24.5, 96);

    \node[empty vertex, label=left:$l^1_{ab}$] (u_l_1) at  (10,94){};		
    \node[empty vertex, label=left:$l^2_{ab}$] (u_l_2) at  (12,94){};		
    \node[empty vertex] (u_l_3) at  (14,94){};	
    \node[empty vertex] (u_l_4) at  (16,94){};	   
    
    \node[empty vertex, label=right:$l^i_{ab}$] (u_l_5) at  (22,94){};		
    \node[empty vertex] (u_l_6) at  (24,94){};			
    \node[empty vertex] (u_l_7) at  (26,94){};
    \node[empty vertex] (u_l_8) at  (28,94){};

    \draw[dotted, line width=0.5mm] (10, 94) -- (10.5, 94.5);
    \draw[dotted, line width=0.5mm] (12, 94) -- (11.5, 94.5);   
    \draw[dotted, line width=0.5mm] (14, 94) -- (14.5, 94.5);
    \draw[dotted, line width=0.5mm] (16, 94) -- (15.5, 94.5);
    \draw[dotted, line width=0.5mm] (22, 94) -- (22.5, 94.5);
    \draw[dotted, line width=0.5mm] (24, 94) -- (23.5, 94.5);   
    \draw[dotted, line width=0.5mm] (26, 94) -- (26.5, 94.5);
    \draw[dotted, line width=0.5mm] (28, 94) -- (27.5, 94.5);
    \draw[dotted, line width=0.5mm] (17, 94) -- (21, 94);

    \node (B1_lu_corner) at (8, 98.1) {};
    \node (B1_rl_corner) at (32, 94) {};
    \node[label= right:\footnotesize{$BT_{ab}$}] (b_1) at  (30,98.25){};    
    \node  (B1-top-row) at (29.5, 98) {};
    \node  (B1_bottom_row) at (29.5, 94) {};
    \draw [dotted, line width=0.5mm] (u_0_1) -- (B1-top-row);
    \draw [dotted, line width=0.5mm] (u_l_8) -- (B1_bottom_row); 
    \draw[dotted, |<->|, line width=0.5mm] (29.5, 98) -- (29.5, 94);
    \node[label= right:\footnotesize{$h+1$}] at  (29.5,96){};    

    \node[rectangle, draw, ultra thick, dotted,inner sep=0pt, line width = 0.2mm][fit= (B1_lu_corner) (B1_rl_corner),  inner xsep=2ex, inner ysep=1.5ex] {};  
    
    % % Inverted tree
     \node[empty vertex, label= below:\footnotesize{$b$}] (v_0) at  (18.5,82.5){};   
    \node[empty vertex, label= above:\footnotesize{$r_{ba}$}] (v_0_1) at  (18.5,84){};
    \draw (v_0) -- (v_0_1);
    
    \node[empty vertex] (v_1_1) at  (14.5,85){};   

    \draw [line width=0.5mm] (v_1_1) -- (v_0_1);		

    \node[empty vertex] (v_1_2) at  (22.5,85){};
    
    \draw [line width=0.5mm] (v_1_2) -- (v_0_1);
    \draw [dotted, line width=0.5mm] (14.5, 85) -- (12.5, 86);
    \draw [dotted, line width=0.5mm] (14.5, 85) -- (16.5, 86);
    \draw [dotted, line width=0.5mm] (22.5, 85) -- (20.5, 86);
    \draw [dotted, line width=0.5mm] (22.5, 85) -- (24.5, 86);

    \node[empty vertex, label=left:$l^1_{ba}$] (v_u_1) at  (10,88){};		
    \node[empty vertex,  label=left:$l^2_{ba}$] (v_u_2) at  (12,88){};		
    \node[empty vertex] (v_u_3) at  (14,88){};	
    \node[empty vertex] (v_u_4) at  (16,88){};	   
    
    \node[empty vertex, label=right:$l^i_{ba}$] (v_u_5) at  (22,88){};		
    \node[empty vertex] (v_u_6) at  (24,88){};			
    \node[empty vertex] (v_u_7) at  (26,88){};
    \node[empty vertex] (v_u_8) at  (28,88){};

    \draw [dotted, line width=0.5mm] (10, 88) -- (10.5, 87.5);
    \draw [dotted, line width=0.5mm] (12, 88) -- (11.5, 87.5);   
    \draw [dotted, line width=0.5mm] (14, 88) -- (14.5, 87.5);
    \draw [dotted, line width=0.5mm] (16, 88) -- (15.5, 87.5);
    \draw [dotted, line width=0.5mm] (22, 88) -- (22.5, 87.5);
    \draw [dotted, line width=0.5mm] (24, 88) -- (23.5, 87.5);   
    \draw [dotted, line width=0.5mm] (26, 88) -- (26.5, 87.5);
    \draw[dotted, line width=0.5mm] (28, 88) -- (27.5, 87.5);
    \draw [dotted, line width=0.5mm] (17, 88) -- (21, 88);
    
    \node (B2_ll_corner) at (8, 83.9) {};
    \node (B2_rr_corner) at (32, 88) {};

    \node[label= right:\footnotesize{$BT_{ba}$}] (b_2) at  (30,83.75){};    
    \node (B2-top-row) at (29.5, 88) {};
    \node (B2-bottom-row) at (29.5, 84) {};
    \draw [dotted, line width=0.5mm] (v_0_1) -- (B2-bottom-row);
    \draw [dotted, line width=0.5mm] (v_u_8) -- (B2-top-row); 
    \draw[dotted, |<->|, line width=0.5mm] (29.5, 88) -- (29.5, 84);
    \node[label= right:\footnotesize{$h+1$}] at  (29.5,86){};    

    \node[rectangle, draw, ultra thick, dotted ,inner sep=0pt, line width = 0.2mm][fit= (B2_ll_corner) (B2_rr_corner),  inner xsep=2ex, inner ysep=1.5ex] {};  

    % Add t-gadgets
   \tgadgetrighthalf{(10,94)}{0.20}{0.20}{2.3}{2.40}{0.5}{0.5}{0}{0.4};
   \tgadgetlefthalf{(12,94)}{0.20}{0.20}{2.3}{2.40}{0.5}{0.5}{0}{0.4};
   \tgadgetrighthalf{(14,94)}{0.20}{0.20}{2.3}{2.40}{0.5}{0.5}{0}{0.4};
   \tgadgetlefthalf{(16,94)}{0.20}{0.20}{2.3}{2.40}{0.5}{0.5}{0}{0.4};

   \tgadgetrighthalf{(22,94)}{0.20}{0.20}{2.3}{2.40}{0.5}{0.5}{0}{0.4};
   \tgadgetlefthalf{(24,94)}{0.20}{0.20}{2.3}{2.40}{0.5}{0.5}{0}{0.4};
   \tgadgetrighthalf{(26,94)}{0.20}{0.20}{2.3}{2.40}{0.5}{0.5}{0}{0.4};
   \tgadgetlefthalf{(28,94)}{0.20}{0.20}{2.3}{2.40}{0.5}{0.5}{0}{0.4};
    
    \node (tgadgets_lu_corner) at (9, 94) {};
    
    \node (tgadgets_rb_corner) at (29.5, 88) {};
    \node[rectangle, draw,dashed,inner sep=0pt, line width = 0.2mm][fit= (tgadgets_lu_corner) (tgadgets_rb_corner),  inner xsep=2ex, inner ysep=1.5ex] {};  

    \node[label= right:\footnotesize{$T$-gadgets}] at  (30.2,92.5){};    
    \node[label= right:\footnotesize{$T(l_1, l_2)$}] at  (30.2,91.7){};    

    \node (uhalf_lu_corner) at (5.5, 100) {};  
    \node (uhalf_rb_corner) at (34.5, 91.8) {};
    \node[rectangle, draw,dashed,inner sep=0pt, color =red, line width = 0.2mm][fit= (uhalf_lu_corner) (uhalf_rb_corner),  inner xsep=2ex, inner ysep=1.5ex] {};  
    \node[color=red, label= right:\footnotesize{\textcolor{red}{$a_{exthalf}$}}] (a_1) at  (30,101.25){}; 

\node (uhalf_lu1_corner) at (6.5, 98.3) {};  
    \node (uhalf_rb1_corner) at (33.5, 92.1) {};
    \node[rectangle, draw,dashed,inner sep=0pt, color =blue, line width = 0.2mm][fit= (uhalf_lu1_corner) (uhalf_rb1_corner),  inner xsep=2ex, inner ysep=1.5ex] {};  
     \node[color=blue, label= right:\footnotesize{\textcolor{blue}{$a_{half}$}}] (a_1) at  (30,99.75){};   
   
    \node (vhalf_lu_corner) at (5.5, 90.4) {};  
    \node (vhalf_rb_corner) at (34.5, 82) {};
    \node[rectangle, draw,dashed,inner sep=0pt, color =red, line width = 0.2mm][fit= (vhalf_lu_corner) (vhalf_rb_corner),  inner xsep=2ex, inner ysep=1.5ex] {};  
   \node[label= right:\footnotesize{\textcolor{red}{$b_{exthalf}$}}] (b_1) at  (30,80.75){};       

   \node (vhalf_lu_corner) at (6.5, 90.1) {};  
    \node (vhalf_rb_corner) at (33.5, 83.75) {};
    \node[rectangle, draw,dashed,inner sep=0pt, color =blue, line width = 0.2mm][fit= (vhalf_lu_corner) (vhalf_rb_corner),  inner xsep=2ex, inner ysep=1.5ex] {};  
   \node[label= right:\footnotesize{\textcolor{blue}{$b_{half}$}}] (b_1) at  (30,82.5){};       
    \end{tikzpicture}}

%% file: figs/path-gadget.tex
\resizebox{0.5\linewidth}{!}{\tikzstyle{filled vertex}  = [{circle,draw=blue,fill=black!50,inner sep=1.5pt}]  % regular vertex
\tikzstyle{empty vertex}  = [{circle, draw, fill = white, inner sep=1.25pt}]
\centering \small
%\resizebox{}{!}
{\begin{tikzpicture}[scale=.5] %6		
    \node[filled vertex, label= below:\footnotesize{$a_1$}] (u_1_1) at  (3, 3){};
    \node[filled vertex, label= above:\footnotesize{$a_2$}] (u_1_2) at  (5, 3){};    
    \node[filled vertex] (u_1_3) at  (7, 3){};        
    \node[filled vertex, label= above:\footnotesize{$a_i$}] (u_1_4) at  (10, 3){};          
    \node[filled vertex] (u_1_5) at  (13, 3){};        
    \node[filled vertex, label= above:\footnotesize{$a_{(d-1)}$}] (u_1_6) at  (15, 3){};        
    \node[filled vertex, label= below:\footnotesize{$a_d$}] (u_1_7) at  (17, 3){};         
    
    \node[empty vertex,  label= below:\footnotesize{$b_2$}] (u_2_2) at  (5, 2){};    
    \node[empty vertex] (u_2_3) at  (7, 2){};       
    \node[empty vertex, label= below:\footnotesize{$b_i$}] (u_2_4) at  (10, 2){};       
    \node[empty vertex] (u_2_5) at  (13,2){};          
    \node[empty vertex,label= below:\footnotesize{$b_{(d-1)}$} ] (u_2_6) at  (15, 2){};      

    \draw [line width=0.5mm] (u_1_1) --  (u_1_2);    
    \draw [line width=0.5mm] (u_1_2) --  (u_1_3);    
    \draw [line width=0.5mm] [dotted] (u_1_3) --  (u_1_4);    
    \draw [dotted, line width=0.5mm] (u_1_4) --  (u_1_5);    
    \draw [line width=0.5mm] (u_1_5) --  (u_1_6);  
    \draw [line width=0.5mm] (u_1_6) --  (u_1_7);  
    
    \draw [line width=0.5mm] (u_2_2) --  (u_2_3);    
    \draw [dotted, line width=0.5mm] (u_2_3) --  (u_2_4);    
    \draw [dotted, line width=0.5mm] (u_2_4) --  (u_2_5);  
    \draw [line width=0.5mm] (u_2_5) --  (u_2_6); 
    
    \draw [line width=0.5mm] (u_1_1) -- (u_2_2);
    \draw [line width=0.5mm] (u_1_2) -- (u_2_2);
    \draw [line width=0.5mm] (u_1_3) -- (u_2_3);
    \draw [line width=0.5mm] (u_1_4)-- (u_2_4);
    \draw [line width=0.5mm] (u_1_5) -- (u_2_5);
    \draw [line width=0.5mm] (u_1_6) -- (u_2_6);
    \draw [line width=0.5mm] (u_1_7) -- (u_2_6);    
    
   \draw [dotted, <->, line width=0.5mm] (3, 4) -- (17, 4);
   \node[label= above:\footnotesize{$d$}] at  (10,3.75){};        

    \node (Left-top-corner) at (2.5, 4.5) {};
     \node (Right-bottom-corner) at (17.5, 0.5) {};   
     \node[rectangle, draw, ultra thick, dotted,inner sep=0pt, line width = 0.2mm][fit= (Left-top-corner) (Right-bottom-corner),  inner xsep=2ex, inner ysep=1.5ex] {};   
\end{tikzpicture}} }

%% file: figs/Y-gadget.tex
\resizebox{0.6\linewidth}{!}{\tikzstyle{filled vertex}  = [{circle,draw=blue,fill=black!50,inner sep=1.5pt}]  % regular vertex
\tikzstyle{empty vertex}  = [{circle, draw, fill = white, inner sep=1.25pt}]
\centering \small
%\resizebox{}{!}
{\begin{tikzpicture}[scale=.5] %6
    \node[empty vertex, label= left:\footnotesize{$x$}] (x) at  (3,6.5){};
    \node[empty vertex, label= right:\footnotesize{$y$}] (y) at  (13,6.5){};
    \draw[dotted, line width=0.5mm] (x) -- (y);    

    \node[filled vertex, label= left:\footnotesize{$x_a$}] (x_1) at  (5,5){};
    \node[filled vertex, label= {[xshift=-3mm, yshift=-4mm]:\footnotesize{$x_b$}}] (x_2) at  (7.5,2.5){};    
    \node[filled vertex, label= right:\footnotesize{$y_a$}] (y_1) at  (11,5){};
    \node[filled vertex, label= {[xshift=3mm, yshift=-4mm]:\footnotesize{$y_b$}}] (y_2) at  (8.5,2.5){};    
    
    \draw (x) -- (x_1);
    \draw (y) -- (y_1);   
  
    \pgadget{(5,5)}{0.25}{0.1}{3}{-45};
    \node [label=$P_x$] at (7,3) {};
  
  \draw[decoration={calligraphic brace, amplitude=12pt}, decorate, line width=1pt] (7.5,2.5) -- (5, 5); 
    \node [label=left:{$d_1$}] at (6,3){};
  
    \pgadget{(11,5)}{0.25}{0.1}{3}{-135};    
    \node [label=$P_y$] at (9,3) {};
   \node [empty vertex, label= left:\footnotesize{$z$}] (z) at (8, 1) {};
   \draw [line width=0.5mm] (x_2) -- (z);
   \draw [line width=0.5mm] (y_2) -- (z);
    \node[filled vertex, label= left:\footnotesize{$z_a$}] (z_a) at  (8, 0){};
    \draw [line width=0.5mm] (z) -- (z_a);
    
    \node[filled vertex, label= left:\footnotesize{$z_b$}] (z_b) at  (8,-6){};
    
    \pgadget{(8,0)}{0.25}{0.1}{5.40}{270};      
     \node [label=$P_z$] at (7.25,-4) {}; 

  \draw[decoration={calligraphic brace, amplitude=6pt}, decorate, line width=1pt] (6.75, -6) -- (6.75, 0); 

   \node [label=left:$d_2$] at (6.50, -3){};
    
 %   \node[empty vertex, label= below:$z_c$] (z_c) at  (10.5,-6){};
    \node[empty vertex] (z_c) at  (8,-8){};    
  %  \draw [line width=0.5mm] (z_b) -- (z_c);    
    \draw (z_b) -- (z_c); 
    
    \node (u_y_l_u) at (4, 4.75) {};
    \node (u_y_r_b) at (12, 2.25) {};
     \node[rectangle, draw,dashed,inner sep=0pt, line width = 0.2mm][fit= (u_y_l_u) (u_y_r_b),  inner xsep=2ex, inner ysep=1.5ex] {};  
    
    \node (l_y_l_u) at (6, -0.25) {};
    \node (l_y_r_b) at (8.5, -6.0) {};
     \node[rectangle, draw,dashed,inner sep=0pt, line width = 0.2mm][fit= (l_y_l_u) (l_y_r_b),  inner xsep=2ex, inner ysep=1.5ex] {};  

    \draw[->, line width=0.5mm] (10.25, 4) -- (16, 4);
    \draw[->, line width=0.5mm] (8.25, -2) -- (16, -2);
%    \draw (12, -6.25) --(12, -8); 

   % Path corresponding to upper portion of Y
 %  \node[empty vertex] (p1_r_1) at  (18, 6.5){};
   \node[filled vertex] (p1_r_2) at  (18, 6){};
   \node[filled vertex] (p1_r_3) at  (18, 5.5) {};
   \node[filled vertex] (p1_r_4) at  (18, 5){};
   \node[filled vertex] (p1_r_5) at  (18, 4){};   
   \node[filled vertex] (p1_r_6) at  (18, 3.5){};
   \node[filled vertex] (p1_r_7) at  (18, 3){};
%   \node[empty vertex] (p1_r_8) at  (18, 2.5){}; 
   \node[empty vertex] (p1_l_3) at  (17.5, 5.5) {};
   \node[empty vertex] (p1_l_4) at  (17.5, 5){};
   \node[empty vertex] (p1_l_5) at  (17.5, 4){};   
   \node[empty vertex] (p1_l_6) at  (17.5, 3.5){};  
 %  \draw (p1_r_1) -- (p1_r_2);
   \draw [line width=0.5mm] (p1_r_2) -- (p1_r_3);   
   \draw [line width=0.5mm] (p1_r_3) -- (p1_r_4);      
   \draw [dotted, line width=0.5mm] (p1_r_4) -- (p1_r_5);      
   \draw [line width=0.5mm] (p1_r_5) -- (p1_r_6);   
   \draw [line width=0.5mm] (p1_r_6) -- (p1_r_7);      
%   \draw (p1_r_7) -- (p1_r_8);  
   \draw [line width=0.5mm] (p1_l_3) -- (p1_l_4);      
   \draw [dotted, line width=0.5mm] (p1_l_4) -- (p1_l_5);      
   \draw [line width=0.5mm] (p1_l_5) -- (p1_l_6);    
   % horizontal lines
   \draw [line width=0.5mm] (p1_l_3) -- (p1_r_3);   
   \draw [line width=0.5mm] (p1_l_4) -- (p1_r_4);   
   \draw [line width=0.5mm] (p1_l_5) -- (p1_r_5);   
   \draw [line width=0.5mm] (p1_l_6) -- (p1_r_6); 
   %oblique lines
   \draw [line width=0.5mm] (p1_l_3) -- (p1_r_2);
   \draw [line width=0.5mm] (p1_l_6) -- (p1_r_7);   

    \node (p1_l_u) at (17, 6) {};
    \node (p1_r_b) at (19, 3) {};
    \node[rectangle, draw,dashed,inner sep=0pt, line width = 0.2mm][fit= (p1_l_u) (p1_r_b),  inner xsep=2ex, inner ysep=1.5ex] {};  
    
   %Path correspondoing to lower portion of Y
 %  \node[empty vertex] (p2_r_1) at  (18, 0.5){};
   \node[filled vertex] (p2_r_2) at  (18, 0){};
   \node[filled vertex] (p2_r_3) at  (18, -0.5) {};
   \node[filled vertex] (p2_r_4) at  (18, -1){};
   \node[filled vertex] (p2_r_5) at  (18, -2){};   
   \node[filled vertex] (p2_r_6) at  (18, -2.5){};
   \node[filled vertex] (p2_r_7) at  (18, -3){};
 %  \node[empty vertex] (p2_r_8) at  (18, -3.5){}; 
   \node[empty vertex] (p2_l_3) at  (17.5, -0.5) {};
   \node[empty vertex] (p2_l_4) at  (17.5, -1){};
   \node[empty vertex] (p2_l_5) at  (17.5, -2){};   
   \node[empty vertex] (p2_l_6) at  (17.5, -2.5){};  
 %  \draw (p2_r_1) -- (p2_r_2);
   \draw [line width=0.5mm] (p2_r_2) -- (p2_r_3);   
   \draw [line width=0.5mm] (p2_r_3) -- (p2_r_4);      
   \draw [dotted, line width=0.5mm] (p2_r_4) -- (p2_r_5);      
   \draw [line width=0.5mm] (p2_r_5) -- (p2_r_6);   
   \draw [line width=0.5mm] (p2_r_6) -- (p2_r_7);      
 %  \draw (p2_r_7) -- (p2_r_8);  
   \draw [line width=0.5mm] (p2_l_3) -- (p2_l_4);      
   \draw [dotted, line width=0.5mm] (p2_l_4) -- (p2_l_5);      
   \draw [line width=0.5mm] (p2_l_5) -- (p2_l_6);    
 %  horizontal lines
   \draw [line width=0.5mm] (p2_l_3) -- (p2_r_3);   
   \draw [line width=0.5mm] (p2_l_4) -- (p2_r_4);   
   \draw [line width=0.5mm] (p2_l_5) -- (p2_r_5);   
   \draw [line width=0.5mm] (p2_l_6) -- (p2_r_6); 
  % oblique lines
   \draw [line width=0.5mm] (p2_l_3) -- (p2_r_2);
   \draw [line width=0.5mm] (p2_l_6) -- (p2_r_7);   

    \node (p2_l_u) at (17, 0) {};
    \node (p2_r_b) at (19, -3) {};
    \node[rectangle, draw,dashed,inner sep=0pt, line width = 0.2mm][fit= (p2_l_u) (p2_r_b),  inner xsep=2ex, inner ysep=1.5ex] {};  

    \node (top_left) at (3.5, 5.3) {};
    \node (bottom_right) at (13, -6.5) {};
     \node[rectangle, draw,dashed,inner sep=0pt, line width = 0.2mm, color=blue][fit= (top_left) (bottom_right),  inner xsep=2ex, inner ysep=1.5ex] {};  
          
  \end{tikzpicture}}}

%% file: figs/tail-gadget.tex
\resizebox{0.5\linewidth}{!}{\tikzstyle{filled vertex}  = [{circle,draw=blue,fill=black!50,inner sep=1.5pt}]  % regular vertex
\tikzstyle{empty vertex}  = [{circle, draw, fill = white, inner sep=1.5pt}]
  \centering \small
\resizebox{.8\textwidth}{!}
{\begin{tikzpicture} %6
{
   \node[filled vertex, label=below:\large{$v_9$}] (v9)  at  (15,10){};
   \node[filled vertex,, label=below:\large{$v_8$}] (v8) at  (16,10){};
   \node[filled vertex, label={[xshift=-3mm, yshift=-6.5mm]:\large$v_7$}] (v7) at  (17,10){};
   \node[filled vertex, label=below:\large{$v_6$}] (v6) at  (18,10){};
   \node[filled vertex, label=below:\large{$v_5$}] (v5) at  (19,10){};
   \node[filled vertex, label=below:\large{$v_4$}] (v4) at  (20,10){};
   \node[filled vertex, label={[xshift=-3mm, yshift=-6.5mm]:\large$v_3$}] (v3) at  (21,10){};
   \node[filled vertex,, label=below:\large{$v_2$}] (v2) at  (22,10){};
   \node[filled vertex,label=below:\large{$v_1$}] (v1) at  (23,10){};

    \node[empty vertex, label=below:\large{q}] (q) at  (17,9){};
    \node[empty vertex, label=below:\large{p}] (p) at  (21,9){};
    \node[empty vertex,label=above:\large{r}] (r) at  (20,11){};

    \foreach \i in {9,..., 2} {
        \pgfmathsetmacro{\p}{int(\i - 1)}
            \draw [line width=0.5mm] (v\i) -- (v\p);
     }
            
    \draw [line width=0.5mm] (q) -- (v9);
    \draw [line width=0.5mm] (q) -- (v7);      
    \draw [line width=0.5mm] (q) -- (v5);    

    \draw [line width=0.5mm] (r) -- (v8);
    \draw [line width=0.5mm] (r) -- (v6);      
    \draw [line width=0.5mm] (r) -- (v1);    

    \draw [line width=0.5mm] (p) -- (v2);
    \draw [line width=0.5mm] (p) -- (v3);      
    \draw [line width=0.5mm] (p) -- (v4);       
          
    \node[] (a)  at  (13.5,10){};
    \node[] (b) at  (25,10){};    

    \draw [dotted, line width=0.5mm] (a) -- (v9);
    \draw [dotted, line width=0.5mm] (v1) -- (b);   
    \node (tail_left_upper) at (14.5, 11.5) {};
    \node (tail_right_lower) at (24, 8.5) {};

    \node[rectangle, draw,dashed,inner sep=0pt, line width = 0.2mm][fit= (tail_left_upper) (tail_right_lower),  inner xsep=2ex, inner ysep=0.5ex] {};     

    \node (pt3_lu) at (15, 10.3) {};
    \node (pt3_rb) at (19, 9.7) {};
   \node[rectangle, draw,dashed,inner sep=0pt, line width = 0.2mm][fit= (pt3_lu) (pt3_rb),  inner xsep=2ex, inner ysep=1ex] {};  
   \node[label=below:{$PT_3$}] at (15.5, 10.7) {};

    \node (pt2_lu) at (20, 10.3) {};
    \node (pt2_rb) at (22, 9.7) {};
    \node[rectangle, draw,dashed,inner sep=0pt, line width = 0.2mm][fit= (pt2_lu) (pt2_rb),  inner xsep=2ex, inner ysep=1ex] {};  
    \node[label=below:{$PT_2$}] at (20.5, 10.7) {};  

    \node (pt1_lu) at (23, 10.3) {};
    \node (pt1_rb) at (23.4, 9.7) {};
    \node[rectangle, draw,dashed,inner sep=0pt, line width = 0.2mm][fit= (pt1_lu) (pt1_rb),  inner xsep=2ex, inner ysep=1ex] {};  
   \node[label=below:{$PT_1$}] at (23.2, 10.7) {};  

}\end{tikzpicture}}}

%% file: figs/cycle-gadget-new-2.tex
\tikzstyle{filled vertex}  = [{circle,draw=blue,fill=black!50,inner sep=1.5pt}]  % regular vertex
\tikzstyle{empty vertex}  = [{circle, draw, fill = white, inner sep=1.5pt}]
  \centering \small
\resizebox{\textwidth}{!}{\begin{tikzpicture} %6
{
 %   \node[empty vertex, label=left:\footnotesize{$c_{n^2}$}] (c_1) at  (13, -2){};
 \node [empty vertex] (c_0) at (10, -2) {};
   \node[filled vertex, label={[xshift = -1mm, yshift = -8mm]:\huge{$v_{m^2}$}}] (c_1) at  (13, -2){};
   \draw [dotted, line width=0.5mm] (c_0) -- (c_1);
    \node[filled vertex] (c_2) at  (20, -2){};
    \node[filled vertex] (c_1') at  (14, -2){};
    \draw [line width=0.5mm] (c_1) -- (c_1');
   
    \pgadget{(14,-2)}{0.4}{0.4}{5.2}{0};

    \draw [dotted, <->, line width=0.5mm] (14, -1.6) -- (20, -1.6);
    \node[label=above:\huge{$d = 2m-2$}] (p_1_a) at  (17.5, -1.6){};    
    \node[label=below:\huge{$P_{m}$}] (p_1_b) at  (17, -2.25){};
    \draw [dotted, <->, line width=0.5mm] (13, -3.5) -- (20, -3.5);
    \node[label=below:\huge{$d=2m-1$}] (p_1_bb) at  (17, -3.5){};   
    
    \node[filled vertex] (c_3) at  (21, -2){};
    \draw [line width=0.5mm] (c_2) -- (c_3); 
 
   \pgadget{(21,-2)}{0.4}{0.4}{4.8}{0};

    \draw [dotted, <->, line width=0.5mm] (21, -1.6) -- (26.5, -1.6);
    \node[label=above:\huge{$d=2m-3$}] (p_2_a) at  (23.5, -1.6){};   
    \node[label=below:\huge{$P_{m-1}$}] (p_2_b) at  (23.5,-2.25){};
    
    \node[filled vertex] (c_4) at  (26.6, -2){};   
    \node[filled vertex] (c_5) at  (27.6, -2){};  
    \draw [line width=0.5mm] (c_4) -- (c_5);
    \pgadget{(27.6,-2)}{0.4}{0.4}{4}{0};
    \draw [dotted, <->, line width=0.5mm] (27.5, -1.6) -- (32.25, -1.6);
    \node[label=above:\huge{$d=2m-5$}] (p_3_a) at  (30, -1.6){};   
    \node[label=below:\huge{$P_{m-2}$}] (p_3_b) at  (30, -2.25){};  
    
    \node[filled vertex] (c_6) at  (32.3, -2){};
     \node[filled vertex] (c_7) at  (33, -2){};   
    \draw [line width=0.5mm] (c_6) -- (c_7); 

    \node[filled vertex] (c_8) at  (34, -2){}; 
     \draw [dotted, line width=0.5mm] (c_7) -- (c_8);  
    \node[filled vertex] (c_9) at  (36.5, -2){}; 
     \draw [line width=0.5mm] (c_8) -- (c_9);  
   \pgadget{(34,-2)}{0.4}{0.4}{1.75}{0};
    \draw [dotted, <->, line width=0.5mm] (34, -1.6) -- (36.5, -1.6);
    \node[label=above:\huge{$d=7$}] (p_4_a) at  (35.25, -1.6){};   
    \node[label=below:\huge{$P_4$}] (p_4_b) at  (35.25,-2.25){};   
     
   % \node[filled vertex] (c_10) at  (37, -2){}; 
    \node[filled vertex, label=below:\huge{$v_9$}] (c_10) at  (37, -2){};  
     \draw [line width=0.5mm] (c_9) -- (c_10);  

     \node[filled vertex] (c_11) at  (38, -2){};       
     \draw [line width=0.5mm] (c_10) -- (c_11);   

      \node[filled vertex] (c_12) at  (39, -2){};       
      \draw [line width=0.5mm] (c_11) -- (c_12);       

     \node[filled vertex] (c_13) at  (40, -2){};       
    \draw [line width=0.5mm] (c_12) -- (c_13);       

     \node[filled vertex] (c_14) at  (41, -2){};       
     \draw [line width=0.5mm] (c_13) -- (c_14);   

    \node[filled vertex] (c_15) at  (42, -2){};       
    \draw [line width=0.5mm] (c_14) -- (c_15);       

    \node[filled vertex] (c_16) at  (43, -2){};       
    \draw [line width=0.5mm] (c_15) -- (c_16);       

    \node[filled vertex] (c_17) at  (44, -2){};       
    \draw [line width=0.5mm] (c_16) -- (c_17);       

    \node[filled vertex, label=below:\huge{$v_1$}] (c_18) at  (45, -2){};       
    \draw [line width=0.5mm] (c_17) -- (c_18);       

    \node[empty vertex, label=below:\huge{q}] (c_12_b) at  (39, -3){};       
     \draw [line width=0.5mm] (c_12_b) -- (c_12);       
     \draw [line width=0.5mm] (c_12_b) -- (c_10);
     \draw [line width=0.5mm] (c_12_b) -- (c_14);     

 \draw[decoration={calligraphic brace, amplitude=14pt}, decorate, line width=1pt] (41, -3.5) -- (37,-3.5);      
     \node[label=below:\huge{$PT_3$}]  at (39, -4){};

     \node[empty vertex, label=below:\huge{p}] (c_16_b) at  (43, -3){};       
     \draw [line width=0.5mm] (c_16_b) -- (c_16);       
     \draw [line width=0.5mm] (c_16_b) -- (c_15);
     \draw [line width=0.5mm] (c_16_b) -- (c_17);       
 
    \draw[decoration={calligraphic brace, amplitude=13pt}, decorate, line width=1pt] (44, -3.5) -- (42,-3.5);      
     \node[label=below:\huge{$PT_2$}]  at (43, -4){};

     \draw[decoration={calligraphic brace, amplitude=8pt}, decorate, line width=1pt] (45.5, -2.5) -- (44.5,-2.5);      
     \node[label=below:\huge{$PT_1$}]  at (45, -2.75){};    

     \node[empty vertex, label={[xshift = -5mm, yshift = -2mm]:\huge{r}}] (c_15_u) at  (42, -1){};       
     \draw [line width=0.5mm] (c_15_u) -- (c_13);       
     \draw [line width=0.5mm] (c_15_u) -- (c_11);
     \draw [line width=0.5mm] (c_15_u) -- (c_18); 

     \tailgadget{(37,-2)}{1}{1};  
     \draw [line width=0.5mm] (c_1) to[bend left=28] (c_18);
     \draw [dotted, line width=0.5mm, <->] (13, 3) -- (45, 3);
     \node[label=above:\huge{$m^2$}] (cycle_size) at  (30, 3){};

      \node (cycle_left_upper) at (12, 3.75) {};
      \node (cycle_right_lower) at (47, -5) {};

     \node[rectangle, draw,dashed,inner sep=0pt, line width = 0.2mm][fit= (cycle_left_upper) (cycle_right_lower),  inner xsep=2ex, inner ysep=1.5ex] {}; 
}\end{tikzpicture}}

%% file: summary-table.tex
\begin{table}[]
    \centering
\small
\setlength{\tabcolsep}{3pt}
\renewcommand{\arraystretch}{0.5}
\begin {tabular} {|l|l|}
\hline
\multicolumn{1}{|c|}{Notation} & \multicolumn{1}{c|} {Definition/Description} \\
\hline

Steps[u,v]&\makecell[tl]{Number of steps required for the fire originating \\from $u$ to burn $v$ including $u$ and $v$}\\

\hline

\makecell[tl]{a)$T_{half-ab}$ \\\\b) $T^i_{half-ab}$} &\makecell[tl] {a)The subgraph induced by the vertices on the fixed arm starting \\from $a$ to $\mathit{jn}_{pq}$, together with the $l_2$ vertices on the \\floating arm from $\mathit{jn}_{pq}$ to $\mathit{tip}_{pq}$ (Figure~\ref{fig:t-gadget})\\
b) The $T_{half-ab}$ in the $i^{th}$ $T$-gadget of the $BTP_{ab}$ gadget } \\
\hline
\makecell[tl]{a) $tip_{ab} $\\b) $tip^i_{ab} $\\c) $Tips_{ab}$\\d) $TIPS_{ab}$} &\makecell[tl]{a) Any tip in $T_{half-ab}$\\b)  Any tip in $T^i_{half-ab}$ \\c)The set of all tips of $T$-gadgets in $T^i_{half-ab}$, for $1\leq i\leq 2^h$ \\d $Tips_{ab} \cup Tips_{ba}$. Note: $TIPS_{ab} = TIPS_{ba}$} \\
\hline
\makecell[tl]{a) $a_{half}$ in $BTP_{ab}$\\ b) $a_{exthalf}$ in $BTP_{ab}$}&\makecell[tl]{a) The subgraph induced by $V(BT_{ab}\cup T_{half-ab})$\\ b)The subgraph induced by  $V(\{a,b\}\cup a_{half}$)} \\
\hline
\makecell[tl]{a) StartBlock\\ b) MiddleBlock\\c) EndBlock}&\makecell[tl]{a) The first $s$ sources of a burning sequence $B$\\b) The next $m$ sources after the StartBlock of $B$ \\c) The remaining burning sources after MiddleBlock \\(typically the last $l_1+l_2$ sources) in $B$}\\

\hline
\makecell[tl]{a) $trunk(C)$\\b) $trunk'(C)$}&\makecell[tl]{a) The longest induced cycle of $m^2$ vertices in a cycle gadget $C(m)$\\b) The shortest cycle containing $v_1$ and $v_{m^2}$ in $C(m)$} 
\\
\hline
$FixedOverlap$&\makecell[tl]{The maximum number of vertices in $trunk'(C)$ that may get\\burned more than once by a burning sequence that burns $trunk(C)$\\ which is equal to 5.}
\\
\hline
\makecell[tl]{a) $H_{core}$\\\\b) $H_{annexe}$}&\makecell[tl]{a) Union of all $BTP_{uv}$-gadgets along with the pairs of vertices $u,v\in V(H)$\\ corresponding to $u,v\in V(G')$ in which they are attached\\b) Union of the $Y$-gadget and $C$-gadget in $H$} 
\\
\hline
\makecell[tl]{a) $Dom_u$\\\\b) Owner($v$)}&\makecell[tl]{a)If $u$ is not $x$ or $y$, then union of $u_{exthalf}$ for all the three  $BTP$-gadgets\\ associated with $u$. If $u$ is $x$ (or $y$), then union of $u_{exthalf}$ of two \\associated $BTP$-gadgets and $P_x$ (resp $P_y$)\\b) If $v \in Dom_u$, then $u$ is owner of $v$. }
\\
\hline
\makecell[tl]{a) InsideDomains\\b) OutsideDomain}&\makecell[tl]{a) Union of domains of all vertices (for which domain is defined)\\b) The vertices of $H$ that do not belong to InsideDomains} 
\\
\hline
\makecell[tl]{a) Represented edge\\\\b) Unrepresented edge}&\makecell[tl]{a) $(u,v)\in E(G')$ such that there is a burning source in $Dom_u$ or $Dom_v$ \\in the burning sequence \\
b) When an edge $(u,v)\in E(G')$ is not represented} 
\\
\hline
\makecell[tl]{a) $Proj(z_{(i,j)}, G_k)$\\b) $Proj(B, G_k)$}&\makecell[tl]{a) Projection of $z_{(i,j)}$ onto $G_k$, that is, $z_{(i,k)}$\\b) Sequence formed by taking projection of every vertex in $B$\\  onto $G_k$ in the same order} 
\\
\hline
\makecell[tl]{a) BL(G,B)\\b) UB(G, B)}&\makecell[tl]{a) The set of vertices in $G$ burned in the last step by\\ the burning sequence $B$\\b) The set of vertices in $G$ such that each vertex is burned \\by a unique burning source} \\
\hline
\end {tabular}
\\
\caption{Summary of important notations with brief descriptions}
\label{tab:lists}
\end{table}

%% file: figs/uv-subgraph-new.tex
\tikzstyle{filled vertex}  = [{circle,draw=blue,fill=black!50,inner sep=1.5pt}]  % regular vertex
\tikzstyle{empty vertex}  = [{circle, draw, fill = white, inner sep=1.25pt}]
\centering \small
%\resizebox{}{!}
{\begin{tikzpicture}[scale=.5] %6

\draw plot [smooth cycle, tension=.8] 
    coordinates {(-2.5,1.5)(-1.5,.5)(-.5,-.5)(1,0)(2,2)(0,3)};
    \node[empty vertex, label= left:\footnotesize{$u$}] (u_g) at  (0.5,1){};
    \node[empty vertex, label= left:\footnotesize{$v$}] (v_g) at  (0.5,0){};  
    \draw (u_g) -- (v_g);

    \draw[->, line width=0.5mm] (0.8, 0.5) -- (3, 0.5);

    \node[empty vertex, label= left:\footnotesize{$u$}] (u_h) at  (8,5){};
    \node[empty vertex, label= left:\footnotesize{$v$}] (v_h) at  (8,-5){};   
    \node[empty vertex, label= left:\footnotesize{$r_{uv}$}] (u_v) at  (8,4){};
    \node[empty vertex, label= left:\footnotesize{$r_{vu}$}] (v_u) at  (8,-4){}; 
    \draw [line width=0.5mm] (u_h) -- (u_v);
    \draw [line width=0.5mm] (v_h) -- (v_u);

    \node[empty vertex] (b_1_1) at  (5,2){};
    \node[empty vertex] (b_1_2) at  (7,2){};
    \node[empty vertex] (b_1_3) at  (9,2){};    
    \node[empty vertex] (b_1_4) at  (11,2){};
    
    \node[empty vertex] (b_2_1) at  (5,-2){};
    \node[empty vertex] (b_2_2) at  (7,-2){};
    \node[empty vertex] (b_2_3) at  (9,-2){};    
    \node[empty vertex] (b_2_4) at  (11,-2){};

    \draw [line width=1mm] (u_v) -- (b_1_1);
    \draw [dotted, line width=0.5mm] (b_1_1) -- (b_1_2);
    \draw [dotted, line width=0.5mm] (b_1_2) -- (b_1_3);    
    \draw [dotted, line width=0.5mm] (b_1_3) -- (b_1_4);      
    \draw [line width=0.5mm] (u_v) -- (b_1_4);     
     
    \draw  [line width=1mm] (v_u) -- (b_2_1);
    \draw [dotted, line width=0.5mm] (b_2_1) -- (b_2_2);
    \draw [dotted, line width=0.5mm] (b_2_2) -- (b_2_3);    
    \draw [dotted, line width=0.5mm] (b_2_3) -- (b_2_4);        
    \draw [line width=0.5mm] (v_u) -- (b_2_4);   

    \emptytgadgetrightfull{(5,2)}{0.2}{0.2}{3.6}{1.6}{0.5}{0.5}{0};  
    
    \tgadgetleftfull{(7,2)}{0.2}{0.2}{3.6}{1.6}{0.5}{0.5}{0};    
    \tgadgetrightfull{(9,2)}{0.2}{0.2}{3.6}{1.6}{0.5}{0.5}{0};
    \tgadgetleftfull{(11,2)}{0.2}{0.2}{3.6}{1.6}{0.5}{0.5}{0}; 
    
    \draw [dotted, line width=0.5mm] (6.5, 0) -- (7.5, 0);
    \draw [dotted, line width=0.5mm] (8.5, 0) -- (9.5, 0);    

    \node (b_1_u_l) at (4.5, 3.75) {};
    \node (b_1_l_r) at (11.5, 2) {};    
   \node[rectangle, draw,dashed,inner sep=0pt, line width = 0.2mm][fit= (b_1_u_l) (b_1_l_r),  inner xsep=2ex, inner ysep=1.5ex] {};  
   
    \node (b_2_l_l) at (4.5, -3.75) {};
    \node (b_2_u_r) at (11.5, -2) {};    
   \node[rectangle, draw,dashed,inner sep=0pt, line width = 0.2mm][fit= (b_2_l_l) (b_2_u_r),  inner xsep=2ex, inner ysep=1.5ex] {};  

    \node (t_l_u) at (5, 2) {};
    \node (t_r_l) at (11, -2) {};    
   \node[rectangle, draw,dashed,inner sep=0pt, line width = 0.2mm][fit= (t_l_u) (t_r_l),  inner xsep=2ex, inner ysep=1.5ex] {};

    \draw[->, line width=0.5mm] (12.25, 3) -- (14, 4);
    \draw[->, line width=0.5mm] (12.25, -3) -- (14, -4);
    \draw[->, line width=0.5mm] (11.25, 0) -- (14, 0);

    \node [empty vertex] (r1) at (20, 8) {};
    \node [empty vertex] (r11) at (18.5, 7.7) {};
    \node [empty vertex] (r12) at (21.5, 7.5) {};
    \node [empty vertex] (r111) at (17.5, 7) {};
    \node [empty vertex] (r112) at (19.5, 7) {};
    \node [empty vertex] (r121) at (20.5, 7) {};
    \node [empty vertex] (r122) at (22.5, 7) {};

    \draw [line width=0.5mm]  (r1) -- (r11);
    \draw [line width=0.5mm]  (r1) -- (r12);
    \draw [line width=0.5mm]  (r11) -- (r111);
    \draw [line width=0.5mm]  (r11) -- (r112);
    \draw [line width=0.5mm] (r12) -- (r121);
    \draw [line width=0.5mm]  (r12) -- (r122);

    \draw [dotted] (17.5, 6.5) -- (23, 6.5);

    \node [empty vertex] (r1l1) at (16, 5.5) {};
    \node [empty vertex] (r1l2) at (18, 5.5) {};
    \node [empty vertex] (r1l3) at (22.5, 5.5) {};      
    \node [empty vertex] (r1l4) at (24.5, 5.5) {};   

    \draw [dotted, line width=0.5mm] (r1l1) -- (17, 6);
    \draw [dotted, line width=0.5mm] (r1l2) -- (17, 6);    
    \draw [dotted, line width=0.5mm] (r1l3) -- (23.5, 6);
    \draw [dotted, line width=0.5mm] (r1l4) -- (23.5, 6);      
    \draw [dotted, line width=0.5mm] (r1l2) -- (r1l3);

   \draw[decoration={calligraphic brace, amplitude=4pt}, decorate, line width=1pt] (25, 8.25) -- (25, 5.25); 
    \node [label=right:{h+1}] at (25.25, 6.5){};

    \node (r_1_u_l) at (15.5, 8.5) {};
    \node (r_1_b_r) at (26.5, 5) {};    
   \node[rectangle, draw,dashed,inner sep=0pt, line width = 0.2mm][fit= (r_1_u_l) (r_1_b_r),  inner xsep=2ex, inner ysep=1.5ex] {};  
   \node[label=left:$BT_{uv}$] at (15, 6.5){};   
    % r2
    \node [empty vertex] (r2) at (20, -8) {};
    \node [empty vertex] (r21) at (18.5, -7.7) {};
    \node [empty vertex] (r22) at (21.5, -7.5) {};
    \node [empty vertex] (r211) at (17.5, -7) {};
    \node [empty vertex] (r212) at (19.5, -7) {};
    \node [empty vertex] (r221) at (20.5, -7) {};
    \node [empty vertex] (r222) at (22.5, -7) {};

    \draw [line width=0.5mm]  (r2) -- (r21);
    \draw [line width=0.5mm]  (r2) -- (r22);
    \draw [line width=0.5mm]  (r21) -- (r211);
    \draw [line width=0.5mm]  (r21) -- (r212);
    \draw [line width=0.5mm]  (r22) -- (r221);
    \draw [line width=0.5mm]  (r22) -- (r222);

    \draw [dotted] (17.5, -6.5) -- (23, -6.5);

    \node [empty vertex] (r2l1) at (16, -5.5) {};
    \node [empty vertex] (r2l2) at (18, -5.5) {};
    \node [empty vertex] (r2l3) at (22.5, -5.5) {};      
    \node [empty vertex] (r2l4) at (24.5, -5.5) {};   

    \draw [dotted, line width=0.5mm] (r2l1) -- (17, -6);
    \draw [dotted, line width=0.5mm] (r2l2) -- (17, -6);    
    \draw [dotted, line width=0.5mm] (r2l3) -- (23.5, -6);
    \draw [dotted, line width=0.5mm] (r2l4) -- (23.5, -6);      
    \draw [dotted, line width=0.5mm] (r2l2) -- (r2l3);

   \draw[decoration={calligraphic brace, amplitude=4pt}, decorate, line width=1pt] (25, -5.25) -- (25, -8.25); 
    \node [label=right:{h+1}] at (25.25, -6.5){};
    
    \node (r_2_l_l) at (15.5, -8.5) {};
    \node (r_2_u_r) at (26.5, -5) {};    
   \node[rectangle, draw,dashed,inner sep=0pt, line width = 0.2mm][fit= (r_2_l_l) (r_2_u_r),  inner xsep=2ex, inner ysep=1.5ex] {};  

  \node[label=left:$BT_{vu}$] at (15, -6.5){};

    % t-gadget
  \node [empty vertex] (f1) at (17, 3.5) {}; 
  \node [filled vertex] (fl1) at (16, 2.5) {};   
  \node [filled vertex] (fl2) at (16, 1.5) {};
  \node [filled vertex] (fl3) at (16, 0.5) {};  
  \node [filled vertex] (fl4) at (16, -0.5) {}; 
  \node [filled vertex] (fl5) at (16, -1.5) {};   
  \node [filled vertex] (fl6) at (16, -2.5) {};
  \node [empty vertex] (f2) at (17, -3.5) {}; 
  \node [filled vertex] (fr1) at (18, 2.5) {};   
  \node [filled vertex] (fr2) at (18, 1.5) {};
  \node [filled vertex] (fr3) at (18, 0.5) {};  
  \node [filled vertex] (fr4) at (18, -0.5) {}; 
  \node [filled vertex] (fr5) at (18, -1.5) {};   
  \node [filled vertex] (fr6) at (18, -2.5) {};  
     
  \draw [line width=0.5mm] (f1) -- (fl1);
  \draw [line width=0.5mm] (f1) -- (fr1);  
  \draw [line width=0.5mm] (fl1) -- (fr1);

  \draw [dotted, line width=0.5mm] (fl1) -- (fl2);  
  \draw [dotted, line width=0.5mm] (fr1) -- (fr2);
  \draw [line width=0.5mm] (fl2) -- (fr2);

  \draw [line width=0.5mm] (fl2) -- (fl3);  
  \draw [line width=0.5mm] (fr2) -- (fr3);
  \draw [line width=0.5mm] (fl3) -- (fr3);  

  \draw[decoration={calligraphic brace, amplitude=4pt}, decorate, line width=1pt] (18.5, 2.5) -- (18.5,0.5); 
 \node [label=right:{$(cn'-2h)/2$}] at (18.5, 1.5){};  

  \draw [line width=0.5mm] (fl3) -- (fl4);

  \draw [line width=0.5mm] (fl4) -- (fl5);  
  \draw [line width=0.5mm] (fr4) -- (fr5);
  \draw [line width=0.5mm] (fl4) -- (fr4);  
  \draw [line width=0.5mm] (fl5) -- (fr5);  

  \draw [dotted, line width=0.5mm] (fl5) -- (fl6);  
  \draw [dotted, line width=0.5mm] (fr5) -- (fr6);
  \draw [line width=0.5mm] (fl6) -- (fr6);

  \draw [line width=0.5mm] (fl6) --  (f2);
  \draw [line width=0.5mm] (fr6) --  (f2); 

  \node [empty vertex] (hu1) at (19, 0.5) {};  
  \node [empty vertex] (hb1) at (19, -0.5) {}; 
  \node [empty vertex] (hu2) at (20, 0.5) {};  
  \node [empty vertex] (hb2) at (20, -0.5) {}; 
  \node [empty vertex] (hu3) at (21, 0.5) {};  
  \node [empty vertex] (hb3) at (21, -0.5) {}; 
  \node [empty vertex] (hu4) at (22, 0.5) {};  
  \node [empty vertex] (hb4) at (22, -0.5) {}; 

  \draw[decoration={calligraphic brace, amplitude=4pt}, decorate, line width=1pt] (22, -0.75) -- (19,-0.75); 
 \node [label=below:{$cn'/2+2$}] at (21, -0.75){};  
  
  \draw [line width=0.5mm] (fr3) -- (hu1);
  \draw [line width=0.5mm] (fr4) -- (hb1); 
  \draw [line width=0.5mm] (hu1) -- (hb1);

  \draw [dotted, line width=0.5mm] (hu1) -- (hu2);
  \draw [dotted, line width=0.5mm]   (hb1) -- (hb2);   
  \draw [line width=0.5mm] (hu2) -- (hb2);  
  
  \draw [line width=0.5mm] (hu2) -- (hu3);
  \draw [line width=0.5mm] (hb2) -- (hb3);   

  \draw [line width=0.5mm] (hu3) -- (hu4);
  \draw [line width=0.5mm] (hb3) -- (hb4);   
  \draw [line width=0.5mm] (hu4) -- (hb4); 
  \draw [line width=0.5mm] (hu3) -- (hb4);
  \draw [line width=0.5mm] (hb3) -- (hu4);

   \node (t_l_u) at (15.5, 2.5) {};
   \node (t_r_l) at (24.5, -2.5) {};    
   \node[rectangle, draw,dashed,inner sep=0pt, line width = 0.2mm][fit= (t_l_u) (t_r_l),  inner xsep=2ex, inner ysep=1.5ex] {};     
   
   \node[label=right:$T_{i}-gadget$] at (25, 0){};  
 \end{tikzpicture}}

%% file: figs/xy-subgraph.tex
\tikzstyle{filled vertex}  = [{circle,draw=blue,fill=black!50,inner sep=1.5pt}]  % regular vertex
\tikzstyle{empty vertex}  = [{circle, draw, fill = white, inner sep=1.25pt}]
\centering \small
\resizebox{\textwidth}{!}
{\begin{tikzpicture}[scale=.7] %6

\draw plot [smooth cycle, tension=.8] 
    coordinates {(-4.5,1.5)(-3.5,.5)(-2.5,-.5)(-1,0)(0,2)(-2,3)};
    \node[empty vertex, label= left:\footnotesize{$x$}] (x_g) at  (-1.5,1){};
    \node[empty vertex, label= left:\footnotesize{$y$}] (y_g) at  (-1.5,0){};  
    \draw [line width=0.5mm] (x_g) -- (y_g);

    \draw[->, line width=0.5mm] (-0.1, 0.9) -- (1, 0.9);

    \node[empty vertex, label= left:\footnotesize{$x$}] (x) at  (4,6){};
    \node[empty vertex, label= right:\footnotesize{$y$}] (y) at  (12,6){};
    \draw[dotted, line width=0.5mm] (x) -- (y);

    \node[filled vertex, label= left:\footnotesize{$x_a$}] (x_1) at  (5,5){};
    \node[filled vertex, label= {[xshift=-3mm, yshift=-4mm]:\footnotesize{$x_b$}}] (x_2) at  (7.5,2.5){};    
    \node[filled vertex, label= right:\footnotesize{$y_a$}] (y_1) at  (11,5){};
    \node[filled vertex, label= {[xshift=3mm, yshift=-4mm]:\footnotesize{$y_b$}}] (y_2) at  (8.5,2.5){};    
 %   \draw[dotted] (x_h) -- (y_h);    
    \draw [line width=0.5mm] (x) -- (x_1);
    \draw [line width=0.5mm] (y) -- (y_1);   
  
    \pgadget{(5,5)}{0.25}{0.1}{3}{-45};
    \node [label=$P_x$] at (7,3) {};
  
  \draw[decoration={calligraphic brace, amplitude=12pt}, decorate, line width=1pt] (7.5,2.5) -- (5, 5); 
    \node [label=left:\tiny{$\lfloor n'/4 \rfloor+cn'/2$}] at (6,3){};
    
    \pgadget{(11,5)}{0.25}{0.1}{3}{-135};    
    \node [label=$P_y$] at (9,3) {};
   \node [empty vertex, label= left:\footnotesize{$z$}] (z) at (8, 1) {};
   \draw (x_2) -- (z);
   \draw (y_2) -- (z);
    \node[filled vertex, label= left:\footnotesize{$z_a$}] (z_a) at  (8, 0){};
    \draw [line width=0.5mm] (z) -- (z_a);
    
    \node[filled vertex, label= left:\footnotesize{$z_b$}] (z_b) at  (8,-6){};
    % \draw [ultra thick] (8, -2.5) -- (z_b);    
    \pgadget{(8,0)}{0.25}{0.1}{5.40}{270};      
     \node [label=$P_z$] at (7.25,-4) {}; 

  \draw[decoration={calligraphic brace, amplitude=6pt}, decorate, line width=1pt] (6.75, -6) -- (6.75, 0); 

   \node [label=left:\tiny{$\lceil n'/4 \rceil+cn'/2+1$}] at (6.50, -3){};
    
    \node[filled vertex, label= below:\footnotesize{$v_{(cn'+3)^2}$}] (z_c) at  (10.5,-6){};
    \draw [line width=0.5mm] (z_b) -- (z_c);    

    \node[filled vertex, label= below:\footnotesize{$v_1$}] (c_1) at  (15.5,-6){};   

  %  \draw [ultra thick] (z_c) -- (c_1);

  \cyclegadget {(10.5, -6)}{(11.5, -5)} {(14.5, -5)} {(15.5, -6)}{0};
    
    \node (u_y_l_u) at (2.0, 4.75) {};
    \node (u_y_r_b) at (12, 2.25) {};
     \node[rectangle, draw,dashed,inner sep=0pt, line width = 0.2mm][fit= (u_y_l_u) (u_y_r_b),  inner xsep=2ex, inner ysep=1.5ex] {};  
    
    \node (l_y_l_u) at (2.0, -0.25) {};
    \node (l_y_r_b) at (7.9, -6.0) {};
     \node[rectangle, draw,dashed,inner sep=0pt, line width = 0.2mm][fit= (l_y_l_u) (l_y_r_b),  inner xsep=2ex, inner ysep=1.5ex] {};  

    \node (cycle_l_u) [label={[xshift=1mm, yshift=-2mm]$C'$}] at (10.05, -5) {};
    \node (cycle_r_b) at (15.5, -6.5) {};
     \node[rectangle, draw,dashed,inner sep=0pt, line width = 0.2mm][fit= (cycle_l_u) (cycle_r_b),  inner xsep=2ex, inner ysep=1.5ex] {};  

   \draw[->, line width=0.5mm] (10.25, 4) -- (15.75, 4);
    \draw[->, line width=0.5mm] (8.25, -2) -- (15.75, -2);
    \draw [line width=0.5mm] (12, -6.25) --(12, -8); 
   \draw[->, line width=0.5mm] (12, -8) --(16.25, -8);

   % Path corresponding to upper portion of Y
 %  \node[empty vertex] (p1_r_1) at  (18, 6.5){};
   \node[filled vertex] (p1_r_2) at  (18, 6){};
   \node[filled vertex] (p1_r_3) at  (18, 5.5) {};
   \node[filled vertex] (p1_r_4) at  (18, 5){};
   \node[filled vertex] (p1_r_5) at  (18, 4){};   
   \node[filled vertex] (p1_r_6) at  (18, 3.5){};
   \node[filled vertex] (p1_r_7) at  (18, 3){};
%   \node[empty vertex] (p1_r_8) at  (18, 2.5){}; 
   \node[empty vertex] (p1_l_3) at  (17.5, 5.5) {};
   \node[empty vertex] (p1_l_4) at  (17.5, 5){};
   \node[empty vertex] (p1_l_5) at  (17.5, 4){};   
   \node[empty vertex] (p1_l_6) at  (17.5, 3.5){};  
 %  \draw (p1_r_1) -- (p1_r_2);
   \draw [line width=0.5mm] (p1_r_2) -- (p1_r_3);   
   \draw [line width=0.5mm] (p1_r_3) -- (p1_r_4);      
   \draw [dotted, line width=0.5mm] (p1_r_4) -- (p1_r_5);      
   \draw [line width=0.5mm] (p1_r_5) -- (p1_r_6);   
   \draw [line width=0.5mm] (p1_r_6) -- (p1_r_7);      
%   \draw (p1_r_7) -- (p1_r_8);  
   \draw [line width=0.5mm] (p1_l_3) -- (p1_l_4);      
   \draw [dotted, line width=0.5mm] (p1_l_4) -- (p1_l_5);      
   \draw [line width=0.5mm] (p1_l_5) -- (p1_l_6);    
   % horizontal lines
   \draw [line width=0.5mm] (p1_l_3) -- (p1_r_3);   
   \draw [line width=0.5mm] (p1_l_4) -- (p1_r_4);   
   \draw [line width=0.5mm] (p1_l_5) -- (p1_r_5);   
   \draw [line width=0.5mm] (p1_l_6) -- (p1_r_6); 
   %oblique lines
   \draw [line width=0.5mm] (p1_l_3) -- (p1_r_2);
   \draw [line width=0.5mm] (p1_l_6) -- (p1_r_7);   

    \node (p1_l_u) at (17, 6) {};
    \node (p1_r_b) at (19, 3) {};
    \node[rectangle, draw,dashed,inner sep=0pt, line width = 0.2mm][fit= (p1_l_u) (p1_r_b),  inner xsep=2ex, inner ysep=1.5ex] {};  
    
   %Path correspondoing to lower portion of Y
 %  \node[empty vertex] (p2_r_1) at  (18, 0.5){};
   \node[filled vertex] (p2_r_2) at  (18, 0){};
   \node[filled vertex] (p2_r_3) at  (18, -0.5) {};
   \node[filled vertex] (p2_r_4) at  (18, -1){};
   \node[filled vertex] (p2_r_5) at  (18, -2){};   
   \node[filled vertex] (p2_r_6) at  (18, -2.5){};
   \node[filled vertex] (p2_r_7) at  (18, -3){};
 %  \node[empty vertex] (p2_r_8) at  (18, -3.5){}; 
   \node[empty vertex] (p2_l_3) at  (17.5, -0.5) {};
   \node[empty vertex] (p2_l_4) at  (17.5, -1){};
   \node[empty vertex] (p2_l_5) at  (17.5, -2){};   
   \node[empty vertex] (p2_l_6) at  (17.5, -2.5){};  
 %  \draw (p2_r_1) -- (p2_r_2);
   \draw [line width=0.5mm] (p2_r_2) -- (p2_r_3);   
   \draw [line width=0.5mm] (p2_r_3) -- (p2_r_4);      
   \draw [dotted, line width=0.5mm] (p2_r_4) -- (p2_r_5);      
   \draw [line width=0.5mm] (p2_r_5) -- (p2_r_6);   
   \draw [line width=0.5mm] (p2_r_6) -- (p2_r_7);      
 %  \draw (p2_r_7) -- (p2_r_8);  
   \draw [line width=0.5mm] (p2_l_3) -- (p2_l_4);      
   \draw [dotted, line width=0.5mm] (p2_l_4) -- (p2_l_5);      
   \draw [line width=0.5mm] (p2_l_5) -- (p2_l_6);    
   %horizontal lines
   \draw [line width=0.5mm] (p2_l_3) -- (p2_r_3);   
   \draw [line width=0.5mm] (p2_l_4) -- (p2_r_4);   
   \draw [line width=0.5mm] (p2_l_5) -- (p2_r_5);   
   \draw [line width=0.5mm] (p2_l_6) -- (p2_r_6); 
 %  oblique lines
   \draw [line width=0.5mm] (p2_l_3) -- (p2_r_2);
   \draw [line width=0.5mm] (p2_l_6) -- (p2_r_7);   

    \node (p2_l_u) at (17, 0) {};
    \node (p2_r_b) at (19, -3) {};
    \node[rectangle, draw,dashed,inner sep=0pt, line width = 0.2mm][fit= (p2_l_u) (p2_r_b),  inner xsep=2ex, inner ysep=1.5ex] {};

   % Cycle attached at the end of Y
   \node[filled vertex, label={[xshift=-1mm, yshift=1mm]$v_{m^2}$}] (c_1) at  (18, -8){};
    \node[filled vertex] (c_2) at  (20, -8){};
%    \draw [ultra thick] (c_1) -- (c_2);
    \pgadget{(18.5,-8)}{0.25}{0.1}{1.5}{0};  
   \node[filled vertex] (c_1-dash) at  (18.5, -8){};    
    \draw [line width=0.5mm] (c_1) -- (c_1-dash);    
    \draw [dotted, <->, line width=0.5mm] (18.5, -8.5) -- (20, -8.5);
    \node[label=below:\tiny{$P_{m}$}] (p_1) at  (19.2, -8.5){};   
    \node[filled vertex] (c_3) at  (20.5, -8){};
    \draw [line width=0.5mm] (c_2) -- (c_3); 
    
    \node[filled vertex] (c_4) at  (22, -8){};
  %  \draw [ultra thick] (c_3) -- (c_4);    
    \pgadget{(20.5,-8)}{0.25}{0.1}{1}{0};   
    \draw [dotted, <->, line width=0.5mm] (20.5, -8.5) -- (22, -8.5);
    \node[label=below:\tiny{$P_{m-1}$}] (p_2) at  (21.5, -8.5){};
    
    \node[filled vertex] (c_5) at  (22.5, -8){};
    \draw [line width=0.5mm] (c_4) -- (c_5); 

    \node[filled vertex] (c_6) at  (24.5, -8){};   
    \draw [line width=0.5mm] [dotted] (c_5) -- (c_6); 

    \node[filled vertex] (c_7) at  (25, -8){};       
    \draw [line width=0.5mm] (c_6) -- (c_7);   

    \node[filled vertex] (c_8) at  (25.5, -8){};       
    \draw [line width=0.5mm] (c_7) -- (c_8);       

    \node[filled vertex] (c_9) at  (26, -8){};       
    \draw [line width=0.5mm] (c_8) -- (c_9);       

    \node[filled vertex] (c_10) at  (26.5, -8){};       
    \draw [line width=0.5mm] (c_9) -- (c_10);   

    \node[filled vertex] (c_11) at  (27, -8){};       
    \draw [line width=0.5mm] (c_10) -- (c_11);       

    \node[filled vertex] (c_12) at  (27.5, -8){};       
    \draw [line width=0.5mm] (c_11) -- (c_12);       

    \node[filled vertex] (c_13) at  (28, -8){};       
    \draw [line width=0.5mm] (c_12) -- (c_13);       

    \node[filled vertex] (c_14) at  (28.5, -8){};       
    \draw [line width=0.5mm] (c_13) -- (c_14);       

     \node[filled vertex, label=right:\footnotesize{$v_1$}] (c_15) at  (29, -8){};       
     \draw [line width=0.5mm] (c_14) -- (c_15);       

    \node[empty vertex, label=right:\footnotesize{$q$}] (c_9_b) at  (26, -9){};       
     \draw [line width=0.5mm] (c_9_b) -- (c_9);       
     \draw [line width=0.5mm] (c_9_b) -- (c_7);
     \draw [line width=0.5mm] (c_9_b) -- (c_11);     

     \node[empty vertex,, label=right:\footnotesize{$p$}] (c_13_b) at  (28, -9){};       
     \draw [line width=0.5mm] (c_13_b) -- (c_13);       
     \draw [line width=0.5mm] (c_13_b) -- (c_12);
     \draw [line width=0.5mm] (c_13_b) -- (c_14);       

     \node[empty vertex,, label=left:\footnotesize{$r$}] (c_12_u) at  (27.5, -7){};       
     \draw [line width=0.5mm] (c_12_u) -- (c_10);       
     \draw [line width=0.5mm] (c_12_u) -- (c_8);
     \draw [line width=0.5mm] (c_12_u) -- (c_15); 
     
     \draw (c_1) to[bend left=50, line width=0.5mm] (c_15);
  %  \draw[dotted, line width=0.3mm] (u) -- (u');

    \draw [dotted, <->, line width=0.5mm] (18, -10) -- (29, -10);
    \node[label=below:\footnotesize{$m^2$}] (cycle_size) at  (22, -9.75){};

     \node (cycle_left_upper) at (17.5, -5.5) {};
     \node (cycle_right_lower) at (30, -10.5) {};

    \node[rectangle, draw,dashed,inner sep=0pt, line width = 0.2mm][fit= (cycle_left_upper) (cycle_right_lower),  inner xsep=2ex, inner ysep=1.5ex] {}; 
  \end{tikzpicture}}

%% file: dregular.tex
\section{ $d$-regular graphs}

We now extend the hardness result of \BNP\ from connected cubic graphs to connected $d$-regular graphs for every fixed $d \ge 4$. The overall strategy is as follows.
\begin{itemize}
\item First, we construct  a connected $d$-regular graph $H_d$ from a connected cubic graph $G$.

\item Next, we show that the burning number of $H_d$ satisfies $b(H_d) \in \{\, b(G),\, b(G)+1 \,\}$.
    
\item We then prove that $b(H_d)=b(G)+1$ if for every optimal burning sequence of $G$, there exists at least one vertex that is burned in the final step by a unique source.
    
\item Subsequently, we establish that the cubic graph $H$ produced by Construction~\ref{cons:H-combined} indeed satisfies this structural property.
    
\item Since the \BNP\ is NP-complete on the connected cubic graphs obtained from Construction~\ref{cons:H-combined}, it follows that the \BNP\ remains NP-complete on connected $d$-regular graphs for every fixed $d \ge 4$. 
\end{itemize}

\begin{construction}\label{cons:dreg}
Let $G$ be a connected cubic graph on $n$ vertices with $V(G)=\{u_1,\dots,u_n\}$. For an integer $d\ge 4$, construct a graph $H_d$ as follows. Take $d-2$ copies of $G$, denoted by $G_1,\dots,G_{d-2}$. For each $i\in\{1,\dots,n\}$, let $z_{(i,j)}$ denote the copy of $u_i$ in $G_j$. For every fixed $i$, make the set of vertices $\{z_{(i,1)},\dots,z_{(i,d-2)}\}$ a clique.  The resulting graph is denoted by $H_d$.
\end{construction}
\medskip
\begin{remark}
$H_d$ is connected and $d$-regular. Indeed, each vertex has three neighbors inside its copy of $G$ and $d-3$ neighbors inside its clique, giving total degree $d$. %Moreover, $H_d \cong H \square K_{d-2}$.
\end{remark}

The following definition and Observations are based on Construction~\ref{cons:dreg}. 
\begin{definition}[projection of $z_{(i,j)}$ onto $G_k$:]
Let $H_d$ be a graph obtained from $G$ by Construction~\ref{cons:dreg}, and let $G_1,\dots,G_{d-2}$ denote the $d-2$ copies of $G$ used in the construction. For each $i\in[n]$ and $j\in[d-2]$, let $z_{(i,j)}$ denote the vertex in $G_j$ corresponding to $u_i\in V(G)$.
For $1\le k\le d-2$, the \emph{projection of $z_{(i,j)}$ onto $G_k$} is defined as ${Proj}(z_{(i,j)},G_k)=z_{(i,k)}$. For a vertex $v\in V(H_d)$, we denote by ${Proj}(v,G_k)$ its projection onto $G_k$. 
For a sequence $B=(v_1,\dots,v_t)$ of vertices of $H_d$, the \emph{projection of $B$ onto $G_k$}, denoted by ${Proj}(B,G_k)$, is the sequence obtained by replacing each $v_\ell$ ($1 \leq \ell \leq t$) with ${Proj}(v_\ell,G_k)$ while preserving the order.
\end{definition}
Observe that ${Proj}(B,G_k)$ may contain repeated vertices. For instance, if $j_1\neq j_2$, then ${Proj}(z_{(i,j_1)},G_k)={Proj}(z_{(i,j_2)},G_k)=z_{(i,k)}$.

Let $3 \le d'<d$, be an integer. Then by $H_{d'}$, we denote the induced subgraph of $H_d$ containing the first $d'-2$ copies of $G$. Let  $v \in V(H_d)$. Then $Proj(v, H_{d'})$ is defined as follows. Let $v' = Proj(v, H_{d'})$. If $v \in V(H_{d'})$ then $v' = v$, otherwise $v'= Proj(v, G_1)$.  Let $B_d$ be a sequence of vertices from $V(H_d)$. Then the projection of $B_d$ on $H_{d'}$ denoted by $B_d'=Proj(B, H_{d'})$,  is obtained by replacing  every $v \in B_d$ by  $Proj(v, H_{d'})$ in  the same order.

The following observations are based on Construction~\ref{cons:dreg}.
\begin{observation}
\label{obs:distance-in-same-component}
If $m,k \in [d-2]$ and $m \neq k$, $\forall i,j \in [n], d(z_{(i,k)}, z_{(j,m)}) = d(z_{(i,k)}, z_{(j,k)})+1 = d(z_{(i,m)}, z_{(j,m)}) +1$.
\end{observation}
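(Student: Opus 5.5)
The plan is to reduce every distance in $H_d$ to a distance in the original cubic graph $G$ via the map $\pi\colon V(H_d)\to V(G)$, $\pi(z_{(a,b)})=u_a$. It forgets which copy a vertex lies in, so $\pi(v)$ is the copy of ${Proj}(v,G_k)$ in $G$. I will prove the following two facts and then combine them:
\[
d_{H_d}(z_{(i,k)},z_{(j,k)})=d_G(u_i,u_j)
\qquad\text{and}\qquad
d_{H_d}(z_{(i,k)},z_{(j,m)})=d_G(u_i,u_j)+1 \quad (m\neq k).
\]
Since the first fact holds for every copy index, in particular for $k$ and for $m$, the two facts together give the full chain of equalities in the statement.

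The key structural point is how $\pi$ acts on the two kinds of edges of $H_d$. By Construction~\ref{cons:dreg}, every edge is one of the following:
\begin{itemize}
\item an edge $z_{(a,b)}z_{(c,b)}$ inside a copy $G_b$, with $u_au_c\in E(G)$, which $\pi$ sends to an edge of $G$; or
\item a clique edge $z_{(a,b)}z_{(a,b')}$, which $\pi$ sends to the single vertex $u_a$.
\end{itemize}
Hence the image under $\pi$ of any walk of length $L$ in $H_d$ is a walk in $G$. Its length equals $L$ minus the number of clique edges used.

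\textbf{Upper bounds.} Take a shortest $u_i$--$u_j$ path in $G$ and copy it into $G_k$. This gives $d_{H_d}(z_{(i,k)},z_{(j,k)})\le d_G(u_i,u_j)$. Appending the clique edge $z_{(j,k)}z_{(j,m)}$ gives $d_{H_d}(z_{(i,k)},z_{(j,m)})\le d_G(u_i,u_j)+1$.

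\textbf{Lower bounds.} Take any path $P$ from $z_{(i,k)}$ to $z_{(j,k)}$ in $H_d$. Its image $\pi(P)$ is a $u_i$--$u_j$ walk in $G$ of length at most $|P|$, so $|P|\ge d_G(u_i,u_j)$. Now take any path $P$ from $z_{(i,k)}$ to $z_{(j,m)}$ with $m\ne k$. The second coordinate of a vertex changes only along clique edges, and $P$ starts in copy $k$ and ends in copy $m$, so $P$ uses at least one clique edge. Therefore $\pi(P)$ is a $u_i$--$u_j$ walk of length at most $|P|-1$, which gives $|P|\ge d_G(u_i,u_j)+1$.

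The degenerate case $i=j$ is covered automatically: it gives distance $1=0+1$. The only step needing care is the lower bound, specifically justifying that paths leaving a copy and re-entering it cannot be shorter. The projection argument handles this uniformly, since no detour can make the image walk shorter than $d_G(u_i,u_j)$.
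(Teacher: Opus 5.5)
Your proof is correct and complete. The paper gives no argument for this observation; it is simply listed as following from Construction~\ref{cons:dreg}, implicitly because $H_d \cong G \,\square\, K_{d-2}$ and distances in a Cartesian product add.

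Your projection $\pi$ is exactly the standard proof of that additivity. The counting ``length of $\pi(P)$ $=$ $|P|$ minus the number of clique edges, and a path between distinct copies uses at least one clique edge'' supplies the lower bound the paper leaves unstated. That bound says no path that wanders between copies can beat $d_G(u_i,u_j)$, or $d_G(u_i,u_j)+1$ when the endpoints lie in different copies. Your argument also uses nothing about $G$ being cubic, so it applies verbatim when the construction is later run on the cubic graph $H$ instead of $G$.
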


\begin{observation}
\label{obs:distance-in-projection-not-more}
Let $u, v \in V(H_d)$ and $ u' = Proj(u, H_{d'}), v' = Proj(v, H_{d'})$. Then, $d(u',v') \leq d(u, v)$. 
\end{observation}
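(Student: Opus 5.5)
The plan is to show that the projection map $\pi(w)={Proj}(w,H_{d'})$ is a graph homomorphism onto $H_{d'}$ in a weak sense. Concretely, for every edge $(w_1,w_2)$ of $H_d$, the images $\pi(w_1)$ and $\pi(w_2)$ are either equal or adjacent in $H_{d'}$. Once this holds, take a shortest $u$--$v$ path $u=w_0,w_1,\dots,w_\ell=v$ in $H_d$ with $\ell=d(u,v)$. Its image $\pi(w_0),\dots,\pi(w_\ell)$ is then a $u'$--$v'$ walk in $H_{d'}$ with at most $\ell$ edges, after deleting consecutive repetitions. Hence $d_{H_{d'}}(u',v')\le \ell$. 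Since $H_{d'}$ is a subgraph of $H_d$, we also get $d_{H_d}(u',v')\le d_{H_{d'}}(u',v')\le d(u,v)$, so the inequality holds whichever ambient graph is meant.

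The key step is a case check over the two edge types of Construction~\ref{cons:dreg}; the condition $d'\ge 3$ guarantees that $G_1\subseteq H_{d'}$ exists.

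\emph{Copy edges.} Here $w_1=z_{(i,j)}$ and $w_2=z_{(i',j)}$ with $(u_i,u_{i'})\in E(G)$. If $j\le d'-2$, both endpoints are fixed by $\pi$ and the edge is unchanged. Otherwise both map into $G_1$, giving $z_{(i,1)}$ and $z_{(i',1)}$, which are adjacent because $G_1$ is a copy of $G$.

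\emph{Clique edges.} Here $w_1=z_{(i,j)}$ and $w_2=z_{(i,k)}$. By definition of $\pi$, the images are $z_{(i,j')}$ and $z_{(i,k')}$ with $j',k'\in[d'-2]$. They are equal if $j'=k'$. Otherwise they are adjacent, because $\{z_{(i,1)},\dots,z_{(i,d'-2)}\}$ is a clique in $H_d$ and hence in its induced subgraph $H_{d'}$.

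There is no real obstacle here. The only point needing care is that collapsed edges shorten the walk rather than breaking it, which is exactly why the conclusion is an inequality rather than an equality.
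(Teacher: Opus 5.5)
Your argument is correct, and it reaches the inequality by a different route than the paper.

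The paper splits into cases according to whether $u$ and $v$ lie in $V(H_{d'})$. In each case it reads the bound off Observation~\ref{obs:distance-in-same-component}, i.e.\ the product-structure formula $d(z_{(i,k)},z_{(j,m)})=d(z_{(i,k)},z_{(j,k)})+1$ for $k\neq m$. This is short and gives exact distance values, but the distances it controls are those measured in $H_d$.

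You instead check locally that the projection sends each edge of $H_d$ to an edge or a single vertex of $H_{d'}$, and then push a shortest path forward to a walk. This needs no distance formula and treats all positions of $u,v$ uniformly. It also bounds $d_{H_{d'}}(u',v')$ directly, which is the form actually used in Observation~\ref{obs:projected-seq-burns-H-d'}. The paper's route additionally needs $H_{d'}$ to be isometric in $H_d$, and it only asserts this later (in the Claim).

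Since your map fixes $V(H_{d'})$ pointwise, your argument gives that isometry for free. For $u,v\in V(H_{d'})$ it yields $d_{H_{d'}}(u,v)\le d_{H_d}(u,v)$, and the reverse inequality is trivial because $H_{d'}$ is a subgraph of $H_d$.
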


\begin{proof}
For any $u, v \in V(H_d)$, there can be three cases:\begin{enumerate*}
\item $u,v \in V(H_d)\setminus V(H_d')$.

\item $u,v \in V(H_d)\cap V(H_d')$.

\item $u\in V(H_d)\cap V(H_d')$ and $v \in V(H_d)\setminus V(H_d')$.
\end{enumerate*} 
Initially, assume that $u,v \in V(H_d)\setminus V(H_d')$ and  $u\in V(G_i)$ and $v \in V(G_j)$, where $G_i, G_j$ are the $i^{th}$ and $j^{th}$ copy of $G$ respectively used in the construction of $H_d$ and $d'+1 \leq i, j \leq d$.  Then by the definition of $Proj(v,H_d')$, both $u'$ and $v'$ belong to $V(G_1)\subset V(H_{d'})$. Therefore, by Observation~\ref{obs:distance-in-same-component}, $d(u',v') \leq d(u, v)$. Now assume that, $u,v \in V(H_d)\cap V(H_d')$. Then, again by the definition of $Proj(v,H_d')$, $u'=u$ and $v'=v$. Therefore,  $d(u',v') =d(u, v)$. Finally, consider $u\in V(H_d)\cap V(H_d')$ and $v \in V(H_d)\setminus V(H_d')$. This means that $u' = u$ and $v' \in V(G_1)$. Then by Observation~\ref{obs:distance-in-same-component}, $d(u',v') \leq d(u, v)$. Thus, the observation follows.
\end{proof}

\begin{observation}
\label{obs:projected-seq-burns-H-d'}
Let $B_d=(b_1,\dots,b_t)$ be a burning sequence for $H_d$ and let $B_{d'} = Proj(B_d,H_{d'})$. Then $B_{d'}$ (possibly containing repeated vertices) burns $H_{d'}$ completely.
\end{observation}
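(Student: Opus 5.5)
We need to show: if $B_d=(b_1,\dots,b_t)$ burns $H_d$, then $B_{d'}=Proj(B_d,H_{d'})=(b'_1,\dots,b'_t)$ burns $H_{d'}$, with burning understood in the distance formulation (every vertex $w$ has an index $i$ with $d(w,b_i)\le t-i$). Repeated vertices simply contribute redundant balls.

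Fix $w\in V(H_{d'})$. Since $w\in V(H_d)$ and $B_d$ burns $H_d$, there is an index $i$ with $d_{H_d}(w,b_i)\le t-i$. Because $w\in V(H_{d'})$, we have $Proj(w,H_{d'})=w$. Observation~\ref{obs:distance-in-projection-not-more} then gives
\[
d(w,b'_i)=d\bigl(Proj(w,H_{d'}),Proj(b_i,H_{d'})\bigr)\le d_{H_d}(w,b_i)\le t-i .
\]
So the same index $i$ witnesses that $w$ is burned by $B_{d'}$.

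The main thing to check is which metric Observation~\ref{obs:distance-in-projection-not-more} bounds: we need the inequality in $H_{d'}$, not merely in $H_d$. I would verify that $H_{d'}$ is an isometric subgraph of $H_d$, so the two distances agree. For $u,v\in V(H_{d'})$, Observation~\ref{obs:distance-in-same-component} says a shortest path between copies uses exactly one clique edge. Every copy $G_k$ has the same internal distances, so such a path can be rerouted through the copies containing the endpoints, which lie in $H_{d'}$. Hence $d_{H_{d'}}(u,v)=d_{H_d}(u,v)$.

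The case $b_i\notin V(H_{d'})$ also needs a check. Here $b'_i\in G_1$, and its distance to $w$ is at most that from $b_i$, which is exactly the third case in the proof of Observation~\ref{obs:distance-in-projection-not-more}. Since $w$ was arbitrary, $B_{d'}$ covers all of $V(H_{d'})$ within $t$ steps.
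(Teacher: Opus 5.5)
Your proof is correct and follows the paper's own argument: fix a vertex of $H_{d'}$, take the index $i$ that witnesses its burning under $B_d$, and apply Observation~\ref{obs:distance-in-projection-not-more} to conclude that the projected source in the same position still reaches that vertex within $t-i$ steps. Your extra check that $H_{d'}$ is isometric in $H_d$, which lets the inequality be read in $H_{d'}$, makes explicit a detail the paper leaves implicit when it writes $d_{H_{d'}}(b_i',v)\le d_{H_d}(b_i,v)$.
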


\begin{proof}
Let $B_d=(b_1,\dots,b_t)$ be a burning sequence for $H_d$. By definition of graph burning, every vertex $v \in V(H_d)$ is burned by time $t$, and if $v$ is burned by the source $b_i$, then $i + d_{H_d}(b_i,v) \le t$.

Let $v \in V(H_{d'})$. Since $B_d$ burns $H_d$, there exists an index $i$ such that $i + d_{H_d}(b_i,v) \le t$. Let $b_i' = Proj(b_i,H_{d'})$. By Observation~\ref{obs:distance-in-projection-not-more},
$d_{H_{d'}}(b_i',v) \le d_{H_d}(b_i,v)$. Hence $i + d_{H_{d'}}(b_i',v) \le i + d_{H_d}(b_i,v) \le t$.
Therefore, $v$ is burned in $H_{d'}$ by time $t$ under the sequence $B_{d'}$. Since $v$ is arbitrary, every vertex of $H_{d'}$ is burned by $B_{d'}$.
\end{proof}

\begin{lemma}
\label{lem:duplicates-at-the-end}
Let $B_{d'} = Proj(B_d, H_{d'})$, where $B_d$ is an optimal burning sequence for $H_d$. If $B_{d'}$ contains duplicate vertices, then the duplicates must be the last two sources of $B_{d'}$.
\end{lemma}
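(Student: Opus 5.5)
The plan is to first show that any two sources of $B_d$ with the same projection must sit at \emph{consecutive} positions, and then to use the optimality of $B_d$ to push that consecutive pair to positions $t-1,t$. Write $B_d=(b_1,\dots,b_t)$ with $t=b(H_d)$, and suppose $Proj(b_i,H_{d'})=Proj(b_j,H_{d'})$ for some $i<j$ with $b_i\neq b_j$.

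\textbf{Step 1 (duplicates are adjacent in $H_d$).} By the definition of $Proj(\cdot,H_{d'})$, two distinct vertices with the same projection are copies $z_{(r,k)},z_{(r,m)}$ of the same vertex $u_r\in V(G)$, with $k\neq m$. This covers both the case where both lie outside $H_{d'}$ and the case where one lies in $G_1$ and the other outside $H_{d'}$. By Construction~\ref{cons:dreg} these copies lie in a common clique. Hence $d_{H_d}(b_i,b_j)=1$.

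\textbf{Step 2 (duplicates are consecutive and the later one is redundant).} The fire from $b_i$ reaches $b_j$ at the end of step $i+1$. By the definition of a burning sequence, $b_j$ must be unburned at the end of step $j-1$, so $j-1<i+1$, which forces $j=i+1$. Moreover, for every $v$, the triangle inequality gives $d(b_i,v)\le 1+d(b_j,v)\le 1+(t-j)=t-i$. So every vertex burned by $b_j$ is also burned by $b_i$, i.e.\ $b_j$ is redundant in $B_d$. The same argument shows that no vertex can have three preimages in $B_d$, since they would have to occupy positions $i,i+1,i+2$ while being pairwise adjacent, contradicting $j\le i+1$ for the pair $(i,i+2)$. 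Consequently every duplicate in $B_{d'}$ comes from a pair at positions $(i,i+1)$ whose second member is redundant.

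\textbf{Step 3 (redundancy forces the pair to the end).} It remains to show that $i+1=t$; uniqueness of such a pair then follows, since two disjoint pairs cannot both end at position $t$. Suppose $i+1<t$. The idea is to use the free slot at position $i+1$ to shorten the sequence and contradict the optimality of $B_d$. I would try replacing the pair $(b_i,b_{i+1})$ by a single source placed one position earlier, and shifting $b_{i+2},\dots,b_t$ down by one, to obtain a sequence of length $t-1$.

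Here the key bookkeeping is that sources after position $i+1$ keep their radius $t-k$ under the shift. The first $i$ sources, however, lose one unit of radius, and this is exactly where the argument is delicate. I expect this step to be the main obstacle, because a redundant interior source does not by itself yield a shorter burning sequence in a general graph.

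The first thing I would try is to exploit the clique structure. A source at $z_{(r,k)}$ together with its neighbour $z_{(r,m)}$ burns, by Observation~\ref{obs:distance-in-same-component}, the same set of rows as a single source in row $r$ with one extra step. I would then argue that the freed position can be ``spent'' earlier in the sequence, by a swapping argument that moves the wasted slot toward the end, to recover the lost radius. If this direct exchange fails, the fallback is to argue at the level of $H_{d'}$ via Observation~\ref{obs:projected-seq-burns-H-d'}. Deleting the duplicate at position $i+1<t$ from $B_{d'}$ and shifting the later sources leaves every vertex burned within $t-1$ steps, except possibly those burned only by $b_1,\dots,b_i$ at their maximal distance. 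I would bound that exceptional set using the uniqueness structure and then derive the contradiction with the optimality of $B_d$.
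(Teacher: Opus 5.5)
\textbf{Steps 1 and 2 are fine; Step 3 cannot be completed because the lemma is false as stated.}

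Your Steps 1 and 2 are correct and match the first two steps of the paper's proof. The two preimages are copies of one vertex of $G$, hence adjacent in a clique, and the requirement that $b_j$ be unburned at the end of step $j-1$ forces $j=i+1$. Your observation $N_{t-i-1}[b_{i+1}]\subseteq N_{t-i}[b_i]$ is a useful sharpening.

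The gap is Step 3, and no exchange or swapping argument can close it. Take $G=K_4$, $d=4$, $d'=3$. Then $H_4$ is two copies of $K_4$ joined by the perfect matching $z_{(i,1)}z_{(i,2)}$. It has $8$ vertices and diameter $2$. We have $b(H_4)\ge 3$, since a sequence of length $2$ burns at most $|N[b_1]|+1=6$ vertices. Now consider $B_4=(z_{(1,1)},z_{(1,2)},z_{(2,2)})$:
\begin{itemize}
\item $z_{(1,2)}$ is unburned at the end of step $1$;
\item $z_{(2,2)}\notin N[z_{(1,1)}]$, so it is unburned at the end of step $2$;
\item $z_{(1,1)}$ has radius $2$ and so burns everything.
\end{itemize}
Hence $B_4$ is a valid burning sequence, and it is optimal. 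Yet its projection onto $H_3=G_1$ is $(z_{(1,1)},z_{(1,1)},z_{(2,1)})$, whose duplicates sit in positions $1,2$, not in the last two positions. So the redundant interior source you identified genuinely occurs in an optimal sequence.

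\textbf{Where the paper's own argument breaks.} The paper's Step 3 deletes $z_{(i,j_2)}$ and shifts the later sources one step earlier, claiming the shorter sequence still burns $H_d$. This is exactly the move you distrusted, and it fails for the reason you gave. In a sequence of length $b-1$, the sources $b_1,\dots,b_{pos}$ each lose one unit of radius. In the example, the result is $(z_{(1,1)},z_{(2,2)})$, which misses $z_{(3,2)}$ and $z_{(4,2)}$.

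\textbf{What survives.} Your Step 2 holds, and it is all the lemma is actually used for: every duplicate is a consecutive pair whose second copy is redundant. To turn $B_{d'}$ into a genuine burning sequence of $H_{d'}$ of length at most $t=|B_d|$, use the standard repair. Scan left to right, and replace any planned source that is already burned at the end of the previous step by an arbitrary unburned vertex; stop if none remains. Nothing is lost: if a planned $x_k$ was burned by step $k-1$ from an actual source $y$ chosen at step $j<k$, then $N_{t-k}[x_k]\subseteq N_{t-j}[y]$. This repair also handles the non-duplicate conflicts that projection can create, since projection can shrink distances by one. The ``duplicates only at the end'' route does not address those conflicts at all.
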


\begin{proof}
Let $b = |B_d|$. Suppose that $B_{d'}$ contains duplicate vertices. Then there exist $z_{(i,j_1)}, z_{(i,j_2)} \in B_d$ such that either $j_1 = 1$ and $j_2 > d'-2$, or $j_1 > d'-2$ and $j_2 = 1$, or 
$j_1, j_2 > d'-2$. In each case, both vertices project to $z_{(i,1)}$ in $B_{d'}$, and hence $z_{(i,1)}$ appears twice in $B_{d'}$.

Let $B_d[{pos}_1] = z_{(i,j_1)}$ and $B_d[{pos}_2] = z_{(i,j_2)}$, where both vertices project to 
$z_{(i,1)}$ in $B_{d'}$ and ${pos}_1 < {pos}_2$. By Construction~\ref{cons:dreg}, all vertices 
$z_{(i,1)},\dots,z_{(i,d-2)}$ form a clique in $H_d$. Hence $z_{(i,j_1)}$ and $z_{(i,j_2)}$ are adjacent in $H_d$.

Suppose that ${pos}_1 < {pos}_2 - 1$. When $B_d$ is used to burn $H_d$, the vertex $z_{(i,j_1)}$ is $b_{{pos}_1}$. Since the two vertices are adjacent, the fire from $z_{(i,j_1)}$ reaches $z_{(i,j_2)}$ at step 
${pos}_1 + 1$. Because ${pos}_1 + 1 < {pos}_2$, the vertex $z_{(i,j_2)}$ is already burned before step 
${pos}_2$, contradicting the definition of a burning sequence, which requires that a vertex chosen as a source at a given step must be unburned at the beginning of that step. 
Therefore, ${pos}_2 = {pos}_1 + 1$, i.e., the two sources must occupy consecutive positions in $B_d$.

Now assume that these two consecutive sources do not occupy the last two positions of $B_d$. Thus,  they occur at positions ${pos}$ and ${pos}+1$ with ${pos}+1 < b$. Since the two vertices are adjacent, when the source at position ${pos_1}$ is burned, it burns the vertex at position ${pos}+1$ in the next step. Therefore, at the moment the vertex $z_{(i,j_2)}$ is selected as a source at step ${pos}+1$, it is burned at that same step by the fire originating from $z_{(i,j_1)}$. Hence, selecting $z_{(i,j_2)}$ as a source does not contribute to burning any vertex strictly earlier than it would already be burned by previously burned sources.

Since ${pos}+1 < b$, there exists at least one later source in $B_d$. By shifting all sources after position ${pos}+1$ one step earlier and removing $z_{(i,j_2)}$ from the sequence, we obtain a shorter burning sequence that still burns $H_d$ completely. This contradicts the optimality of $B_d$. Therefore, ${pos}+1 = b$, and the two sources $z_{(i,j_1)}$ and $z_{(i,j_2)}$ must occupy the last two positions of $B_d$. Consequently, the duplicates in $B_{d'}$ arise precisely from the last two sources.
\end{proof}

Since $|B_{d'}| = |B_d|$ and, by Observation~\ref{obs:projected-seq-burns-H-d'}, every vertex of $H_{d'}$ is burned by $B_{d'}$ no later than it is burned by $B_d$, it follows that $B_{d'}$ burns $H_{d'}$ within $|B_d|$ steps. The only potential obstruction to $B_{d'}$ being a valid burning sequence is the possible presence of duplicate vertices in $B_{d'}$. If $B_{d'}$ contains duplicates, then by Lemma~\ref{lem:duplicates-at-the-end}, the duplicates occur only in the last two positions of $B_{d'}$. 
Suppose that the last two entries of $B_{d'}$ are both equal to $z_{(i,1)}$. Let $B^1_{d'}$ be the sequence obtained from $B_{d'}$ by deleting the final occurrence of $z_{(i,1)}$. Thus $|B^1_{d'}| = |B_d|-1$.
If $B^1_{d'}$ burns $H_{d'}$ completely, then it is a valid burning sequence for $H_{d'}$ and we are done. 
Otherwise, let $v \in V(H_{d'})$ be a vertex that remains unburned after executing $B^1_{d'}$. Define $B^2_{d'}$ to be the sequence obtained by appending $v$ to the end of $B^1_{d'}$. Then $|B^2_{d'}| = |B_d|$.

\begin{claim}
$B^2_{d'}$ is a burning sequence for $H_{d'}$.
\end{claim}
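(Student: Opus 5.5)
The plan is to check the two defining requirements of a burning sequence for $B^2_{d'}$ separately. The first is \emph{coverage}: every vertex of $H_{d'}$ must lie within distance $t-i$ of the $i$-th source, where $t=|B_d|=|B^2_{d'}|$. The second is \emph{validity}: each source must be unburned when it is chosen. Write $B_{d'}=(b'_1,\dots,b'_{t-2},z,z)$ with $z=z_{(i,1)}$, so that $B^1_{d'}=(b'_1,\dots,b'_{t-2},z)$ and $B^2_{d'}=(b'_1,\dots,b'_{t-2},z,v)$. By Lemma~\ref{lem:duplicates-at-the-end}, the entries $b'_1,\dots,b'_{t-2},z$ are pairwise distinct.

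\textbf{Coverage.} By Observation~\ref{obs:projected-seq-burns-H-d'}, for every $u\in V(H_{d'})$ there is an index $j\le t$ with $j+d(B_{d'}[j],u)\le t$.
\begin{itemize}
\item If $j\le t-1$, then $B^2_{d'}[j]=B_{d'}[j]$, so the same inequality certifies that $u$ is burned by time $t$ under $B^2_{d'}$.
\item If $j=t$, then $d(z,u)=0$, so $u=z$. Since $z$ is the $(t-1)$-th source of $B^2_{d'}$ and $(t-1)+0\le t$, $u$ is again covered.
\end{itemize}
So the first $t-1$ sources of $B^2_{d'}$ already cover $H_{d'}$ with budget $t$. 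The appended source $v$ only adds coverage and cannot hurt.

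\textbf{Validity.} This is the step I expect to be the main obstacle. There are two points to handle.
\begin{itemize}
\item The last source $v$ is unburned at the end of step $t-1$ by its very choice. By assumption $B^1_{d'}$ fails to burn $H_{d'}$, so $v$ exists, and $v$ is distinct from all earlier sources because each source is burned at its own step.
\item For the earlier sources, projection shrinks distances (Observation~\ref{obs:distance-in-projection-not-more}). Hence some $b'_i$ might already be burned by some earlier $b'_j$ before step $i$, that is, $d(b'_j,b'_i)\le i-1-j$, even though the corresponding source in $B_d$ was not.
\end{itemize}
To handle the second point, I would use the triangle inequality to show $N_{t-i}[b'_i]\subseteq N_{t-j}[b'_j]$, since $(t-i)+(i-1-j)<t-j$. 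Consequently such a $b'_i$ contributes nothing to coverage. We may replace it by any currently unburned vertex, or by any vertex at all if none is unburned, without destroying coverage. If no unburned vertex exists at that step, the graph is already fully burned in fewer steps, which only strengthens the conclusion.

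Performing these replacements left to right keeps all sources distinct and unburned when chosen. The replacements do not affect the argument for $v$: coverage and the burned set only grow, and if $v$ became burned earlier then $B^1_{d'}$, after the same repairs, would already burn $H_{d'}$. This yields a valid burning sequence of length $t=|B_d|$ for $H_{d'}$, proving the claim.
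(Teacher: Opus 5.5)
Your coverage argument is essentially the paper's. The paper argues by contradiction: an unburned $u$ must be burned in $H_d$ by one of the first $|B_d|-1$ sources of $B_d$. It then splits on whether that source lies in $H_{d'}$ or is projected onto $G_1$, using Observation~\ref{obs:distance-in-projection-not-more}. You get the same conclusion by citing Observation~\ref{obs:projected-seq-burns-H-d'} and treating the index $t$ directly: then $u=z$, which is the $(t-1)$-th entry of $B^2_{d'}$. This is slightly cleaner than the paper's appeal to ``the last source does not spread''.

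The real difference is that the paper stops at coverage. It never checks the requirement, built into its own definition, that each source be unburned at the end of the preceding step. You are right that projection can break this. If $d(b'_j,b'_i)\le i-1-j$ for some $j<i\le t-1$, then $B^2_{d'}$ itself is \emph{not} a burning sequence in the paper's sense. Your triangle-inequality observation $N_{t-i}[b'_i]\subseteq N_{t-j}[b'_j]$, followed by left-to-right replacement, is the standard correct repair.

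What the repair proves, however, is weaker than the literal claim: a \emph{modified} sequence of length at most $|B_d|$ burns $H_{d'}$, i.e.\ $b(H_{d'})\le |B_d|$. You should say that rather than ``proving the claim''. It is exactly what Lemma~\ref{lem:burning-number-limits} consumes.

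Your route also buys something. An exact repetition is just the case $d(b'_j,b'_i)=0$, so running the repair directly on $B_{d'}$ makes Lemma~\ref{lem:duplicates-at-the-end} and the $B^1_{d'}/B^2_{d'}$ construction unnecessary. That is a genuine gain, because the delete-and-shift step in that lemma's proof also lowers by one the radius of every source placed before the deleted one.

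One step is wrong as written: your justification for $v$. The repairs enlarge the burned set. So $v$, chosen as unburned after the \emph{unrepaired} $B^1_{d'}$, may be burned by the end of step $t-1$ once an inserted source is present. Your remark that $B^1_{d'}$ ``would then already burn $H_{d'}$'' does not follow, since $v$ being burned says nothing about the other vertices.

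The fix is to run the same left-to-right rule through position $t$. After repairing positions $1,\dots,t-1$, take as last source any vertex still unburned at the end of step $t-1$. If there is none, the repaired prefix of length $t-1$ already burns $H_{d'}$. With that change, and the conclusion stated as $b(H_{d'})\le |B_d|$, your argument is complete.
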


\begin{proof}
Assume, for contradiction, that some vertex $u \in V(H_{d'})$ remains unburned after applying $B^2_{d'}$ on $H_d'$. Let $b_d = |B_d|$. Since $B_d$ is a burning sequence for $H_d$, every vertex of $H_d$, and hence every vertex of $H_{d'} \subseteq H_d$, is burned by time $b_d$ under $B_d$. Observe that the source chosen at the final step of a burning sequence does not spread fire, since it is burned at time $b_d$. Therefore, every other vertex of $H_d$ is burned by one of the first $b_d-1$ sources of $B_d$. In particular, the vertex $u$ is burned in $H_d$ by some source among the first $b_d-1$ entries of $B_d$. Let this source be $z_{(i,k_1)}$, selected at step $t \le b_d-1$. Then $t + d_{H_d}(z_{(i,k_1)},u) \le b_d$. We then have two cases.

\begin{enumerate}
\item $k_1 \le d'-2$.
In this case, $z_{(i,k_1)} \in V(H_{d'})$. Thus $z_{(i,k_1)}$ appears in $B^2_{d'}$ at the same position $t$. 
Because distances inside $H_{d'}$ coincide with those in $H_d$ for vertices of $H_{d'}$, we have
$t + d_{H_{d'}}(z_{(i,k_1)},u) = t + d_{H_d}(z_{(i,k_1)},u) \le b_d$. Hence $u$ is burned by time $b_d$ under $B^2_{d'}$, a contradiction.

\item  $k_1 > d'-2$.
In this case, $z_{(i,k_1)} \notin V(H_{d'})$ and is projected to $z_{(i,1)}$ in $B_{d'}$. Therefore, in $B^2_{d'}$, the vertex $z_{(i,1)}$ appears at the $t^{th}$ position. By Observation~\ref{obs:distance-in-projection-not-more}, $d_{H_{d'}}(z_{(i,1)},u) \le d_{H_d}(z_{(i,k_1)},u)$.

Therefore, $u$ is burned by time $b_d$ under $B^2_{d'}$, again a contradiction.
\end{enumerate}

In both cases we reach a contradiction. Therefore no vertex of $H_{d'}$ remains unburned after executing $B^2_{d'}$, and hence $B^2_{d'}$ is a valid burning sequence for $H_{d'}$.
\end{proof}

We thus obtain a burning sequence for $H_{d'}$ derived from $B_{d}$ as described in Algorithm~\ref{alg:alg2}.
\RestyleAlgo{boxruled}
\begin {algorithm} [h]
\caption{Construction of a burning sequence for $H_{d'}$ from $B_{d}$}
\label{alg:alg2}
\begin {enumerate}[label=\arabic*.]
\item $p = |B_d|, B_{d'} =  Proj(B_d, H_{d'}) = (x_1, x_2 \dots, x_p)$.
\item If $B_{d'}$ does not have duplicates \\
\hspace*{5mm} return $B_{d'}$ \\
Else\\
\hspace*{5mm} $B^1_{d'} = (x_1, x_2 \dots x_{(p-1)})$.\\
\hspace*{5mm} If $B^1_{d'}$ is a valid burning sequence for $H_{d'}$\\
\hspace*{10mm} return $B^1_{d'}$\\
\hspace*{5mm} Else\\
\hspace*{10mm} $B^2_{d'}=B^1_{d'}\cup v$ where $v$ is a vertex unburned after $(p-1)^{th}$ step\\ 
\hspace*{10mm} return $B^2_{d'}$
\end {enumerate}
\vspace*{2mm}
\end {algorithm}

\begin{lemma}
\label{lem:burning-number-limits}
For $\forall d \geq 3,\ b(G) \leq b(H_d) \leq b(G)+1$.  
\end{lemma}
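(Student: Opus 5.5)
We prove the two inequalities separately. Note that $H_3 = G$ (a single copy, with cliques of size one), so the statement is trivial for $d=3$; assume $d\ge 4$.

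\emph{Lower bound} $b(G)\le b(H_d)$. Apply the projection machinery with $d'=3$, so that $H_{3}=G_1\cong G$. Let $B_d$ be an optimal burning sequence of $H_d$ and run Algorithm~\ref{alg:alg2} on it.
\begin{itemize}
\item By Observation~\ref{obs:projected-seq-burns-H-d'}, $Proj(B_d,H_3)$ burns $G_1$ within $|B_d|$ steps.
\item By Lemma~\ref{lem:duplicates-at-the-end}, any duplicates can only be the last two entries.
\item The Claim then shows that the algorithm returns a genuine burning sequence, $B^1_{3}$ or $B^2_{3}$, of length at most $|B_d|$.
\end{itemize}
One point needs checking: the entries must be pairwise distinct and each must be unburned when it is chosen. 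If some projected source $x_t$ were already burned at step $t$ in $G_1$, we simply drop it and shift later sources earlier, which only helps. Equivalently, we may use the distance characterization $d(v,b_i)\le k-i$ and allow an arbitrary fresh vertex in that slot. Hence $b(G)=b(G_1)\le b(H_d)$.

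\emph{Upper bound} $b(H_d)\le b(G)+1$. Let $B=(u_{i_1},\dots,u_{i_k})$ be an optimal burning sequence of $G$ with $k=b(G)$. We use its copy in $G_1$, namely $(z_{(i_1,1)},\dots,z_{(i_k,1)})$, and append one arbitrary still-unburned vertex, giving a sequence of length $k+1$. For any $z_{(j,m)}$, choose the source $u_{i_\ell}$ with $d_G(u_{i_\ell},u_j)\le k-\ell$. By Observation~\ref{obs:distance-in-same-component},
\[
d_{H_d}(z_{(i_\ell,1)},z_{(j,m)})\le d_G(u_{i_\ell},u_j)+1\le (k+1)-\ell,
\]
so $z_{(j,m)}$ is burned by step $k+1$. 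Validity is preserved: the first $k$ sources are distinct and were unburned in $G$, and the extra $+1$ distance into the other copies does not burn a $G_1$-vertex earlier than it would be burned in $G$. Here we use that distances within $G_1$ in $H_d$ equal distances in $G$, again by Observation~\ref{obs:distance-in-same-component}. If every vertex is already burned by step $k$, the appended source is unnecessary and the bound holds anyway.

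The main obstacle is making the lower-bound projection rigorous when a projected source is already burned before its turn. We handle it via the distance-based definition, under which any covering sequence of length $k$ yields $b\le k$ after replacing redundant entries.
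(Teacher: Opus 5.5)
Your proposal is correct and follows the paper's proof. The lower bound comes from projecting an optimal burning sequence of $H_d$ onto $G_1\cong H_3$ and cleaning it as in Algorithm~\ref{alg:alg2}; the upper bound comes from replaying an optimal sequence of $G$ inside $G_1$, since every vertex of $H_d$ is at most one step farther than its copy in $G_1$. Your extra distance-based cleaning step (replacing any projected source that is repeated, or already burned at its turn, by a fresh vertex) is a worthwhile addition: it handles sources that become burned early because projection shortens distances, which the paper's algorithm does not treat, and it makes the appeal to Lemma~\ref{lem:duplicates-at-the-end} unnecessary.
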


\begin{proof}
Since $H_3 \cong G$, the statement holds for $d=3$. Assume $d>3$. Let $B_d$ be an optimal burning sequence of $H_d$. Project $B_d$ onto $H_3 \cong G$.
By Observation~\ref{obs:projected-seq-burns-H-d'}, the projected sequence burns $G$. Then by Algorithm~\ref{alg:alg2}, we obtain a valid burning sequence for $G$ of length at most $|B_d|$.
Hence $b(G) \le b(H_d)$.

Let $B$ be an optimal burning sequence of $G_1 \cong G$. Assign $B$ on $H_d$ by burning the corresponding vertices in the first copy $G_1$. If a vertex $u_i$ is burned at step $t$ in $G_1$, then every vertex $z_{(i,k)}$ in its clique is adjacent to $z_{(i,1)}$ and therefore becomes burned at step $t+1$. Thus, every vertex of $H_d$ is burned no later than one step after its corresponding vertex in $G_1$ is burned. Consequently, $b(H_d) \le b(G) + 1$. Combining both inequalities gives $b(G) \le b(H_d) \le b(G)+1$.
\end{proof}

The following observation is a consequence of Lemma~\ref{lem:burning-number-limits}.
\begin{observation}
\label{obs:monotonic}
$b(H_3 = G) \leq b(H_4) \leq b(H_5) \leq \dots \leq b(G)+1$.
\end{observation}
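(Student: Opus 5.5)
We prove the chain $b(H_3)\le b(H_4)\le\cdots\le b(H_d)\le b(G)+1$ by showing that $b(H_{d'})\le b(H_d)$ whenever $3\le d'<d$; the upper bound $b(H_d)\le b(G)+1$ is already part of Lemma~\ref{lem:burning-number-limits}. Note that $H_{d'}$ is defined as the induced subgraph of $H_d$ on the first $d'-2$ copies of $G$. It is itself the graph produced by Construction~\ref{cons:dreg} with parameter $d'$, so the notation is consistent, and $H_3\cong G$.

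For the monotonicity step, let $B_d$ be an optimal burning sequence of $H_d$, and form the projection $B_{d'}=Proj(B_d,H_{d'})$. By Observation~\ref{obs:projected-seq-burns-H-d'}, this projected sequence burns every vertex of $H_{d'}$ within $|B_d|$ steps. That observation rests on Observation~\ref{obs:distance-in-projection-not-more}: projection does not increase distances, and distances between vertices of $H_{d'}$ are the same in $H_{d'}$ as in $H_d$, since $H_{d'}$ is isometric in $H_d$ by Observation~\ref{obs:distance-in-same-component}.

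The only obstacle to $B_{d'}$ being a legal burning sequence is repeated vertices. By Lemma~\ref{lem:duplicates-at-the-end}, any duplicates can only occupy the last two positions. Algorithm~\ref{alg:alg2}, together with the Claim preceding it, turns $B_{d'}$ into a valid burning sequence of $H_{d'}$ of length at most $|B_d|$. It either returns $B_{d'}$ itself, or drops the final duplicate, or replaces that duplicate by a still-unburned vertex. Hence $b(H_{d'})\le|B_d|=b(H_d)$.

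Applying this with $d'=d-1$ for each $d\ge4$ gives $b(H_{d-1})\le b(H_d)$, and chaining these inequalities gives $b(H_3)\le b(H_4)\le\cdots$. Every term is bounded above by $b(G)+1$ by Lemma~\ref{lem:burning-number-limits}.

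The one delicate point is legality, not coverage. Projection may produce a vertex that is already burned at the time it is selected, because the sources being projected onto now lie in fewer copies. If validity demands that each source be unburned when chosen, handle it as follows: whenever a projected source is already burned, replace it by any unburned vertex, or stop early if none remains. This does not hurt coverage, because a vertex that is already burned at time $i$ means some earlier source already covers the ball that the projected source would have produced. So the set of vertices burned by step $|B_d|$ does not shrink.
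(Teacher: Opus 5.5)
Your proof is correct and is the argument the paper intends: Lemma~\ref{lem:burning-number-limits} by itself only places each $b(H_d)$ in $\{b(G),b(G)+1\}$, and the monotonicity comes from rerunning its projection step (Observation~\ref{obs:projected-seq-burns-H-d'} plus Algorithm~\ref{alg:alg2}) with $H_{d-1}$ in place of $H_3$, exactly as you do. Your final paragraph is what makes the argument airtight: projection can also produce non-duplicate sources that are already burned when selected, which Algorithm~\ref{alg:alg2} returns unchecked, and your replace-by-an-unburned-vertex rule handles these and duplicates alike, so you do not need Lemma~\ref{lem:duplicates-at-the-end}, whose delete-and-shift step does not actually give a shorter sequence because the sources before the deleted one lose a step of spread.
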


Let $G$ be a graph with a burning sequence $B$. We define the following sets with respect to $G$ and $B$.
\begin{enumerate}
\item $BL(G, B) = \{v \in V(G) | v \text{ is burned in the last step}\}$.
\item $UB(G, B)= \{v \in V(G) | v \text{ is burned by a unique burning source} \}$
\end{enumerate}

\begin{lemma}
\label{lem:b-h4-one-more}
Let $G$ be a connected cubic graph such that for every optimal burning sequence $B$ of $G$,
$BL(G,B) \cap UB(G,B) \neq \emptyset$. Let $H_4$ be a 4-regular graph obtained from $G$ by Construction~\ref{cons:dreg}. Then $b(H_4) = b(G) + 1$.
\end{lemma}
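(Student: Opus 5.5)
By Lemma~\ref{lem:burning-number-limits}, $b(G)\le b(H_4)\le b(G)+1$. It therefore suffices to rule out $b(H_4)=b(G)$. I will argue by contradiction. Assume $b(H_4)=b(G)=:b$ and fix an optimal burning sequence $B_4=(s_1,\dots,s_b)$ of $H_4$. In $H_4$ there are two copies $G_1,G_2$, and $z_{(i,1)}z_{(i,2)}$ is an edge. By Observation~\ref{obs:distance-in-same-component}, $d(z_{(i,k)},z_{(j,m)})=d_G(u_i,u_j)+[k\ne m]$. The plan is to project $B_4$ to a burning sequence of $G$ and then use the unique-last-step vertex in each copy to force a contradiction.

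\textbf{Step 1 (projection to $G$).} Let $B=(p(s_1),\dots,p(s_b))$, where $p(z_{(i,k)})=u_i$. For any $v=u_j$ we have $d_G(p(s_\ell),u_j)\le d_{H_4}(s_\ell,z_{(j,k)})$, so $B$ burns $G$ in $b$ steps. If $B$ has a repeated vertex, Lemma~\ref{lem:duplicates-at-the-end} places the duplicates in the last two positions. In that case the first $b-1$ projected sources, together with any unburned vertex appended (as in Algorithm~\ref{alg:alg2}), still burn $G$ in $b$ steps. Either way we obtain an optimal burning sequence $B^*$ of $G$ that agrees with $B$ in the first $b-1$ positions. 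The last source spreads no fire, so the last position is irrelevant for which vertices are burned through distance.

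\textbf{Step 2 (the unique last-step vertex).} By hypothesis there is $u_j\in BL(G,B^*)\cap UB(G,B^*)$. So $u_j$ is burned exactly at step $b$, by a unique source $u_i=p(s_\ell)$ with $\ell+d_G(u_i,u_j)=b$, and every other source $u_{i'}=p(s_{\ell'})$ satisfies $\ell'+d_G(u_{i'},u_j)>b$. I first need $\ell<b$, i.e.\ $u_j$ is not itself the final source. If the final source were the only way $u_j$ gets burned, I would argue separately. The cleanest route is to note that $BL\cap UB$ is taken over the burning process; if the witness is the last source, then the copy $z_{(j,k)}$ not chosen in $H_4$ gives the contradiction directly, since only one of the two copies can be the last source. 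Now suppose $s_\ell=z_{(i,k)}$. Consider $w=z_{(j,k')}$ with $k'\ne k$. The source $s_\ell$ reaches $w$ at time $\ell+d_G(u_i,u_j)+1=b+1$, which is too late. Any other source $s_{\ell'}$ reaches $w$ no earlier than $\ell'+d_G(p(s_{\ell'}),u_j)>b$, again too late, because projection never increases distance. The only remaining possibility is that $w$ is itself the final source $s_b$. In that case, using the other copy $z_{(j,k)}$ does not help directly.

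\textbf{Step 3 (handling the final source; main obstacle).} The hard case is $s_b=z_{(j,k')}$, i.e.\ the sequence burns the missing copy by choosing it last. Then in $G$ the vertex $u_j$ appears as both the last source and a vertex burned at step $b$ by $u_i$. Here I would modify $B^*$: replace its last source by $u_j$ itself when it is not already burned earlier. The projected sequence, with last entry $u_j$, is then an optimal burning sequence of $G$, and I would re-examine $BL\cap UB$ for it. Under the convention that a last-chosen source is burned by itself, $u_j$ has two responsible sources, $u_i$ and itself, so it is no longer in $UB$.

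So the argument needs this: every vertex of $BL(G,B^*)\cap UB(G,B^*)$ yields a copy in $H_4$ that can only be burned by being the final source. Since the hypothesis holds for every optimal sequence, I would iterate. Apply the hypothesis to this modified sequence to get another witness $u_{j'}\ne u_j$ (the witness cannot be $u_j$, since $u_j$ is no longer uniquely burned). Its other copy must then also be the final source $s_b$. That is impossible, since only one vertex is chosen last. Making this bookkeeping of the last source consistent, under the paper's definition of $S_v$, is the delicate point. I expect the main effort to lie there, rather than in the distance estimates, which follow directly from Observation~\ref{obs:distance-in-same-component}.
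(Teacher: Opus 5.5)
Apart from one caveat that you inherit from the paper (last paragraph), your argument is correct. It has the same skeleton as the paper's proof: project an optimal sequence of $H_4$ to $G$, take a witness $u_j\in BL(G,B^*)\cap UB(G,B^*)$, and show that one of its two copies in $H_4$ is still unburned after step $b$.

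Your Step~2 is tidier than the paper's version. The paper splits on whether the witness is its own source, and analyses which sources of $B_H$ burn $z_{(i,1)}$ and $z_{(i,2)}$. In its Case~2 it also argues about all neighbours of the two copies. You observe directly that the unique responsible source $s_\ell=z_{(i,k)}$ reaches the off-copy $z_{(j,k')}$ exactly at step $b+1$. Every other source is already too late in $G$, hence also in $H_4$. The same inequality disposes of the case where the witness is the last source.

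The ``main obstacle'' of Step~3 does not exist, so the iteration is unnecessary. Suppose $s_b=z_{(j,k')}$, so $p(s_b)=u_j$.
\begin{itemize}
\item If the last entry of $B^*$ is not $p(s_b)$, it was replaced because $u_j$ was already burning at step $b-1$ (for a repeat, via $p(s_{b-1})=u_j$). This contradicts $u_j\in BL(G,B^*)$.
\item Otherwise $B^*[b]=u_j$, and $\ell\neq b$ because $s_\ell$ lies in copy $k\neq k'$. Since $d_G(u_j,u_j)=0=b-b$, the last source $B^*[b]$ belongs to $S_{u_j}$ alongside $u_i$. So $u_j\notin UB(G,B^*)$, contradicting the choice of $u_j$.
\end{itemize}
This is just your own Step~2 inequality $\ell'+d_G(p(s_{\ell'}),u_j)>b$ applied to $\ell'=b$. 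Under the paper's definition of $S_v$, the last source does count for itself even though it spreads no fire. In particular, your ``modified'' sequence in Step~3 is $B^*$ itself.

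The caveat concerns Step~1, which, like Algorithm~\ref{alg:alg2}, assumes that repeated vertices are the only way the projection can fail to be a burning sequence of $G$. That is false. Take $s_{\ell'}=z_{(a,1)}$ and $s_\ell=z_{(c,2)}$ with $d_G(u_a,u_c)=\ell-\ell'-1\ge 1$. Then $s_\ell$ can be a legal choice in $H_4$, since the fire of $s_{\ell'}$ first reaches it at step $\ell$. But $u_c$ is already burning at step $\ell-1$ in $G$.

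The repair fits your argument:
\begin{itemize}
\item Scan the projection left to right and replace each entry that is already burning by any unburned vertex. One exists, since otherwise $b(G)<b$.
\item A replaced entry reaches every vertex strictly later than the earlier entry that ignited it, so it never ties for the witness.
\item If the witness's unique source were a replacement vertex, every projected source would reach $u_j$ only after step $b$. That contradicts the fact that the projection covers $G$.
\end{itemize}
This also removes the need for Lemma~\ref{lem:duplicates-at-the-end}. Its shortening step is not airtight anyway: deleting a source from the middle also takes one step away from every earlier source.
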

\begin{proof}
Let $b=b(G)$. By Lemma~\ref{lem:burning-number-limits}, $b(H_4)\in\{b,b+1\}$. Assume, for contradiction, that $b(H_4)=b$. Let $B_H$ be an optimal burning sequence of $H_4$ of length $b$. Let $B_G$ be an optimal burning sequence of $G_1\cong G$ of length $b$ obtained by Algorithm~\ref{alg:alg2}. By assumption, there exists
$z_{(i,1)}\in BL(G_1,B_G)\cap UB(G_1,B_G)$. Let $z_{(j,1)}$ be the unique source in $B_G$ that burns $z_{(i,1)}$. We consider two cases.

\begin{enumerate}
\item  $z_{(i,1)}\neq z_{(j,1)}$. Suppose the two vertices $z_{(i,1)}$ and $z_{(i,2)}$ are burned by a single source $z_{(j,m)}$, $m\in\{1,2\}$. If $m=1$, then the fire from $z_{(j,1)}$ burns $z_{(i,2)}$ only in the $(b+1)^{\text{th}}$ step, contradicting the assumption $b(H_4)=b$. If $m=2$, then in order to burn $z_{(i,1)}$ by $z_{(j,2)}$ in the last step, $z_{(j,2)}$ must burn $z_{(i,2)}$ in the $(b-1)$-th step. When the projection of $B_H$ is taken onto $G_1$, it yields the source $z_{(j,1)}$ that burns
$z_{(i,1)}$ in the $(b-1)$-th step, contradicting $z_{(i,1)} \in BL(G_1, B_G)$. Hence, $z_{(i,1)}$ and $z_{(i,2)}$ must be burned by distinct sources in $B_H$.

Now we claim that both $z_{(i,1)}$ and $z_{(i,2)}$ are  burned in the $b^{th}$ step of $B_H$. If either $z_{(i,1)}$ or $z_{(i,2)}$ is burned in the $(b-1)^{\text{th}}$ step, then $z_{(i,1)}\notin BL(G_1,B_G)\cap UB(G_1,B_G)$, a contradiction. Also, the distinct sources that burn $z_{(i,1)}$ and $z_{(i,2)}$ cannot be $z_{(j,1)}$ and $z_{(j,2)}$, as $d(z_{(i,1)},z_{(j,1)})=d(z_{(i,2)},z_{(j,2)})$ which will enforce them to appear at the same position in $B_H$, which is impossible. Thus, $z_{(i,1)}$ and $z_{(i,2)}$ are burned by distinct sources $z_{(p,m)}$ and $z_{(q,m')}$ with $p\neq q$. Their projections in $B_G$ both burn $z_{(i,1)}$ in the $b^{\text{th}}$ step, contradicting $z_{(i,1)}\in UB(G_1,B_G)$.

\item $z_{(i,1)}=z_{(j,1)}$. Then $z_{(i,2)}$ must be burned in the $b^{\text{th}}$ step of $B_H$; otherwise, $z_{(i,1)}$ would be burned earlier in $B_G$. Moreover, all neighbors of $z_{(i,1)}$ and $z_{(i,2)}$ must also be burned in the $b^{\text{th}}$ step. If any neighbor is burned earlier, its projection in $G_1$ would burn $z_{(i,1)}$ in the $b^{\text{th}}$ step, contradicting $z_{(i,1)}\in UB(G_1,B_G)$. If all such neighbors are burned in the last step, then at most one of $z_{(i,1)}$ and $z_{(i,2)}$ can be burned, contradicting that $B_H$ burns all vertices of $H_4$.
\end{enumerate}
In both cases, we obtain a contradiction. Hence, $b(H_4)=b+1$.
\end{proof}

We now show that every connected cubic graph $H$ constructed using Construction~\ref{cons:H-combined} satisfies the property stated in Lemma~\ref{lem:b-h4-one-more}; that is, for every optimal burning sequence $B$ of $H$, $BL(H,B)\cap UB(H,B)\neq\emptyset$.

Let $G$ be a connected cubic graph. Let $G'$ and $ H$ be the graphs constructed using Construction~\ref{cons:G'} and Construction~\ref{cons:H-combined}. Let $B$ be an optimal burning sequence for $H$. By Lemma~\ref{lem:b-equals-threshold}, $|B| = k'+cn'+3$ where $n' = \vert V(G') \vert$ and 
$k' = \beta(G')$.

Let \textit{StartBlock} denote the subsequence $B[1, \dotsc, k']$, \textit{MiddleBlock} denote the subsequence $B[k'+1, \dotsc, s+h+1]$ and \textit{EndBlock} denote the subsequence $B[k'+h+2, \dotsc, k'+cn'+3]$.  Note that,  $\forall p, 1 \leq p \leq k', StartBlock[p] = B[p], \forall q, 1 \leq q \leq h+1, MiddleBlock[q] = B[k'+q]$ and $\forall r, 1 \leq r \leq cn'-h+2, EndBlock[r] = B[k'+h+1+r]$. Also note that, $|EndBlock| = l_1+l_2 = cn'-h+2, |MiddleBlock| = m = h+1$ and $|MiddleBlock|+|EndBlock| = (h+1)+(cn'-h+2) = cn'+3$.

Let $Owners = \{u \vert u \in V(G'), \exists i$ such that $1 \leq i \leq k'$ and $B[i] \in Dom_u\}$. In other words, $Owners$ is the set of \textit{owner} for the vertices in $StartBlock$ that belong to $InsideDomains$.

\begin{lemma}\label{lem:owners-vc}
Let $H$ be the graph obtained from $G'$ by Construction~\ref{cons:H-combined}, and let $B$ be an optimal burning sequence of $H$ with $|B| = k' + cn' + 3$. If $Owners$ set forms a vertex cover of $G'$, then at least one vertex in $trunk'(C)$ belongs to $BL(H, B) \cap UB(H, B)$. 
\end{lemma}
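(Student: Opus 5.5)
We would argue by showing that, when $Owners$ is a vertex cover, the $C$-gadget must be burned essentially as an optimal burning of the cycle $trunk'(C)$ by exactly $cn'+3$ sources, and then extract a uniquely burned, last-step vertex from the tightness of that cycle burning.

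\textbf{Step 1: the StartBlock lies inside the domains.} Since $|Owners|\le |StartBlock| = k' = \beta(G')$ and $Owners$ is a vertex cover, every source of $StartBlock$ lies in $InsideDomains$, with distinct owners. Otherwise fewer than $k'$ sources of $StartBlock$ have owners, and $Owners$ would be a vertex cover of size $<\beta(G')$, which is impossible. Hence $StartBlock\cap OutsideDomains=\emptyset$, with $s=k'$.

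\textbf{Step 2: the C-gadget is burned by the last $cn'+3$ sources.} Lemma~\ref{lem:num-sources-for-C} then gives that every source of $MiddleBlock\cup EndBlock$ lies in $OutsideDomains$. By Observation~\ref{obs:max-vertices-in-C-burned-by-insidedomain}, the fire from $StartBlock$ enters $C$ only through $v_{m^2}$. It burns at most $2(|EndBlock|-3)$ vertices of $trunk'(C)$, forming a contiguous arc around $v_{m^2}$. Let $P$ be the complementary subpath of $trunk'(C)$. As in the proof of Lemma~\ref{lem:b-geq-threshold}, we have $|V(P)|>(cn'+2)^2$, and $P$ is burned only by the $t=cn'+3$ sources of $MiddleBlock\cup EndBlock$. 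This holds because sources in $OutsideDomains\setminus V(C)$, namely those on $P_z$ or $z$, can only enter $C$ through $v_{m^2}$ as well.

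\textbf{Step 3: locate a last-step, uniquely burned vertex.} Apply Observation~\ref{obs:cycle-path-source-burns-exclusive-vertex} to $P$ with these $t$ sources, taking $d$ as in Lemma~\ref{lem:b-geq-threshold}. The first source $B[k'+1]$ (that is, $MiddleBlock[1]$) then burns some vertex of $P$ exclusively. We sharpen this using counting. The total capacity is $\sum_i (1+2(t-i)) = t^2$, and $|V(P)| \ge t^2 - 5 - 2(cn'-h-1)$. So at most $2cn'$ units of capacity are wasted in total. Each source's ball on $P$ is an interval, which is symmetric unless truncated. Its two extreme vertices are at distance exactly $t-i$ from the source, so they are burned in the last step, step $k'+cn'+3$, whenever they are reached by that source. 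Now pick a source whose interval is not truncated at both ends. If both extremes of its interval were also burned by other sources, or by the $StartBlock$ arc, each extreme would contribute an overlap. We then argue that not all early sources (those at positions $\le h+1$, each of radius $>cn'-h$) can have both extremes overlapped: that would waste more than the budget $5+2(cn'-h-1)$. More precisely, we pair overlaps along the path order. The $\ge h+1$ long intervals have $\ge 2(h+1)-2$ interior boundaries, and each overlapped boundary costs at least one unit. This yields a boundary vertex of some interval that is burned only by that source, at the last step. That vertex is in $BL(H,B)\cap UB(H,B)$, and it lies in $trunk'(C)$.

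The main obstacle is Step 3. Exclusivity of some vertex, as given by Observation~\ref{obs:cycle-path-source-burns-exclusive-vertex}, does not by itself place that vertex in the last step. We also must rule out burning of that vertex from outside $trunk'(C)$, via the tail vertices $p,q,r$ or minor branches, at the same time. For the latter we will use Observation~\ref{obs:pgadget-distances} and the structure of $trunk'(C)$: no vertex off the trunk gives a shorter route to a trunk vertex of a $P$-gadget. For the former we will use the tight capacity count above, choosing the endpoint of the first source's interval farthest from $v_{m^2}$.
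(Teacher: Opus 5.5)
\textbf{Gap in Step~3: the counting does not produce a contradiction.}
Your waste budget comes from the crude bound of Observation~\ref{obs:max-vertices-in-C-burned-by-insidedomain}. By that bound, the $StartBlock$ arc may contain up to $2(|EndBlock|-3)=2(cn'-h-1)$ vertices of $trunk'(C)$, so up to $5+2(cn'-h-1)=2cn'-2h+3$ units of capacity may be wasted.

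Against this you set the waste forced by the $h+1$ early intervals having both ends overlapped. That is only about $2h=2\log(cn')+4$. Even this over-counts: one shared vertex can be the overlapped endpoint of two neighbouring intervals, so the correct charge is about one unit per interval. Since $2h\ll 2cn'-2h+3$, the claim ``that would waste more than the budget'' is false.

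Counting over all $t=cn'+3$ intervals does not rescue it. The forced waste becomes about $t\approx cn'$, still below $\approx 2cn'$.

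This is not just a loose estimate. Nothing in your Steps~1--2 prevents $StartBlock[1]$ from sitting at $x_b\in Dom_x$. By Observation~\ref{obs:fire-from-h-to-c}, such a source reaches $v_{m^2}$ with $k'+cn'/2-\lceil n'/4\rceil-1$ steps to spare. It then burns roughly $cn'+2k'-n'/2$ trunk vertices, which exceeds $m-5$ because $k'\ge n'/2$. That absorbs all the slack a non-unique last step would cost. Targeting the far endpoint of the first source's interval fails for the same reason: overlapping it costs just one unit.

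\textbf{What is missing: a sharp bound on $p$.}
You need a sharp bound on $p$, the number of $trunk'(C)$ vertices burned by $StartBlock$. The paper obtains it from the fact that $Owners$ is then a \emph{minimum} vertex cover with one source per owner (your Step~1). Each owner $u$ therefore covers some edge $(u,w)$ exclusively. By Lemmas~\ref{lem:num-sources-for-C} and~\ref{lem:src-out-of-dom-cannot-burn}, only the source in $Dom_u$ can burn $Tips_{uw}$. By Observation~\ref{obs:Burning-Huv}, this forces that source to lie within roughly $k'$ of $u$, in particular not deep inside $P_x$ or $P_y$.

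Consequently every $StartBlock$ source is at distance at least $\frac{n'}{2}+cn'+2-k'$ from $v_{m^2}$. The arcs these sources burn are nested around $v_{m^2}$, and $p\le \frac{10n'}{6}+3$.

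The paper then charges waste over \emph{all} $m=cn'+3$ intervals: one unit each if no last-step trunk vertex is uniquely burned. This gives $m^2-5-p\le m^2-m$, i.e.\ $p+5\ge m$, contradicting $p\le\frac{10n'}{6}+3$ since $c\ge 4$.

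Your Steps~1--2 and the interval/endpoint framework fit this argument. What you must add is the near-owner distance argument bounding $p$, and the waste count over all $m$ sources rather than the first $h+1$.
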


\begin{proof}
Let $m := cn' + 3$.  
Recall that $|MiddleBlock \cup EndBlock| = m$ and $|trunk'(C)| = m^2 - 5 $. Assume for contradiction that $BL(H,B)\cap UB(H,B)=\emptyset$. Then every vertex burned in the last step is burned by at least two distinct sources.

Since $Owners$ is a vertex cover for $G'$, the vertices in $StartBlock$ belong to $InsideDomains$. Let $p$ denote the number of vertices of $trunk'(C)$ burned by sources in $StartBlock$. Since $|trunk'(C)| = m^2 - 5$,
the remaining $m^2-5-p$ of vertices of trunk'(C) must be burned by $m$ sources in $MiddleBlock\cup EndBlock$. 
On a path or a cycle, a source with $i-1$ burning steps available to spread the fire from it can burn at most $2(i-1)+1$ vertices forming a path of $2i-1$ burned vertices.

If no vertex burned in the last step is uniquely burned, each such interval contributes at most $2i-3$ new burned vertices of $trunk'(C)$. Counting every overlapped endpoint once in an interval, we take the contribution of each interval to be at most $(2i-3)+1 = 2i-2$ vertices of $trunk'(C)$. Hence, the total number of trunk'(C) vertices burned by $MiddleBlock\cup EndBlock$ is at most
\[
\sum_{i=1}^{m} (2i-2) = 2\sum_{i=1}^{m} i - 2m = m(m+1) - 2m = m^2 - m.
\]
Thus, we have, 
\[
m^2 - 5 - p \le m^2 - m .
\]
Hence,
\[
p + 5 \ge m .
\tag{$\ast$}
\]

Each vertex in a minimum vertex cover is incident to at least one edge exclusively; otherwise, that vertex can be removed from the minimum vertex cover, which violates the minimality of the minimum vertex cover. Thus, each vertex of $StartBlock$ belongs to a minimum vertex cover of $G'$ and must burn the tips of the $BTP$-gadgets corresponding to at least one exclusive edge. Consequently, the vertex $w\in StartBlock$,
that is closer to the end vertex $v_{m^2}$ of the $C$-gadget is at a distance at least $\frac{n'}{2} + cn' + 2 - k'$. Note that, $d(x,v_{m^2}) = d(y,v_{m^2}) = \frac{n'}{2} + cn' + 2$, and $w$ can be at most $k'$ edges closer along $P_x$ or $P_y$. Since $|B| = k' + m$ and $m = cn' + 3$, the maximum remaining burning time after reaching $v_{m^2}$ is
\[
(k' + m) - \left(\frac{n'}{2} + cn' + 2 - k'\right) = 2k' - \frac{n'}{2} + 1.
\]

As $G'$ is cubic, every vertex cover satisfies $k' \le \frac{2n'}{3}$.
Hence
\[
2k' - \frac{n'}{2} + 1 \le \frac{4n'}{3} - \frac{n'}{2} + 1 = \frac{5n'}{6} + 1.
\]

From a vertex of a path with $t$ remaining steps, at most $2t+1$ vertices can be burned. Therefore, each source in $StartBlock$ can burn at most
\[
2\!\left(\frac{5n'}{6}+1\right)+1 \le \frac{10n'}{6} + 3
\]
vertices of $trunk'(C)$.

Since $trunk'(C)$ is a cycle with a unique end vertex $v_{m^2}$, every source in $StartBlock$ can enter the $C$-gadget only through $v_{m^2}$ due to the structure of $H$. Consequently, the fire induced on $trunk'(C)$ by any such source spreads symmetrically along the cycle from $v_{m^2}$. Therefore, the sets of vertices of $trunk'(C)$ burned by distinct sources in $StartBlock$ are {vertices} of the cycle centered at $v_{m^2}$. Further, if fire from two sources, say $s_1, s_2 \in StartBlock$, enter $C$-gadget and if the fire from $s_1$ reaches $v_{m^2}$ before the fire from $s_2$, then the vertices in $C$-gadget burned by $s_2$ is a subset of the vertices in $C$-gadget burned by $s_1$ and therefore,  these intervals are nested. Thus, we obtain $p  \le \frac{10n'}{6} + 3$. With $m = cn' + 3$  and by the choice of the constant $c$ in the construction, for any $n'$ we have $p + 5 < m$, contradicting $(\ast)$.

Therefore $BL(H,B)\cap UB(H,B)\neq\emptyset $.
\end{proof}

Now we consider the case when {$Owners$} is not a minimum vertex cover and prove that in this case also $BL(H,B)\cap UB(H,B)\neq\emptyset $.

\noindent {\bf Sketch of the proof:} 
We start with the assumption that $BL(H,B)\cap UB(H,B) = \emptyset $. Since $Owners$ is not a vertex cover of $G'$, there exists at least one uncovered edge in $G'$. Hence, the corresponding $BTP$-gadget has no source from $StartBlock$. By {Lemma~\ref{lem:if-one-unrepresented-then-two-unrepresented}, burning such a gadget requires at least three sources from $MiddleBlock$ and to compensate for this extra use of sources (relative to the burning requirements of the $C$-gadget), at least two sources from $StartBlock$ must be placed in $OutsideDomains$. Consequently, $|Owners| \le k'-2$. Therefore, at least two edges of $G'$ are uncovered, and the corresponding two $BTP$-gadgets must both be burned using only sources from $MiddleBlock \cup EndBlock$. {By Lemma~\ref{lem:two-unrepresented-edges-BL-cap-UB-not-empty}, we get $BL(H,B)\cap UB(H,B)\neq\emptyset$, a contradiction to our assumption}. {This contradiction shows that $BL(H,B)\cap UB(H,B)\neq\emptyset$ even when $Owners$ is not a minimum vertex cover}}

\begin{lemma}\label{lem:fivecn-implies-two-outside}
Let $B'$ be a burning sequence of length $s+cn'+3$, where $0\le s\le k'$. If at least $5cn'$ vertices of $trunk'(C)$ are burned by sources in $StartBlock$, then at least two sources in $StartBlock$ belong to $OutsideDomains$.
\end{lemma}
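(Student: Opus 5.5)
The plan mirrors the argument of Lemma~\ref{lem:four-insidedoamins-middleblock-two-outsidedomains-startblock}, with the threshold $6cn'$ replaced by $5cn'$. The key point is that the $StartBlock$ sources split into two kinds: those in $InsideDomains$, whose fire enters $C$ only through $v_{m^2}$ and is therefore weak, and those in $OutsideDomains$, each of which can burn fewer than $4cn'$ vertices of $trunk'(C)$.

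First I handle the $InsideDomains$ sources. Let $S\subseteq StartBlock\cap InsideDomains$ be those whose fire infiltrates $C$. By Observation~\ref{obs:max-vertices-in-C-burned-by-insidedomain}, all of them together burn at most $2(|EndBlock|-3)=2(cn'-h-1)<2cn'$ vertices of $trunk'(C)$. The reason is that they enter only through $v_{m^2}$, so the burned sets are nested intervals around $v_{m^2}$ and the earliest one dominates. Since at least $5cn'$ vertices of $trunk'(C)$ are burned by $StartBlock$, more than $3cn'$ of them must be burned by $StartBlock\cap OutsideDomains$.

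Second I bound the $OutsideDomains$ sources. A source $u\in StartBlock\cap OutsideDomains$ is placed at some step $t\ge 1$, so it has at most $|B'|-1\le k'+cn'+2$ steps to spread. Along the cycle $trunk'(C)$ it therefore burns at most $1+2(k'+cn'+2)=2cn'+2k'+5$ vertices. Using $k'\le\lceil 2n'/3\rceil$ (Observation~\ref{obs:upperbound-on-k'}), this is at most $2cn'+\tfrac{4n'}{3}+7$. I need this to be at most $3cn'$, i.e.
\[
cn' \ge \tfrac{4n'}{3}+7 .
\]
This holds since $c\ge 4$ gives $cn'\ge 4n'$, and $4n'\ge \tfrac{4n'}{3}+7$ once $n'\ge 3$ (and $n'\ge 6$ always). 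Hence no single source in $StartBlock\cap OutsideDomains$ can account for more than $3cn'$ vertices, so at least two such sources are needed.

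Third I close the two degenerate cases. If no $StartBlock\cap InsideDomains$ source infiltrates $C$, then all $5cn'$ vertices come from $OutsideDomains$ sources, and the same per-source bound (now $5cn'>2\cdot(\text{per-source maximum})/2$ trivially) gives at least two sources. If $StartBlock\cap OutsideDomains$ is empty, the count from the first step, under $2cn'$, contradicts the hypothesis. The only delicate point is the arithmetic margin in the per-source bound, namely comparing $2cn'+2k'+5$ against $3cn'$ (rather than $4cn'$ as in the earlier lemma). As shown above, this follows from $c\ge 4$ and $k'\le\lceil 2n'/3\rceil$ for all admissible $n'$.
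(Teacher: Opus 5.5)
Your proof is correct and follows the paper's argument. The $InsideDomains$ sources of $StartBlock$ enter $C$ only through $v_{m^2}$, so by Observation~\ref{obs:max-vertices-in-C-burned-by-insidedomain} they burn fewer than $2cn'$ vertices of $trunk'(C)$; hence more than $3cn'$ must come from $OutsideDomains$ sources, each of which burns at most $1+2(k'+cn'+2)\le 3cn'$, so at least two are needed. Your explicit use of the aggregate bound and of $k'\le\lceil 2n'/3\rceil$ makes the arithmetic a little more careful than the paper's write-up, but the route is the same.
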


\begin{proof}
By Observation~\ref{obs:max-vertices-in-C-burned-by-insidedomain}, a source in $StartBlock\cap InsideDomains$ can burn fewer than $2cn'$ vertices of $trunk'(C)$. A source in $OutsideDomains$, even in the most favorable configuration of burning sources, that is, the first source placed on a vertex belonging to $trunk'(C)$, can burn at most $1+2(k'+cn'+2) <3cn'$ vertices of $trunk'(C)$, since $k'\le 2n'/3$. Thus, a single source in $StartBlock$ that belongs to $OutsideDomains$ cannot burn $3cn'$  vertices. Therefore, at least two sources in $StartBlock$ must belong to $OutsideDomains$.
\end{proof}

\begin{lemma}\label{lem:three-middleblock-fivecn}
If at least three sources of $MiddleBlock$ are placed in some gadget $BTP_{uv}$, then at least $5cn'$ vertices of $trunk'(C)$ are burned by sources in $StartBlock$. Moreover, at least two sources of $StartBlock$ belong to $OutsideDomains$.
\end{lemma}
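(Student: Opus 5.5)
The plan is to reuse the counting argument of Lemma~\ref{lem:middbloc-insidedomain-trunk-startblock} with $t=3$, correct it for possible infiltration of $C$ as in Lemma~\ref{lem:num-vertices-in-C-burned-by-startblock}, and then conclude with Lemma~\ref{lem:fivecn-implies-two-outside}.

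\textbf{Step 1: the three sources are wasted for $trunk'(C)$.} Every vertex of $BTP_{uv}$ lies in $InsideDomains$. So the three $MiddleBlock$ sources placed in $BTP_{uv}$ are inside-domain sources.

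\textbf{Step 2: the bound with no infiltration.} Suppose first that none of their fires reaches $C$. Lemma~\ref{lem:middbloc-insidedomain-trunk-startblock} with $t=3$ shows that at least
\[
9+12-6h+6cn'-5 = 6cn'-6h+16
\]
vertices of $trunk'(C)$ must be burned by $StartBlock$. This bound already takes the three sources at the latest possible positions in $MiddleBlock$, which is the worst case.

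\textbf{Step 3: controlling infiltration.} A source in $BTP_{uv}$ is farther from $z_b$ than $x_b$ or $y_b$ are. The fastest possible route is through $x$ or $y$ and then down $P_x$ (or $P_y$) and $P_z$. From $x$, reaching $v_{m^2}$ takes about $n'/2+cn'+2$ steps. A $MiddleBlock$ source is placed at step at least $s+1$, and the sequence ends at step $s+cn'+3$. So it has at most about $cn'+2$ steps to spread, which is fewer than the distance to $v_{m^2}$. Hence none of the three sources burns any vertex of $trunk'(C)$. Even allowing generously for a little infiltration, Lemma~\ref{lem:num-vertices-in-C-burned-by-startblock} bounds its contribution by $cn'-n'/2$, and subtracting this still leaves at least $5cn'$.

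\textbf{Step 4: comparing with $5cn'$.} It remains to check that $6cn'-6h+16 \ge 5cn'$, that is, $cn' \ge 6h-16 = 6\log(cn')-4$. This holds because $cn'\ge 4n'$ and $n'\ge 6$. If a cleaner statement is needed, the infiltration slack from Step 3 can be absorbed the same way.

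\textbf{Step 5: conclusion.} Once at least $5cn'$ vertices of $trunk'(C)$ must be burned by $StartBlock$, Lemma~\ref{lem:fivecn-implies-two-outside} gives directly that at least two sources of $StartBlock$ lie in $OutsideDomains$.

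The main obstacle is justifying the application of Lemma~\ref{lem:middbloc-insidedomain-trunk-startblock}. That lemma treats $MiddleBlock\cup EndBlock$ as an optimal sequence for $trunk(C)$ and charges each lost source its full interval $1+2(cn'+3-pos)$. This is legitimate here because the three sources contribute nothing to $C$, by the distance estimate in Step 3. That estimate must be stated carefully: whichever of $u,v$ is $x$ or $y$, the distance from any vertex of $BTP_{uv}$ to $v_{m^2}$ exceeds the remaining time. I would verify it using Lemma~\ref{lem:burning_u_from_out_of_Dom_u}, Observation~\ref{obs:fire-from-h-to-c}, and the lengths of $P_x$ and $P_z$.
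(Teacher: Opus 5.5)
Your proposal is correct and follows the paper's own proof essentially step for step. You rule out infiltration of $C$ because the distance from $BTP_{uv}$ to $v_{m^2}$ exceeds the at most $cn'+2$ steps available to a $MiddleBlock$ source. You then apply Lemma~\ref{lem:middbloc-insidedomain-trunk-startblock} with $t=3$ to get $6cn'-6h+16\ge 5cn'$, and you finish with Lemma~\ref{lem:fivecn-implies-two-outside}.

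The only blemish is the fallback remark in Step~3. Subtracting $cn'-n'/2$ would leave $5cn'+n'/2-6h+16$, which is below $5cn'$ for small $n'$ (e.g.\ $n'=6$). Since your main distance argument already shows there is no infiltration at all, nothing depends on it.
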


\begin{proof}
Any source placed inside $BTP_{uv}$ is at distance at least $cn'+n'/2+1$ from $C$. Hence its fire cannot infiltrate $C$ before $cn'+n'/2+1$ steps, that is, the fire from sources in $MiddleBlock$ placed inside $BTP_{uv}$ cannot infiltrate the C gadget. Applying Lemma~\ref{lem:middbloc-insidedomain-trunk-startblock}
with $t=3$, the number of trunk vertices that must be burned by $StartBlock$ is at least
\[
3^2+12-6h+6cn'-5 = 6cn'-6h+16.
\]
For $4\le c<8$ and $n'\ge 6$, this quantity is at least $5cn'$. The second assertion follows from
Lemma~\ref{lem:fivecn-implies-two-outside}.
\end{proof}

\begin{lemma}
\label{lem:tips_by_middleblock'-endblock}
The sources in $MiddleBlock[2,\dotsc, h+1] \cup EndBlock$ can burn at most $5cn'$ $tip \in V(H_{core})$.
\end{lemma}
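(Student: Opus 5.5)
We bound the tips burned by $MiddleBlock[2,\dotsc,h+1]$ and by $EndBlock$ separately, and add the two bounds.

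\textbf{EndBlock.} For $EndBlock$ we use the distance argument behind Lemma~\ref{lem:max-leaves-in-endblock} and Lemma~\ref{lem:generalized-max-leaves-in-endblock}. A source in $EndBlock$ has at most $l_1+l_2-1$ steps of propagation. Any two distinct tips, whether in the same $BTP$-gadget (different $T$-gadgets) or in different gadgets, are at distance greater than $2(l_1+l_2)$. Hence each such source burns at most one tip in all of $H_{core}$. Summing over the $|EndBlock| = l_1+l_2 = cn'-h+2$ sources gives at most $cn'-h+2 < cn'$ tips.

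\textbf{MiddleBlock$[2,\dotsc,h+1]$.} Fix $2 \le j \le h+1$. The source $MiddleBlock[j]$ has $(h+1-j)+(l_1+l_2)$ steps to spread. We apply the argument of Lemma~\ref{lem:max-tips-from-source-in-middleblock} with $m = h+1$. The number of tips it can reach is maximized by placing it at a vertex $u$ of some $BT$-gadget whose descendant leaves are all at distance at most $h+1-j$. Such a vertex lies at level $j-1$ and has $2^{h+1-j}$ descendant tips. Placing it at a root $r_{uv}$ is no better: that reaches only $2^h$ tips, and only when $j = 1$, which is excluded here.

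We also need a source placed near a vertex $u \in V(G')$ to gain nothing by reaching into several gadgets. Fire from $u$ to the tips of any incident gadget needs $h+1+l_1+l_2+1$ steps, which exceeds the available budget. So each source in this range burns tips of only one $BT$ subtree.

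Summing over $j$ gives
\[
\sum_{j=2}^{h+1} 2^{h+1-j} = 2^h - 1 = 4cn' - 1,
\]
using $2^h = 4cn'$.

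\textbf{Total.} Adding the two parts gives at most $(4cn'-1) + (cn'-h+2) < 5cn'$ tips, which proves the statement.

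\textbf{Main obstacle.} The delicate step is the MiddleBlock bound. We must rule out that a source placed in a $T$-gadget, or near a hub vertex $u$ shared by three gadgets, reaches tips in several $BTP$-gadgets at once. I would handle this with the distance lower bound of Lemma~\ref{lem:burning_u_from_out_of_Dom_u} together with Observation~\ref{obs:btp-distances}. These show that fire crossing a hub vertex has too few steps left to reach any tip. The placement argument of Lemma~\ref{lem:max-tips-from-source-in-middleblock} then applies verbatim inside a single $BT$-gadget.
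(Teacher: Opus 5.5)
Your proposal is correct and is essentially the paper's proof: the paper also bounds $EndBlock$ by $l_1+l_2=cn'-h+2<cn'$ tips and $MiddleBlock[2,\dotsc,h+1]$ by $2^h-1=4cn'-1$ tips. For the latter it moves $MiddleBlock[1]$ into the start block so that Lemma~\ref{lem:max-leaves-in-middleblock} applies with $m'=h$, which gives exactly your sum $\sum_{j=2}^{h+1}2^{h+1-j}$. Your explicit argument that no single source reaches tips in two different $BT$ subtrees or gadgets fills in a step the paper leaves implicit. Note that, as in the paper's later use of the lemma with $Tips_{uv}$ and $Tips_{ef}$, only one tip per $T$-gadget is being counted; this matters because $tip^i_{ab}$ and $tip^i_{ba}$ are adjacent.
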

\begin{proof}
Let $StartBlock \cup MiddleBlock$ be reorganized as follows. Let $StartBlock'$ comprise of $StartBlock$ followed by $MiddleBlock[1]$ and $MiddleBlock'$ comprise of $MiddleBlock[2,\dotsc,h+1]$. Let $s' = |StartBlock'|, m' = |MiddleBlock'|$. Clearly, $s' = k'+1, m' = (h+1)-1=h$. Thus, $B$ is concatenation of sequences namely,  $StartBlock', MiddleBlock'$ and $EndBlock$ that is, $B = StartBlock'\circ MiddleBlock'  \circ EndBlock$. By Lemma~\ref{lem:max-leaves-in-endblock}, the sources in $EndBlock$ can burn at most $l_1+l_2 = cn'-h+2 < cn'$ $tip$s. By Lemma~\ref{lem:max-leaves-in-middleblock}, the sources in $MiddleBlock'$ can burn at most $2^m-1 = 2^h -1 = 4cn'-1$ tips. Thus, the sources in $MiddleBlock' \cup EndBlock$ together can burn at most $4cn'-1+cn' < 5cn'$ $tip$s that belong to some $BTP$-gadgets in $H_{core}$.
\end{proof}

\begin{lemma}
\label{lem:if-one-unrepresented-then-two-unrepresented}
If there exists one \textit{unrepresented} edge and $BL(H,B)\cap UB(H,B)=\emptyset$, then there exist at least two \textit{unrepresented} edges.
\end{lemma}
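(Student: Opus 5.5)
Fix an unrepresented edge $(e,f)$, i.e.\ no source of \textit{StartBlock} lies in $Dom_e\cup Dom_f$, and assume $BL(H,B)\cap UB(H,B)=\emptyset$. The plan is to show that $BTP_{ef}$ absorbs so many \textit{MiddleBlock} sources that \textit{StartBlock} must push at least two sources into $OutsideDomains$. Then $|Owners|\le k'-2$, and since $\beta(G')=k'$, at least two edges are unrepresented.

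\textbf{Step 1: no outside help for $Tips_{ef}$.} By Lemma~\ref{lem:src-out-of-dom-cannot-burn}, no source of \textit{StartBlock} reaches $Tips_{ef}$ within $|B|$ steps. A source of \textit{MiddleBlock}$\,\cup\,$\textit{EndBlock} outside $Dom_e\cup Dom_f$ cannot reach it either; this is the same argument as in Lemma~\ref{lem:b-geq-threshold}, using Lemma~\ref{lem:burning_u_from_out_of_Dom_u} together with the fact that reaching a tip needs another $h+1+l_1+l_2$ steps. Hence all $2^h=4cn'$ vertices of $Tips_{ef}$ must be burned by sources of \textit{MiddleBlock}$\,\cup\,$\textit{EndBlock} placed inside $BTP_{ef}$ (or at $e,f$).

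\textbf{Step 2: at least three \textit{MiddleBlock} sources inside $BTP_{ef}$.} By Lemma~\ref{lem:max-leaves-in-endblock}, \textit{EndBlock} burns fewer than $cn'$ of these tips, so more than $3cn'$ must come from \textit{MiddleBlock}. Lemma~\ref{lem:max-tips-from-source-in-middleblock} bounds the contributions: \textit{MiddleBlock}$[j]$ burns at most $2^{h+1-j}$ tips, i.e.\ at most $8cn',4cn',2cn',\dots$. Two sources suffice numerically only if \textit{MiddleBlock}$[1]$ or \textit{MiddleBlock}$[2]$ is used. So split into two cases.
\begin{enumerate}
\item[(a)] Three or more \textit{MiddleBlock} sources lie in $BTP_{ef}$. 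Then Lemma~\ref{lem:three-middleblock-fivecn} immediately gives two \textit{StartBlock} sources in $OutsideDomains$.
\item[(b)] At most two \textit{MiddleBlock} sources lie in $BTP_{ef}$, and one of them is \textit{MiddleBlock}$[1]$ or \textit{MiddleBlock}$[2]$.
\end{enumerate}

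\textbf{Step 3: case (b) via the trunk count.} Here I would use the hypothesis $BL\cap UB=\emptyset$ on $trunk'(C)$. The counting in Lemma~\ref{lem:owners-vc} shows that the $m=cn'+3$ sources of \textit{MiddleBlock}$\,\cup\,$\textit{EndBlock}, each contributing at most $2i-2$ vertices, cover at most $m^2-m$ trunk vertices. Removing an early \textit{MiddleBlock} source (one with about $2cn'$ burn capacity) from the $C$-gadget lowers this further, by roughly $2cn'$ per removed source, as in Lemma~\ref{lem:middbloc-insidedomain-trunk-startblock}. Therefore \textit{StartBlock} must burn at least about $m+2cn'-5>3cn'$ vertices of $trunk'(C)$. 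By Observation~\ref{obs:max-vertices-in-C-burned-by-insidedomain}, sources in $InsideDomains$ contribute fewer than $2cn'$ such vertices, and their intervals are nested around $v_{m^2}$. Hence at least one \textit{StartBlock} source lies in $OutsideDomains$, and therefore $|Owners|\le k'-1$.

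\textbf{Step 4: upgrading to two.} An \textit{OutsideDomains} source in \textit{StartBlock} burns at most $1+2(k'+cn'+2)$ trunk vertices. Comparing this with the deficit from Step 3, combined with the loss of unique last-step vertices, I would aim to force a second outside source, as in Lemma~\ref{lem:fivecn-implies-two-outside}.

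\textbf{Main obstacle.} The delicate part is Step 3/4 in case (b). The margin between $3cn'$ and the capacities $2cn'$ and $\approx 2cn'+2k'$ is tight, so the accounting of the $-1$ per source coming from $BL\cap UB=\emptyset$ must be done carefully. If the margin fails, my fallback is to argue directly that \textit{MiddleBlock}$[1]$ placed at $r_{ef}$ wastes its $C$-capacity $\approx 2cn'$, and that this loss alone already exceeds what one outside source can recover.
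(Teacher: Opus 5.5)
\textbf{There is a genuine gap in your case (b).} It is not just a matter of tight margins.

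Take the subcase where \textit{MiddleBlock}$[1]$ sits at $r_{ef}$ (or $r_{fe}$) and is the only \textit{MiddleBlock} source in $BTP_{ef}$. It has exactly $cn'+2=h+l_1+l_2$ steps left, so on its own it burns all of $BTP_{ef}$, tips included. In your trunk count, its absence from the $C$-gadget costs only about $2m-2=2cn'+4$ vertices. So \textit{StartBlock} must cover about $3cn'+2$ vertices of $trunk'(C)$.

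However, one \textit{StartBlock} source in $OutsideDomains$ can cover up to $1+2(k'+cn'+2)=2cn'+2k'+5$ of them. The inside sources can cover up to $2(cn'-h-1)$ more, and together these exceed $3cn'+2$. So the trunk count, with the capacity bounds available, cannot force a second outside source here. Your fallback (``the loss alone exceeds what one outside source can recover'') is false for the same reason. Step 4 therefore cannot be completed along the route you describe.

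A side slip: $2^h=4cn'$, so the bounds from Lemma~\ref{lem:max-tips-from-source-in-middleblock} are $4cn',2cn',cn',\dots$, not $8cn',4cn',2cn',\dots$. With the correct values, two sources can supply more than $3cn'$ tips only if one of them is \textit{MiddleBlock}$[1]$.

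\textbf{The missing idea} is to apply $BL(H,B)\cap UB(H,B)=\emptyset$ to the tips of $BTP_{ef}$ rather than to the trunk. This is what the paper's proof does, and it shows your case (b) never occurs.
\begin{itemize}
\item If \textit{MiddleBlock}$[1]$ is at $r_{ef}$ or $r_{fe}$, every tip of $Tips_{ef}$ is reached exactly in the last step. So each must also be burned by a second source inside $BTP_{ef}$. Since \textit{EndBlock} gives fewer than $cn'$, \textit{MiddleBlock}$[2]$ at most $2cn'$ and \textit{MiddleBlock}$[3]$ at most $cn'$, at least three further \textit{MiddleBlock} sources are needed.
\item If \textit{MiddleBlock}$[1]$ is elsewhere in $BTP_{ef}$, it burns at most $2cn'$ tips. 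A single additional source can make up the rest only if it is \textit{MiddleBlock}$[2]$ at a child of a root. Its $2cn'$ tips are then reached exactly in the last step and need a second burner. Neither \textit{MiddleBlock}$[1]$ (it must cover the complementary subtree) nor \textit{EndBlock} can provide that.
\item If \textit{MiddleBlock}$[1]\notin V(BTP_{ef})$, three sources are needed by counting alone.
\end{itemize}

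So at least three \textit{MiddleBlock} sources always lie in $BTP_{ef}$. Lemma~\ref{lem:three-middleblock-fivecn} then gives two \textit{StartBlock} sources in $OutsideDomains$, so $|Owners|\le k'-2$. Hence at least two edges are uncovered: if only $(a,b)$ were, $Owners\cup\{a\}$ would be a vertex cover of size less than $k'$. Your Steps 3--4 become unnecessary.
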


\begin{proof}
Let $(u,v)\in E(G')$ be an unrepresented edge. Then by no source in $StartBlock$ corresponds to $(u,v)$ (recall the definition of unrepresented edges), and therefore the gadget $BTP_{uv}$ must be burned entirely by
sources in $MiddleBlock\cup EndBlock$.

Since $|Tips_{uv}|=4cn'$, all $4cn'$ tips of $BTP_{uv}$ must be burned by these sources. Moreover, because
$BL(H,B)\cap UB(H,B)=\emptyset$, every tip burned in the last step must also be burned by at least one additional source. We distinguish according to the position of $MiddleBlock[1]$.

\begin{enumerate}
\item $MiddleBlock[1]\in\{r_{uv},r_{vu}\}$. Without loss of generality let $MiddleBlock[1]=r_{uv}$. Then its fire reaches all vertices of $Tips_{uv}$ in the last step. Hence, each of the $4cn'$ tips must also be burned by another source due to the assumption. By Lemma~\ref{lem:max-leaves-in-endblock}, the sources in $EndBlock$ can burn fewer than $cn'$ tips.Therefore the sources in $MiddleBlock[2,\dots,h+1]$ must burn more than $3cn'$ tips. By Lemma~\ref{lem:max-tips-from-source-in-middleblock}, $MiddleBlock[2]$ can burn at most $2cn'$ and $MiddleBlock[1]$ can burn at most $cn'$ and every later source in $MiddleBlock$ can burn fewer than $cn'$ tips. Hence at least three sources of $MiddleBlock$ must lie inside $BTP_{uv}$.

\item  $MiddleBlock[1]\in V(BTP_{uv})\setminus\{r_{uv},r_{vu}\}$.
If a source from $MiddleBlock[2,\dots,h+1]$ is placed outside $BTP_{uv}$, it cannot burn any tip of $BTP_{uv}$. Thus burning of $Tips_{uv}$ is by a fire spread  from sources placed inside $BTP_{uv}$.
If $MiddleBlock[2]$ is placed at a child of $r_{uv}$ or $r_{vu}$, its fire reaches $2cn'$ tips in the last step. These $2cn'$ tips must get burned once more by other source(s). Since $EndBlock$ burns fewer than $cn'$ tips, at least $cn'+1$ additional tips must be burned by sources in $MiddleBlock[3,\dots,h+1]$. By Lemma~\ref{lem:max-tips-from-source-in-middleblock},$MiddleBlock[3]$ can burn at most $cn'$ tips. Therefore, at least one source in $MiddleBlock[4,\dotsc, h+1]$ is required to burn $tips_{uv}$. Thus, at least three sources of $MiddleBlock$ lie inside $BTP_{uv}$.

If $MiddleBlock[2]$ is placed deeper inside $BTP_{uv}$, it burns at most $cn'$ tips. Since $EndBlock$ burns fewer than $cn'$ tips, a third source in $MiddleBlock$ is necessary to burn the remaining tips the first time as B is a burning sequence for H. Again, at least three sources lie inside $BTP_{uv}$.

\item $MiddleBlock[1]\notin V(BTP_{uv})$. Then all $4cn'$ tips must be burned by $MiddleBlock[2,\dots,h+1]\cup EndBlock$. Since $EndBlock$ burns fewer than $cn'$ tips, the sources in $MiddleBlock[2,\dots,h+1]$ must burn more than $3cn'$ tips. As mentioned in case 1, to burn at least $3cn'+1 tips_{uv}$, at least three sources of $MiddleBlock$ must lie inside $BTP_{uv}$.

\end{enumerate}
In all cases, at least three sources of $MiddleBlock$ are placed inside $BTP_{uv}$. By Lemma~\ref{lem:three-middleblock-fivecn}, at least two sources in $StartBlock$ belong to $OutsideDomains$. Hence $|Owners|\le k'-2$, and therefore there exist at least two unrepresented edges.
\end{proof}

\begin{lemma}
\label{lem:two-unrepresented-edges-BL-cap-UB-not-empty}
If there are at least two unrepresented edges in $H$, then $BL(H, B) \cap UB(H, B) \neq \emptyset$.
\end{lemma}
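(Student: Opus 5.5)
I would argue by contradiction, in the same way as Lemma~\ref{lem:if-one-unrepresented-then-two-unrepresented}. Assume $BL(H,B)\cap UB(H,B)=\emptyset$. Let $(a,b)$ and $(e,f)$ be two unrepresented edges. As noted in Lemma~\ref{lem:endblock-atmost-2-btp}, $BTP_{ab}$ and $BTP_{ef}$ are vertex disjoint even if the edges share an endpoint. By Lemma~\ref{lem:src-out-of-dom-cannot-burn}, no source of $StartBlock$ reaches $Tips_{ab}\cup Tips_{ef}$. So all $8cn'$ of these tips must be burned by $MiddleBlock\cup EndBlock$. Moreover, no single source can burn tips of both gadgets, since tips of distinct gadgets are at distance more than $2(cn'+3)$.

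\textbf{Step 1: many Middle sources must sit in the gadgets.} I would redo the counting of Lemma~\ref{lem:at-least-four-in-middleblock}, this time including the double-burning requirement. Every tip is at distance at least $l_1+l_2$ from any non-tip vertex outside its own $T$-gadget. The tips that $MiddleBlock$ sources burn at exactly their reach radius are therefore burned in the last step, and each of them must be burned by a second source. Suppose $MiddleBlock[1]$ is at $r_{ab}$. It burns $Tips_{ab}$ exactly in step $|B|$, so all $4cn'$ of those tips need a second source. By Lemma~\ref{lem:tips_by_middleblock'-endblock}, $MiddleBlock[2..h+1]\cup EndBlock$ can burn fewer than $5cn'$ tips in total. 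Yet we need $4cn'$ double-burns in $BTP_{ab}$ plus $4cn'$ first burns in $BTP_{ef}$, which is $8cn'>5cn'$ — a contradiction. If $MiddleBlock[1]$ is not at a root of either gadget, then by Lemmas~\ref{lem:max-tips-from-source-in-middleblock} and~\ref{lem:max-leaves-in-middleblock} the whole $MiddleBlock\cup EndBlock$ burns fewer than $2cn'+2^{h}+cn'<8cn'$ tips, again a contradiction. Placing it at $r_{ef}$ is symmetric. This case analysis essentially finishes the proof: the counting contradicts the assumption even before we use $BL\cap UB=\emptyset$ in the second gadget.

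\textbf{Step 2 (backup).} If the tip count in Step 1 turns out to be less tight than I claim, I would fall back on the $C$-gadget. Two unrepresented edges force at least four $MiddleBlock$ sources into $InsideDomains$ (Lemma~\ref{lem:at-least-four-in-middleblock}). Lemmas~\ref{lem:num-vertices-in-C-burned-by-startblock} and~\ref{lem:four-insidedoamins-middleblock-two-outsidedomains-startblock} then force at least two $StartBlock$ sources into $OutsideDomains$, which produces further unrepresented edges. Iterating this would yield a third unrepresented edge, contradicting Lemma~\ref{lem:endblock-atmost-2-btp}. Alternatively, I would use the interval count from Lemma~\ref{lem:owners-vc} on $trunk'(C)$: the remaining sources in $MiddleBlock\cup EndBlock$, each contributing at most $2i-2$ vertices, cannot cover what the $StartBlock$ sources leave uncovered.

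The main obstacle is Step 1. I need to make rigorous that tips burned by a $MiddleBlock$ source are burned precisely in the last step whenever that source lies at the critical level. I also need to handle sources placed strictly deeper than the critical level, which burn their tips early and so could provide a second burn.
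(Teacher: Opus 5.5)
Your Step~1 is essentially the paper's proof. With $MiddleBlock[1]$ at a root, every tip of that gadget needs a further source from $MiddleBlock[2,\dotsc,h+1]\cup EndBlock$, so $8cn'$ tips are required against the fewer than $5cn'$ allowed by Lemma~\ref{lem:tips_by_middleblock'-endblock}. Otherwise $MiddleBlock[1]$ reaches at most $2cn'$ tips; the paper splits this into two cases where you merge them. So the argument is complete and the Step~2 backup is unnecessary. Its ``third unrepresented edge'' iteration would not work anyway, since two $OutsideDomains$ sources among $k'$ only force two unrepresented edges.

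Your worry about sources below the critical level is moot. The per-source bounds behind Lemma~\ref{lem:tips_by_middleblock'-endblock} count every tip a source can reach by step $|B|$, whether it serves as a first or a second burn.
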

\begin{proof}
Let $(u,v), (e,f) \in E(G')$ be two \textit{unrepresented} edges. Since $B$ is a burning sequence for $H$, the sources in $MiddleBlock \cup EndBlock$ must burn $Tips_{uv}$ and $Tips_{ef}$. For the sake of contradiction, assume $BL(H, B) \cap UB(H, B) = \emptyset$. There are following possibilities. 
\begin{enumerate}
\item $MiddleBlock[1] \in \{r_{uv}, r_{vu}, r_{ef}, r_{fe}\}$: Without loss of generality, let $MiddleBlock[1]$ is placed at $r_{uv}$. The fire from $r_{uv}$ will reach $Tips_{uv}$ in the last step. By assumption, $BL(H, B) \cap UB(H, B) = \emptyset$. Therefore, every $tips_{uv} \in Tips_{uv}$ must get burned by another source. Therefore, the sources in $MiddleBlock[2,\dotsc, h+1] \cup EndBlock$ must burn $4cn'$ $Tips_{uv}$ and $4cn'$ $Tips_{ef}$, that is, overall $8cn'$ $tip$s. By Lemma~\ref{lem:tips_by_middleblock'-endblock}, the sources in $MiddleBlock[2,\dotsc, h+1] \cup EndBlock$ can burn at most $5cn'$ $tip$s. This is a contradiction.
\item $MiddleBlock[1] \in V(BTP_{uv}) \cup V(BTP_{ef}) \setminus \{r_{uv}, r_{vu}, r_{ef}, r_{fe}\}$: In this case, $MiddleBlock[1]$ is placed at a vertex inside $BTP_{uv}$ or $BTP_{ef}$ except at any of the vertex in $\{r_{uv}, r_{vu}, r_{ef}, r_{fe}\}$. Without loss of generality, let $MiddleBlock[1]$ is placed at $lchild(r_{uv})$. Let it be denoted by $w$. The fire at $w$ will reach $tips_w$ after $cn'+1$ steps. Thus, $2cn'$ $tip_{uv}$ will be burned in the penultimate step. Therefore, the sources in $MiddleBlock[2,\dotsc, h+1] \cup EndBlock$ must burn the remaining $2cn'$ $tip_{uv}$ and $4cn'$ $tip_{ef}$. However, by Lemma~\ref{lem:tips_by_middleblock'-endblock}, the sources in $MiddleBlock[2,\dotsc, h+1] \cup EndBlock$ can burn at most $5cn'$ $tip$s. This is a contradiction.
\item $MiddleBlock[1] \notin BTP_{uv}$: In this case, the sources $MiddleBlock[2,\dotsc, h+1]\cup EndBlock$ are expected to burn $8cn' (= 4cn'\ tip_{uv} + 4cn'\ tip_{ef})$ $tip$s. By Lemma~\ref{lem:tips_by_middleblock'-endblock}, the sources in $MiddleBlock[2,\dotsc, h+1] \cup EndBlock$ can burn at most $5cn'$ $tip$s. This is a contradiction.
\end{enumerate}
Since we get contradiction in every possible scenario, our assumption must be wrong. Therefore, $BL(H, B) \cap UB(H, B) \neq \emptyset$.
\end{proof}

\begin{lemma}
\label{lem:ownwers-not-vc}
Let $Owners$ not be a vertex cover of $G'$. Then, $BL(H, B) \cap UB(H,B) \neq \emptyset$.
\end{lemma}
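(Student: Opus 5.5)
We argue by contradiction, following the sketch given just before Lemma~\ref{lem:fivecn-implies-two-outside}. Assume $Owners$ is not a vertex cover of $G'$ and, for contradiction, that $BL(H,B)\cap UB(H,B)=\emptyset$. The goal is to reach the hypothesis of Lemma~\ref{lem:two-unrepresented-edges-BL-cap-UB-not-empty}, which then contradicts this assumption directly.

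\textbf{Step 1: an unrepresented edge exists.} Since $Owners$ is not a vertex cover, some edge $(u,v)\in E(G')$ has neither endpoint in $Owners$. By the definition of $Owners$, no source of $StartBlock$ lies in $Dom_u\cup Dom_v$. Hence $(u,v)$ is unrepresented in the sense used before Lemma~\ref{lem:endblock-atmost-2-btp}. We should also confirm that no $StartBlock$ source outside $Dom_u\cup Dom_v$ can help burn $Tips_{uv}$; Lemma~\ref{lem:src-out-of-dom-cannot-burn} gives this. So $BTP_{uv}$ must be burned by $MiddleBlock\cup EndBlock$ alone, which is the setting Lemma~\ref{lem:if-one-unrepresented-then-two-unrepresented} assumes.

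\textbf{Step 2: a second unrepresented edge.} Applying Lemma~\ref{lem:if-one-unrepresented-then-two-unrepresented} together with the assumption $BL(H,B)\cap UB(H,B)=\emptyset$ yields at least two unrepresented edges. Concretely, that lemma places three $MiddleBlock$ sources in $BTP_{uv}$. Lemma~\ref{lem:three-middleblock-fivecn} then puts two $StartBlock$ sources in $OutsideDomains$. This gives $|Owners|\le k'-2<\beta(G')$, so some second edge is uncovered, and it is distinct from $(u,v)$.

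\textbf{Step 3: conclude.} Lemma~\ref{lem:two-unrepresented-edges-BL-cap-UB-not-empty} now gives $BL(H,B)\cap UB(H,B)\neq\emptyset$, contradicting the assumption. Hence $BL(H,B)\cap UB(H,B)\neq\emptyset$.

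\textbf{Main obstacle.} The delicate point is making Step 2 rigorous: two $StartBlock$ sources in $OutsideDomains$ must really force two distinct uncovered edges. The counting argument is as follows. At most $k'-2$ sources lie in $InsideDomains$, so $|Owners|\le k'-2$. Since $\beta(G')=k'$, the vertex set $Owners$ leaves at least one edge uncovered. That alone does not produce a second edge, because $Owners$ could in principle miss exactly $(u,v)$.

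I would close the gap by this count: if only $(u,v)$ were uncovered, then $Owners\cup\{u\}$ would be a vertex cover of size at most $k'-1<k'$, which is impossible. Therefore at least two edges are uncovered. This extra edge count is the step I would check most carefully.
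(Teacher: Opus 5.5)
Your proposal is correct and follows the paper's proof. You take an edge uncovered by $Owners$, which is therefore unrepresented. Under the assumption $BL(H,B)\cap UB(H,B)=\emptyset$, Lemma~\ref{lem:if-one-unrepresented-then-two-unrepresented} gives a second unrepresented edge, and Lemma~\ref{lem:two-unrepresented-edges-BL-cap-UB-not-empty} then contradicts that assumption. Your added count is also correct: if $|Owners|\le k'-2$ and only $(u,v)$ were uncovered, $Owners\cup\{u\}$ would be a cover of size at most $k'-1<\beta(G')$. This fills in a step the paper asserts without justification inside Lemma~\ref{lem:if-one-unrepresented-then-two-unrepresented}.
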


\begin{proof}
There is at least one edge, say $(u,v) \in E(G')$ that is not represented by the sources in $StartBlock$. Thus, only the sources in $MiddleBlock \cup EndBlock$ are responsible for burning $BTP_{uv}$. For the sake of contradiction, let $BL(H, B) \cap UB(H, B) = \emptyset$. By Lemma~\ref{lem:if-one-unrepresented-then-two-unrepresented}, there must be two \textit{unrepresented} edges. Then, by Lemma~\ref{lem:two-unrepresented-edges-BL-cap-UB-not-empty}, $BL(H, B) \cap UB(H, B) \neq \emptyset$. This is a contradiction.
\end{proof}

\begin{lemma}
\label{lem:vertex-burned-uniquely-in-last-step}
Let $H$ be a connected cubic graph obtained from a connected cubic graph $G$ by Construction~\ref{cons:H-combined}. Then $BL(H, B) \cap UB(H,B) \neq \emptyset$.
\end{lemma}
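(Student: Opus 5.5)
The plan is a case split on whether the set $Owners$ (the owners of the $StartBlock$ sources lying in $InsideDomains$) is a vertex cover of $G'$. Here $B$ is an arbitrary optimal burning sequence of $H$. By Lemma~\ref{lem:b-equals-threshold}, $|B| = k'+cn'+3$, so the block decomposition fixed before Lemma~\ref{lem:owners-vc} applies:
\begin{itemize}
\item $StartBlock$ consists of the first $k'$ sources;
\item $MiddleBlock$ consists of the next $h+1$ sources;
\item $EndBlock$ consists of the last $cn'-h+2$ sources.
\end{itemize}
The two cases are exhaustive, so it suffices to show that each forces $BL(H,B)\cap UB(H,B)\neq\emptyset$.

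\textbf{Case 1: $Owners$ is a vertex cover of $G'$.} We invoke Lemma~\ref{lem:owners-vc} directly. It produces a vertex of $trunk'(C)$ that is burned in the last step by a unique source. One hypothesis there is used implicitly and should be checked: every $StartBlock$ source lies in $InsideDomains$. Since $|StartBlock|=k'=\beta(G')$ and $Owners$ covers $G'$, we get $|Owners|\ge k'$. Distinct sources have at most one owner each, so all $k'$ sources of $StartBlock$ lie in domains, with distinct owners. This is exactly the setting in which Lemma~\ref{lem:owners-vc} bounds the contribution $p$ of $StartBlock$ to $trunk'(C)$.

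\textbf{Case 2: $Owners$ is not a vertex cover of $G'$.} Then some edge of $G'$ is unrepresented, and we invoke Lemma~\ref{lem:ownwers-not-vc}. Internally this argues by contradiction. Assuming $BL\cap UB=\emptyset$, Lemma~\ref{lem:if-one-unrepresented-then-two-unrepresented} produces a second unrepresented edge: three $MiddleBlock$ sources are forced into one $BTP$-gadget, which by Lemma~\ref{lem:three-middleblock-fivecn} pushes two $StartBlock$ sources into $OutsideDomains$. Lemma~\ref{lem:two-unrepresented-edges-BL-cap-UB-not-empty} then contradicts the assumption. It does so by counting tips: at most $5cn'$ tips are available from $MiddleBlock[2,\dots,h+1]\cup EndBlock$ (Lemma~\ref{lem:tips_by_middleblock'-endblock}), against the at least $6cn'$ tips that need to be burned.

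Combining the two cases proves the statement for every optimal $B$. This is exactly the hypothesis of Lemma~\ref{lem:b-h4-one-more}, so it feeds into the $d$-regular result.

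\textbf{Main obstacle.} The delicate point is not the combination itself but making sure the case hypotheses match those of the cited lemmas. In Case 1, Lemma~\ref{lem:owners-vc} is stated for $StartBlock\subseteq InsideDomains$, and the cardinality argument above is what justifies this. In Case 2, the tip-counting lemmas assume that $StartBlock$ sources cannot reach the tips of an unrepresented edge in time. I would cite Lemma~\ref{lem:src-out-of-dom-cannot-burn} for this, noting that its step bound uses $|B|\le k'+cn'+3$, which holds with equality here.
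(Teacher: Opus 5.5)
Your proposal matches the paper's proof: it splits on whether $Owners$ is a vertex cover of $G'$ and invokes Lemma~\ref{lem:owners-vc} and Lemma~\ref{lem:ownwers-not-vc}, respectively. Your cardinality check is a useful addition: since $|Owners|\ge \beta(G') = k' = |StartBlock|$, every $StartBlock$ source lies in $InsideDomains$ with a distinct owner, a step the paper only asserts inside the proof of Lemma~\ref{lem:owners-vc}.
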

\begin{proof}
 There are following two possibilities.
\begin{enumerate}
\item $Owners$ is a vertex cover of $G'$: By Lemma~\ref{lem:owners-vc}, $BL(H, B) \cap UB(H, B) \neq \emptyset$.
\item $Owners$ is not a vertex cover of $G'$: By Lemma~\ref{lem:ownwers-not-vc}, $BL(H, B) \cap UB(H, B) \neq \emptyset$.
\end{enumerate}
Therefore, $BL(H, B) \cap UB(H,B) \neq \emptyset$.
\end{proof}

Let $G'$ be the graph obtained from a connected cubic graph $G$ using Construction~\ref{cons:G'} and
let $H$ be the connected cubic graph obtained from $G'$ by Construction~\ref{cons:H-combined}. Let   $H_d$ be the $d$-regular graph obtained from $H$ by Construction~\ref{cons:dreg}.
Then, we have the following lemmas.
\begin{lemma}
\label{lem:b-hd-one-more}
Let $H$ be a connected cubic graph obtained from Construction~\ref{cons:H-combined} and $H_d$ be a $d$-regular graph obtained from $H$ by Construction~\ref{cons:dreg}. Then $b(H_d) = b(H)+1 = k'+cn'+4, \forall d \geq 4$.
\end{lemma}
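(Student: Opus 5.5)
The statement combines three ingredients already in place. By Lemma~\ref{lem:b-equals-threshold}, $b(H)=k'+cn'+3$. By Lemma~\ref{lem:vertex-burned-uniquely-in-last-step}, every optimal burning sequence $B$ of $H$ satisfies $BL(H,B)\cap UB(H,B)\neq\emptyset$; that lemma is stated for an arbitrary optimal $B$ fixed before it, so it holds for all of them. By Lemma~\ref{lem:b-h4-one-more}, this property forces $b(H_4)=b(H)+1$. This settles $d=4$, giving $b(H_4)=k'+cn'+4$.

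For general $d\ge 4$ I would use monotonicity. Observation~\ref{obs:monotonic} gives
\[
b(H)\le b(H_4)\le b(H_5)\le \dots \le b(H)+1 .
\]
Since $b(H_4)=b(H)+1$ already attains the upper bound, every $b(H_d)$ with $d\ge 4$ equals $b(H)+1$.

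The delicate point is the monotonicity $b(H_{d'})\le b(H_d)$ for $4\le d'<d$. It rests on projecting an optimal sequence of $H_d$ onto the induced subgraph $H_{d'}$ formed by the first $d'-2$ copies, and repairing duplicates via Algorithm~\ref{alg:alg2}. Observation~\ref{obs:projected-seq-burns-H-d'}, Lemma~\ref{lem:duplicates-at-the-end} and the Claim are written for general $d'<d$, not just $d'=3$, so the argument of Lemma~\ref{lem:burning-number-limits} applies verbatim with $H_{d'}$ in place of $G$. I would also record that $H_{d'}$ is, up to isomorphism, the graph of Construction~\ref{cons:dreg} applied to $H$ with parameter $d'$. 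This is immediate, because taking the first $d'-2$ copies with the clique edges among them is exactly that construction.

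I expect the main obstacle to be the $d=4$ step, specifically the uniqueness property. Lemma~\ref{lem:vertex-burned-uniquely-in-last-step} was proved for a fixed optimal $B$, while Lemma~\ref{lem:b-h4-one-more} needs it for every optimal $B$. If needed, I would make explicit that $B$ was an arbitrary optimal sequence of length $k'+cn'+3$ throughout Lemmas~\ref{lem:owners-vc}--\ref{lem:ownwers-not-vc}, so the conclusion quantifies over all optimal $B$. The rest is bookkeeping: $b(H_d)=b(H)+1=k'+cn'+4$.
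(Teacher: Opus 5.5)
Your proposal is correct and follows the paper's proof. Lemma~\ref{lem:vertex-burned-uniquely-in-last-step} supplies the hypothesis of Lemma~\ref{lem:b-h4-one-more} for $d=4$, Observation~\ref{obs:monotonic} extends $b(H_d)=b(H)+1$ to all $d\ge 4$, and Lemma~\ref{lem:b-equals-threshold} gives the value $k'+cn'+4$; your remarks on the projection argument for general $d'<d$ and on the quantification over all optimal $B$ just make explicit what the paper leaves implicit.
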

\begin{proof}
By Lemma~\ref{lem:vertex-burned-uniquely-in-last-step}, for any optimal burning sequence $B$ for $H$, $BL(H,B) \cap UB(H,B) \neq \emptyset$. By Lemma~\ref{lem:b-h4-one-more}, for $d =4, b(H_d) = b(H)+1$. By Observation~\ref{obs:monotonic}, $\forall d \geq 4, b(H_d) = b(H)+1$.
\end{proof}

\begin{lemma}
\label{lem:dregburning}
$b(H_d) = k'+cn'+4$ if and only if \textit{vertex covering number} of $G'$ is $k'$.
\end{lemma}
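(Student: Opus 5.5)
The plan is to combine Lemma~\ref{lem:b-hd-one-more} with Lemma~\ref{lem:b-equals-threshold}, so that both directions reduce to the exact formula $b(H)=\beta(G')+cn'+3$. Throughout, $H$ is built from $G'$ by Construction~\ref{cons:H-combined}, and $H_d$ is built from $H$ by Construction~\ref{cons:dreg}. The parameters $n'=|V(G')|$ and $c$ (the constant of Lemma~\ref{lem:choose-c}) depend only on $n'$, so they are fixed once $G'$ is fixed.

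For the forward direction, assume $\beta(G')=k'$. Lemma~\ref{lem:b-equals-threshold} gives $b(H)=k'+cn'+3$. Lemma~\ref{lem:b-hd-one-more} gives $b(H_d)=b(H)+1$ for every $d\ge 4$. Together these yield $b(H_d)=k'+cn'+4$.

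For the backward direction, assume $b(H_d)=k'+cn'+4$. Let $k^\ast=\beta(G')$ be the true vertex covering number. Applying the forward argument with $k^\ast$ gives $b(H_d)=k^\ast+cn'+4$. Since $n'$ and $c$ are determined by $G'$ alone, the additive term $cn'+4$ is identical in both expressions. Comparing them gives $k^\ast=k'$, so the vertex covering number of $G'$ is $k'$.

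The main subtlety is making sure Lemma~\ref{lem:b-hd-one-more} holds unconditionally. It relies on Lemma~\ref{lem:vertex-burned-uniquely-in-last-step}, which must apply to every optimal burning sequence of $H$ so that the hypothesis of Lemma~\ref{lem:b-h4-one-more} is satisfied. The final step then uses the monotonicity in Observation~\ref{obs:monotonic} together with the upper bound $b(H_d)\le b(H)+1$ from Lemma~\ref{lem:burning-number-limits}. I would also note that "$k'$" in the statement is a free parameter, not necessarily $\beta(G')$ as fixed earlier, so the backward direction really is the comparison above. Finally, via Lemma~\ref{lem:G'vertexcover}, this translates to $\beta(G)=k'-1$, which is what the NP-hardness and APX-hardness transfer needs.
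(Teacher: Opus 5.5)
Your proposal is correct and follows essentially the same route as the paper: both directions combine Lemma~\ref{lem:b-equals-threshold} ($b(H)=\beta(G')+cn'+3$) with Lemma~\ref{lem:b-hd-one-more} ($b(H_d)=b(H)+1$). The only difference is cosmetic. In the backward direction the paper subtracts one to get $b(H)=k'+cn'+3$ and then reads off $\beta(G')$ from Lemma~\ref{lem:b-equals-threshold}, whereas you compute $b(H_d)$ from the true value $\beta(G')$ and compare the two expressions, which is the same argument.
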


\begin{proof}
\noindent
\textbf{($\Rightarrow$)} 
Assume that the vertex covering number of $G'$ is $k'$.  By Lemma~\ref{lem:b-equals-threshold}, we have 
$b(H) = k' + cn' + 3$. By Lemma~\ref{lem:b-hd-one-more}, it follows that $b(H_d) = b(H) + 1 = k' + cn' + 4$.

\noindent
\textbf{($\Leftarrow$)} 
Assume that $b(H_d) = k' + cn' + 4$.  By Lemma~\ref{lem:b-hd-one-more}, we obtain $b(H) = b(H_d) - 1 = k' + cn' + 3$.  By Lemma~\ref{lem:b-equals-threshold}, the vertex covering number of $G'$ equals 
$b(H) - cn' - 3$. Substituting $b(H) = k' + cn' + 3$, we obtain
$(k' + cn' + 3) - cn' - 3 = k'$. Hence, the vertex covering number of $G'$ is $k'$.
\end{proof}

\begin{theorem}\label{thm:DRegNPC}
The \BNP\ is \NPC\ for connected $d$-regular graphs with $d>3$.
\end{theorem}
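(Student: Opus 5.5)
The plan is to prove membership in NP and NP-hardness separately, with hardness obtained by chaining the reductions already built. Membership is routine. A certificate is a sequence $(b_1,\dots,b_k)$ with $k\le |V(H_d)|$. A verifier computes all distances by BFS and checks two things: every vertex $v$ has some $i$ with $d(v,b_i)\le k-i$, and each $b_i$ is unburned at the end of step $i-1$, i.e., $d(b_j,b_i) > i-1-j$ for all $j<i$. Both checks take polynomial time, and connected $d$-regular graphs are a subclass of general graphs, so this settles membership.

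For hardness I reduce from \MVC\ on connected cubic graphs, NP-complete by~\cite{Garey1974SimNPCProblems}. Fix $d\ge 4$ and take an instance $(G,k)$.
\begin{enumerate*}[label=(\roman*)]
\item Build $G'$ by Construction~\ref{cons:G'}, so that $n'=n+2$ and $\beta(G')=\beta(G)+1$ by Lemma~\ref{lem:G'vertexcover}.
\item Choose $c$ by Lemma~\ref{lem:choose-c}, which can be done in polynomial time by computing the power of two above $4n'$.
\item Build the connected cubic graph $H$ by Construction~\ref{cons:H-combined}; it has $O(n^3)$ vertices by Lemma~\ref{lem:num-vertices-in-H}.
\item Build $H_d$ by Construction~\ref{cons:dreg}, taking $d-2$ copies of $H$ joined by cliques. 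This gives $(d-2)\,|V(H)|=O(n^3)$ vertices for fixed $d$, and $H_d$ is connected and $d$-regular by the Remark after that construction.
\end{enumerate*}
The output instance is $(H_d,\,k+1+cn'+4)$. Every step is constructive and polynomial.

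Correctness comes from Lemma~\ref{lem:b-hd-one-more} together with Lemma~\ref{lem:b-equals-threshold}, which give the exact value $b(H_d)=\beta(G')+cn'+4=\beta(G)+cn'+5$. Hence $b(H_d)\le k+cn'+5$ if and only if $\beta(G)\le k$, so the reduction preserves yes- and no-instances. I will use this inequality form instead of the ``iff $=k'$'' form of Lemma~\ref{lem:dregburning}, because \MVC\ asks for ``at most $k$''. The step that needs care is that Lemma~\ref{lem:b-equals-threshold} and Lemma~\ref{lem:vertex-burned-uniquely-in-last-step} rely on Observation~\ref{obs:inequality-for-two-observation}, which holds only for $n'$ sufficiently large. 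I will handle this by fixing a constant $N_0$ beyond which those estimates hold. Instances with $n<N_0$ have constant size, so $\beta(G)$ can be found by brute force and a fixed yes or no instance of constant size can be output instead (for instance $K_{d+1}$ with threshold $2$ or $1$).

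Finally, I will note that the map from $G$ to $H_d$ is the same as in Theorem~\ref{thm:regular}. The exact formula $b(H_d)=\beta(G)+cn'+5$ with $c<8$ and $\beta(G)\ge n/2$ yields the same $L$-reduction computation as in the proof of Theorem~\ref{thm:cubicAPX}, with constants shifted by one. That strengthening is not needed for the NP-completeness statement here.
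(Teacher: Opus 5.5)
Your proposal is correct and follows the paper's proof. It uses the same chain $G \to G' \to H \to H_d$, and its correctness rests on Lemma~\ref{lem:b-equals-threshold} and Lemma~\ref{lem:b-hd-one-more}, which the paper packages as Lemma~\ref{lem:dregburning}. Your additions are sound refinements of details the paper leaves implicit: the explicit NP-membership check, the threshold form $b(H_d)\le k+cn'+5 \iff \beta(G)\le k$ in place of the paper's exact-value form, and brute-forcing the constant-size instances where the ``sufficiently large $n'$'' estimates might fail.
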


\begin{proof}
Let $G$ be a connected cubic graph on $n$ vertices. From $G$, construct $G'$ using Construction~\ref{cons:G'}.
From $G'$, construct the connected cubic graph $H$ using Construction~\ref{cons:H-combined}. Finally, from $H$, construct the connected $d$-regular graph $H_d$ using Construction~\ref{cons:dreg}. Clearly, $H_d$ is connected and $d$-regular, and all constructions are computable in polynomial time.

By Lemma~\ref{lem:G'vertexcover}, the vertex covering number of $G'$ is $k'$ if and only if $\beta(G)=k'-1$.
By Lemma~\ref{lem:dregburning}, the vertex covering number of $G'$ is $k'$ if and only if $b(H_d)=k'+cn'+4$.
Hence, $\beta(G)=k'-1 \quad \text{if and only if} \quad b(H_d)=k'+cn'+4$.

Thus, we obtain a polynomial-time reduction from \MVC\ on connected cubic graphs to \BNP\ on connected 
$d$-regular graphs. Since \MVC\ is \NPC\ for connected cubic graphs, it follows that
\BNP\ is \NPC\ on connected $d$-regular graphs for every fixed integer $d \ge 3$.
\end{proof}

By Theorem~\ref{thm:cubicAPX}, Lemma~\ref{lem:b-equals-threshold} and Lemma~\ref{lem:dregburning}, it can be seen that the \BNP\ on $d$-regular graphs is APX-hard.

\dregularthm*